\documentclass[11pt,a4paper]{article}
\usepackage{jheppub}
\usepackage{mathrsfs}
\usepackage{longtable}
\usepackage{graphicx}%
\usepackage{multirow}%
\usepackage{amsmath,amssymb,amsfonts}%
\usepackage{amsthm}%
\usepackage{mathrsfs}%
\usepackage[title]{appendix}%
\usepackage[dvipsnames,svgnames,x11names]{xcolor}%
\usepackage{textcomp}%
\usepackage{manyfoot}%
\usepackage{booktabs}%

\usepackage{listings}
\usepackage{comment}
\usepackage[mathscr]{eucal}
\usepackage{braket}
\usepackage{longtable}
\usepackage{array}
\usepackage{bm}
\usepackage{subfigure}
\usepackage[normalem]{ulem}
\usepackage{mathtools}
\usepackage{stmaryrd}
\makeatletter
\newcommand{\fixed@sra}{$\vrule height 2\fontdimen22\textfont2 width 0pt\shortrightarrow$}
\newcommand{\shortarrow}[1]{%
  \mathrel{\text{\rotatebox[origin=c]{\numexpr#1*45}{\fixed@sra}}}
}
\makeatother
\usepackage[utf8]{inputenc}
\usepackage[T1]{fontenc}
\DeclareMathAlphabet{\mathbbold}{U}{bbold}{m}{n}
\usepackage{hyperref}
\hypersetup{
    colorlinks=true,                 % false: boxed links; true: colored links
    linkcolor=blue,%SeaGreen,                 % color of internal links (change box color with linkbordercolor)
    linktoc=all,                     % link over complete toc entries
    citecolor=blue,                 % color of links to bibliography
    filecolor=blue,                  % color of file links
    urlcolor=blue                 % color of external links
}
\usepackage{url}
\usepackage{amsthm}
\theoremstyle{plain}
\newtheorem{theorem}{Theorem}
\numberwithin{theorem}{section}
\newtheorem{proposition}[theorem]{Proposition}
\theoremstyle{definition}
\newtheorem{definition}[theorem]{Definition}

\newtheorem{remark}[theorem]{Remark}
\newtheorem{lemma}[theorem]{Lemma}

\newtheorem{corollary}[theorem]{Corollary}
\newtheorem*{theorem*}{Theorem}
\newtheorem{conjecture}[theorem]{Conjecture}
\newcommand{\del}{\partial}

\newcommand{\wh}[1]{\widehat{#1}}

\newcommand{\defeq}{\mathrel{\mathop:}=}

\DeclareMathOperator{\tr}{tr}
\DeclareMathOperator{\Tr}{Tr}

\DeclareMathOperator{\Ad}{Ad}

\renewcommand{\Re}{\operatorname{Re}}
\renewcommand{\Im}{\operatorname{Im}}

\newcommand{\Rmnum}[1]{\expandafter\@slowromancap\romannumeral #1@}

\newcommand{\be}{\begin{equation}}
\newcommand{\ee}{\end{equation}}

\newcommand{\lB}{\left [}
\newcommand{\rB}{\right ]}
\newcommand{\lb}{\left (}
\newcommand{\rb}{\right )}
\newcommand{\lbr}{\left \{}						
\newcommand{\rbr}{\right \}}					
\newcommand{\lv}{\left\vert}
\newcommand{\rv}{\right\vert}

\newcommand{\bfrac}[2]{\lb\frac{#1}{#2}\rb}  	%% Fractions in brackets.

\newcommand{\A}{\,\forall\,}
\newcommand{\C}{\mathbb{C}}
\newcommand{\I}{\infty}

\newcommand{\R}{\mathbb{R}}

\renewcommand{\a}{\alpha}	%%% Redefinition
\renewcommand{\d}{\delta}	%%% Redefinition

\newcommand{\ve}{\varepsilon}

\newcommand{\vf}{\varphi}
\newcommand{\g}{\gamma}

\renewcommand{\l}{\lambda}	%%% Redefinition

\renewcommand{\r}{\rho}		%%% Redefinition
\newcommand{\s}{\sigma}

\renewcommand{\t}{\tau}		%%% Redefinition
\newcommand{\w}{\omega}

\newcommand{\D}{\Delta}

\newcommand{\W}{\Omega}

\newcommand{\cD}{\mathcal{D}}

\newcommand{\cG}{\mathcal{G}}
\newcommand{\cH}{\mathcal{H}}

\newcommand{\cM}{\mathcal{M}}

\newcommand{\cR}{\mathcal{R}}

\newcommand{\az}{$\a$-$z$}

\newcommand{\barSgen}{\bar{S}_{\text{gen}}}

\DeclareMathOperator{\Dom}{Dom}

\title{A general proof of integer R\'{e}nyi QNEC}

\author[a]{Tanay Kibe}
\emailAdd{tanay.kibe@ib.edu.ar}
\affiliation[a]{Instituto Balseiro, Centro 
At{\'o}mico Bariloche, S.C. de Bariloche, 8400, 
R{\'i}o Negro, Argentina}
\author[b]{and Pratik Roy}
\emailAdd{roy.pratik92@gmail.com }
\affiliation[b]{Institute of Mathematics, University of Warsaw, ul. Banacha 2, 02-097 Warsaw, Poland}

\begin{document}

\abstract{The R\'{e}nyi quantum null energy condition conjectures that the second null shape variation of the sandwiched R\'{e}nyi divergence (SRD) of an excited state relative to the vacuum is non-negative in local Poincar\'e-invariant quantum field theory, giving a one-parameter generalization of the quantum null energy condition (QNEC). We prove R\'{e}nyi QNEC for all integer R\'{e}nyi parameters $n\geq 2$ for von Neumann algebras carrying a half-sided modular inclusion structure. The only additional regularity assumption on the excited state is finiteness of the SRD relative to the vacuum. Concretely, for any $\sigma$-finite von Neumann algebra with such an inclusion, we prove log-convexity, under the associated null-translation semigroup, of the Kosaki $L^n$ norm of any normal positive functional with finite $L^n$ norm. 
}
\maketitle

\section{Introduction}

The quantum null energy condition (QNEC) was first conjectured as a non-gravitational limit of the quantum focusing conjecture in semi-classical gravity \cite{Bousso:2015mna}, and has since been proven in free quantum field theories (QFT) \cite{Bousso:2015wca,Malik:2019dpg}, holographic QFTs with anti de-Sitter duals \cite{Koeller:2015qmn}, and general local Poincar\'{e}-invariant QFTs \cite{Balakrishnan:2017bjg}. Moreover, mathematically rigorous proofs of the QNEC have been provided using von Neumann algebraic techniques \cite{Ceyhan:2018zfg, Hollands:2025glm}. 

Ceyhan and Faulkner \cite{Ceyhan:2018zfg}, and later Hollands and Longo \cite{Hollands:2025glm}, make Wall's information theoretic argument for QNEC \cite{Wall:2017blw} mathematically rigorous for arbitrary QFTs that admit the structure of half-sided modular inclusions (HSMI) \cite{Borchers:1991xk,Wiesbrock:1992mg,Borchers:1995zg,Borchers1996,Araki:2005we}. Relative entropy in QFT is well-defined and UV-finite, and \cite{Ceyhan:2018zfg,Hollands:2025glm} rigorously prove the QNEC as the non-negativity of the second null derivative of the relative entropy:
\begin{equation}
    \partial^2_v S(v) \geq0\,,\quad S(v)\coloneqq D(\Psi\|\Omega;M_v)\,,
\end{equation}
where $D(\Psi\|\Omega;M_v)$ is the relative entropy of the excited vector state $\Psi$ with respect to the QFT vacuum vector $\Omega$, for a one-parameter family of local algebras $M_v$, with $M_w \subset M_v$ if $w>v$. These proofs need additional regularity assumptions on the state $\Psi$ and do not apply to arbitrary finite-relative-entropy states.

Our interest in QNEC stems, among other things, from the fact that it is not merely a formal consistency condition on QFTs, and can lead to non-trivial quantum thermodynamic bounds on quantities such as entropy production in dynamical processes \cite{Kibe:2021qjy,Banerjee:2022dgv,Kibe:2024icu,Kibe:2025cqc} (see also \cite{Mezei:2019sla}). QNEC has also been applied to prove renormalization group irreversibility theorems in the presence of defects \cite{Casini:2023kyj}.
It is therefore interesting to explore generalisations of QNEC and investigate the bounds that they can reveal. 
Generalisations of relative entropy that obey the data processing inequality (DPI) are natural candidates to study in this context.
In this paper we shall focus on one such quantity, the sandwiched R\'enyi divergence (SRD) \cite{Muller-Lennert:2013liu,Wilde:2013bdg}, which is a one-parameter generalization of the relative entropy.

 In finite-dimensional systems, the SRD between two density matrices $\rho,\sigma$ is \cite{Muller-Lennert:2013liu}
\begin{equation}\label{Eq:SRD-finite}
    {D}_{\alpha}(\rho\|\sigma) \coloneqq \frac{1}{\alpha-1}\log \frac{\Tr\lB\lb\sigma^{\frac{1-\alpha}{2\alpha}}\rho\sigma^{\frac{1-\alpha}{2\alpha}}\rb^\alpha\rB}{\Tr(\rho)}\,.
\end{equation}
For normalized density matrices, the SRD is a measure of distinguishability between the two states. As we will review, the SRD can be defined for general QFTs  in the von Neumann algebraic setting \cite{Berta:2016vnw,Jencova:2016tqz,Jencova:2017txf}, and has been shown to satisfy the properties that are expected of a distinguishability measure for normalized states: DPI for $\a\in [1/2, 1)\cup(1,\I)$ and invariance under local unitaries. In the $\alpha \to 1$ limit, the SRD becomes the standard relative entropy $D(\rho\|\sigma)$,
\begin{equation}
    \lim_{\alpha \to 1} {D}_{\alpha}(\rho\|\sigma)= D(\rho\|\sigma)\,.
\end{equation}

Lashkari \cite{Lashkari:2018nsl} conjectured the R\'{e}nyi QNEC (RQNEC): second null shape variations of SRD in a Poincar\'{e}-invariant QFT  are non-negative. Schematically:
\begin{equation}
  \partial^2S_\a(v) \geq0, \quad S_\a(v)= D_\a(\Psi\|\Omega; M_v)\,,
\end{equation}
where $D_\a(\Psi\|\Omega, M_v)$ is the SRD defined for the one-parameter family of von Neumann algebras $M_v$, $\Psi$ is any excited state, and $\Omega$ is the QFT vacuum. For $\a>1$, RQNEC has been proven in free field theories, and counter-examples have been constructed for $\a<1$ \cite{Moosa:2020jwt,Roy:2022yzm}. In this paper we prove the RQNEC for integer $\alpha\geq 2$.

\subsection{Main results}
The principal result of this paper is an operator-algebraic log-convexity
theorem for noncommutative \(L^n\)-norms associated with half-sided modular
inclusions.  Its main physical consequence is the R\'enyi quantum null energy
condition for integer R\'enyi index.  We state the operator-algebraic result
first, before explaining its relation to quantum field theory.

\subsubsection{Abstract theorem}
Let \(M_B\subset M_A=:M\) be a half-sided modular inclusion with respect to a
faithful normal state \(\omega\) on a \(\sigma\)-finite von Neumann algebra
\(M\).  The half-sided modular inclusion determines a one-parameter unitary group \(U_c=e^{-icP}\), and hence a semigroup of normal
unital endomorphisms of \(M\),
\begin{equation}\label{eq:intro-Tc}
        T_c(a):=U_{-c}aU_c,
        \qquad c\geq0,
\end{equation}
with \(T_c(M)\subset M\).  Since \(\omega\circ T_c=\omega\), the maps \(T_c\)
extend to positive contractive strongly continuous semigroups
\[
        V_c^{(p)}:L^p(M)\longrightarrow L^p(M),
        \qquad 1<p<\infty,
\]
on Haagerup's noncommutative \(L^p\)-spaces.

Our main theorem is the following. Its precise formulation appears as
Theorem~\ref{thm:general-log-convexity}. Below $\tr$ denotes Haagerup's canonical linear functional on the $L^1$ space. 

\paragraph*{Theorem A (Log-convexity under HSMI).}
Let \(n\geq2\) be an integer and let \(x_0\in L^n(M)_+\), $x_0\neq0$, and
\begin{equation}\label{eq:intro-Qn}
        Q_n(c)
        :=
        \tr\left(\bigl(V_c^{(n)}x_0\bigr)^n\right),
        \qquad c\geq0,
\end{equation}
Then $Q_n(c)$ is log-convex: for \(c_0,c_1\geq0\) and
\(\lambda\in[0,1]\),
\begin{equation}\label{eq:intro-log-convexity}
\begin{aligned}
        Q_n\bigl((1-\lambda)c_0+\lambda c_1\bigr)
        &\leq
        Q_n(c_0)^{1-\lambda}Q_n(c_1)^\lambda .
\end{aligned}
\end{equation}
In particular, wherever \(Q_n\) is twice differentiable and non-zero,
\begin{equation}
        Q_n(c)Q_n''(c)-Q_n'(c)^2\geq0.
\end{equation}
A key point is that Theorem~A is stated for an arbitrary
\(x_0\in L^n(M)_+\).  No differentiability assumption and no domain membership
condition involving the generator of \(V_c^{(n)}\) is imposed on \(x_0\).

%%%%%%%%%%%%%%%%%%%%%%%%%%%%%%%%%%%%%%%%%%%%%%%%%%%%%%%%%%

\subsubsection{QFT setup and R\'enyi QNEC}\label{sec:setup}
The physical consequence of the abstract Theorem~A is the integer R\'enyi QNEC described below.  The precise statement appears as Corollary~\ref{cor:integer-renyi-qnec-fixed}. We emphasize that the stress tensor formulas that appear in this section motivate the HSMI abstraction, but are not used in the proof.

Consider a local Poincar{\'e}-invariant QFT living on $d$-dimensional Minkowski space, with metric
\begin{equation}
	ds^2 = -dx^+\,dx^- + d\vec y^2\,,
\end{equation}
where $x^\pm$ are null coordinates and $\vec y$ denotes the $d-2$ coordinates transverse to the lightcone. The Rindler wedge $R$ is the region $\{x^+>0,x^-<0\}$, and we denote the algebra of bounded operators associated with $R$ by $M_R$. The regions $N_C=\{x^-=0,x^+>C(\vec y)\}$, where $C(\vec y)$ is a continuous function of the transverse coordinates along the entangling surface, are called null cuts. We consider the von Neumann algebra, $M_C=M_{N_C''}$, of bounded operators localised in the causal completion $N_C''$.
The Rindler wedge corresponds to $M_0=M_R$. The algebras are illustrated in Fig.~\ref{fig:nullcuts}.
\begin{figure}
    \centering
    \includegraphics[width=0.5\linewidth]{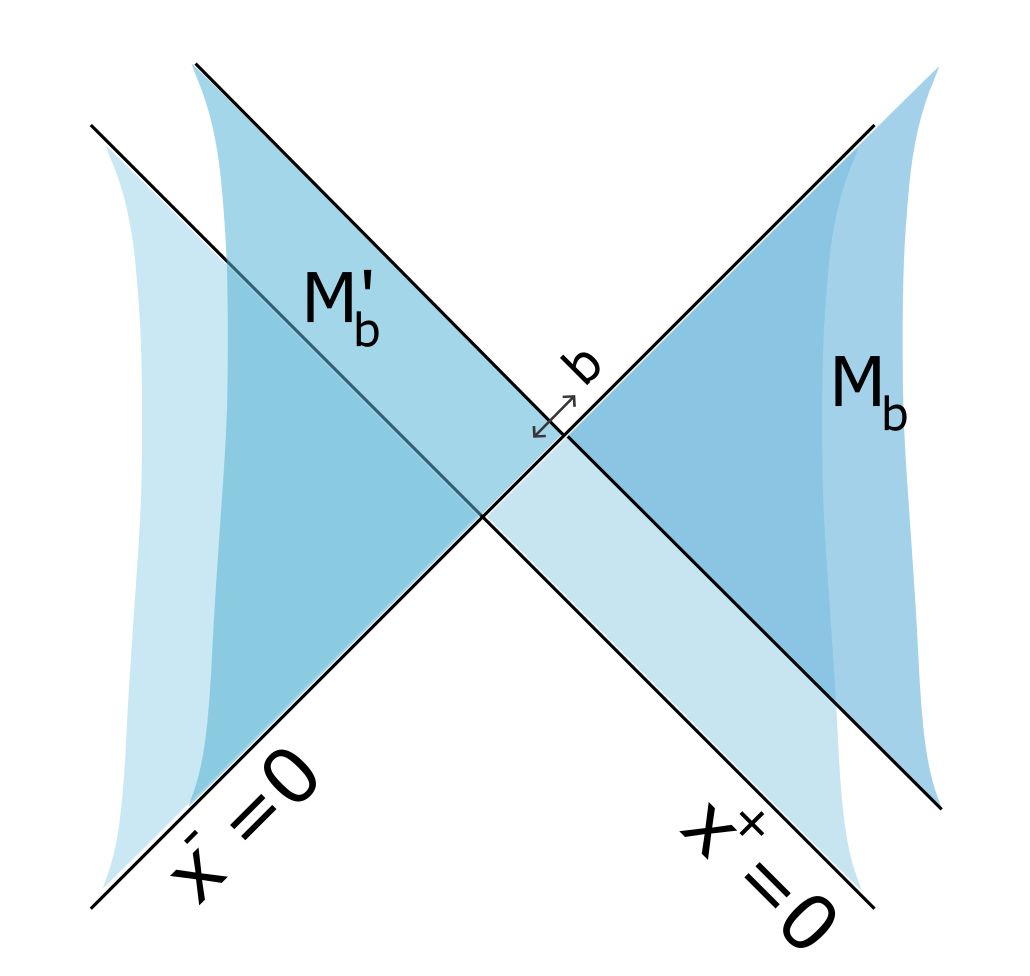}
    \caption{Schematic of the null cut algebras. The transverse directions are perpendicular to the plane of the figure. The null planes $x^\pm=0$ are labelled. Light blue regions correspond to the Rindler wedge $R: \{x^+>0, x^-<0\}$ and its commutant $R'$. Dark blue regions correspond to the null cut algebra $M_b$ and its commutant $M_b'$.}
    \label{fig:nullcuts}
\end{figure}

Denote by $\W \in \mathcal H$ the vacuum vector in the QFT Hilbert space $\mathcal H$, which is cyclic and separating for all the local algebras. For the Rindler cut, the modular unitary group $\D_{\W;M_R}^{is}$ implements Lorentz boosts that fix the entangling surface $x^-=0$. Define the modular Hamiltonian associated with a cut $A$ by $K_A=-\log\D_{\W;M_A}$. Then, we have \cite{Casini:2017roe}
\begin{equation}
	K_A = 2\pi \int_{x^-=0} (x^+-A(\vec y))T_{++}\,.
\end{equation}
The modular Hamiltonians associated with two null cuts $A(\vec y),B(\vec y)$ have the commutation relation
\begin{equation}
	[K_A,K_B] = 2\pi i(K_A-K_B) = (2\pi)^2 i P\,,
\end{equation}
where $P$ is the self-adjoint and positive semi-definite null translation generator acting on Rindler horizons as translations by $B(\vec y)-A(\vec y)$. For null cuts, $P$ is the averaged null energy (ANE) operator
\begin{equation}
	P= \int_{x^-=0} (B(\vec y)-A(\vec y)) T_{++}\,.
\end{equation}

Given two null cuts $A,B$, with $B(\vec y)\geq A(\vec y)$, define the one-parameter family of cuts  
\[
    A_b=A+ b(B-A).
\]
We introduce the notation
\[
    M_{b}\defeq M_{A_b} = U_{-b}M_A U_b, \quad U_b= e^{-ibP},
\]
and the algebras associated with the null cuts $A$ and $B$ are denoted by $M_A,M_B$, respectively. In cases of interest, these cuts are expected to satisfy the properties of half-sided modular inclusions (HSMI) \cite{Borchers:1991xk,Wiesbrock:1992mg,Borchers:1995zg,Borchers1996,Araki:2005we}.

With this motivation, the physical consequence of Theorem~A is the following integer R\'enyi QNEC.  
\paragraph*{Corollary B (Integer R\'enyi QNEC).}
Suppose that the half-sided modular inclusion \(M_B\subset M_A\) is realized by null-cut
algebras in a Poincar\'e-invariant quantum field theory, with common faithful normal vacuum functional
\(\omega\).  Let
\[
        M_c:=T_c(M_A)=U_{-c}M_AU_c,
        \qquad c\geq0.
\]
For every integer \(n\geq2\) and every non-zero normal positive functional
\(\psi\) satisfying
\begin{equation}\label{assum:Assum1}
        D_n(\psi\|\omega;M_A)<\I,
\end{equation}
the function
\begin{equation}\label{eq:intro-RQNEC}
        c\longmapsto D_n(\psi\|\omega;M_c)
\end{equation}
is convex on \([0,\I)\).  Equivalently, at points at which the second
derivative exists,
\begin{equation}
        \partial_c^2D_n(\psi\|\omega;M_c)\geq0.
\end{equation}
We emphasize that the only assumption on $\psi$ is the finiteness of $D_n(\psi\|\omega,M_A)$ for some null cut $M_A$.%, which is the minimal condition under which the RQNEC is well defined.

\subsection{Proof sketch and outline of the paper}
Haagerup's noncommutative \(L^p\)-spaces play a central role in our proof.  They are
defined through the modular crossed product of the von Neumann algebra and
provide a representation-independent replacement for Schatten spaces.
The sandwiched R\'enyi divergence admits a natural formulation in these
spaces \cite{Jencova:2016tqz,Jencova:2017txf}. In Sec.~\ref{Sec:prelim} we discuss some preliminaries on Haagerup's non-commutative $L^p$ spaces and define the SRD. 

If
\(h_\psi,h_\omega\in L^1(M)_+\) are the Haagerup densities of two normal
positive functionals \(\psi\) and \(\omega\), then, for \(n>1\), finiteness
of \(D_n(\psi\|\omega)\) is equivalent to the existence of a unique
\(x_\psi\in L^n(M)_+\) such that
\begin{equation}\label{eq:intro-kosaki-representative}
        h_\omega^{\eta_n}
        x_\psi
        h_\omega^{\eta_n}
        =
        h_\psi,
        \qquad
        \eta_n:=\frac{n-1}{2n}.
\end{equation}
In this case,
\begin{equation}\label{eq:intro-SRD-Haagerup}
        D_n(\psi\|\omega;M)
        =
        \frac{1}{n-1}
        \log
        \frac{\tr(x_\psi^n)}{\tr(h_\psi)}.
\end{equation}

We now describe the main analytic steps of the proof.

\paragraph*{1. Fixed-algebra reformulation. (Sec.~\ref{Sec:HSMI})}
Following \cite{Hollands:2025glm}, the SRD of a fixed state on the moving algebra
\(M_c\) is first rewritten as the SRD of a moving state on the fixed
algebra \(M_A\):
\begin{equation}
        D_n(\psi\|\omega;M_c)
        =
        D_n(\psi_c\|\omega;M_A),
        \qquad
        \psi_c:=\psi\circ T_c.
\end{equation}
For vector states, this corresponds to replacing \(\Psi\) by
\(\Psi_c=U_c\Psi\).

\paragraph*{2. The \(L^n\)-semigroup and cyclic generator identity. (Sec.~\ref{sec:cyclic-ward})}
The maps \(T_c\) lift to a positive contractive strongly continuous
semigroup \(V_c^{(n)}\) on \(L^n(M_A)\), and
\begin{equation}
        x_{\psi_c}=V_c^{(n)}x_\psi.
\end{equation}
Let \(G_n\) denote the generator of \(V_c^{(n)}\).  Modular and null
smoothings are used to construct a graph core for \(G_n\) on which we establish the cyclic
generator identity
\begin{equation}\label{eq:intro-cyclic-Ward}
        \sum_{k=0}^{n-1}
        \zeta_n^k
        \tr\left(
        y_1\cdots y_k
        (G_ny_{k+1})
        y_{k+2}\cdots y_n
        \right)
        =
        0,
        \qquad
        \zeta_n=e^{-2\pi i/n}.
\end{equation}
We refer to this as the cyclic Ward identity.
The identity is then extended
by graph-norm density to arbitrary
\(y_1,\ldots,y_n\in\operatorname{Dom}(G_n)\).

\paragraph*{3. Smooth-orbit log-convexity. (Sec.~\ref{sec:integer-renyi-qnec})}
Assume temporarily that \(x_\psi\in\operatorname{Dom}(G_n^2)\), and set
\[
        x_c:=V_c^{(n)}x_\psi,
        \qquad
        Q_n(c):=\tr(x_c^n).
\]
Then \(x_c\) is twice differentiable in \(L^n(M)\), and the continuity of
the multilinear product
\[
        L^n(M)\times\cdots\times L^n(M)
        \longrightarrow
        L^1(M)
\]
gives
\begin{align}\label{eq:derivatives}
\begin{aligned}
        &Q_n'(c)=
        n\tr\left((G_nx_c)x_c^{n-1}\right),\\
        &Q_n''(c)=
        n\tr\left((G_n^2x_c)x_c^{n-1}\right)
        +
        n\sum_{j=0}^{n-2}
        \tr\left(
        (G_nx_c)x_c^j
        (G_nx_c)x_c^{n-2-j}
        \right).
\end{aligned}
\end{align}
Applying the cyclic identity \eqref{eq:intro-cyclic-Ward} to the first term in the double derivative $Q''_n(c)$ above, and using the Cauchy--Schwarz inequality in the Hilbert space \(L^2(M)\)
yields
\begin{equation}
        Q_n(c)Q_n''(c)-Q_n'(c)^2\geq0.
\end{equation}
Thus \(Q_n\) is log-convex for generator-smooth initial data.

\paragraph*{4. Removal of the regularity assumption. (Sec.~\ref{sec:integer-renyi-qnec})}
For a general \(x_\psi\in L^n(M)_+\), let
\begin{equation}
        R_m:=m(m-G_n)^{-1},
        \qquad
        x_\psi^{(m)}:=R_m^2x_\psi,
\end{equation}
where $R_m$ is the Yosida resolvent of the generator $G$. Using the fact that $x_\psi^{(m)}\in \Dom(G_n^2)$, the generator-smooth argument applies to 
\[
        Q_n^{(m)}(c)
        :=
        \tr\left(
        \bigl(V_c^{(n)}x_\psi^{(m)}\bigr)^n
        \right).
\]
We establish that \(Q_n^{(m)}(c)\to Q_n(c)\) pointwise, and passing to the limit in
the multiplicative convexity inequality proves Theorem~A without any
generator-domain assumption.

The logical dependence of the proof is as follows. Proposition~\ref{prop:null-translation-Ln-semigroup} constructs the $L^n$-semigroup, and Proposition~\ref{prop:equal-dual-semigroups} constructs its dual. 
Propositions~\ref{lem:smooth-core-delta} and~\ref{prop:ward-graph-core-fixed} provide a modular-entire graph core for the generator of the $L^n$-semigroup.
Proposition~\ref{prop:core-cyclic-ward-fixed} proves the cyclic identity on this core, and Theorem~\ref{thm:full-domain-ward-fixed} extends it to the full generator domain. 
Corollary~\ref{cor:cyclic-identity-logconvexity-form} is the only consequence of Sec.~\ref{sec:cyclic-ward} used in Proposition~\ref{prop:smooth-orbit-log-convexity-fixed}, which establishes smooth-orbit log-convexity. 
Finally, Theorem~\ref{thm:general-log-convexity} removes the generator-domain assumption by Yosida regularization, and Corollary~\ref{cor:integer-renyi-qnec-fixed} translates the resulting log-convexity into integer RQNEC.

Sec.~\ref{Sec:conclusion} discusses related open problems and future directions.  Appendix~\ref{appsec:technical} compiles standard definitions and results on integrals and one-parameter semigroups on Banach spaces. Appendix~\ref{Sec:app-haagerupLp} includes useful analytical results in Haagerup $L^p$ spaces. In Appendix~\ref{Sec:misc-results-Sec45}, we collect some technical results that are used in Sec.~\ref{sec:cyclic-ward} and \ref{sec:integer-renyi-qnec}. For the special case \(n=2\), \(L^2(M)\) is a Hilbert space, and we give a shorter proof of RQNEC using the spectral calculus of the positive generator
\(P\) in Appendix~\ref{subsec:alpha-two-rigorous}.

%%%%%%%%%%%%%%%%%%%%%%%%%%%%%%%%%%%%%%%%%%%%%%%%%%%%%%%%%%%%%%%

\section{Noncommutative \texorpdfstring{$L^p$}{Lp}-framework}
\label{Sec:prelim}

This section reviews notation and key facts about Haagerup's and Kosaki's non-commutative $L^p$ spaces, following~\cite{Terp:1981lp,Hiai:2021qfv,Hiai:2021cks}, and defines the sandwiched R\'enyi divergence (SRD) following~\cite{Jencova:2017txf, Jencova:2016tqz}. 
Some standard results on Haagerup $L^p$ spaces are collected in Appendix~\ref{Sec:app-haagerupLp}.

%%%%%%%%%%%%%%%%%%%%%%%%%%%%%%%%%%%%%%%%%%%%%%%%%%%%%%%%%%
\subsection{Haagerup \texorpdfstring{$L^p$}{Lp}-spaces}
\label{subsec:Haagerup-Lp}

Let $M$ be a $\sigma$-finite von Neumann algebra and $M_*$ be its predual  defined as the set of all $\sigma$-weakly continuous linear functionals on $M$. 
Let $\omega\in M_*^+$ be faithful and normal, and let the crossed-product with respect to the modular automorphism group $\sigma^{\omega}$ be $\mathcal R= M\rtimes_{\sigma^\omega} \mathbb{R} $. The algebra $\mathcal R$ is semi-finite and has a (semi-finite) trace $\tau$. Let $L^0(\mathcal R,\t)$ be the set of $\tau$-measurable operators affiliated with $\mathcal R$. 
\begin{definition}[Haagerup's $L^p$ spaces]\label{def:Haagerup-space}
For $0<p\leq \I$, Haagerup's $L^p$ space $L^p(M)$ is defined by
\begin{equation}
    L^p(M)\defeq\{x\in  L^0(\mathcal R,\t): \hat{\sigma}_s(x)=e^{-s/p}x,\, \forall s\in \mathbb{R}\}\,,
\end{equation}
where $\hat{\sigma}_s(\cdot)$ is the dual action of the modular automorphism.
\end{definition}
The construction is independent, up to canonical isometry, of the choice
of the faithful normal state \(\omega\).
$L^p(M)$ are closed linear subspaces of $ L^0(\mathcal R,\t)$ and are linearly spanned by positive operators: $L^p(M)_+ = L^p(M)\cap L^0(\mathcal R,\t)_+$. 

\begin{definition}[tr]
Define a linear functional ${\tr}$ on $L^1(M)$ by
\begin{equation}
    {\tr}(h_\psi)\defeq\psi(1),\qquad \psi\in M_*, \; h_\psi \in L^1(M)\,.
\end{equation}
\end{definition}
Note that
\begin{equation}
   {\tr}(|h_\psi|)=\tr(h_{|\psi|}) = |\psi|(1)=\|\psi\|_1 \quad\A\psi\in M_*\,.
\end{equation}
\begin{definition}[Haagerup norm: $\|a\|_p$]\label{def:haagerup-norm}
For every $a\in L^p(M)$, define $\|a\|_p \in [0,\I)$ as
\begin{align}
    \|a\|_p \defeq&\ {\tr}(|a|^p)^{1/p}, \quad p\in(0,\I)\,,\\
    \|a\|_\I \defeq&\ \|a\|\,.
\end{align}
For $p\geq1$, $\|\cdot\|_p$ is a norm; for $p\in(0,1)$ it is a quasi-norm
\end{definition}

\begin{proposition}[Standard Haagerup \(L^p\)-facts \cite{Terp:1981lp}]
\label{prop:standard-Haagerup-facts}
Let $p\geq1$ and $q\leq\I$ with $1/p+1/q=1$.
    \begin{enumerate}
        \item If $a\in L^p(M)$ and $b\in L^q(M)$, then $ab,ba\in L^1(M)$ and
            \[
                \tr(ab) = \tr(ba)\,.
            \]
        \item  The map
            \begin{equation}
                M_*\ni\psi\mapsto h_\psi\in L^1(M)
            \end{equation}
                is an isometric linear bijection of $M_*$ onto $L^1(M)$.
        \item Let $a\in L^q(M)$ such that $a\geq 0$, then
                \[
                    \tr(ab) \geq0 ,\quad \forall\, b\in L^p(M)_+\,.
                \]
    \end{enumerate}
\end{proposition}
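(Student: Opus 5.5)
The plan is to follow Terp's construction and check the three items in the order (2), (1), (3), since the trace property needs the identification of $L^1(M)$ with $M_*$.

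For (2), I would start with $\psi\in M_*^+$ and form its dual weight $\tilde\psi$ on $\mathcal R$. By Haagerup's theory, $\tilde\psi=\tau(h_\psi\,\cdot)$ for a unique positive self-adjoint $h_\psi$ affiliated with $\mathcal R$. Because $\tau\circ\hat\sigma_s=e^{-s}\tau$ and $\tilde\psi\circ\hat\sigma_s=\tilde\psi$, this density satisfies $\hat\sigma_s(h_\psi)=e^{-s}h_\psi$. Conversely, every $h\in L^1(M)_+$ arises this way, by the usual correspondence between dual-invariant weights and relatively invariant densities. Both $\psi\mapsto h_\psi$ and the constraint defining $L^1(M)$ are linear, so the map extends to all of $M_*$. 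Note that $L^1(M)$ is spanned by its positive part. The $\tau$-measurability of $h_\psi$ is the standard fact that $\tau(e_{(\lambda,\infty)}(h_\psi))=\psi(1)/\lambda<\infty$, which comes from the scaling relation.

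For the isometry I would use polar decompositions on both sides. If $\psi=|\psi|(u^*\,\cdot)$ is the polar decomposition in $M_*$, then $h_\psi=u\,h_{|\psi|}$ is the polar decomposition in $L^1(M)$. Hence $|h_\psi|=h_{|\psi|}$, and
\[
\|h_\psi\|_1=\tr(h_{|\psi|})=|\psi|(1)=\|\psi\|_1 .
\]

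For (1), membership of the products is straightforward. Products of $\tau$-measurable operators are $\tau$-measurable, and $\hat\sigma_s(ab)=e^{-s/p}e^{-s/q}ab=e^{-s}ab$, so $ab,ba\in L^1(M)$. The substantive claim, and the step I expect to be the main obstacle, is cyclicity. I would first prove it for $p=q=2$ in the form $\tr(x^*x)=\tr(xx^*)$. For positive $h\in L^1(M)$ one has $\tr(h)=\tau(e_{(1,\infty)}(h))$. The operators $|x|^2$ and $|x^*|^2$ are unitarily equivalent on their supports via the partial isometry of $x$, and that partial isometry lies in $\mathcal R$. So $|x|^2$ and $|x^*|^2$ have equal $\tau$-distributions, and the two traces agree. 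Polarization then gives $\tr(x^*y)=\tr(yx^*)$ for all $x,y\in L^2(M)$.

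For general conjugate $(p,q)$, I would factor $a=u|a|=a_1a_2$ with $a_1=u|a|^{1/2}$ and $a_2=|a|^{1/2}$, both in $L^{2p}(M)$. Similarly I would factor $b=b_1b_2$ with $b_1,b_2\in L^{2q}(M)$. Since $\tfrac1{2p}+\tfrac1{2q}=\tfrac12$, the products $a_1a_2b_1$-type pairings can be grouped so that every factor lies in $L^2(M)$. For example,
\[
\tr(ab)=\tr\bigl((a_1)(a_2b_1b_2)\bigr),
\]
where the inner grouping is rearranged into two $L^2$ factors $a_2b_1$ and $b_2a_1$. The $L^2$ identity can then be applied twice to move $a_1$ around to the right, giving $\tr(ba)$. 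Along the way I would use the Hölder inequality $\|xy\|_r\le\|x\|_s\|y\|_t$ for $1/r=1/s+1/t$, from Terp/Kosaki, so that every intermediate product lies in the correct space and associativity in $L^0(\mathcal R,\tau)$ is legitimate.

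For (3), take $a\in L^q(M)_+$ and $b\in L^p(M)_+$. Then $a^{1/2}\in L^{2q}(M)$, and by (1) applied with the pair $\bigl(a^{1/2},\,b\,a^{1/2}\bigr)$ I get $\tr(ab)=\tr(a^{1/2}ba^{1/2})$. The operator $a^{1/2}ba^{1/2}$ is a positive element of $L^1(M)$. By (2) it equals $h_\varphi$ for some $\varphi\in M_*^+$, so its trace is $\varphi(1)\ge0$.
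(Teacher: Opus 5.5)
The paper does not prove this proposition; it is quoted from Terp. Your argument therefore has to stand on its own. Your treatment of (2), the case $p=q=2$ of (1) (via $\tr(h)=\tau(e_{(1,\infty)}(h))$ and polarization), and the deduction of (3) from (1) are sound. The gap is the passage from $p=2$ to general conjugate exponents in (1).

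Take $1<p<\infty$, $p\neq2$, with $a_1,a_2\in L^{2p}(M)$ and $b_1,b_2\in L^{2q}(M)$. The only way to cut the cyclic word $a_1a_2b_1b_2$ into two consecutive $L^2$ blocks is $X=a_2b_1$, $Y=b_2a_1$. Then $\tr(XY)=\tr(YX)$ says only $\tr(a_2ba_1)=\tr(b_2ab_1)$.
\begin{itemize}
\item Linking the left side to $\tr(ab)=\tr(a_1\cdot a_2b)$ requires rotating $a_1\in L^{2p}(M)$ past $a_2b$.
\item Linking the right side to $\tr(ba)$ requires rotating $b_1\in L^{2q}(M)$ past $b_2a$.
\end{itemize}
Both are instances of the statement you are proving, for the exponent pairs $(2p,\frac{2p}{2p-1})$ and $(2q,\frac{2q}{2q-1})$. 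These reduce to $(2,2)$ only when $p=1$ or $q=1$.

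No regrouping or refactoring fixes this. Give an element of $L^r(M)$ weight $1/r$. Each use of the $L^2$ identity rotates the word by a block of weight exactly $\tfrac12$, and the $(1,\infty)$ case rotates by weight $0$ or $1$. But $\tr(ab)\mapsto\tr(ba)$ is a rotation by weight $1/p$. So your argument establishes (1) only for $p\in\{1,2,\infty\}$. Item (3), which invokes (1) for the pair $(2q,\frac{2q}{2q-1})$, inherits the gap already at $q=2$.

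What is missing is an analytic ingredient. One standard route goes as follows.
\begin{enumerate}
\item Use the bimodule property $h_{\psi(x\,\cdot\,)}=h_\psi x$, $h_{\psi(\,\cdot\,x)}=xh_\psi$ for $x\in M$. You already need it for $h_\psi=uh_{|\psi|}$; note that with $\tr(h_\psi x)=\psi(x)$ this polar decomposition corresponds to $\psi=|\psi|(\,\cdot\,u)$, not $|\psi|(u^*\,\cdot\,)$. The bimodule property gives $\tr(h_\psi x)=\psi(x)=\tr(xh_\psi)$, which is the case $(1,\infty)$.
\item For $1<p<\infty$, both $\tr(ab)$ and $\tr(ba)$ are H\"older-continuous in each variable. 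So you may take $b=yh^{1/q}$ with $h=h_\omega$ and $y\in M$, a set dense in $L^q(M)$.
\item The $(1,\infty)$ case turns $\tr(yh^{1/q}a)$ into $\tr(h^{1/q}ay)$. This reduces the claim to $\tr(ch^{1/q})=\tr(h^{1/q}c)$ for all $c\in L^p(M)$.
\item Take $c=h^{1/p}z$ with $z\in M^{\rm an}$, which is dense (e.g.\ by Gaussian modular smoothing). The relation $h^{1/p}z=\sigma_{-i/p}(z)h^{1/p}$ gives $\tr(ch^{1/q})=\omega(\sigma_{-i/p}(z))=\omega(z)=\tr(hz)=\tr(h^{1/q}c)$. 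Here $\omega(\sigma_{-i/p}(z))=\omega(z)$ because $w\mapsto\omega(\sigma_w(z))$ is entire and constant on $\mathbb{R}$.
\end{enumerate}
With (1) established this way, your derivation of (3) goes through unchanged.
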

The following generalized H{\"o}lder inequality is from \cite[Theorem 9.17]{Hiai:2021cks}.
\begin{lemma}[Generalized H{\"o}lder inequality]\label{lem:general-holder}
Let $p,q,r\in(0,\I]$ with $1/r=1/p+1/q$. If $a\in L^p(M)$ and $b\in L^q(M)$, then $ab\in L^r(M)$ and
\[
    \|ab\|_r \leq \|a\|_p\|b\|_q\,.
\]
\end{lemma}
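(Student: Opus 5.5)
The plan is to derive the inequality from the Fack--Kosaki theory of generalized singular numbers on the semifinite crossed product $(\mathcal R,\tau)$. The point is that Haagerup $L^p$ elements have singular-number functions of an explicit, rigid form, and this turns a log-majorization statement into exactly the sharp H\"older bound, for all $r\in(0,\I]$ at once.

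\textbf{Step 1: $ab\in L^r(M)$.} The cases $p=\I$ or $q=\I$ reduce to the module property of $L^p(M)$ over $M=L^\I(M)$, together with $\|xa\|_p\le\|x\|\,\|a\|_p$. So assume $p,q<\I$. Since $a,b$ are $\tau$-measurable, the strong product $ab$ lies in $L^0(\mathcal R,\tau)$. The dual action $\hat\sigma_s$ is a $*$-automorphism of $\mathcal R$ and extends continuously to the measurable operators, so
\[
\hat\sigma_s(ab)=\hat\sigma_s(a)\hat\sigma_s(b)=e^{-s/p}e^{-s/q}ab=e^{-s/r}ab.
\]
Hence $ab\in L^r(M)$ by Definition~\ref{def:Haagerup-space}.

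\textbf{Step 2: rigid singular numbers.} For $x\in L^p(M)$ with $0<p<\I$ one has the standard identity (Terp/Kosaki)
\[
\tau\bigl(\chi_{(s,\I)}(|x|)\bigr)=s^{-p}\tr(|x|^p).
\]
It follows from $\hat\sigma_s$-covariance and the relation between $\tau$ and $\tr$. Consequently the generalized singular numbers are
\[
\mu_t(x)=t^{-1/p}\|x\|_p .
\]
Since $t\mapsto\log t$ is integrable at $0$, integrating gives, for every $t>0$,
\[
\int_0^t\log\mu_s(x)\,ds=t\log\|x\|_p-\tfrac{t}{p}(\log t-1).
\]

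\textbf{Step 3: log-majorization and cancellation.} I will invoke the Fack--Kosaki weak log-majorization (the Horn inequality for $\tau$-measurable operators):
\[
\int_0^t\log\mu_s(ab)\,ds\le\int_0^t\log\mu_s(a)\,ds+\int_0^t\log\mu_s(b)\,ds .
\]
Inserting the formula from Step 2 on both sides, with exponents $r$, $p$ and $q$ respectively, yields
\[
t\log\|ab\|_r-\tfrac{t}{r}(\log t-1)\le t\log\|a\|_p+t\log\|b\|_q-\Bigl(\tfrac tp+\tfrac tq\Bigr)(\log t-1).
\]
Because $1/r=1/p+1/q$, the $(\log t-1)$ terms cancel. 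Dividing by $t$ gives $\|ab\|_r\le\|a\|_p\|b\|_q$, with no loss of constant, including in the quasi-normed range $r<1$.

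\textbf{Main obstacle.} The delicate step is the log-majorization for unbounded $\tau$-measurable operators in a semifinite, typically type~II$_\I$ or III-derived, crossed product. Simpler submultiplicativity bounds such as $\mu_{t}(ab)\le\mu_{\theta t}(a)\mu_{(1-\theta)t}(b)$ only give the inequality up to a constant strictly bigger than $1$, so they do not suffice. I would first try to quote Fack--Kosaki directly. If a self-contained argument is needed, the fallback is:
\begin{itemize}
\item reduce to $a,b\ge0$ by polar decomposition, using $\|uxv\|_r\le\|x\|_r$ for partial isometries $u,v$;
\item truncate $a,b$ by spectral projections to bounded, $\tau$-finite-support operators, where Horn's inequality follows from the determinant/antisymmetric-tensor argument (via the Fuglede--Kadison determinant);
\item pass to the limit using continuity of $\mu_s$ in the measure topology together with Fatou's lemma.
\end{itemize}
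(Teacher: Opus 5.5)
Your proof is correct. The paper gives no argument of its own and simply quotes the result from Hiai's lecture notes (Theorem 9.17); your route is the standard proof behind that citation: $\hat\sigma_s$-covariance to place $ab$ in $L^r(M)$, the rigid singular numbers $\mu_t(x)=t^{-1/p}\|x\|_p$ on the crossed product, and the Fack--Kosaki (Horn-type) majorization for $\tau$-measurable operators, under which the power-law profiles cancel and no constant is lost. The only additions needed are the trivial degenerate cases $a=0$, $b=0$ or $ab=0$, where the logarithms equal $-\infty$ and the bound holds automatically.
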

The following properties of Haagerup's $L^p$ spaces from \cite{Terp:1981lp} are useful for our purposes.
\begin{theorem}
    \begin{enumerate}
        \item $(L^p(M), \|\cdot\|_p)$ is a Banach space for all $p\in [1,\I]$.
        \item $L^2(M)$ is a Hilbert space with respect to the inner product: $\braket{a,b}_{L^2(M)} \defeq \tr(a^*b)= \tr(ba^*)$ for $a,b\in L^2(M)$.
        \item Let $1\leq p<\I$ and $1/p+1/q=1$. Then the dual Banach space of $L^p(M)$ is $L^q(M)$ with the duality pairing: $L^p(M) \times L^q(M)\ni(a,b) \mapsto \tr(ab) \in \mathbb C$.
    \end{enumerate}
\end{theorem}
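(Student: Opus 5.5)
The theorem just stated lists three standard facts from \cite{Terp:1981lp}: $L^p(M)$ is a Banach space, $L^2(M)$ is a Hilbert space, and the duality $L^p(M)^*\cong L^q(M)$ holds. My plan is to reduce each to the structure of $\tau$-measurable operators on the crossed product $\mathcal R$ together with the scaling identity $\tau\circ\hat\sigma_s=e^{-s}\tau$. The key tool is Terp's identity relating $\tr$ to $\tau$. For $h\in L^1(M)_+$, the distribution function satisfies $\tau(\chi_{(t,\infty)}(h))=\tr(h)/t$. Consequently, for $x\in L^p(M)$,
\begin{equation}
\mu_t(x)=t^{-1/p}\|x\|_p ,
\end{equation}
where $\mu_t$ denotes the generalized $s$-numbers. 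This identity drives every step below.

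\textbf{Banach space.} I would first show that $\|\cdot\|_p$ is a norm for $p\ge 1$. Using $\mu_t(x)=t^{-1/p}\|x\|_p$, the norm is a fixed multiple of $\mu_1(x)$, equivalently of the weak-$L^p$ quasinorm. The triangle inequality then follows from $\mu_{t+s}(x+y)\le\mu_t(x)+\mu_s(y)$: taking $t=s=1/2$ gives $\|x+y\|_p\le 2^{1/p}\cdot 2^{?}$-type bounds, which only yield a quasi-norm. To obtain a genuine norm, I would instead use the duality route: prove H\"older's inequality (Lemma~\ref{lem:general-holder}) and then identify $\|x\|_p=\sup\{|\tr(xy)|:\|y\|_q\le1\}$ via the polar decomposition $x=u|x|$ and the choice $y=|x|^{p-1}u^*/\|x\|_p^{p-1}$. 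A supremum of linear functionals is subadditive, so the norm property follows.

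\textbf{Completeness.} Convergence in $\|\cdot\|_p$ implies convergence in the measure topology of $L^0(\mathcal R,\tau)$, since it controls $\mu_t$ for every $t$. Because $L^0(\mathcal R,\tau)$ is complete in that topology, a $\|\cdot\|_p$-Cauchy sequence converges in measure to some $x$. The condition $\hat\sigma_s(x)=e^{-s/p}x$ is closed in measure, so $x\in L^p(M)$. Finally, the map $x\mapsto\mu_1(x)$ is continuous in measure, so $\|x_n-x\|_p\to0$.

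\textbf{Hilbert space.} For $p=2$, H\"older with $r=1$ shows that $a^*b\in L^1(M)$, so $\langle a,b\rangle=\tr(a^*b)$ is sesquilinear. It is positive since $\tr(a^*a)=\|a\|_2^2\ge 0$ by Proposition~\ref{prop:standard-Haagerup-facts}(3). The equality $\tr(a^*b)=\tr(ba^*)$ is the trace property in item~(1) of that proposition. Completeness was already established above.

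\textbf{Duality.} H\"older gives a contractive map $L^q(M)\to L^p(M)^*$, and the choice of $y$ above shows it is isometric. Surjectivity is the main obstacle. For $p=1$, the case is handled by $L^1(M)\cong M_*$ together with $M_*^*=M=L^\infty(M)$. For $1<p<\infty$, I would first prove uniform convexity of $L^p(M)$ via Clarkson-type inequalities for $\mu_t$, or alternatively via Kosaki interpolation. Uniform convexity gives reflexivity. The isometric image of $L^q(M)$ in $L^p(M)^*$ is then closed and weak$^*$-dense, because it separates points of $L^p(M)$ by the norming argument. Closed plus weak$^*$-dense in the dual of a reflexive space forces equality, which completes the duality statement.
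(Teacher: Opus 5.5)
The paper gives no proof of this theorem: it is quoted from Terp's notes, and reflexivity of $L^n(M)$ is later quoted separately from Pisier. Your proposal therefore has to stand on its own.

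\textbf{Items 1 and 2 are essentially fine.} The identity $\mu_t(x)=t^{-1/p}\|x\|_p$ is correct. Completeness via closedness of $L^p(M)$ in the measure topology works, because $\hat\sigma_s$ is a homeomorphism for that topology ($\tau\circ\hat\sigma_s=e^{-s}\tau$). The norming element $|x|^{p-1}u^*/\|x\|_p^{p-1}$ lies in $L^q(M)$ because $u\in M$. You omit two small points: the case $p=\infty$ ($L^\infty(M)=M$), and conjugate symmetry $\overline{\tr(h)}=\tr(h^*)$ for the inner product.

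Your norm argument rests on the sharp H\"older inequality (Lemma~\ref{lem:general-holder}), so that inequality must be available without the triangle inequality. It is: use the Fack--Kosaki Horn-type inequality $\int_0^s\mu_t(xy)^\alpha\,dt\le\int_0^s\bigl(\mu_t(x)\mu_t(y)\bigr)^\alpha\,dt$ with $0<\alpha<r$, and insert $\mu_t=t^{-1/p}\|\cdot\|_p$.

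You also abandoned the $\mu_t$ route too early. The pointwise bound only gives a quasi-norm, but Ky Fan's inequality $\int_0^s\mu_t(x+y)\,dt\le\int_0^s\mu_t(x)\,dt+\int_0^s\mu_t(y)\,dt$ gives the exact triangle inequality for $p>1$, since $t^{-1/p}$ is integrable at $0$.

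\textbf{The genuine gap is item 3.} Your reduction of surjectivity to reflexivity (closed isometric image that separates points, hence weak$^*$-dense) is correct. But reflexivity is the whole content for $1<p<\infty$, and the method you name does not work as stated.

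Tracial Clarkson/McCarthy inequalities are statements about $\tau(|x|^p)=\int_0^\infty\mu_t(x)^p\,dt$, which is $+\infty$ for every nonzero $x\in L^p(M)$. Even a submajorization such as $\mu(|x+y|^p\oplus|x-y|^p)\prec_w 2^{p-1}\mu(|x|^p\oplus|y|^p)$ cannot be transferred, because both sides are proportional to $1/t$, which is not integrable at $0$. Only inequalities for $\mu_t$ of elements of $L^r(M)$ with $r>1$, or log-majorizations, survive the transfer.

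\textbf{A repair along your lines.} For $p>2$:
\begin{itemize}
\item Set $r=p/2$, $A=|x+y|^2$ and $B=|x-y|^2$, both in $L^r(M)_+$.
\item In $M_2(\mathcal R)$ with trace $\tau\otimes\Tr$, one has $\mu_t(A\oplus B)=t^{-1/r}(\|A\|_r^r+\|B\|_r^r)^{1/r}$.
\item $A\oplus B$ is the diagonal pinching of $CC^*$, where $C^*C=(A+B)\oplus 0$. Pinching submajorization then gives $\|A\|_r^r+\|B\|_r^r\le\|A+B\|_r^r$.
\item Since $A+B=2(|x|^2+|y|^2)$, the triangle inequality in $L^r(M)$ yields Clarkson's inequality $\|x+y\|_p^p+\|x-y\|_p^p\le 2^{p-1}(\|x\|_p^p+\|y\|_p^p)$.
\end{itemize}
This gives uniform convexity, reflexivity, and $L^p(M)^*=L^q(M)$ for $p\ge 2$ (with $p=2$ by Riesz). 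For $1<p<2$, $L^p(M)=L^q(M)^*$ is the dual of a reflexive space, hence reflexive, and your closed-plus-weak$^*$-dense argument finishes.

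The Kosaki-interpolation alternative is not a shortcut. $C_{1/p}(M,M_*)$ interpolates between two non-reflexive spaces, so you would still need reiteration through the Hilbert space $L^2(M)$ and an interpolation theorem for uniform convexity.
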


We will also need the following result.
\begin{lemma}[\cite{KOSAKI198429}]\label{lem:modularflow-haagerup}
    For $\omega \in M_*^+$ faithful, and the modular automorphism group $\sigma^\omega_t$ with respect to $\omega$, we have
    \begin{equation}
        \sigma^\omega_t(x) = h_\omega^{it}x h_\omega^{-it}, \quad x\in M, \quad t\in \mathbb{R}\,.
    \end{equation}
\end{lemma}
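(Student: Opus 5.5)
The statement to prove is Lemma~\ref{lem:modularflow-haagerup}: for $x\in M$, $\sigma^\omega_t(x)=h_\omega^{it}xh_\omega^{-it}$ inside $L^0(\mathcal R,\tau)$. The plan is to work entirely inside the crossed product $\mathcal R=M\rtimes_{\sigma^\omega}\mathbb R$ and to identify $h_\omega$ explicitly.

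\emph{Setting up the crossed product.} I view $\mathcal R$ on $L^2(\mathbb R,\mathcal H)$. It is generated by $\pi(x)$ for $x\in M$, with $(\pi(x)\xi)(s)=\sigma^\omega_{-s}(x)\xi(s)$, and by the translation unitaries $\lambda(t)$. These satisfy the covariance relation
\[
\lambda(t)\pi(x)\lambda(t)^*=\pi(\sigma^\omega_t(x)).
\]
By Stone's theorem, $\lambda(t)=k^{it}$ for a positive nonsingular self-adjoint $k$ affiliated with $\mathcal R$. The canonical trace is $\tau=\hat\omega(k^{-1}\,\cdot)$, where $\hat\omega$ is the dual weight. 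In Haagerup's construction the density of a normal functional $\psi$ is the Radon--Nikodym derivative $h_\psi=d\hat\psi/d\tau$. So $h_\omega=d\hat\omega/d\tau$, and from $\tau=\hat\omega(k^{-1}\cdot)$ this gives $h_\omega=k$. Verifying this identification carefully is the main technical step. I would use two facts: $k$ is affiliated with the centralizer of $\hat\omega$, and $\sigma^{\hat\omega}_t=\operatorname{Ad}\lambda(t)$, which is Haagerup's/Takesaki's computation of the dual weight's modular group. Together these show that $\hat\omega(k^{-1}\cdot)$ is a trace and that the Connes cocycle is $(D\hat\omega:D\tau)_t=k^{it}$. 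Hence $d\hat\omega/d\tau=k$, and $h_\omega^{it}=\lambda(t)$. The scaling property $\hat\sigma_s(k)=e^{-s}k$ also confirms that $k\in L^1(M)$, consistent with Definition~\ref{def:Haagerup-space}.

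\emph{Conclusion.} With $M$ identified with $\pi(M)\subset\mathcal R$, the covariance relation reads
\[
h_\omega^{it}xh_\omega^{-it}=\lambda(t)\pi(x)\lambda(t)^*=\pi(\sigma^\omega_t(x))=\sigma^\omega_t(x).
\]
This is exactly the claim. Since $h_\omega^{it}$ is a unitary in $\mathcal R$, the products are ordinary bounded products, and no $\tau$-measurable subtleties arise. I expect the only obstacle to be the identification $h_\omega^{it}=\lambda(t)$. If one prefers to avoid dual-weight technology, an alternative is to use the Connes cocycle directly. For any $\psi$, $h_\psi^{it}h_\omega^{-it}=(D\psi:D\omega)_t$ holds in Haagerup's framework. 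Combining this with $\sigma^\psi_t=\operatorname{Ad}(D\psi:D\omega)_t\circ\sigma^\omega_t$ reduces the claim to the base case, which is the covariance relation above.
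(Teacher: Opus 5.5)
Your argument is correct and is the standard proof behind the paper's citation of Kosaki; the paper itself gives no proof. In Haagerup's construction $h_\omega=d\tilde\omega/d\tau$ with $(D\tilde\omega:D\tau)_t=\lambda(t)$, so $h_\omega^{it}=\lambda(t)$, and the covariance relation $\lambda(t)\pi(x)\lambda(t)^*=\pi(\sigma^\omega_t(x))$ gives the claim, exactly as you wrote.

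One small correction concerns your closing remark. The cocycle route does not avoid dual-weight technology. The identity $h_\psi^{it}h_\omega^{-it}=(D\psi:D\omega)_t$ is itself obtained from $(D\tilde\psi:D\tilde\omega)_t=\pi((D\psi:D\omega)_t)$, and it still presupposes $h_\omega^{it}=\lambda(t)$ for the reference state. Its real role is to extend the lemma from the state used to build $\mathcal R$ to an arbitrary faithful state.
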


The algebra $M$ can be densely embedded into Haagerup $L^p$ spaces as follows.
\begin{definition}[Dense embedding in $L^p(M)$ \cite{KOSAKI198429}]\label{def:symmetricembedding}
    Let \(1\leq p<\infty\), and $\omega$ be a distinguished faithful normal state and $h_\omega\in L^1(M)$ be its Haagerup density. Define the map
    \[
        j_p:M\longrightarrow L^p(M),
        \qquad
        j_p(a)
        :=
        h_\omega^{1/(2p)}a h_\omega^{1/(2p)}.
    \]
    Then $j_p(M)$ is dense in $L^p(M)$.
\end{definition}

Let $M^{\rm an} \subset M$ denote the elements of $M$ that are entire analytic with respect to the modular flow of the faithful normal state $\omega\in M_*^+$. For $a\in M^{\rm an}$ the map $t\to \sigma_t^{\omega}(a)$ can be extended to an entire analytic function $z\to\sigma_z^{\omega}(a)$ on the complex plane.  $M^{\rm an}$ is strong-$*$ dense in $M$ (see the construction of $b_\ve$ in \eqref{eq:b-eps-def}). Let $n$ be an integer. The following KMS placement formula for a product of $n$ modular entire elements will be extremely useful in the subsequent section.
\begin{lemma}[KMS placement formula]
\label{lem:KMS-placement}
Let $\omega\in M_*^+$ be faithful with Haagerup density $h_\omega\in L^1(M)_+$, and let $n$ be an integer \(n\geq2\).  Suppose that
\(a_1,\ldots,a_n\in M^{\rm an}\). Then
\begin{equation}
\label{eq:KMS-placement}
\tr\!\left(j_n(a_1)\cdots j_n(a_n)\right)
=
\omega\!\left(
a_n
\sigma_{-i/n}^\omega(a_1)
\sigma_{-2i/n}^\omega(a_2)
\cdots
\sigma_{-(n-1)i/n}^\omega(a_{n-1})
\right).
\end{equation}
\end{lemma}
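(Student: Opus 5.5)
The plan is to move all powers of \(h:=h_\omega\) to the right using the analytically continued modular flow, and then to evaluate the trace against \(h\) with the defining identity \(\tr(b\,h_\omega)=\omega(b)\) for \(b\in M\).

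\textbf{Step 1: rewrite the product so that \(a_n\) comes first.} Write out
\[
\tr\bigl(j_n(a_1)\cdots j_n(a_n)\bigr)=\tr\bigl(h^{1/(2n)}a_1h^{1/n}a_2h^{1/n}\cdots h^{1/n}a_nh^{1/(2n)}\bigr).
\]
Each \(j_n(a_k)\) lies in \(L^n(M)\), so by the generalized H\"older inequality (Lemma~\ref{lem:general-holder}) the product \(j_n(a_1)\cdots j_n(a_{n-1})\) lies in \(L^{n/(n-1)}(M)\). Cyclicity of \(\tr\) (Proposition~\ref{prop:standard-Haagerup-facts}) then allows us to move \(j_n(a_n)\) to the front. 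Regrouping \(h^{1/(2n)}h^{1/(2n)}=h^{1/n}\), splitting off the outer factors \(h^{1/(2n)}\in L^{2n}(M)\) and applying cyclicity once more, I aim for
\[
\tr\bigl(j_n(a_1)\cdots j_n(a_n)\bigr)=\tr\bigl(a_n\,h^{1/n}a_1h^{1/n}a_2\cdots h^{1/n}a_{n-1}h^{1/n}\bigr).
\]
Here \(a_n\in M\) and the remaining product lies in \(L^1(M)\) by H\"older, so the cyclic moves are legitimate.

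\textbf{Step 2: commute powers of \(h\) past analytic elements.} The key identity, for \(a\in M^{\rm an}\) and \(0<s\leq1\), is
\[
h^{s}a=\sigma^\omega_{-is}(a)\,h^{s}
\]
as an equality in \(L^{1/s}(M)\), that is, of \(\tau\)-measurable operators. I would derive it from Lemma~\ref{lem:modularflow-haagerup}, \(h^{it}ah^{-it}=\sigma_t^\omega(a)\), by analytic continuation in \(t\). One way is to pair both sides against a dense set of elements of \(L^{q}(M)\) with \(1/q=1-s\), for instance vectors of the form \(j_q(b)\), and to use the three-lines/Carlson uniqueness for the resulting bounded analytic functions on the strip \(-s\leq \Im z\leq 0\). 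This is the approach of Kosaki's interpolation framework.

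\textbf{Step 3: iterate.} Applying the identity repeatedly from left to right, the cumulative exponent of \(h\) is \(k/n\) when it passes \(a_k\). This gives
\[
h^{1/n}a_1h^{1/n}a_2\cdots a_{n-1}h^{1/n}=\sigma_{-i/n}^\omega(a_1)\sigma_{-2i/n}^\omega(a_2)\cdots\sigma_{-(n-1)i/n}^\omega(a_{n-1})\,h.
\]
This uses that \(\sigma_z^\omega(a)\) is again entire analytic, and that \(\sigma_{-is}\circ\sigma_{-is'}=\sigma_{-i(s+s')}\) on \(M^{\rm an}\). Multiplying on the left by \(a_n\) and using \(\tr(bh_\omega)=\omega(b)\) for \(b\in M\), which holds by the definition of \(\tr\) together with \(h_{\omega(b\,\cdot)}=hb\) or \(bh\) up to cyclicity, yields \eqref{eq:KMS-placement}.

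\textbf{Main obstacle.} I expect the rigorous justification of Step 2 to be the hard part. One has to make sense of \(h^sa\) and \(\sigma_{-is}(a)h^s\) as equal measurable operators, rather than as a formal identity, and to handle the iterated products with intermediate exponents \(k/n\). I would first try to prove it as an identity of densities, \(h^{s}ah^{1-s}=\sigma_{-is}(a)h\) in \(L^1(M)\). This can be checked weakly by pairing with \(M\), where both sides are analytic functions of \(s\) agreeing at \(s=it\). I would then peel off \(h^{1-s}\) using the fact that \(h\) is non-singular, i.e.\ has trivial kernel and dense range as a measurable operator.
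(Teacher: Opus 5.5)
Your proposal is correct and follows the paper's proof essentially step for step. You use cyclicity to bring $a_n$ to the front and merge the outer $h^{1/(2n)}$ factors, then iterate $h^{k/n}a_k=\sigma^\omega_{-ik/n}(a_k)h^{k/n}$, and finally evaluate with $\tr(h_\omega b)=\omega(b)$; the only difference is that the paper takes the commutation identity directly from Lemma~\ref{lem:modularflow-haagerup}, whereas you also sketch a rigorous justification (weak analytic continuation, then cancelling the full-support factor $h^{1-s}$), which is a sound addition.
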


\begin{proof}
Since
\[
        j_n(a)=h_\omega^{1/(2n)}ah_\omega^{1/(2n)},
\]
we have
\[
\begin{aligned}
\tr\!\left(j_n(a_1)\cdots j_n(a_n)\right)
&=
\tr\!\left(
h_\omega^{1/(2n)}
a_1h_\omega^{1/n}a_2
\cdots
h_\omega^{1/n}a_n
h_\omega^{1/(2n)}
\right)                                                   \\
&=
\tr\!\left(
a_nh_\omega^{1/n}a_1h_\omega^{1/n}a_2
\cdots
h_\omega^{1/n}a_{n-1}h_\omega^{1/n}
\right),
\end{aligned}
\]
where the second equality follows from the H\"older-compatible cyclicity of
the Haagerup functional $\tr$, Lemma~\ref{lem:holder-products-trace-cyclicity}.

For every modular-entire \(a\in M^{\rm an}\) and \(s\geq0\), the modular-flow
from Lemma~\ref{lem:modularflow-haagerup} gives
\begin{equation}
\label{eq:move-h-past-a}
        h_\omega^s a
        =
        \sigma_{-is}^\omega(a)h_\omega^s .
\end{equation}
Applying \eqref{eq:move-h-past-a} successively yields
\[
\begin{aligned}
&a_nh_\omega^{1/n}a_1h_\omega^{1/n}a_2\cdots
h_\omega^{1/n}a_{n-1}h_\omega^{1/n}                                  \\
&\qquad =
a_n
\sigma_{-i/n}^\omega(a_1)
\sigma_{-2i/n}^\omega(a_2)
\cdots
\sigma_{-(n-1)i/n}^\omega(a_{n-1})
h_\omega .
\end{aligned}
\]
All products above are H\"older-compatible and belong to \(L^1(M)\).
A final application of $\tr$ cyclicity, together with
\(\omega(b)=\tr(h_\omega b)\) for \(b\in M\), gives
\[
\begin{aligned}
\tr\!\left(j_n(a_1)\cdots j_n(a_n)\right)
&=
\tr\!\left(
h_\omega\,a_n
\sigma_{-i/n}^\omega(a_1)
\cdots
\sigma_{-(n-1)i/n}^\omega(a_{n-1})
\right)                                                   \\
&=
\omega\!\left(
a_n
\sigma_{-i/n}^\omega(a_1)
\cdots
\sigma_{-(n-1)i/n}^\omega(a_{n-1})
\right).
\end{aligned}   
\]
\end{proof}

\subsection*{Kosaki \texorpdfstring{$L^p$}{Lp} spaces}
From Definition~\ref{def:symmetricembedding}, $M$ is densely embedded into $L^1(M)$ as
\begin{equation}
    x\in M \mapsto h_\omega^{1/2} xh^{1/2}_\omega \in L^1(M)\,.
\end{equation}
Define the norm $\|h_\omega^{1/2} xh^{1/2}_\omega\|\defeq\|x\| $ on $h_\omega^{1/2} Mh^{1/2}_\omega\subset L^1(M)$, so that $h_\omega^{1/2} Mh^{1/2}_\omega \simeq M$. Then $L^1(M)$ and $h_\omega^{1/2} Mh^{1/2}_\omega$ are a pair of compatible Banach spaces.
\begin{definition}[Kosaki's $L^p$ spaces \cite{KOSAKI198429}]
    Let $1<p<\I$. Kosaki's symmetric $L^p$ space $L^p(M,\omega)$ is defined to be the complex interpolation space
    \begin{equation}
        C_{1/p}(h_\omega^{1/2} M h_\omega^{1/2},L^1(M))\,,
    \end{equation}
    with the complex interpolation norm $\|\cdot\|_{p,\omega}$. $L^1(M,\omega)=L^1(M)$ and $L^\I(M,\omega)=h_\omega^{1/2} M h_\omega^{1/2}$, with $\|h_\omega^{1/2} x h_\omega^{1/2}\|_{\I,\omega}=\|x\|$.
\end{definition}

\begin{theorem}[\cite{KOSAKI198429}]\label{thm:haagerup-kosaki}
    Let $p\geq 1$ and $q \leq\I$ with $\frac1p+\frac1q=1$, then
    \begin{align}
        &L^p(M,\omega)=h_\omega^{1/2q} L^p(M) h_\omega^{1/2q} \subset L^1(M)\,,\\
        &\|h_\omega^{1/2q} a h_\omega^{1/2q}\|_{p,\omega} = \|a\|_p, \quad a\in L^p(M)\,.
    \end{align}
    That is
    \begin{equation}
        L^p(M,\omega) \simeq L^p(M)
    \end{equation}
    via the isometry
    \begin{equation}
        a\mapsto h_\omega^{1/2q} a h_\omega^{1/2q}\,.
    \end{equation}
\end{theorem}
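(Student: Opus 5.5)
The statement is Kosaki's theorem (Theorem~\ref{thm:haagerup-kosaki}): the map $a\mapsto h_\omega^{1/2q}ah_\omega^{1/2q}$ is an isometry of $L^p(M)$ onto the complex interpolation space $C_{1/p}(h_\omega^{1/2}Mh_\omega^{1/2},L^1(M))$. The plan is to prove the two norm inequalities separately. The upper bound comes from building an explicit analytic family that attains the candidate norm. The lower bound uses the duality $L^p(M)^*\cong L^q(M)$ together with the three-lines lemma. Throughout, write $\Phi(a):=h_\omega^{1/2q}ah_\omega^{1/2q}$. By the generalized H\"older inequality (Lemma~\ref{lem:general-holder}), $\Phi(a)\in L^1(M)$, since $h_\omega^{1/2q}\in L^{2q}(M)$ and $1/2q+1/p+1/2q=1$.

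\textbf{Upper bound, $\|\Phi(a)\|_{p,\omega}\le\|a\|_p$.} Take $a=u|a|$ with $\|a\|_p=1$, and first assume $|a|$ is bounded above and below on its support. Later this restriction is removed by density, using the dense embedding of Definition~\ref{def:symmetricembedding} and closedness of the interpolation norm. For $z$ in the strip $0\le\Re z\le1$, define
\[
F(z)=h_\omega^{(1-z)/2}\,u|a|^{pz}\,h_\omega^{(1-z)/2}.
\]
At $z=1/p$ this equals $\Phi(a)$, because $(1-1/p)/2=1/2q$. On the line $\Re z=1$, Proposition~\ref{prop:standard-Haagerup-facts} gives $F(1+it)=h_\omega^{-it/2}u|a|^{p}|a|^{ipt}h_\omega^{-it/2}$, which lies in $L^1(M)$ with norm $\le\|a\|_p^p=1$. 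On the line $\Re z=0$, we have $F(it)=h_\omega^{1/2}\bigl(h_\omega^{-it/2}u|a|^{ipt}h_\omega^{-it/2}\bigr)h_\omega^{1/2}$. The middle factor lies in $M$ with norm $\le1$, using that $h_\omega^{is}$ implements $\sigma^\omega$ (Lemma~\ref{lem:modularflow-haagerup}) and that degree-zero elements belong to $M$. Hence $\|F(it)\|_{\infty,\omega}\le1$.

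The remaining point for this direction is that $F$ is bounded and continuous on the closed strip and analytic in the interior as an $L^1(M)$-valued map. I would verify this through the pairing $\tr(F(z)b)$ for $b\in M$, and by approximating $h_\omega$ with spectral cutoffs if needed.

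\textbf{Lower bound, $\|a\|_p\le\|\Phi(a)\|_{p,\omega}$.} Let $G$ be any admissible interpolating function with $G(1/p)=\Phi(a)$. Choose $b\in L^q(M)$ with $\|b\|_q=1$ and $\tr(ab)=\|a\|_p$. Build the dual analytic family
\[
B(z)=h_\omega^{-(1-z)/2}\,(\cdot)\,h_\omega^{-(1-z)/2}\text{-twisted from } b=v|b|,
\]
that is, $B(z)=v|b|^{q(1-z)}$ conjugated by the appropriate imaginary powers of $h_\omega$. This is arranged so that $f(z):=\tr(G(z)B(z))$ satisfies three conditions: $f(1/p)=\tr(ab)$, $|f(it)|\le\|G(it)\|_{\infty,\omega}$, and $|f(1+it)|\le\|G(1+it)\|_1$. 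The two boundary bounds use H\"older on the pairings $L^1\times M$ and $h_\omega^{1/2}Mh_\omega^{1/2}\times L^1$, where the latter pairing is $\tr(h_\omega^{1/2}xh_\omega^{1/2}y)=\tr(x\,h_\omega^{1/2}yh_\omega^{1/2})$. The three-lines lemma then gives $\|a\|_p\le\sup\max(\ldots)$, and taking the infimum over $G$ gives $\|a\|_p\le\|\Phi(a)\|_{p,\omega}$.

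\textbf{Surjectivity and main obstacle.} It remains to show that $\Phi$ maps onto $L^p(M,\omega)$. This follows from density of $\Phi(j_p(M))=j_1(M)$-type elements in both spaces together with the isometry and completeness. The main obstacle is rigorous analyticity and boundedness of these operator-valued families involving unbounded $h_\omega^{z}$. I would handle this by first restricting to $a$ (and $b$) of the form $j_p(x)$ with $x\in M^{\rm an}$, where the KMS manipulations of Lemma~\ref{lem:KMS-placement} make every function explicit. I would then pass to the general case by density.
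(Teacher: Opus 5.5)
The paper gives no proof of this statement; it is quoted from Kosaki. Your architecture is the standard one: an explicit analytic family for $\|\Phi(a)\|_{p,\omega}\le\|a\|_p$, three-lines duality for the reverse inequality, and surjectivity from density of $A_0\cap A_1=h_\omega^{1/2}Mh_\omega^{1/2}$ in $C_{1/p}$ plus closedness of an isometric image. The surjectivity step is fine.

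\textbf{Main gap: the lower bound.} The function $f(z)=\tr(G(z)B(z))$ is not defined for $0<\Re z<1$. For an arbitrary admissible $G$, the interior values lie only in $A_0+A_1=L^1(M)$, and what pairs with all of $L^1(M)$ under $\tr$ is $M$. Neither reading of your $B(z)$ is in $M$:
\begin{itemize}
\item With only imaginary powers of $h_\omega$, $B(z)$ has $\hat\sigma$-degree $-(1-\Re z)$, so $G(z)B(z)\notin L^1(M)$, not even at $z=1/p$.
\item With $h_\omega^{-(1-z)/2}$, as in your first formula, $B(z)$ has degree zero but involves negative powers of $h_\omega$, which are not even $\tau$-measurable. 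It is unbounded for general $b$; already $B(1/p)=h_\omega^{-1/2q}bh_\omega^{-1/2q}\notin M$.
\end{itemize}
Your boundary estimates only make sense because there you pair with $G(it)\in h_\omega^{1/2}Mh_\omega^{1/2}$ and $G(1+it)\in L^1(M)$ separately. Your fallback $b=j_q(x)$, $x\in M^{\rm an}$, does not make $B(z)$ bounded.

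The missing step is a reduction to intersection-valued interpolating functions. By Calder\'on's density lemma (see Bergh--L\"ofstr\"om, Ch.~4), functions $G(z)=e^{\delta z^2}\sum_k e^{\lambda_k z}h_\omega^{1/2}y_kh_\omega^{1/2}$ with $y_k\in M$ are dense in the admissible class. Pair the entire $M$-valued function $h_\omega^{-1/2}G(z)h_\omega^{-1/2}$ with $K(z)=h_\omega^{z/2}v|b|^{q(1-z)}h_\omega^{z/2}$. This family has the properties you need:
\begin{itemize}
\item it lies in the unit ball of $L^1(M)$ by H\"older with exponents $2/\Re z$, $1/(1-\Re z)$, $2/\Re z$;
\item $K(1/p)=h_\omega^{1/2p}bh_\omega^{1/2p}$ and $\|K(it)\|_1\le1$;
\item $K(1+it)=h_\omega^{1/2}w_th_\omega^{1/2}$ with $w_t\in M$, $\|w_t\|\le1$.
\end{itemize}
Three lines then gives $|\tr(j_p(y)b)|\le\|G\|$ whenever $G(1/p)=j_1(y)$. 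You must still prove $L^1$-analyticity and boundary continuity of $K$. Then pass to general $a$: if $\|\Phi(a)\|_{p,\omega}<1$, approximate by such $G_k$ with $\|G_k\|<1$ and $G_k(1/p)=j_1(y_k)\to\Phi(a)$ in $L^1$. For $w\in M$ one has $\tr(j_p(y_k)j_q(w))=\tr(j_1(y_k)w)\to\tr(a\,j_q(w))$, and $\|j_p(y_k)\|_p\le1$. So $\|a\|_p\le1$ by density of $j_q(M)$ in $L^q(M)$.

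\textbf{The upper bound also needs more than the analyticity you flag.}
\begin{itemize}
\item Calder\'on's class requires $t\mapsto F(it)$ to be norm-continuous into $A_0$. With $|a|^p=h_\varphi$, the middle factor is $h_\omega^{-it/2}u\,h_\varphi^{it}h_\omega^{-it/2}=\sigma^\omega_{-t/2}\bigl(u\,(D\varphi:D\omega)_t\bigr)$, which is in general only strongly continuous. You need a version of the complex method admitting bounded, weak$^*$-continuous boundary values, or a separate approximation argument; $L^1$-analyticity does not supply this.
\item Spectral cutoffs of $h_\omega$ are not $\hat\sigma$-homogeneous, so they leave the Haagerup spaces and $\tr$ is unavailable for them. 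Analyticity is usually obtained from Connes cocycles or relative modular operators on $L^2(M)$.
\item Lemma~\ref{lem:KMS-placement} evaluates traces of products of the $j_n(a_i)$ and gives no handle on $|j_p(x)|^{pz}$ or polar parts. Restricting to $x\in M^{\rm an}$ therefore makes nothing explicit.
\item ``$|a|$ bounded above and below on its support'' is vacuous. The relation $\hat\sigma_s(a)=e^{-s/p}a$ forces every nonzero $a\in L^p(M)$, $p<\infty$, to be unbounded. The restriction is unnecessary anyway, since your degree-zero argument bounds $F(it)$ and $F(1+it)$ for every $a$.
\end{itemize}
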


\subsection{Sandwiched R\'{e}nyi divergence}\label{ssec:srd}
Here, we define the SRD for the range of interest $\a\in (1,\I)$ for $\psi,\omega\in M_*^+$, with $\psi\neq0$, and $\omega$ faithful. We note, however, that the SRD can be defined without assuming faithfulness, and for the range $\alpha \in [1/2,1)\cup(1,\I)$.
\begin{definition}[SRD]\label{def:SRD}
Let $\alpha \in (1,\I)$. The SRD is defined as \cite{Jencova:2016tqz,Jencova:2017txf}
\begin{equation}
    D_\alpha(\psi\|\omega;M)\defeq\frac{1}{\alpha-1}\log\frac{Q_\alpha(\psi\|\omega;M)}{\psi(1)}\,,
\end{equation}
with
\begin{equation}\label{eq:QSRD}
 Q_\alpha(\psi\|\omega;M)\defeq
    \begin{cases}
        \|h_\psi\|_{\alpha,\omega}^\alpha, & h_\psi \in L^\alpha(M,\omega)\subset L^1(M),\\[1em]
        \I, \qquad \qquad &\text{otherwise}\,.
    \end{cases}
\end{equation}
\end{definition}

The following form of the SRD (from \cite[Lemma 8]{Kato:2023aro}) will be useful.
\begin{lemma}           \label{lem:y-op-defn-SRD}
Let $\alpha\in (1,\I)$, with $\psi,\omega \in M_*^+$, and $\omega$ faithful. The following are equivalent
\begin{enumerate}
    \item $Q_\a(\psi\|\omega;M)<\I$
    \item $h_\psi\in L^\alpha(M,\omega)\subset L^1(M)$
    \item There exists a unique $x\in L^\a(M)_+$ satisfying
    \begin{equation}			\label{eq:x-op-defn-SRD}
	L^\alpha(M,\omega)\ni h_\psi = h_\omega^{\frac{\a-1}{2\a}} x h_\omega^{\frac{\a-1}{2\a}}\,,
\end{equation} 
\end{enumerate}
In this case
 \begin{equation}
     Q_\a(\psi\|\omega;M)=\|h_\psi\|_{\alpha,\omega}^\alpha=\left\|x\right\|_\a^\a\,.
 \end{equation}
\end{lemma}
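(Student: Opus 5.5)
The plan is to read all three equivalences and the norm identity off Kosaki's identification, Theorem~\ref{thm:haagerup-kosaki}, applied with $p=\alpha$. In that case $1/q = 1-1/\alpha$, so the sandwiching exponent is
\[
\frac{1}{2q}=\frac{\alpha-1}{2\alpha}.
\]
This is exactly the exponent appearing in \eqref{eq:x-op-defn-SRD}.

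\emph{Easy implications.} The equivalence (1)$\Leftrightarrow$(2) is the definition \eqref{eq:QSRD}: $Q_\alpha$ is finite exactly when $h_\psi\in L^\alpha(M,\omega)$, and then $Q_\alpha=\|h_\psi\|_{\alpha,\omega}^\alpha<\infty$. For (3)$\Rightarrow$(2), if $x\in L^\alpha(M)$ satisfies \eqref{eq:x-op-defn-SRD}, then $h_\psi\in h_\omega^{1/2q}L^\alpha(M)h_\omega^{1/2q}=L^\alpha(M,\omega)$ by Theorem~\ref{thm:haagerup-kosaki}. Once (3) holds, the isometry in that theorem gives
\[
Q_\alpha=\|h_\psi\|_{\alpha,\omega}^\alpha=\|x\|_\alpha^\alpha .
\]

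\emph{(2)$\Rightarrow$(3): existence and uniqueness.} Theorem~\ref{thm:haagerup-kosaki} provides some $x\in L^\alpha(M)$ with $h_\omega^{\eta}xh_\omega^{\eta}=h_\psi$, where $\eta=(\alpha-1)/(2\alpha)$. Uniqueness follows because the map $\Phi(a)=h_\omega^{\eta}ah_\omega^{\eta}$ is an isometry, hence injective. The map $\Phi$ commutes with taking adjoints, and $h_\psi=h_\psi^*$. Therefore $\Phi(x^*)=h_\psi$, and injectivity gives $x=x^*$.

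\emph{Positivity of $x$.} I would test $x$ against the predual $L^{\alpha'}(M)$, where $1/\alpha+1/\alpha'=1$. For $m\in M_+$ put $y=j_{\alpha'}(m)=h_\omega^{1/(2\alpha')}mh_\omega^{1/(2\alpha')}$, and note that $1/(2\alpha')=\eta$. Cyclicity of $\tr$ for H\"older-compatible products (Proposition~\ref{prop:standard-Haagerup-facts}) then gives
\[
\tr(xy)=\tr(h_\omega^\eta x h_\omega^\eta m)=\tr(h_\psi m)=\psi(m)\ge0 .
\]
If the set $j_{\alpha'}(M_+)$ is norm-dense in the cone $L^{\alpha'}(M)_+$, then $\tr(xy)\ge0$ for every $y\in L^{\alpha'}(M)_+$. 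Since $x$ is self-adjoint, duality then forces $x\ge0$. To see this, write $x=x_+-x_-$ and test against $y=x_-^{\alpha-1}\in L^{\alpha'}(M)_+$; this gives $-\|x_-\|_\alpha^\alpha\ge0$, so $x_-=0$.

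\emph{Main obstacle.} The step I expect to be hardest is the density of $j_{\alpha'}(M_+)$ in $L^{\alpha'}(M)_+$. Definition~\ref{def:symmetricembedding} only gives density of $j_{\alpha'}(M)$ in the whole space, not in the positive cone. My first attempt would be to approximate a given $y\ge0$ by elements of $j_{\alpha'}(M)$ and show that the real parts of these approximants can be replaced by positive ones. Failing that, I would use the Kosaki picture: positivity is preserved under complex interpolation, and $M_+$ is dense in the positive part of the interpolation scale.

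An alternative that avoids the density question is to cite \cite[Lemma 8]{Kato:2023aro} directly for positivity, and keep the argument above only for existence, uniqueness and the norm identity.
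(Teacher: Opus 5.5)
The paper does not prove this lemma: it imports it from \cite[Lemma 8]{Kato:2023aro}. Your fallback of citing that result is therefore what the paper does, and the paper cites it for the whole statement, not only for positivity.

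Your self-contained reduction is correct as far as it goes:
\begin{itemize}
\item (1)$\Leftrightarrow$(2) is the definition \eqref{eq:QSRD}.
\item Existence, uniqueness and $\|h_\psi\|_{\alpha,\omega}=\|x\|_\alpha$ follow from Theorem~\ref{thm:haagerup-kosaki} with $q=\alpha'$, since $1/(2\alpha')=(\alpha-1)/(2\alpha)$.
\item Self-adjointness via injectivity of $\Phi$ is fine.
\item Testing against $x_-^{\alpha-1}\in L^{\alpha'}(M)_+$ correctly upgrades ``$\tr(xy)\geq0$ for all $y\in L^{\alpha'}(M)_+$'' to $x\geq0$.
\end{itemize}

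The genuine gap is the step you flag yourself. You never establish that $j_{\alpha'}(M_+)$ is dense in $L^{\alpha'}(M)_+$, and this is the only content of the lemma beyond Kosaki's theorem. Neither of your proposed routes closes it.
\begin{itemize}
\item Replacing approximants $\operatorname{Re}a_k$ by their positive parts gives no control. The map $a\mapsto h_\omega^{\eta}ah_\omega^{\eta}$ does not commute with the functional calculus. Since $j_{\alpha'}^{-1}$ is unbounded, $\operatorname{Re}a_k$ need not converge in $M$, so nothing forces $j_{\alpha'}\bigl((\operatorname{Re}a_k)_-\bigr)\to0$.
\item The interpolation remark simply presupposes the cone density you are trying to prove.
\end{itemize}

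The missing idea is to factor through $L^{2\alpha'}(M)$. For $y\in L^{\alpha'}(M)_+$ one has $y^{1/2}\in L^{2\alpha'}(M)$. Moreover, $Mh_\omega^{1/(2\alpha')}$ is norm-dense in $L^{2\alpha'}(M)$, by Hahn--Banach:
\begin{itemize}
\item Suppose $z$ in the dual space annihilates $Mh_\omega^{1/(2\alpha')}$.
\item Cyclicity gives $\tr\bigl(h_\omega^{1/(2\alpha')}z\,a\bigr)=0$ for all $a\in M$.
\item Hence $h_\omega^{1/(2\alpha')}z=0$ in $L^1(M)\cong M_*$.
\item Hence $z=0$, because faithfulness of $\omega$ makes $h_\omega$ nonsingular.
\end{itemize}
Now choose $a_k\in M$ with $a_kh_\omega^{1/(2\alpha')}\to y^{1/2}$ in $L^{2\alpha'}(M)$. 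H\"older's inequality gives
\[
j_{\alpha'}(a_k^*a_k)=\bigl(a_kh_\omega^{1/(2\alpha')}\bigr)^*\bigl(a_kh_\omega^{1/(2\alpha')}\bigr)\longrightarrow y\quad\text{in }L^{\alpha'}(M),
\]
with $a_k^*a_k\in M_+$. This is the cone density you need. With it inserted, your argument becomes a complete proof that does not rely on \cite{Kato:2023aro}, and is more self-contained than the paper's treatment.
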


The SRD obeys the data processing inequality.
\begin{theorem}[Data Processing Inequality (DPI) \cite{Berta:2016vnw,Jencova:2016tqz,Jencova:2017txf}]\label{thm:DPI-range-SRD}
Let $\g:N\to M$ be a normal positive unital map between von Neumann algebras. Then we have
\begin{equation}
    D_{\a}(\psi\circ\g\|\omega\circ\g;N) \leq D_{\a}(\psi\|\omega;M)\,.
\end{equation}
\end{theorem}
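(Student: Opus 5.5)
The plan is to obtain the inequality from complex interpolation between the two endpoint spaces of Kosaki's construction, $L^1$ and $h_\omega^{1/2}Mh_\omega^{1/2}$. I first treat the case where $\omega\circ\gamma$ is faithful on $N$, and remove this at the end.

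Since $\gamma$ is unital, $(\psi\circ\gamma)(1)=\psi(1)$, so the denominators in Definition~\ref{def:SRD} agree. Because $1/(\alpha-1)>0$ and $\log$ is monotone, it suffices to show
\[
Q_\alpha(\psi\circ\gamma\|\omega\circ\gamma;N)\le Q_\alpha(\psi\|\omega;M).
\]
By \eqref{eq:QSRD} this amounts to showing that the predual map
\[
\gamma_*:L^1(M)\to L^1(N),\qquad h_\psi\mapsto h_{\psi\circ\gamma},
\]
is a contraction from $L^\alpha(M,\omega)$ to $L^\alpha(N,\omega\circ\gamma)$. I would establish contractivity at both interpolation endpoints and then apply the Riesz--Thorin/Calder\'on interpolation theorem to the couples
\[
C_{1/\alpha}\bigl(h_\omega^{1/2}Mh_\omega^{1/2},L^1(M)\bigr)\quad\text{and}\quad C_{1/\alpha}\bigl(h_{\omega\circ\gamma}^{1/2}Nh_{\omega\circ\gamma}^{1/2},L^1(N)\bigr).
\]
If $Q_\alpha(\psi\|\omega;M)=\infty$ there is nothing to prove. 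Otherwise $h_\psi\in L^\alpha(M,\omega)$, so $\gamma_*h_\psi\in L^\alpha(N,\omega\circ\gamma)$ with the required norm bound.

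\textbf{$L^1$ endpoint.} By Proposition~\ref{prop:standard-Haagerup-facts}, $\|h_{\psi\circ\gamma}\|_1=\|\psi\circ\gamma\|$. A positive map between $C^*$-algebras has norm $\|\gamma(1)\|=1$ (Russo--Dye), hence $\|\psi\circ\gamma\|\le\|\psi\|$.

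\textbf{$L^\infty$ endpoint.} I would construct the $\omega$-adjoint map $\gamma^\sharp:M\to N$, characterized by
\[
\gamma_*\bigl(h_\omega^{1/2}xh_\omega^{1/2}\bigr)=h_{\omega\circ\gamma}^{1/2}\,\gamma^\sharp(x)\,h_{\omega\circ\gamma}^{1/2}.
\]
The construction proceeds in three steps.
\begin{itemize}
\item For $0\le x\le\|x\|$, the functional $a\mapsto\tr(h_\omega^{1/2}xh_\omega^{1/2}a)$ is dominated by $\|x\|\omega$.
\item Its pullback by $\gamma$ is therefore positive and dominated by $\|x\|\,\omega\circ\gamma$.
\item The Radon--Nikodym theorem of Sakai type, in the symmetric form $\phi\le C\varphi$ implies $h_\phi=h_\varphi^{1/2}yh_\varphi^{1/2}$ with $0\le y\le C$ and $y$ unique by faithfulness, produces $\gamma^\sharp(x):=y$.
\end{itemize}
Uniqueness gives additivity and positive homogeneity on $M_+$, so $\gamma^\sharp$ extends linearly to all of $M$ and is positive. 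Taking $x=1$ gives $\gamma_*h_\omega=h_{\omega\circ\gamma}$, so $\gamma^\sharp(1)=1$. By Russo--Dye again $\|\gamma^\sharp\|=1$. This is exactly contractivity of $\gamma_*$ from $h_\omega^{1/2}Mh_\omega^{1/2}$ to $h_{\omega\circ\gamma}^{1/2}Nh_{\omega\circ\gamma}^{1/2}$ for the Kosaki $L^\infty$-norms. Interpolating the two endpoint bounds then yields the claim.

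\textbf{Main obstacle.} I expect the hardest step to be the $L^\infty$ endpoint: the symmetric Radon--Nikodym representation $h_\phi=h_\varphi^{1/2}yh_\varphi^{1/2}$ in Haagerup's framework and the well-definedness of $\gamma^\sharp$. The naive route of splitting $x$ into four positive parts only gives a constant $4$ rather than $1$, which is why positivity and unitality of $\gamma^\sharp$ are essential.

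\textbf{Non-faithful $\omega\circ\gamma$.} Let $e=\operatorname{supp}(\omega\circ\gamma)$. If $\psi\circ\gamma$ is not supported under $e$, then $\psi\circ\gamma\not\le C\,\omega\circ\gamma$ for every $C$, and also $\psi\not\le C\omega$ (since $\psi\le C\omega$ would give $\psi\circ\gamma\le C\,\omega\circ\gamma$). A short check is still needed that both sides are then infinite, under the convention for non-faithful reference states. Otherwise one reduces to $eNe$ with the faithful state $\omega\circ\gamma|_{eNe}$ and the positive unital map $a\mapsto\gamma(a)$ from $eNe$ into the appropriately compressed algebra, and the argument above applies.
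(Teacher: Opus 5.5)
The paper gives no proof of this statement; it is quoted from Berta--Scholz--Tomamichel and Jen\v{c}ov\'a. Your argument is correct and is essentially Jen\v{c}ov\'a's proof for the Kosaki-interpolation definition the paper uses. The predual map is an $L^1$-contraction by Russo--Dye. It maps $h_\omega^{1/2}Mh_\omega^{1/2}$ into $h_{\omega\circ\gamma}^{1/2}Nh_{\omega\circ\gamma}^{1/2}$ via the positive unital $\omega$-dual (Petz-type) map $\gamma^\sharp$, which is again contractive by Russo--Dye. Calder\'on interpolation then gives a contraction $L^\alpha(M,\omega)\to L^\alpha(N,\omega\circ\gamma)$, covering exactly the range $\alpha\in(1,\infty)$ on which the paper defines $D_\alpha$.

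The loose end in your non-faithful case closes at once under the paper's standing assumption that $\omega$ is faithful on $M$. With $e=\operatorname{supp}(\omega\circ\gamma)$ you have $\omega(\gamma(1-e))=0$ and $\gamma(1-e)\geq 0$, hence $\gamma(1-e)=0$. Two things follow:
\begin{itemize}
\item $(\psi\circ\gamma)(1-e)=\psi(\gamma(1-e))=0$ automatically, so your ``not supported under $e$'' case never occurs and no convention check is needed.
\item $\gamma(e)=1$, so $\gamma|_{eNe}:eNe\to M$ is already a normal positive unital map into $M$ itself, and no compression on the $M$ side is required before applying the faithful case.
\end{itemize}
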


\subsection{Extension of maps to \texorpdfstring{$L^p$}{Lp} spaces}\label{ssec:extension}

Maps between two von Neumann algebras $M,N$ can be lifted to bounded maps on their crossed products, and also to their Haagerup $L^p$ spaces for $p\geq1$. For the lift to crossed product algebras, one can use \cite[Theorem 4.1]{Haagerup:2008arm}. We will only require the extension theorem for $L^p$ spaces for maps $T: M \to M$, which we state below.

\begin{theorem}[Extension of maps to $L^p$ spaces]\label{thm:extension-Lp}
Let \(M\) be a \(\sigma\)-finite von Neumann algebra, let
\(\omega\in M_*^+\) be faithful and normal, and let
\(T:M\to M\) be a normal positive unital map satisfying
\[
        \omega\circ T=\omega.
\]
For \(1\leq p<\infty\), define the map on the dense subspace
\[
    T^{(p)}: j_p(M)\longmapsto j_p(T(M)).
\]
Then $T^{(p)}$ extends uniquely to a positive contraction
\[
        T^{(p)}:L^p(M)\longrightarrow L^p(M)\,.
\]
\end{theorem}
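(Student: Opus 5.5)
The plan is to prove the $L^p$ contraction property first on the dense subspace $j_p(M)$ (Definition~\ref{def:symmetricembedding}) and then extend by continuity. Uniqueness is automatic once boundedness holds, because $j_p(M)$ is dense. The only tool needed is the complex interpolation description of Kosaki's spaces together with the isometric identification of Theorem~\ref{thm:haagerup-kosaki}.

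First I check that the map $j_p(a)\mapsto j_p(T(a))$ is well defined. The map $a\mapsto h_\omega^{1/(2p)}ah_\omega^{1/(2p)}$ is injective because $h_\omega$ is nonsingular ($\omega$ faithful), so $T^{(p)}$ is a well-defined linear map on $j_p(M)$. It is positive on $j_p(M)_+=j_p(M_+)$ because $T$ is positive.

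Next I establish the two endpoint estimates. At $p=\infty$ we have $\|T(a)\|\le\|a\|$, since a positive unital map has $\|T\|=\|T(1)\|=1$ by Russo--Dye. At $p=1$, write $h_\omega^{1/2}ah_\omega^{1/2}=h_{\varphi_a}$, where $\varphi_a=\omega(\cdot^{1/2}a\cdot^{1/2})$ is understood in the KMS sense, i.e.\ $\varphi_a(b)=\langle J a^*J\,\Omega_\omega, b\Omega_\omega\rangle$. I want to show $\|h_\omega^{1/2}T(a)h_\omega^{1/2}\|_1\le\|h_\omega^{1/2}ah_\omega^{1/2}\|_1$. By duality (Theorem on duals), this reduces to bounding $|\tr(h_\omega^{1/2}T(a)h_\omega^{1/2}b)|$ for $\|b\|\le1$. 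The cleanest route is the $\omega$-preserving adjoint: by Petz, a normal unital positive map $T$ with $\omega\circ T=\omega$ has a KMS-adjoint $T^\sharp:M\to M$ that is normal, unital and positive, satisfying $\tr(h_\omega^{1/2}T(a)h_\omega^{1/2}b)=\tr(h_\omega^{1/2}a h_\omega^{1/2}T^\sharp(b))$. Then $|\cdot|\le\|h_\omega^{1/2}ah_\omega^{1/2}\|_1\|T^\sharp(b)\|\le\|h_\omega^{1/2}ah_\omega^{1/2}\|_1$. Taking the supremum over $b$ gives the $L^1$ contraction on $j_1(M)$, which extends to $L^1(M)$.

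Constructing $T^\sharp$ is the main obstacle. I would do it by defining $T^\sharp$ on $M$ through the above pairing. Boundedness comes from the positivity trick $|\varphi(T(a))|$ with $0\le b\le1$ and $\omega\circ T=\omega$, which gives $0\le \tr(h_\omega^{1/2}T(a)h_\omega^{1/2}b)\le\omega(T(a))=\omega(a)$ for $a\ge0$. By the Radon--Nikodym-type lemma (a positive functional dominated by $\omega$ has the form $\omega(\cdot^{1/2}c\,\cdot^{1/2})$ with $0\le c\le1$), this yields an element $T^\sharp(b)\in M$ with $0\le T^\sharp(b)\le 1$. Linearity, normality and unitality then follow in the standard way.

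With both endpoint bounds in hand, I apply interpolation. $T^{(\infty)}$ acts on $h_\omega^{1/2}Mh_\omega^{1/2}$ and $T^{(1)}$ on $L^1(M)$; they are compatible, being the same map on the intersection. The Riesz--Thorin property of the complex method $C_{1/p}$ gives a contraction on $L^p(M,\omega)$. This contraction sends $h_\omega^{1/2}ah_\omega^{1/2}\mapsto h_\omega^{1/2}T(a)h_\omega^{1/2}$. Transporting it through the isometry $x\mapsto h_\omega^{1/(2q)}xh_\omega^{1/(2q)}$ of Theorem~\ref{thm:haagerup-kosaki} turns it into $j_p(a)\mapsto j_p(T(a))$ on $L^p(M)$, with norm at most one.

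Finally, positivity of the extension follows from density of $j_p(M_+)$ in $L^p(M)_+$, for example via $x\mapsto$ suitable regularizations, together with closedness of the positive cone.
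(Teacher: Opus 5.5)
The paper does not prove this statement. It is quoted as a standard fact, and the only pointer is to crossed-product lifts in \cite{Haagerup:2008arm}. Your proposal therefore supplies a proof the paper omits, and its architecture is the standard, correct one.
\begin{itemize}
\item The $L^\infty$ bound comes from Russo--Dye.
\item The $L^1$ bound comes from the $\omega$-dual map $T^\sharp$.
\item Kosaki's $C_{1/p}$ interpolation between $h^{1/2}Mh^{1/2}$ and $L^1(M)$ gives the intermediate bounds (here $h=h_\omega$).
\item Transport through Theorem~\ref{thm:haagerup-kosaki} sends $h^{1/2}ah^{1/2}$ back to $j_p(a)$.
\end{itemize}

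Your construction of $T^\sharp$ is fine and needs only positivity of $T$. For $0\le b\le1$, the functional $a\mapsto\tr(h^{1/2}T(a)h^{1/2}b)$ is positive and dominated by $\omega\circ T=\omega$. By the commutant Radon--Nikodym theorem it equals $\tr(h^{1/2}ah^{1/2}c)$ for a unique $c\in M$ with $0\le c\le1$. Two notational slips: your ``$\omega(\cdot^{1/2}c\,\cdot^{1/2})$'' should say exactly this, and in the paper's convention $\varphi_a(b)=\langle JaJ\,\Omega,b\Omega\rangle$, not $Ja^*J$.

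What this route buys is generality exactly where the paper needs it. It uses nothing beyond positivity, unitality and $\omega\circ T=\omega$. In particular it does not need $T\circ\sigma^\omega_t=\sigma^\omega_t\circ T$. A crossed-product lift of the form $\pi(x)\lambda(t)\mapsto\pi(T(x))\lambda(t)$ would require that commutation, and the null translations violate it, since $T_c\sigma_s=\sigma_sT_{ce^{2\pi s}}$.

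The one step you have not actually proved is positivity of the extension for $p>1$.
\begin{itemize}
\item At $p=1$ positivity comes free from your own construction. You have $\tr(T^{(1)}(y)b)=\tr(yT^\sharp(b))\ge0$ for all $y\in L^1(M)_+$ and $b\in M_+$, and the cone $L^1(M)_+$ is detected by all of $M_+$.
\item For $p>1$ the analogous duality argument is circular, because it would need positivity of $(T^\sharp)^{(q)}$ on all of $L^q(M)_+$.
\end{itemize}
So you really do need density of $j_p(M_+)$ in $L^p(M)_+$, and ``suitable regularizations'' does not establish it. A standard proof runs as follows.
\begin{enumerate}
\item For $x\in L^p(M)_+$, write $x^p=h_\phi$.
\item Functionals $\phi'\le C\omega$ are norm dense in $M_*^+$. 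To see this, approximate a vector representative of $\phi$ by $a'\Omega$ with $a'\in M'$, using that $\Omega$ is cyclic for $M'$.
\item The map $y\mapsto y^{1/p}$ is continuous from $L^1(M)_+$ to $L^p(M)_+$, by Kosaki's inequality $\|y^{1/p}-z^{1/p}\|_p^p\le\|y-z\|_1$.
\item If $h_{\phi'}\le Ch$, operator monotonicity gives $h_{\phi'}^{1/p}\le C^{1/p}h^{1/p}$. Hence $h_{\phi'}^{1/p}=h^{1/(2p)}ch^{1/(2p)}$ with $0\le c\le C^{1/p}$.
\item Finally $c\in M$, because uniqueness forces $c$ to be fixed by the dual action.
\end{enumerate}
With that lemma supplied, your argument is complete.
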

%%%%%%%%%%%%%%%%%%%%%%%%%%%%%%%%%%%%%%%%%%%%%%%%%%%%%%%%%%%%%%%%%%%%

\section{Half-sided modular inclusions and R\'{e}nyi QNEC}\label{Sec:HSMI}

The null variations appearing in the statement of QNEC and RQNEC can be rigorously defined for algebras carrying a half-sided modular inclusion (HSMI) structure.  In this section we recall the HSMI structure, collect the properties needed in later sections, and reformulate RQNEC in terms of a fixed algebra and a one-parameter family of null-deformed states, following~\cite{Hollands:2025glm}.  Our conventions for modular operators and relative modular operators are the same as \cite{Hiai:2021qfv}.
 
\subsection{Half-sided modular inclusions and null translations}

From this point onward, we work entirely in the following abstract operator-algebraic setting: a pair of von Neumann algebras $M_B\subset M_A$ with a common cyclic separating vector $\W$, or equivalently a common faithful normal state $\omega$, satisfying the half-sided modular inclusion condition in Definition \ref{def:HSMI}. All subsequent results depend only on this HSMI structure and its consequences collected in Theorem \ref{thm:modulartranslations}.

\begin{definition}[Half-sided modular inclusion]
\label{def:HSMI}
Let $M_B \subset M_A$. This inclusion is a half-sided modular inclusion if there is a faithful normal state $\omega$, common to both algebras, such that:
\begin{equation}
\Delta_{\omega;A}^{-is} M_B \Delta_{\omega;A}^{is} \subset M_B \qquad s \geq 0\,.
\end{equation}
\end{definition}

\begin{theorem}[Null translations]\label{thm:modulartranslations}
For an HSMI $M_B\subset M_A$ with common faithful normal state $\omega$, there is a family of unitary operators, which we refer to as null translations,
\[
    U_b\defeq e^{-i b P},\quad b\in \mathbb R\,,
\]
such that 
\[
P\defeq \frac{1}{2\pi} \overline{\left( \ln \Delta_{\omega;B} - \ln \Delta_{\omega;A} \right)}
\]
is self-adjoint and positive semi-definite \cite{Borchers:1991xk}, where the bar denotes the closure.
Furthermore, the following are true \cite{Borchers:1991xk,Wiesbrock:1992mg,Borchers1996,Araki:2005we,Borchers:1995zg}.
\begin{enumerate}
\item $\omega \circ {\rm Ad}(U_{-b})=\omega$ for all $b\in \mathbb R$.
\item The null translations $U_b$ generate the one-parameter family of sub-algebras:
\[
U_{-b} M_{A} U_{b} = M_{b}\,,
\]
with $M_{c}\subset M_{b}$ if $c\geq b$. Moreover $M_{{1}} =M_{B}$. This family of algebras has $\omega$ as a faithful normal state, and $M_b \subset M_A$ is a HSMI relative to $\omega$. The translations apply for all real $b$ and also apply for the commutants, which satisfy:
\[
 M_{b}' \subset M_{c}', \quad b \leq c\,.
\]
\item The modular automorphism of the algebra $M_A$ acts as
\[
\Delta_{\omega;A}^{is} M_{b} \Delta_{\omega;A}^{-is} = M_{b e^{-2\pi s}}\,,
\]
and similarly for the commutant.
\item The modular operators satisfy the Borchers covariance relation with the null translations
\[
    \Delta_{\omega;M_A}^{is} U_b \Delta_{\omega;M_A}^{-is} =   U_{e^{-2\pi s}b}  \,.
\]
Furthermore, the following properties hold:
\begin{align*}
 &J_{\omega;M_A} U_b J_{\omega;M_A}  =U_{-b} \\
  &\Delta^{is} _{\omega;M_{b_1}} \Delta_{\omega;M_{b_2}}^{-is} = U_{ (b_1-b_2) (e^{-2\pi s} -1)} \nonumber,
\end{align*}
where $J_{\omega;M_A}$ is the modular conjugation with respect to $\omega$.
\item $U_b$ is strongly continuous and can be analytically continued to complex $b$, where it is bounded by $1$ and strongly continuous for ${\rm Im}\, b \leq 0$.
\end{enumerate}
\end{theorem}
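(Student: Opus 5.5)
The plan is to assemble the statement from the two classical directions of the HSMI theory: Wiesbrock's theorem (HSMI $\Rightarrow$ positive translations), in the corrected form of Araki--Zsid\'o, and Borchers' theorem (positive translations $\Rightarrow$ covariance). No new analysis should be needed beyond fixing conventions.

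\textbf{Step 1: the one-parameter group.} For $s\in\mathbb R$ consider the unitaries
\[
W(s):=\Delta_{\omega;A}^{is}\Delta_{\omega;B}^{-is}.
\]
The HSMI condition $\Delta_{\omega;A}^{-is}M_B\Delta_{\omega;A}^{is}\subset M_B$ for $s\ge0$ should be used, following Wiesbrock and Araki--Zsid\'o, to show two things:
\begin{itemize}
\item the cocycle $W(s)$ depends only on $e^{-2\pi s}-1$;
\item after the reparametrisation $b=e^{-2\pi s}-1$ (together with the obvious extension to all real $b$ by conjugation with $\Delta_{\omega;A}^{it}$), it defines a strongly continuous unitary group $U_b$.
\end{itemize}
The key input is an analytic-continuation argument in $s$ into the strip $-\tfrac12\le \Im s\le 0$. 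There one uses that $\Omega$ lies in the domains of the relevant powers of $\Delta_{\omega;A}$ and $\Delta_{\omega;B}$, combined with the KMS property of $\omega$ on $M_B$ and on $M_A$.

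\textbf{Step 2: the generator.} Positivity of the generator $P$ follows from the same analytic continuation: the matrix elements $\langle\xi,U_b\eta\rangle$ extend boundedly to the lower half-plane. The identity $2\pi P=\overline{\ln\Delta_{\omega;B}-\ln\Delta_{\omega;A}}$ is obtained by differentiating $W(s)$ at $s=0$ on a common core. Item~5 is then immediate from spectral calculus, since $e^{-ibP}$ with $P\ge0$ is a contraction for $\Im b\le 0$.

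\textbf{Step 3: Borchers' theorem.} Once $U_b$ is a positive-energy group with $U_{-b}M_AU_b\subset M_A$ for $b\ge0$ and $U_b\Omega=\Omega$, Borchers' theorem gives the covariance relation
\[
\Delta_{\omega;A}^{is}U_b\Delta_{\omega;A}^{-is}=U_{e^{-2\pi s}b}
\]
and $J_{\omega;A}U_bJ_{\omega;A}=U_{-b}$; this is item~4. Items~1--3 then follow:
\begin{itemize}
\item invariance of $\omega$ comes from $U_b\Omega=\Omega$;
\item the monotonicity $M_c\subset M_b$ for $c\ge b$ follows from the half-sided inclusion and the group law, and the commutant statements from taking commutants;
\item $M_1=M_B$ comes from the identification in Step~1;
\item $\Delta_{\omega;A}^{is}M_b\Delta_{\omega;A}^{-is}=M_{be^{-2\pi s}}$ follows by conjugating the covariance relation.
\end{itemize}
The formula for $\Delta_{\omega;M_{b_1}}^{is}\Delta_{\omega;M_{b_2}}^{-is}$ follows by transporting Step~1 with $U_{b_2}$, using $\Delta_{\omega;M_b}^{is}=U_{-b}\Delta_{\omega;A}^{is}U_b$. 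That $M_b\subset M_A$ is again an HSMI is the same computation applied to the pair $(M_A,M_b)$.

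\textbf{Main obstacle.} The delicate point is Step~1: establishing the group law and the positivity of $P$ from the one-sided inclusion alone. Wiesbrock's original argument has a gap here that Araki--Zsid\'o repaired, and rather than reproduce it I would cite their result directly. The remaining items are algebraic consequences, for which I would simply cite the standard references~\cite{Borchers:1991xk,Wiesbrock:1992mg,Araki:2005we}.
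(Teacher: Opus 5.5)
Your proposal is correct and takes essentially the paper's approach. The paper gives no independent proof of this theorem and simply assembles it from the cited literature. It uses Wiesbrock's theorem, with the Araki--Zsid\'o repair, for the existence of the positive-energy group with $2\pi P=\overline{\ln\Delta_{\omega;B}-\ln\Delta_{\omega;A}}$, and Borchers' theorem for the covariance relations. The remaining items (invariance of $\omega$, monotonicity of $M_b$, the formula for $\Delta^{is}_{\omega;M_{b_1}}\Delta^{-is}_{\omega;M_{b_2}}$, the analytic continuation) are short algebraic or spectral-calculus consequences, which you verify correctly.

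One point to flag: differentiating $W(s)$ at $s=0$ only shows that $2\pi P$ extends $\ln\Delta_{\omega;B}-\ln\Delta_{\omega;A}$ on the intersection of their domains. That this intersection is a core for $P$ (essential self-adjointness) is itself part of the Wiesbrock--Araki--Zsid\'o result you are citing, not something the differentiation gives you.
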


\subsection{Fixed algebra formulation of R\'{e}nyi QNEC}

For convenience, we introduce the notation
\begin{equation}			\label{eq:entropy-shorthand-1}
	S_\a(c)\defeq D_\a(\psi\|\omega;M_c)\,, \quad \psi, \omega\in (M_A)_*^+,
\end{equation}
with $\omega$ faithful.
Note that $S_\a(c)$ in \eqref{eq:entropy-shorthand-1} is the SRD for the fixed functionals $\psi,\omega \in  (M_A)_*^+$ on a one-parameter family of sub-algebras $M_c$. Following \cite{Hollands:2025glm}, we instead write $S_{\a}(c)$ as the SRD of a one-parameter family of functionals $\psi_c$ on the fixed algebra $M_A$. 
\begin{remark}
In this subsection, we view $\Ad(U_{-c}):M_A\to M_c$ as a unital $*$-isomorphism onto $M_c$. Beginning in Section \ref{sec:cyclic-ward} we will think of $T_c:M_A\to M_A$ defined in \eqref{eq:def-Tc} as a $*$-endomorphism of $M_A$ with range $M_c\subset M_A$.
\end{remark}

\begin{lemma}[SRD in terms of varying functionals on a fixed algebra]\label{lem:fixedalgebraSRD}
Under the hypotheses of Theorem~\ref{thm:modulartranslations}, for $\psi \in (M_A)_*^+$, $\psi\neq0$, and $\alpha\in (1,\I)$:
\begin{equation}
    S_\a(c)\defeq D_\a(\psi\|\omega;M_c)= D_\a(\psi_c\|\omega;M_A)\,,\qquad \psi_c\defeq \psi \circ {\rm Ad}(U_{-c})\in (M_A)_*^+.
\end{equation}
\end{lemma}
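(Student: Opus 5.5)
The plan is to use the fact that the SRD is invariant under $*$-isomorphisms of von Neumann algebras that intertwine the two functionals. Write $\alpha_c\defeq\Ad(U_{-c}):M_A\to M_c$, $a\mapsto U_{-c}aU_c$. By Theorem~\ref{thm:modulartranslations}(2) this is a unital $*$-isomorphism of $M_A$ onto $M_c$, with inverse $\alpha_c^{-1}=\Ad(U_c)|_{M_c}$. It is normal, being spatially implemented by a unitary. Here $D_\a(\psi\|\omega;M_c)$ means the SRD of the restrictions $\psi|_{M_c},\omega|_{M_c}$.

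The first step is to identify the pulled-back functionals on $M_A$. For the reference state we have $\omega|_{M_c}\circ\alpha_c=\omega\circ\Ad(U_{-c})|_{M_A}=\omega$ by Theorem~\ref{thm:modulartranslations}(1). For the excited functional we have $\psi|_{M_c}\circ\alpha_c=\psi\circ\Ad(U_{-c})=\psi_c$. This requires $\psi$ to be defined on $M_c\subset M_A$, which holds since $c\ge 0$ gives $M_c\subset M_A$. Then $\psi_c$ is normal and positive on $M_A$ as the composition of a normal positive functional with a normal $*$-homomorphism. Also $\psi_c(1)=\psi(1)\neq0$, and $\omega$ remains faithful on $M_c$.

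The second step is to apply the DPI, Theorem~\ref{thm:DPI-range-SRD}, twice. This avoids constructing the induced isomorphism of Haagerup $L^p$ spaces by hand. With $\g=\alpha_c:M_A\to M_c$, which is normal, positive and unital, we obtain
\[
D_\a(\psi_c\|\omega;M_A)=D_\a(\psi|_{M_c}\circ\alpha_c\|\omega|_{M_c}\circ\alpha_c;M_A)\le D_\a(\psi\|\omega;M_c).
\]
With $\g=\alpha_c^{-1}:M_c\to M_A$ we have $\psi_c\circ\alpha_c^{-1}=\psi|_{M_c}$ and $\omega\circ\alpha_c^{-1}=\omega|_{M_c}$, which gives the reverse inequality. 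Both inequalities hold in $[0,\infty]$, including the case where either side is infinite, so equality follows.

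The main obstacle is only bookkeeping: the DPI must be valid for maps between different algebras and for non-normalized $\psi$. The normalization $\psi(1)$ is the same on both sides, so it cancels. As an alternative, the same equality can be obtained directly: an isomorphism $\alpha_c$ with $\omega\circ\alpha_c=\omega$ intertwines the modular groups and so induces isometric isomorphisms $L^p(M_c)\to L^p(M_A)$ sending $h_\psi\mapsto h_{\psi_c}$ and $h_\omega\mapsto h_\omega$. Then the characterization in Lemma~\ref{lem:y-op-defn-SRD} transfers $x$ and $\|x\|_\a$ unchanged. The DPI route is shorter, so I would use it.
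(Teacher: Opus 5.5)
Your proposal is correct and matches the paper's proof: apply the DPI (Theorem~\ref{thm:DPI-range-SRD}) to the normal unital $*$-isomorphism $\Ad(U_{-c}):M_A\to M_c$ and to its inverse, and use the invariance $\omega\circ\Ad(U_{-c})=\omega$ to obtain both inequalities and hence equality. Your remark that the same identity also follows from the induced isometric isomorphism of Haagerup $L^p$ spaces is a valid alternative, but it is not needed.
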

\begin{proof}
The map ${\rm Ad}(U_{-c}): M_A \mapsto M_c$ is a normal unital $*$-isomorphism and its inverse is also normal and unital. Applying the DPI from Theorem~\ref{thm:DPI-range-SRD} to this $*$-isomorphism and its inverse, combined with the invariance of $\omega$ under ${\rm Ad}(U_{-c})$ gives the result.
\end{proof}

We can now state the R\'{e}nyi QNEC conjecture precisely as follows.
\begin{conjecture}[R\'{e}nyi QNEC conjecture]\label{conj:RQNEC}
Under the hypotheses of Theorem~\ref{thm:modulartranslations}, for $\alpha\in (1,\I)$, with $\omega\in(M_A)_*^+$ faithful, and any non-zero $\psi\in (M_A)_*^+$ such that $D_\a(\psi\|\omega;M_A)<\I$, the function
\[
c\mapsto S_\a(c)= D_\a(\psi_c\|\omega;M_A)
\]
is convex.
\end{conjecture}
Counter-examples to the RQNEC have been found for $\a\in [1/2,1)$ in the context of null quantization in free field theories \cite{Moosa:2020jwt}, so the conjecture as stated is restricted to $\alpha>1$. The remainder of this paper proves the conjecture for integer $\a\geq2$.

%%%%%%%%%%%%%%%%%%%%%%%%%%%%%%%%%%%%%%%%%%%%%%%%%%%%%%%%%%%%%%%%%%%%%%%%%%%%%%%%%%%%%%%%%%%

\section{\texorpdfstring{$L^n$}{Ln} semigroup and the cyclic Ward identity}
\label{sec:cyclic-ward}

Fix an integer \(n\geq2\), and set
\[
        q_n:=\frac{n}{n-1},
        \qquad \eta_n\defeq \frac{n-1}{2n}=\frac{1}{2q_n}, \qquad
        \zeta_n:=e^{-2\pi i/n}.
\]
The purpose of this section is to prove the following theorem, which is Theorem~\ref{thm:full-domain-ward-fixed} below.
\begin{theorem*}
Let $(V_c^{(n)})_{c\geq0}$ be the Haagerup
\(L^n(M)\)-semigroup associated with the half-sided modular inclusion,
and let \(G_n\) be its generator.
For all $y_1,\ldots,y_n\in\Dom(G_n)$,
\begin{equation}\label{eq:full-domain-ward}
    \sum_{k=0}^{n-1}\zeta_n^k
  \tr\bigl(
    y_1\ldots y_k (G_ny_{k+1}) y_{k+2}\ldots y_n
  \bigr)=0\,.
\end{equation}
\end{theorem*}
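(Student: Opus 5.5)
The plan is to prove \eqref{eq:full-domain-ward} first on a core of especially regular elements, where everything reduces to a vacuum correlator via the KMS placement formula (Lemma~\ref{lem:KMS-placement}). The identity then follows from translation invariance of $\omega$ combined with the Borchers covariance relation of Theorem~\ref{thm:modulartranslations}. At the end I will extend it to all of $\Dom(G_n)$ by graph-norm density.

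\emph{Step 1 (semigroup and core).} Set $T_c(a)=U_{-c}aU_c$. The relation $\omega\circ T_c=\omega$ and Theorem~\ref{thm:extension-Lp} give the contractions $V^{(n)}_c$ with $V_c^{(n)}j_n(a)=j_n(T_c(a))$. Strong continuity follows from strong continuity of $U_c$ on the dense set $j_n(M)$. Next I choose a class $\mathcal C\subset M_A$ of elements $a$ such that:
\begin{itemize}
\item $a$ is entire for $\sigma^\omega$;
\item $c\mapsto T_c(a)$ is differentiable, with derivative $\delta(a)=i[P,a]$ again in $\mathcal C$;
\item $T_c(a)$ is jointly analytic in the sense needed below.
\end{itemize}
Candidates are Gaussian modular smoothings $\sqrt{k/\pi}\int e^{-kt^2}\sigma_t^\omega(\cdot)\,dt$ combined with null smoothings $\int f(c)T_c(\cdot)\,dc$ with $f\in C_c^\infty(0,\infty)$. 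Borchers covariance, $\sigma_t^\omega T_c=T_{e^{-2\pi t}c}\sigma^\omega_t$, shows that modular smoothing preserves the null smoothing. On $j_n(\mathcal C)$ one has $G_nj_n(a)=j_n(\delta a)$. Since $j_n(\mathcal C)$ is dense and $V^{(n)}$-invariant, it is a core for $G_n$.

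\emph{Step 2 (identity on the core).} Take $y_k=j_n(a_k)$ with $a_k\in\mathcal C$, and define
\[
F(w_1,\dots,w_n)=\omega\bigl(T_{w_n}(a_n)\,T_{w_1}(a_1)\cdots T_{w_{n-1}}(a_{n-1})\bigr),
\qquad
T_w(a)=U_{-w}aU_{w},
\]
continued analytically in the complex strip. Invariance $\omega\circ T_s=\omega$ for real $s$ gives $F(w+s,\dots,w+s)=F(w)$, hence $\sum_k\partial_{w_k}F=0$, and this persists at complex points by analyticity. On the other hand, Lemma~\ref{lem:KMS-placement} together with Borchers covariance continued to imaginary modular time gives
\[
\sigma^\omega_{-ik/n}\bigl(T_c(a)\bigr)=T_{e^{2\pi ik/n}c}\bigl(\sigma^\omega_{-ik/n}(a)\bigr).
\]
Therefore the term of \eqref{eq:full-domain-ward} with the derivative in slot $k+1$ equals $e^{-2\pi i k/n}$ times the $w_{k}$-partial derivative of $F$, with $a_j$ replaced by $\sigma^\omega_{-ij/n}(a_j)$ and evaluated at $w=0$; the phase comes from the chain rule. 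The weight $\zeta_n^k$ in \eqref{eq:full-domain-ward} is designed to remove these phases, up to sign conventions that I would fix carefully. The left-hand side then becomes a positive multiple of $\sum_k\partial_{w_k}F=0$.

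\emph{Step 3 (extension).} By Lemma~\ref{lem:general-holder}, the map $(y_1,\dots,y_n)\mapsto\tr(y_1\cdots y_n)$ is bounded from $L^n(M)^{\times n}$ to $\C$. Hence each term of \eqref{eq:full-domain-ward} is continuous in the graph norm of $G_n$ in each variable separately, and the identity passes from the core to $\Dom(G_n)^{\times n}$.

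The main obstacle is Step 2's analytic continuation. One must justify that $T_{e^{2\pi ik/n}c}$ makes sense on the relevant vectors: $U_b$ is bounded only for $\Im b\le0$, while the rotated points lie on both sides of the real axis. One must also justify commuting $\partial_c$ with $\sigma^\omega_{-ik/n}$ inside $\omega$. I would first try to avoid complex null translations altogether. The idea is to work with the vector $h_\omega^{1/2}$-type representation: compute the derivative at real $c$ from $\delta(a)=i[P,a]$, and move $P$ across the factors $h_\omega^{1/n}$ using the commutation relation $[\log\Delta,P]=2\pi iP$, which multiplies $P$ by $e^{2\pi i/n}$ per step. Telescoping $\omega(\cdots[P,\cdot]\cdots)$ then uses only $P\Omega=0$. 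The smoothings in $\mathcal C$ should be chosen so that every such product lies in the domain of $P$.
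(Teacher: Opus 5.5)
Your skeleton is the paper's: a smoothed modular-entire core, KMS placement, $\omega\circ T_t=\omega$ forcing $\omega(\delta B)=0$ for the placed product $B$, imaginary-time Borchers covariance for the phases, and graph-norm extension. However, the step that carries the whole proof is left without a valid argument.

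\textbf{The imaginary-time covariance step.} What you need on the core is $\sigma_{-ik/n}(a)\in\Dom(\delta)$ together with $\delta(\sigma_{-ik/n}(a))=\zeta_n^k\,\sigma_{-ik/n}(\delta a)$. Your route through $\sigma_{-ik/n}(T_c(a))=T_{e^{2\pi ik/n}c}(\sigma_{-ik/n}(a))$ and a chain rule in the complex direction $e^{2\pi ik/n}$ does not work. As you partly note, $T_w=\Ad(U_{-w})$ is unavailable for non-real $w$: $U_w$ is bounded only for $\Im w\le 0$, and $U_{-w}$ only for $\Im w\ge 0$. So $w\mapsto T_w(b)$ and your $F$ have no complex derivatives to which a chain rule could apply. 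Your fallback, moving $P$ through the factors $h^{1/n}$ via $[\log\Delta,P]=2\pi iP$, is the same identity written formally. The domain control that would make it rigorous is exactly what is missing.

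The idea you need is never to continue the null translation. For $a_{f,\varepsilon}=\int\gamma_\varepsilon(s)\sigma_s(a_f)\,ds$ one has $\sigma_z(a_{f,\varepsilon})=\int\gamma_\varepsilon(s-z)\sigma_s(a_f)\,ds$. This is a superposition of \emph{real} modular rotations, so only the real relation $T_r\sigma_s=\sigma_sT_{re^{2\pi s}}$ is used under the integral. Dominated convergence then applies, with the Lipschitz bound $\|T_u(a_f)-a_f\|\le u\|\delta a_f\|$ supplied by the null smoothing. It gives $\delta(\sigma_z(a_{f,\varepsilon}))=\int\gamma_\varepsilon(s-z)e^{2\pi s}\sigma_s(\delta a_f)\,ds=e^{2\pi z}\sigma_z(\delta a_{f,\varepsilon})$. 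The derivation property extends this to the generated algebra. With this in hand the function $F(w_1,\dots,w_n)$ is unnecessary: $\omega(\delta B)=0$ plus the Leibniz rule gives the identity directly.

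\textbf{The core step.} This has a gap as well. ``Dense and $V^{(n)}$-invariant implies core'' is a valid criterion, but your class is not shown to be invariant, and there is no reason it should be. $T_c(a_{f,\varepsilon})=\int\gamma_\varepsilon(s)\sigma_s(T_{ce^{2\pi s}}(a_f))\,ds$ is a modular average of \emph{differently} translated null smoothings. Proving it modular entire would require continuing $ce^{2\pi s}$ to complex $s$, i.e.\ complex null translations again.

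The paper bypasses invariance. One has $R(\lambda,G_n)j_n(a)=j_n(a_\lambda)$ with $a_\lambda=\int_0^\infty e^{-\lambda t}T_t(a)\,dt$. Hence $R(\lambda,G_n)j_n(M)\subset j_n(M)\cap\Dom(G_n)$ is a core by the resolvent-core criterion, since $\lambda-G_n$ maps it onto the dense set $j_n(M)$. Each $j_n(a_\lambda)$ is then approximated in graph norm by $j_n(a_{f_m})$, where $f_m\in C^\infty_{c,0}((0,\infty))$ cuts off $e^{-\lambda t}$ near $0$ and $\infty$. Finally it is approximated by $j_n(a_{f_m,\varepsilon})$. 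This last step uses that modular smoothing converges in graph norm, because $G_n\Sigma^{(n)}_s=e^{2\pi s}\Sigma^{(n)}_sG_n$ and $\gamma_\varepsilon(s)e^{2\pi s}\,ds$ is still an approximate identity.
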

The proof consists of four steps:
\begin{enumerate}
\item construct \(V_c^{(n)}\) and the dual semigroup \(V_c^{(n)}\) (Sec.~\ref{subsec:test-and-dual-semigroups});
\item construct semigroup and modular regularizations (Sec.~\ref{subsec:modular-semigroup-reg});
\item use the regularizations to construct a graph core for the generator of \(V_c^{(n)}\) (Sec.~\ref{subsec:ward-graph-core});
\item establish \eqref{eq:full-domain-ward} on the core and extend
it by graph-norm continuity (Sec~\ref{subsec:cyclic-ward-identity-fixed}).
\end{enumerate}

\subsection*{Assumptions and notation}
Throughout this section, we assume the HSMI $M_B\subset M_A$, with common faithful normal state $\omega$. We use the following shorthand:
\begin{equation}
  M\defeq M_A,\qquad h\defeq h_\omega,\qquad \sigma_t\defeq \sigma_t^\omega\,,
\end{equation}
where $h_\omega\in L^1(M)_+$ is the Haagerup density of $\omega$ and $\sigma_t^\omega$ is the modular automorphism.

Let
\begin{equation}\label{eq:def-Tc}
  T_c: M_A\mapsto M_c \subset M_A, \quad T_c(a)\defeq U_{-c}aU_c,\qquad c\geq 0\,, 
\end{equation}
be the null-translation endomorphism on $M_A$ obtained from the HSMI structure in Theorem~\ref{thm:modulartranslations}. 
We use the following consequences of Theorem~\ref{thm:modulartranslations}.:
\begin{enumerate}
    \item $T_c$ is a normal unital $*$-endomorphism of $M$, with
            \(
                T_{c+d}=T_cT_d,
            \)
    \item \(\omega\circ T_c=\omega\)
    \item  \(T_c(a)\to a\) in the strong-\(*\) topology as \(c\downarrow0\)
    \item The Borchers covariance relation (Theorem~\ref{thm:modulartranslations}(4)) implies for real $s$ and $c\geq 0$:
            \[
                 T_c(\sigma_s(a))=\sigma_s(T_{ce^{2\pi s}}(a)),\qquad a\in M\,.
            \]
\end{enumerate}

Using Definition~\ref{def:symmetricembedding} we define the two dense embeddings for the Banach space $L^n(M)$ and its dual $L^{q_n}(M)$ as:
\begin{equation}\label{eq:jn-symmetric-embedding}
  j_n(a)\defeq h^{1/(2n)}ah^{1/(2n)}\in L^n(M)\,,
\end{equation}
with dense range in $L^n(M)$, and
\begin{equation}  j_{q_n}(a)\defeq h^{\eta_n}ah^{\eta_n}\in L^{q_n}(M)\,,
\end{equation}
with dense range in $L^{q_n}(M)$.

We will also use several technical results collected in Appendices~\ref{appsec:technical}, \ref{Sec:app-haagerupLp}, and \ref{Sec:misc-results-Sec45}.
%%%%%%%%%%%%%%%%%%%%%%%%%%%%%%%%%%%%%%%%%%%%%%%%%%%%%%%%%%%%%%%%%%%%%%%%%%%%%

\subsection{Null translations as \texorpdfstring{$L^n(M)$}{Ln(M)} semigroups}
\label{subsec:test-and-dual-semigroups}

The goal of this section is to define the null translation semi-group and its dual semigroup. We start by defining it first on the dense subspace $j_n(M)\subset L^{n}(M)$. 
\begin{definition}[Null translations on the dense core]
\label{def:Vc-def-dense}
Fix \(c\geq0\).  Since \(T_c(M)\subset M\), define
\[
        \widetilde V_c^{(n)}:j_n(M)\longrightarrow j_n(M)
\]
by
\begin{equation}
\label{eq:Vc-defn}
        \widetilde V_c^{(n)}j_n(a)
        :=
        j_n(T_c(a)),
        \qquad a\in M.
\end{equation}
\end{definition}

\begin{proposition}[The null-translation semigroup on \(L^n(M)\)]
\label{prop:null-translation-Ln-semigroup}
For every \(c\geq0\), the map
\(\widetilde V_c^{(n)}\) extends uniquely to a positive contraction
\[
        V_c^{(n)}:L^n(M)\longrightarrow L^n(M).
\]
The family \(V^{(n)}=(V_c^{(n)})_{c\geq0}\) is a strongly continuous
contraction semigroup.  More precisely,
\begin{align}
        &V_{c+d}^{(n)}
        =
        V_c^{(n)}V_d^{(n)},
        &&c,d\geq0,                                      \label{eq:V-semigroup}\\
        &V_0^{(n)}
        =
        I,                                                \label{eq:V-identity}\\
        &\lim_{c\to c_0}
        \|V_c^{(n)}x-V_{c_0}^{(n)}x\|_n
        =
        0,
        &&x\in L^n(M),\quad c_0\geq0.                    \label{eq:V-strong-cont}
\end{align}
\end{proposition}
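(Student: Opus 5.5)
Existence, positivity and contractivity of $V_c^{(n)}$ follow at once from Theorem~\ref{thm:extension-Lp}. The hypotheses are that $T_c$ is a normal unital $*$-endomorphism of $M$, hence a normal positive unital map, and that $\omega\circ T_c=\omega$. With $p=n$, the theorem extends $\widetilde V_c^{(n)}$ to a positive contraction $V_c^{(n)}$ on $L^n(M)$. Uniqueness follows because $j_n(M)$ is dense in $L^n(M)$ (Definition~\ref{def:symmetricembedding}) and any bounded extension is determined on a dense set.

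The algebraic identities are checked on the dense core and then extended by continuity. For $a\in M$,
\[
V_c^{(n)}V_d^{(n)}j_n(a)=j_n(T_c(T_d(a)))=j_n(T_{c+d}(a))=V_{c+d}^{(n)}j_n(a).
\]
Both sides of \eqref{eq:V-semigroup} are bounded operators, since they are products of contractions, and they agree on the dense set $j_n(M)$. Hence they agree everywhere. Since $T_0=\mathrm{id}$ (as $U_0=1$), the same argument gives \eqref{eq:V-identity}.

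Strong continuity is the main step. By the semigroup property and contractivity, it suffices to prove right-continuity at $0$, namely $\|V_c^{(n)}x-x\|_n\to0$ as $c\downarrow0$. Indeed, for $c>c_0$ we have $\|V_c^{(n)}x-V_{c_0}^{(n)}x\|_n\le\|V_{c-c_0}^{(n)}x-x\|_n$. For $c<c_0$ we write $V_{c_0}^{(n)}x-V_c^{(n)}x=V_c^{(n)}(V_{c_0-c}^{(n)}x-x)$ and again use contractivity. Because the family is uniformly bounded by $1$, an $\varepsilon/3$ argument reduces right-continuity at $0$ to $x=j_n(a)$ with $a\in M$. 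Since $T_c(a)\to a$ strong-$*$ by assumption (3), what remains is the claim
\[
\|j_n(b_c)\|_n\to0 \quad\text{whenever } b_c\to0 \text{ strong-}*\text{ with } \sup_c\|b_c\|<\infty,
\]
applied to $b_c=T_c(a)-a$, which satisfies $\|b_c\|\le2\|a\|$.

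This claim is the expected obstacle. For $n=2$ it is immediate: $\|h^{1/4}bh^{1/4}\|_2^2=\|\Delta^{1/4}b\Omega\|^2$ in the standard form, and this tends to zero for bounded strong-$*$ null nets (bound $\Delta^{1/4}$ by $\Delta^{0}+\Delta^{1/2}$ and use $\|b\Omega\|$ and $\|b^*\Omega\|$). For general $n$ I would use complex interpolation. The quantity $\|j_p(b)\|_p$ is controlled by interpolating between $\|b\|_\infty$, which is bounded, and $\|j_1(b)\|_1$ or $\|j_2(b)\|_2$. Kosaki's Riesz--Thorin type bound gives $\|h^{1/2p}bh^{1/2p}\|_p\le\|b\|^{1-2/p}\|h^{1/4}bh^{1/4}\|_2^{2/p}$ for $p\ge2$. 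Since $n\ge2$ this gives convergence to $0$. If that inequality is not available as stated, an alternative is Hölder's inequality (Lemma~\ref{lem:general-holder}) on $\|j_n(b)\|_n^n=\tr(|j_n(b)|^n)$ combined with $|j_n(b)|^2\le\|b\|\,h^{1/2n}|b|h^{1/2n}$-type operator estimates. Yet another option is to prove weak continuity first, via the pairing with $L^{q_n}(M)$, and then invoke the standard fact that a weakly continuous semigroup on a Banach space is strongly continuous.
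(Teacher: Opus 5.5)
Your proposal is correct and follows the paper's proof. Existence and positivity come from Theorem~\ref{thm:extension-Lp}, the semigroup identities are checked on the dense core $j_n(M)$ and extended by boundedness, and strong continuity is reduced (as in Proposition~\ref{prop:semigroup}) to $\|j_n(T_c(a))-j_n(a)\|_n\to0$ on that core. The one difference is this last step: the paper settles it with the sandwich-continuity Lemma~\ref{lem:sandwich-continuity} (Junge's lemma plus H\"older), while you use Kosaki interpolation. Your argument is also valid, and interpolating directly between $h^{1/2}Mh^{1/2}$ and $L^1(M)$ gives $\|j_n(b)\|_n\le\|b\|^{1-1/n}\|b\Omega\|^{1/n}$ without needing reiteration.
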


\begin{proof}
Fix \(c\geq0\).  The map \(T_c:M\to M\) is normal, unital, positive,
and \(\omega\)-preserving.  Therefore
Theorem~\ref{thm:extension-Lp} applies and shows that $\widetilde V_c^{(n)}$, from Definition~\ref{def:Vc-def-dense},
extends uniquely to a positive contraction
\[
        V_c^{(n)}:L^n(M)\longrightarrow L^n(M).
\]
In particular,
\begin{equation}
\label{eq:V-contraction}
        \|V_c^{(n)}\|_{L^n(M)\to L^n(M)}
        \leq1,
        \qquad c\geq0.
\end{equation}

We next verify the semigroup property.  For \(a\in M\) and \(c,d\geq0\),
\begin{align*}
        V_c^{(n)}V_d^{(n)}j_n(a)=
        j_n(T_c(T_d(a)))=
        j_n(T_{c+d}(a))=
        V_{c+d}^{(n)}j_n(a).
\end{align*}
Thus
\[
        V_c^{(n)}V_d^{(n)}
        =
        V_{c+d}^{(n)}
\]
on the norm-dense subspace \(j_n(M)\subset L^n(M)\).  Since both sides are
bounded operators on \(L^n(M)\), they agree on all of \(L^n(M)\).  The same
argument, using the fact that \(T_0\) is the identity operator in $M$, gives
\(V_0^{(n)}=I\).

It remains to prove strong continuity.  We first establish right continuity
at \(c=0\) on the dense subspace \(j_n(M)\).  Let \(a\in M\).  Since $T_c(a)=U_{-c}aU_c$,
and \(c\mapsto U_c\) is strongly continuous,  it is straightforward that $T_c(a) \to a$ strongly as $c\downarrow0$.
It is similarly straightforward to show strong-\(*\) convergence using
\(T_c(a)^*=T_c(a^*)\).  Moreover,
\[
        \|T_c(a)\|=\|a\|,
        \qquad c\geq0,
\]
because \(T_c\) is implemented by unitary conjugation.

The family \((T_c(a))_{c\geq0}\) is therefore uniformly bounded and
converges strongly to \(a\).  By the sandwich-continuity
Lemma~\ref{lem:sandwich-continuity},
\begin{align}
        \|V_c^{(n)}j_n(a)-j_n(a)\|_n
        &=
        \|j_n(T_c(a))-j_n(a)\|_n
        \longrightarrow0
        \qquad(c\downarrow0).                         \label{eq:V-core-cont}
\end{align}

We have shown that \(\|V_c^{(n)}\|\leq1\) for every \(c\geq0\), and \(V_c^{(n)}x\to x\) as \(c\downarrow0\) for every \(x\in j_n(M)\). Proposition~\ref{prop:semigroup}, applied with the dense set \(D=j_n(M)\) and uniform bound \(1\), therefore implies that
\((V_c^{(n)})_{c\geq0}\) is a strongly continuous semigroup on
\(L^n(M)\).
\end{proof}

Having established that $V_c^{(n)}$ is a strongly continuous contraction semigroup, we define its generator as follows; cf.~Definition \ref{def:generator}.
\begin{definition}[Generator of the null translation semigroup on $L^n(M)$]\label{def:Gn-def}
    We denote the generator of $V_c^{(n)}$ by $G_n$, which is defined as:
    \[  
    G_n x\defeq \lim_{c\downarrow 0}\frac{V_c^{(n)}x-x}{c},\qquad x\in\Dom(G_n)\,,
    \]
    where the limit is in $L^n(M)$.
\end{definition}

Since $1<n<\infty$, the space $L^{n}(M)$ is reflexive \cite{Pisier:2003dsk} and its dual Banach space is
\begin{equation}
  L^{q_n}(M)=L^n(M)^*
\end{equation}
under the pairing
\begin{equation}
  \langle x,y\rangle\defeq \tr(xy),\qquad x\in L^{q_n}(M),\quad y\in L^n(M)\,.
\end{equation}
Denote the dual semigroup on $L^{q_n}(M)$ by $\bigl(V_c^{(n)}\bigr)^*$.
Because $L^{n}(M)$ is reflexive, the dual semigroup is strongly continuous on $L^{q_n}(M)$ \cite{phillips1955adjoint,engel.nagel:00:one-parameter}.  It is also positive and contractive. We now establish that the dual semigroup also has a simple action \eqref{eq:dual-core-action} on the dual dense core.

\begin{lemma}[Action on the dual dense core]\label{lem:dual-core-action-fixed}
For every \(c\geq 0\) and every \(a\in M\), let $V_c^{(n)}$ be the semigroup defined by Proposition~\ref{prop:null-translation-Ln-semigroup}. Let \((V_c^{(n)})^*\) denote the Banach adjoint semigroup on
\(L^{q_n}(M)\), with respect to the pairing
\[
        \langle x,y\rangle_{q_n,n}\defeq \tr (xy),
        \qquad x\in L^{q_n}(M),\ y\in L^n(M)\,.
\]
Then
\begin{equation}\label{eq:dual-core-action}
    (V_c^{(n)})^*j_{q_n}(a)=j_{q_n}(T_c(a)).
\end{equation}
\end{lemma}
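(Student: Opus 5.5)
The plan is to verify \eqref{eq:dual-core-action} weakly, by testing both sides against the dense subspace \(j_n(M)\subset L^n(M)\). Both \((V_c^{(n)})^*j_{q_n}(a)\) and \(j_{q_n}(T_c(a))\) lie in \(L^{q_n}(M)=L^n(M)^*\). Moreover, \(y\mapsto\tr(xy)\) is continuous on \(L^n(M)\) for fixed \(x\in L^{q_n}(M)\) by H\"older. So it suffices to show, for all \(b\in M\),
\[
\tr\bigl(j_{q_n}(a)\,V_c^{(n)}j_n(b)\bigr)=\tr\bigl(j_{q_n}(T_c(a))\,j_n(b)\bigr).
\]
By Definition~\ref{def:Vc-def-dense}, the left side is \(\tr(j_{q_n}(a)j_n(T_c(b)))\).

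\textbf{Step 1: reduce to a symmetric KMS pairing.} Since \(\eta_n+\tfrac{1}{2n}=\tfrac12\), H\"older-compatible cyclicity of \(\tr\) (Lemma~\ref{lem:holder-products-trace-cyclicity}) gives, for \(x,y\in M\),
\[
\tr\bigl(j_{q_n}(x)j_n(y)\bigr)=\tr\bigl(h^{\eta_n}xh^{1/2}yh^{1/(2n)}\bigr)=\tr\bigl(h^{1/2}x\,h^{1/2}y\bigr).
\]
The required identity therefore becomes the symmetry
\[
\tr\bigl(h^{1/2}a h^{1/2}T_c(b)\bigr)=\tr\bigl(h^{1/2}T_c(a)h^{1/2}b\bigr).
\]

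\textbf{Step 2: pass to the standard form.} Work in the GNS/standard representation with vacuum vector \(\W\), modular operator \(\D=\D_{\omega;M_A}\) and conjugation \(J=J_{\omega;M_A}\). I would use the standard identification of Haagerup \(L^2(M)\) with \(\cH\), in which \(h^{1/2}\) corresponds to \(\W\) and \(xh^{1/2}\) corresponds to \(x\W\). Under it,
\[
\tr\bigl(h^{1/2}xh^{1/2}y\bigr)=\langle \D^{1/2}x^*\W,\,y\W\rangle=\langle Jx\W,\,y\W\rangle=\langle \W,\,(Jx^*J)\,y\,\W\rangle .
\]
The middle equality uses \(J\D^{1/2}x^*\W=x\W\). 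The exact placement of stars depends on conventions; any consistent version works, because the next step only uses that \(x\) enters through conjugation by \(J\). Pinning down this identification cleanly, with the correct stars and the appropriate Appendix~\ref{Sec:app-haagerupLp} facts, is the one genuinely nontrivial step.

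\textbf{Step 3: use Borchers' relations.} By Theorem~\ref{thm:modulartranslations}(1) and (4), \(U_c\W=\W\) and \(JU_{-c}J=U_c\). Hence
\[
U_c\,(Ja^*J)\,U_{-c}=J\,U_{-c}a^*U_c\,J=J\,T_c(a)^*\,J .
\]
Therefore
\[
\langle\W,(Ja^*J)\,U_{-c}bU_c\W\rangle=\langle U_c\W,\,U_c(Ja^*J)U_{-c}\,b\,U_c\W\rangle=\langle\W,(JT_c(a)^*J)\,b\,\W\rangle .
\]
This is exactly the symmetry required in Step 1. The density argument then yields \eqref{eq:dual-core-action}.
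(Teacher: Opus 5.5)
Your proposal is correct and follows essentially the paper's route. You write the core pairing as a Hilbert-space matrix element involving $J$, move $T_c$ across using $U_c\Omega=\Omega$ and $JU_{-c}J=U_c$, and conclude by density of $j_n(M)$ in $L^n(M)$. The only difference is how you get the pairing formula: you use the $L^2(M)\cong\cH$ standard-form identification (the map $W$ of Appendix~\ref{subsec:alpha-two-rigorous}), which works for all of $M$ at once. The paper instead obtains $\tr\bigl(j_n(a)j_{q_n}(b)\bigr)=\langle a^*\Omega,J b^*\Omega\rangle$ from KMS placement and the Tomita relation for modular-analytic $b$, and then extends to all $b\in M$ by density.
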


\begin{proof}
First note that for $a,b\in M$, the duality pairing between the dense cores $j_n(M)$ and $j_{q_n}(M)$ is given by
\begin{equation}\label{eq:core-pairing}
    \tr \bigl(j_n(a)j_{q_n}(b)\bigr) = \langle a^*\Omega,J_\omega b^*\Omega\rangle_\cH ,
        \qquad a,b\in M .
\end{equation}
For \(b\) entire analytic with respect to \(\sigma^\omega\), this follows from the Tomita relation
\[
        \s_{-i/2}(b)\W = \D_\w^{1/2}b\W = J_\w b^*\W .       
\] 
combined with the KMS placement formula from Lemma~\ref{lem:KMS-placement}:
\begin{equation*}
    \tr \bigl(j_n(a)j_{q_n}(b)\bigr)=\omega\bigl(a\,\sigma_{-i/2}(b)\bigr).
\end{equation*}
For general $b\in M$, this follows from the fact that the analytic elements form a strong-$*$ dense subalgebra $M^{\text{an}}$ of $M$, as is clear from the construction of $b_\ve$ in \eqref{eq:b-eps-def}.

Fix \(a,b\in M\) and \(c\geq 0\).  By definition of the Banach adjoint,
cyclicity of the Haagerup trace for H\"older-compatible products, and the
definition of the \(L^n\)-core action,
\[
\begin{aligned}
 \left\langle (V_c^{(n)})^*j_{q_n}(b),j_n(a)\right\rangle_{q_n,n}
 &= \left\langle j_{q_n}(b),V_c^{(n)}j_n(a)\right\rangle_{q_n,n}  \\
 &= \tr \bigl(j_{q_n}(b)j_n(T_c(a))\bigr)             \\
 &= \tr \bigl(j_n(T_c(a))j_{q_n}(b)\bigr).
\end{aligned}
\]
Using the core pairing formula \eqref{eq:core-pairing},
\[
        \tr \bigl(j_n(T_c(a))j_{q_n}(b)\bigr)
        =
        \langle T_c(a)^*\Omega,J_\omega b^*\Omega\rangle_\cH 
        =
        \langle U_{-c}a^*\Omega,J_\omega b^*\Omega\rangle_\cH
        =
        \langle a^*\Omega,U_cJ_\omega b^*\Omega\rangle_\cH .
\]
Now use \(J_\omega U_{-c}J_\omega=U_c\),
\[
        J_\omega T_c(b)^*\Omega
        =
        J_\omega U_{-c}b^*\Omega
        =
        U_cJ_\omega b^*\Omega ,
\]
to write
\[
        \langle a^*\Omega,U_cJ_\omega b^*\Omega\rangle_\cH
        =
        \langle a^*\Omega,J_\omega T_c(b)^*\Omega\rangle_\cH .
\]
Applying \eqref{eq:core-pairing} again gives
\[
        \langle a^*\Omega,J_\omega T_c(b)^*\Omega\rangle_\cH
        =
        \tr \bigl(j_n(a)j_{q_n}(T_c(b))\bigr)
        =
        \left\langle j_{q_n}(T_c(b)),j_n(a)\right\rangle_{q_n,n}.
\]
Hence
\[
        \left\langle (V_c^{(n)})^*j_{q_n}(b),j_n(a)\right\rangle_{q_n,n}
        =
        \left\langle j_{q_n}(T_c(b)),j_n(a)\right\rangle_{q_n,n}
\]
for every \(a\in M\).  Since \(j_n(M)\) is dense in \(L^n(M)\), the two
elements of \(L^{q_n}(M)=(L^n(M))^*\) agree:
\[
        (V_c^{(n)})^*j_{q_n}(b)=j_{q_n}(T_c(b)).
\]
This proves the claim.
\end{proof}

Define the null translation on the dual core as follows.
\begin{definition}[Null translations on the dual dense core]
\label{def:Vc-dual-def-dense}
Fix \(c\geq0\).  Since \(T_c(M)\subset M\), define
\[
        \widetilde V_c^{(q_n)}:j_{q_n}(M)\longrightarrow j_{q_n}(M)
\]
by
\begin{equation}
\label{eq:Vc-dual-defn}
        \widetilde V_c^{(q_n)}j_{q_n}(a)
        :=
        j_{q_n}(T_c(a)),
        \qquad a\in M.
\end{equation}
\end{definition}
The map $\widetilde V_c^{(q_n)}$ extends uniquely to a positive contraction on all of $L^{q_n}(M)$.
\begin{proposition}[The null-translation dual semigroup on \(L^{q_n}(M)\)]
\label{prop:null-translation-Lqn-semigroup}
For every \(c\geq0\), the map
\(\widetilde V_c^{(q_n)}\) extends uniquely to a positive contraction
\[
        V_c^{(q_n)}:L^{q_n}(M)\longrightarrow L^{q_n}(M).
\]
The family \(V^{(q_n)}=(V_c^{(q_n)})_{c\geq0}\) is a strongly continuous
contraction semigroup.  More precisely,
\begin{align}
        &V_{c+d}^{(q_n)}
        =
        V_c^{(q_n)}V_d^{(q_n)},
        &&c,d\geq0,                                      \label{eq:V-semigroup-dual}\\
        &V_0^{(q_n)}
        =
        I,                                                \label{eq:V-identity-dual}\\
        &\lim_{c\to c_0}
        \|V_c^{(q_n)}x-V_{c_0}^{(q_n)}x\|_{q_n}
        =
        0,
        &&x\in L^{q_n}(M),\quad c_0\geq0.                    \label{eq:V-strong-cont-dual}
\end{align}
\end{proposition}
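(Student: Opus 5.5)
The cleanest route is to identify \(\widetilde V_c^{(q_n)}\) with the restriction of the Banach adjoint semigroup \((V_c^{(n)})^*\) to the dense core \(j_{q_n}(M)\), and then read off every claimed property from the adjoint. By Lemma~\ref{lem:dual-core-action-fixed}, for every \(a\in M\) and \(c\geq0\),
\[
        (V_c^{(n)})^*j_{q_n}(a)=j_{q_n}(T_c(a))=\widetilde V_c^{(q_n)}j_{q_n}(a).
\]
So \(\widetilde V_c^{(q_n)}\) is the restriction of the bounded operator \((V_c^{(n)})^*\) to \(j_{q_n}(M)\).

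We then set \(V_c^{(q_n)}:=(V_c^{(n)})^*\).

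\textbf{Contractivity.} Since \(\|(V_c^{(n)})^*\|=\|V_c^{(n)}\|\leq1\) by \eqref{eq:V-contraction}, this is a contractive extension of \(\widetilde V_c^{(q_n)}\).

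\textbf{Uniqueness.} Any two bounded extensions agree on the norm-dense subspace \(j_{q_n}(M)\) (Definition~\ref{def:symmetricembedding} with \(p=q_n\)), so they coincide.

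\textbf{Positivity.} Take \(x\in L^{q_n}(M)_+\) and \(y\in L^n(M)_+\). Then
\[
        \tr\bigl(((V_c^{(n)})^*x)\,y\bigr)=\tr\bigl(x\,V_c^{(n)}y\bigr)\geq0,
\]
by positivity of \(V_c^{(n)}\) and Proposition~\ref{prop:standard-Haagerup-facts}(3). Self-duality of the positive cones under the \(L^{q_n}\)–\(L^n\) pairing (a standard Haagerup fact, obtained via polar decomposition) then gives \((V_c^{(n)})^*x\geq0\).

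Alternatively, positivity follows without duality: \(T_c\) is normal, unital, positive and \(\omega\)-preserving, so Theorem~\ref{thm:extension-Lp} with \(p=q_n\) gives a positive contraction extending \(\widetilde V_c^{(q_n)}\) directly. Uniqueness then identifies it with \((V_c^{(n)})^*\). I would present this route as primary and the adjoint identification as secondary.

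\textbf{Semigroup law.} The identities \eqref{eq:V-semigroup-dual} and \eqref{eq:V-identity-dual} follow from \eqref{eq:V-semigroup} and \eqref{eq:V-identity}, since taking adjoints reverses products and \(V_c^{(n)}V_d^{(n)}=V_{c+d}^{(n)}=V_d^{(n)}V_c^{(n)}\). One can also repeat the dense-core argument of Proposition~\ref{prop:null-translation-Ln-semigroup} using \(T_cT_d=T_{c+d}\).

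\textbf{Strong continuity.} I would reproduce the proof of Proposition~\ref{prop:null-translation-Ln-semigroup} rather than invoke reflexivity abstractly, to stay self-contained.

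First, for \(a\in M\), the family \(T_c(a)\) is uniformly bounded with \(\|T_c(a)\|=\|a\|\), and \(T_c(a)\to a\) strong-\(*\) as \(c\downarrow0\). The sandwich-continuity Lemma~\ref{lem:sandwich-continuity}, now applied with exponent \(q_n\) and weights \(h^{\eta_n}\), gives
\[
        \|j_{q_n}(T_c(a))-j_{q_n}(a)\|_{q_n}\to0 .
\]
Second, Proposition~\ref{prop:semigroup} applied with the dense set \(j_{q_n}(M)\) and uniform bound \(1\) upgrades this to strong continuity at every \(c_0\geq0\) on all of \(L^{q_n}(M)\), which is \eqref{eq:V-strong-cont-dual}. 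As a cross-check, the Phillips theorem for reflexive spaces gives strong continuity of the adjoint semigroup directly.

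\textbf{Main obstacle.} The only point needing care is that Lemma~\ref{lem:sandwich-continuity} is stated for the exponent \(n\), or at least for general \(1\leq p<\infty\). If it is only stated for \(n\), its proof must be checked to go through with sandwich weights \(h^{1/(2q_n)}\) and exponent \(q_n\in(1,2]\). Since it presumably relies only on strong-\(*\) convergence and H\"older estimates, this should be routine. If it fails, the fallback is Phillips' theorem via reflexivity of \(L^n(M)\), which needs no sandwich estimate at all.
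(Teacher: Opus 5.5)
Your primary route (Theorem~\ref{thm:extension-Lp} with $p=q_n$, the semigroup law checked on the dense core $j_{q_n}(M)$, and strong continuity from Lemma~\ref{lem:sandwich-continuity} combined with Proposition~\ref{prop:semigroup}) is exactly the paper's argument, which simply repeats the proof of Proposition~\ref{prop:null-translation-Ln-semigroup}. Your stated obstacle does not arise, because the sandwich lemma is stated for all $1\le p<\infty$ and $j_{q_n}(a)=h^{1/(2q_n)}ah^{1/(2q_n)}$. Your secondary identification with the Banach adjoint is also correct, and is what the paper records separately as Proposition~\ref{prop:equal-dual-semigroups}.
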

\begin{proof}
    Straightforward using the same argument as Proposition~\ref{prop:null-translation-Ln-semigroup}
\end{proof}

We now identify the map $(V_c^{(n)})^*$ with $V_c^{q_n}$.
\begin{proposition}[Equivalence of Banach adjoint semigroups]\label{prop:equal-dual-semigroups}
    Let $(V_c^{(n)})^*$ denote the Banach adjoint semigroup of $V_c^{(n)}$ as in Lemma~\ref{lem:dual-core-action-fixed}. Let $V_c^{(q_n)}$ be the unique extension of the null translations to the Haagerup $L^{q_n}(M)$ space, as in Proposition~\ref{prop:null-translation-Lqn-semigroup}. Then
    \[
        (V_c^{(n)})^*=V_c^{(q_n)}
    \]
\end{proposition}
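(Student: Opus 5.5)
The plan is to compare the two bounded operators on a dense subspace and then extend by continuity. Fix \(c\geq0\). By Lemma~\ref{lem:dual-core-action-fixed}, for every \(a\in M\) we have
\[
(V_c^{(n)})^*j_{q_n}(a)=j_{q_n}(T_c(a)).
\]
By Definition~\ref{def:Vc-dual-def-dense} and Proposition~\ref{prop:null-translation-Lqn-semigroup}, we also have
\[
V_c^{(q_n)}j_{q_n}(a)=\widetilde V_c^{(q_n)}j_{q_n}(a)=j_{q_n}(T_c(a)).
\]
So the two operators agree on \(j_{q_n}(M)\).

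Both operators are bounded on \(L^{q_n}(M)\). The Banach adjoint \((V_c^{(n)})^*\) has norm equal to \(\|V_c^{(n)}\|\leq1\), using the duality \(L^n(M)^*=L^{q_n}(M)\). The operator \(V_c^{(q_n)}\) is a contraction by Theorem~\ref{thm:extension-Lp}, applied with \(p=q_n\).

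Next we need density of \(j_{q_n}(M)\) in \(L^{q_n}(M)\). Since \(1<q_n<\infty\), this is Definition~\ref{def:symmetricembedding} with \(p=q_n\), noting that \(1/(2q_n)=\eta_n\). Two bounded operators that agree on a dense subspace agree everywhere, so \((V_c^{(n)})^*=V_c^{(q_n)}\) for each \(c\geq0\), and hence as semigroups. As a consistency check, this identification automatically gives that the adjoint semigroup is strongly continuous, which agrees with Proposition~\ref{prop:null-translation-Lqn-semigroup}.

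The only nontrivial input is Lemma~\ref{lem:dual-core-action-fixed}, which is already established. There, the core pairing \eqref{eq:core-pairing} extends from analytic \(b\) to all \(b\in M\) because both sides are continuous in \(b\) under strong-\(*\) bounded approximation (via Lemma~\ref{lem:sandwich-continuity}). Given that lemma, the present statement reduces to a density argument, and I expect no real obstacle. The one point to check carefully is that the pairing convention used for the Banach adjoint, \(\langle x,y\rangle=\tr(xy)\) with \(x\in L^{q_n}\), is the same convention under which the dual space is identified with \(L^{q_n}(M)\). This holds by the trace-cyclicity statement in Proposition~\ref{prop:standard-Haagerup-facts}.
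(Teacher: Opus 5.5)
Your proposal is correct and follows the paper's proof. Both arguments use Lemma~\ref{lem:dual-core-action-fixed} to show that $(V_c^{(n)})^*$ and $V_c^{(q_n)}$ agree on the dense subspace $j_{q_n}(M)$, and then conclude by uniqueness of the bounded extension; you also spell out the boundedness of both operators and the density of $j_{q_n}(M)$ (using $1/(2q_n)=\eta_n$), which the paper leaves implicit in its phrase ``uniqueness of the extension.''
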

\begin{proof}
    Lemma~\ref{lem:dual-core-action-fixed} establishes that the dual semigroup $(V_c^{(n)})^*$ has the same action on the dense core $j_{q_n}(M)\subset L^{q_n}(M)$ as $\widetilde V^{(q_n)}_c$. The uniqueness of the extension to all of $L^{q_n}(M)$ implies equality of the two maps.
\end{proof}

%%%%%%%%%%%%%%%%%%%%%%%%%%%%%%%%%%%%%%%%%%%%%%%%%%%%%
\subsection{Semigroup and modular regularization}\label{subsec:modular-semigroup-reg}
In this section we construct a semigroup and modular regularization that is used to construct the graph core for $G_n$.

\subsubsection{Semigroup smoothing for \texorpdfstring{$L^n(M)$}{LnM}}
\label{subsec:null-smoothing}

Let $C_{c,0}^\infty((0,\infty))$ denote the smooth compactly supported functions on the half-line that vanish in a neighbourhood of $0$.  
\begin{definition}[Semigroup smoothing]
    For $f\in C_{c,0}^\infty((0,\infty))$ and $a\in M$, define the ultraweak integral\footnote{Ultraweak integrals are defined in Definition~\ref{def:ultraweak-int}. We use the same ultraweak-integration convention as in Definition~\ref{def:ultraweak-int} for any bounded ultraweakly measurable $M$-valued orbit, in particular for $t\mapsto T_t(a)$.}
\begin{equation}\label{eq:def-null-smearing}
  a_f\defeq \int_0^\infty f(t)T_t(a)\,dt\in M\,.
\end{equation}
Also define
\begin{equation}\label{eq:def-null-smearing-operator}
  N_fj_n(a)\defeq j_n(a_f)\,.
\end{equation}
\end{definition}

We can then express the action of $N_f$ on the dense core in terms of a Bochner integral in $L^n(M)$.
\begin{lemma}\label{lem:Nf-action}
$N_fj_n(a)$ can be expressed as
\begin{equation}\label{eq:nul-smearing-explicit}
  j_n(a_f) = N_fj_n(a)=\int_0^\infty f(t)V_t^{(n)}j_n(a)\,dt\,,
\end{equation}
where the integral is a Bochner integral in the Banach space $L^n(M)$.
\end{lemma}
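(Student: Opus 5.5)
The plan is to compare the Bochner integral on the right of \eqref{eq:nul-smearing-explicit} with \(j_n(a_f)\) through the duality \(L^{q_n}(M)=L^n(M)^*\). Because \(j_{q_n}(M)\) is dense in \(L^{q_n}(M)\), it suffices to show that both sides have the same pairing with every \(j_{q_n}(b)\), \(b\in M\).

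First I will check that the Bochner integral exists. By Proposition~\ref{prop:null-translation-Ln-semigroup}, the map \(t\mapsto V_t^{(n)}j_n(a)=j_n(T_t(a))\) is norm-continuous in \(L^n(M)\), so it is strongly measurable. It is also bounded by \(\|j_n(a)\|_n\). Since \(f\) is compactly supported and integrable, the function \(t\mapsto f(t)V_t^{(n)}j_n(a)\) is Bochner integrable. Call the result \(X\in L^n(M)\).

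Next, fix \(b\in M\). The functional \(y\mapsto \tr(j_{q_n}(b)y)\) is bounded on \(L^n(M)\), so it commutes with the Bochner integral:
\[
\tr\bigl(j_{q_n}(b)X\bigr)=\int_0^\infty f(t)\,\tr\bigl(j_{q_n}(b)j_n(T_t(a))\bigr)\,dt .
\]
By the core pairing formula \eqref{eq:core-pairing} and cyclicity, the integrand equals \(f(t)\,\langle T_t(a)^*\Omega,J_\omega b^*\Omega\rangle_\cH\). Note that \(\langle x^*\Omega,\xi\rangle=\overline{\langle \Omega, x^*\ldots\rangle}\) is antilinear in \(x^*\) but linear in \(x\): indeed \(\langle x^*\Omega,\xi\rangle=\langle\Omega,x\xi\rangle\). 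So the integrand is \(f(t)\,\varphi_b(T_t(a))\) with the normal functional \(\varphi_b(x):=\langle\Omega,xJ_\omega b^*\Omega\rangle\).

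By the definition of the ultraweak integral (Definition~\ref{def:ultraweak-int}), \(\int f(t)\varphi_b(T_t(a))\,dt=\varphi_b(a_f)\), and \(\varphi_b(a_f)=\tr(j_{q_n}(b)j_n(a_f))\), again by \eqref{eq:core-pairing}. Since \(f\) is real-valued, no conjugation issues arise. The pairings therefore agree for all \(b\), and density of \(j_{q_n}(M)\) in the dual gives \(X=j_n(a_f)\). This also shows that \(N_f\) is well defined on \(j_n(M)\).

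The main point needing care is that the core pairing \eqref{eq:core-pairing} must hold for all \(a,b\in M\), not just analytic ones. This was established in Lemma~\ref{lem:dual-core-action-fixed} by strong-\(*\) density, so it can be invoked directly. The only other step to check is that the ultraweak integral matches evaluation under the vector functional \(\varphi_b\), which is normal; this follows by definition.
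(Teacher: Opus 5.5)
Your proposal is correct and is essentially the paper's proof. Bochner integrability follows from strong continuity and contractivity of $V_t^{(n)}$, and the identification with $j_n(a_f)$ comes from pairing against the dense subspace $j_{q_n}(M)$, using that $c\mapsto\tr\bigl(j_n(c)j_{q_n}(b)\bigr)$ is the normal vector functional given by \eqref{eq:core-pairing}; the paper simply packages this second step as Lemma~\ref{lem:jn-compatible-weak}. Your remark that $f$ is real-valued is unnecessary, and the paper does not assume it, since every map in the argument is complex-linear.
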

\begin{proof}
$V_t^{(n)}(j_n(a))$ is strongly continuous in $t$. Multiplication by the smooth compactly supported $f(t)$ preserves continuity. Hence, $t\mapsto f(t)V_t^{(n)}j_n(a)$ is continuous and therefore (strongly) measurable \cite[Corollary 1.1.2]{arendt.batty.ea:11:vector-valued}.  Finally, by contractivity of $V_t^{(n)}$, $$\|f(t)V_t^{(n)} j_n(a)\|_n \leq |f(t)|\|j_n(a)\|_n,$$
and since $|f(t)|\|j_n(a)\|_n$ is integrable, $f(t)V_t^{(n)}j_n(a)$ is Bochner integrable.  
The result \eqref{eq:nul-smearing-explicit} then follows from Lemma~\ref{lem:jn-compatible-weak}, with $\mathcal S= (0,\I)$, $d\nu(t)=f(t)\,dt$, and $a_t=T_t(a)$.
\end{proof}

\begin{definition}[Generator of the null translation endomorphism $T_r$]
Define the generator\footnote{This is inspired by Definition \ref{def:generator} for the generator of a semigroup. However, we note that $r\mapsto T_r$ is not expected to be a strongly continuous semigroup on $M$ with operator norm.} of $T_r$ to be $\d$,
\begin{equation}\label{eq:delta-dom-def-main}
    \d a\defeq \lim_{r\downarrow0}\frac{T_r(a)-a}{r}, \quad \Dom(\d)\defeq\lbr a\in M:\lim_{r\downarrow0}\frac{T_r(a)-a}{r} \ \text{exists in operator norm} \rbr.
\end{equation}
\end{definition}
The domain $\Dom(\delta)$ is a unital $*$-subalgebra of M, and $\delta$ acts as a derivation on its domain:
\begin{equation}
    \delta(ab)= \delta(a)b + a\delta(b), \quad a,b\in \Dom \delta.
\end{equation}
See Lemma~\ref{lem:derivation-on-dom} for a proof of this claim.

The next lemma identifies the action of $G_n$ on the semigroup-smoothed core $j_n(a_f) \in \Dom(G_n)$ as $G_n j_n(a_f) = j_n(\delta a_f)$. This provides a correspondence between the null generators $\delta$ on $M$ and $G_n$ on $L^n(M)$ that we will use throughout the rest of this Section.
\begin{lemma}[Action of $G_n$ on semigroup smooth elements]
\label{lem:null-smoothing-fixed}
For $f\in C_{c,0}^\infty((0,\infty))$ and $a\in M$, we have $a_f\in\Dom(\d)$ with
\begin{equation}\label{eq:delta-af-defn}
    \delta a_f\defeq -\int_0^\infty f'(t)T_t(a)\,dt\in M\,.
\end{equation}
Equivalently, $N_fj_n(a)\in\Dom(G_n)$, with
\begin{equation}\label{eq:Gn-Nf-jn-a}
  G_nN_fj_n(a) = -\int_0^\infty f'(t)V_t^{(n)}j_n(a)\,dt\,,
\end{equation}
which can be expressed as
\begin{equation}\label{eq:delta-defn}
  G_nj_n(a_f)=j_n(\delta a_f)\,.
\end{equation}
\end{lemma}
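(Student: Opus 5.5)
The plan is to work with the $L^n$ Bochner-integral representation of $N_fj_n(a)$ from Lemma~\ref{lem:Nf-action}. From it we get membership in $\Dom(G_n)$ and the formula \eqref{eq:Gn-Nf-jn-a}. Then we identify the right-hand side with $j_n(\delta a_f)$ and prove separately, at the level of $M$, that $a_f\in\Dom(\delta)$ with the stated derivative.

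\textbf{Step 1: the generator action.} For $r>0$, the semigroup property \eqref{eq:V-semigroup} and continuity of $V_r^{(n)}$ let us pull $V_r^{(n)}$ inside the Bochner integral. A change of variables then gives
\[
\frac{V_r^{(n)}N_fj_n(a)-N_fj_n(a)}{r}=\int_0^\infty \frac{f(t-r)-f(t)}{r}\,V_t^{(n)}j_n(a)\,dt .
\]
Here $f$ is extended by zero to $t<0$. This is harmless because $f$ vanishes near $0$, so the shifted support stays in $(0,\infty)$ for small $r$. The difference quotients $(f(t-r)-f(t))/r$ converge uniformly to $-f'(t)$ and are supported in a fixed compact set. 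Together with $\|V_t^{(n)}j_n(a)\|_n\le\|j_n(a)\|_n$, this gives convergence in $L^n(M)$ to $-\int_0^\infty f'(t)V_t^{(n)}j_n(a)\,dt$. Hence $N_fj_n(a)\in\Dom(G_n)$ and \eqref{eq:Gn-Nf-jn-a} holds.

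\textbf{Step 2: $a_f\in\Dom(\delta)$ in operator norm.} We run the same computation with ultraweak integrals in $M$. Since $T_r$ is normal, it commutes with the ultraweak integral, so $T_r(a_f)=\int f(t)T_{t+r}(a)\,dt$. Therefore
\[
\frac{T_r(a_f)-a_f}{r}+\int_0^\infty f'(t)T_t(a)\,dt=\int_0^\infty\Bigl(\frac{f(t-r)-f(t)}{r}+f'(t)\Bigr)T_t(a)\,dt .
\]
The norm of an ultraweak integral is bounded by the $L^1$ norm of the scalar weight times $\sup_t\|T_t(a)\|=\|a\|$. This follows by pairing with normal functionals of norm one. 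Since the weight tends to zero in $L^1$, we get norm convergence. This gives $a_f\in\Dom(\delta)$ and \eqref{eq:delta-af-defn}.

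\textbf{Step 3: identification.} Note that $-f'\in C_{c,0}^\infty((0,\infty))$ as well. So $\delta a_f=a_{-f'}$, and Lemma~\ref{lem:Nf-action} applied with $-f'$ in place of $f$ gives $j_n(\delta a_f)=\int(-f')V_t^{(n)}j_n(a)\,dt$. This is exactly \eqref{eq:Gn-Nf-jn-a}, which proves \eqref{eq:delta-defn}.

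\textbf{Main obstacle.} The delicate points are justifying that $V_r^{(n)}$, and respectively $T_r$, commute with the integrals, and controlling the ultraweak integral in operator norm. Normality of $T_r$ handles the first. The duality estimate $|\phi(\int g\,T_t(a))|\le\|g\|_1\|a\|\|\phi\|$ handles the second.
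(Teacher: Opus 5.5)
Your argument is correct, but it is organized differently from the paper's. The paper does only one limit computation. It first proves, exactly as in your Step~2, that $(T_r(a_f)-a_f)/r\to-\int_0^\infty f'(t)T_t(a)\,dt$ in operator norm. It then transports this to $L^n(M)$ by applying $j_n$ to the difference quotient, using $(V_r^{(n)}j_n(a_f)-j_n(a_f))/r=j_n\bigl((T_r(a_f)-a_f)/r\bigr)$. Operator-norm convergence in $M$ implies convergence in $L^n(M)$; the paper cites Lemma~\ref{lem:sandwich-continuity}, though the H\"older bound $\|j_n(b)\|_n\leq\|b\|$ already suffices. This gives $G_nj_n(a_f)=j_n(\delta a_f)$ directly, and the Bochner formula \eqref{eq:Gn-Nf-jn-a} is then read off from Lemma~\ref{lem:jn-compatible-weak}.

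You instead differentiate twice. Once in $L^n(M)$, through the translated Bochner integral; this is the standard fact that test-function smoothings of orbits of any strongly continuous semigroup lie in the generator domain. Once in $M$. You then glue the two results by applying Lemma~\ref{lem:Nf-action} to $-f'\in C_{c,0}^\infty((0,\infty))$, which is a neat shortcut.

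Your Step~1 has the merit of being purely semigroup-theoretic. The paper's transport argument is more economical, and it is the template reused in Proposition~\ref{lem:smooth-core-delta}(iii). There, elements of $M^W$ are products of smoothed elements, and no Bochner representation of the kind you use is available.

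One small correction to your justification. Extending $f$ by zero is harmless because $f$ vanishes near $0$, which makes the extension smooth on $\mathbb R$. That smoothness is what gives uniform convergence of the difference quotients to $-f'$. It is not about the shifted support, which lies in $(0,\infty)$ for every $r>0$ regardless.
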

\begin{proof}
We first establish that $a_f\in \Dom(\delta)$. Extend $f$ by zero to all of $\mathbb R$. Then
\begin{equation*}
    \frac{T_r(a_f)-a_f}{r}=\int_0^\I \frac{f(t-r)-f(t)}{r} T_t(a)\, dt.
\end{equation*}
Since $\frac{f(\cdot-r)-f(\cdot)}{r} \to -f'$ in $L^1((0,\I))$, and $\|T_r(a)\|\leq \|a\| $, we get
\begin{equation*}
    \Big\|\frac{T_r(a_f)-a_f}{r}+ \int_0^\I f'(t) T_t(a)\,dt\Big \| \leq \|a\| \Big\| \frac{f(\cdot-r)-f(\cdot)}{r}+f' \Big\|_{L^1} \to 0.
\end{equation*}
Thus,
\begin{equation*}
    a_f\in \Dom (\delta), \quad \delta a_f\defeq -\int_0^\infty f'(t)T_t(a)\,dt\in M\,.
\end{equation*}
For $r>0$, since $a_f\in \Dom(\delta)$,
\begin{equation*}
    \frac{V_r^{(n)}j_n(a_f)-j_n(a_f)}{r}=j_n \lb \frac{T_r(a_f)-a_f}{r}\rb \longrightarrow j_n(\delta a_f) \qquad\text{in }L^n(M),
\end{equation*}
by sandwich continuity Lemma~\ref{lem:sandwich-continuity}. Hence,
\begin{equation*}
    j_n(a_f) \in \Dom(G_n), \quad G_n j_n(a_f)= j_n(\delta a_f) \,.
\end{equation*}
Finally, Lemma~\ref{lem:jn-compatible-weak} gives
\begin{equation*}
    j_n(\delta a_f)= -\int_0^\I f'(t) V_t^{(n)} j_n(a)\, dt. 
\end{equation*}
\end{proof}

\subsubsection{Modular smoothing for \texorpdfstring{$L^n(M)$}{LnM} }
\label{subsec:modular-smoothing}

To define a modular regularization in $L^n(M)$, we first define the $L^n(M)$-extension of the modular automorphism group.
\begin{definition}[Modular flow extended to $L^n(M)$]
    Let \((\Sigma_s^{(n)})_{s\in\mathbb R}\) be the \(L^n(M)\) extension of the modular automorphism group $\s_s$, acting on the dense core by
\begin{equation}
  \Sigma_s^{(n)} j_n(a)\defeq j_n(\sigma_s(a)), \qquad a\in M .
\end{equation}
\end{definition}
Lemma~\ref{lem:modular-Ln-group} shows that \(\Sigma_s^{(n)}\) is a strongly continuous group of isometries on \(L^n(M)\).

\begin{lemma}
    For $y\in\Dom(G_n)$,
\begin{equation}\label{eq:Sigma-y-domain}
  \Sigma_s^{(n)}y\in\Dom(G_n), \qquad G_n\Sigma_s^{(n)}y=e^{2\pi s}\Sigma_s^{(n)}G_ny\,.
\end{equation}
\end{lemma}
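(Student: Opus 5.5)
The approach is to derive a semigroup-level intertwining relation between $\Sigma_s^{(n)}$ and $V_c^{(n)}$ from the Borchers covariance relation, and then differentiate it in $c$ at $c=0$. The covariance consequence listed in the assumptions gives $T_c(\sigma_s(a))=\sigma_s(T_{ce^{2\pi s}}(a))$ for $a\in M$, $c\geq0$, $s\in\mathbb R$. Applying $j_n$ to both sides and using the definitions of $V_c^{(n)}$ and $\Sigma_s^{(n)}$ on the dense core yields
\[
V_c^{(n)}\Sigma_s^{(n)}j_n(a)=j_n\bigl(T_c(\sigma_s(a))\bigr)=j_n\bigl(\sigma_s(T_{ce^{2\pi s}}(a))\bigr)=\Sigma_s^{(n)}V_{ce^{2\pi s}}^{(n)}j_n(a).
\]
Both $V_c^{(n)}\Sigma_s^{(n)}$ and $\Sigma_s^{(n)}V^{(n)}_{ce^{2\pi s}}$ are bounded on $L^n(M)$: Proposition~\ref{prop:null-translation-Ln-semigroup} gives contractivity of $V_c^{(n)}$, and Lemma~\ref{lem:modular-Ln-group} gives the isometry property of $\Sigma_s^{(n)}$. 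Since $j_n(M)$ is dense, the identity therefore holds on all of $L^n(M)$:
\[
V_c^{(n)}\Sigma_s^{(n)}=\Sigma_s^{(n)}V^{(n)}_{ce^{2\pi s}},\qquad c\geq0,\ s\in\mathbb R.
\]

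Now take $y\in\Dom(G_n)$ and form the difference quotient for $\Sigma_s^{(n)}y$:
\[
\frac{V_c^{(n)}\Sigma_s^{(n)}y-\Sigma_s^{(n)}y}{c}
=\Sigma_s^{(n)}\,\frac{V^{(n)}_{ce^{2\pi s}}y-y}{c}
=e^{2\pi s}\,\Sigma_s^{(n)}\,\frac{V^{(n)}_{c'}y-y}{c'},\qquad c'\defeq ce^{2\pi s}.
\]
As $c\downarrow0$ we have $c'\downarrow0$, so the last quotient converges to $G_ny$ in $L^n(M)$. Because $\Sigma_s^{(n)}$ is a bounded (isometric) operator, the right-hand side converges to $e^{2\pi s}\Sigma_s^{(n)}G_ny$. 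By Definition~\ref{def:Gn-def}, this gives both $\Sigma_s^{(n)}y\in\Dom(G_n)$ and $G_n\Sigma_s^{(n)}y=e^{2\pi s}\Sigma_s^{(n)}G_ny$.

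The only delicate point is the first step. One must check that the covariance relation holds on all of $M$, not merely on analytic elements, which is fine since it is an identity of automorphisms and endomorphisms. One must also check that the core definitions of $\Sigma_s^{(n)}$ and $V_c^{(n)}$ compose as written, i.e.\ that $\Sigma_s^{(n)}$ maps $j_n(M)$ into $j_n(M)$ and $V_c^{(n)}$ maps $j_n(M)$ into $j_n(M)$. Both hold by definition, since $\sigma_s(M)=M$ and $T_c(M)\subset M$. Beyond this, the argument is a routine density-plus-boundedness extension followed by a rescaling of the difference quotient.
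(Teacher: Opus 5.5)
Your proposal is correct and takes the same approach as the paper. You derive the intertwining relation $V_c^{(n)}\Sigma_s^{(n)}=\Sigma_s^{(n)}V^{(n)}_{ce^{2\pi s}}$ from Borchers covariance on the dense core $j_n(M)$, extend it by boundedness, and then rescale the difference quotient in Definition~\ref{def:Gn-def} using $c'=ce^{2\pi s}$; the paper does the same but leaves the density extension and the rescaling implicit.
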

\begin{proof}
    By Borchers covariance and \eqref{eq:Vc-defn},
\begin{equation}
  \Sigma_s^{(n)}V_t^{(n)}=V_{te^{-2\pi s}}^{(n)}\Sigma_s^{(n)},
  \qquad s\in\mathbb R,
  \quad t\geq 0\,.
\end{equation}
Equivalently,
\begin{equation}\label{eq:semigroup-covariance}
  V_t^{(n)}\Sigma_s^{(n)}=\Sigma_s^{(n)}V_{te^{2\pi s}}^{(n)}\,.
\end{equation}
Eq \eqref{eq:Sigma-y-domain} then follows from Definition~\ref{def:Gn-def}.
\end{proof}

Using the modular flow extended to $L^n(M)$ we now define the modular smoothing. 
\begin{definition}[Modular smoothing]
    We define a modular Gaussian smoothing operator for $\varepsilon>0$
\begin{equation}\label{eq:modular-gaussian-smoothing}
  \mathcal M_\varepsilon^{(n)}y\defeq \int_{\mathbb R}\gamma_\varepsilon(s)\Sigma_s^{(n)}y\,ds\,,
\end{equation}
where $\gamma_\varepsilon(s)\defeq \frac{1}{\sqrt{2\pi}\,\varepsilon}
  \exp\left(-\frac{s^2}{2\varepsilon^2}\right)$, as in \eqref{eq:gamma-appendix}, and the integral is a Bochner integral (Definition \ref{def:Bochner-int}). 
\end{definition}
The following lemma shows that the smoothed operators $\mathcal M_\ve^{(n)} y$ converge to $y$ as $\ve\to0$ in the graph norm for $G_n$ (Definition \ref{def:graph}).
\begin{lemma}[Graph convergence of modular smoothing]
\label{lem:modular-smoothing-fixed}
If $y\in\Dom(G_n)$, then $\mathcal M_\ve^{(n)}y$ converges in $G_n$-graph norm to $y$, i.e., as $\ve\to0$,
\begin{equation}
    \mathcal M_\varepsilon^{(n)}y\longrightarrow y,
  \qquad
  G_n\mathcal M_\varepsilon^{(n)}y\longrightarrow G_ny
  \qquad
  \text{in }L^n(M).
\end{equation}
\end{lemma}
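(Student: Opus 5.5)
The plan is to split the statement into two parts: convergence of $\mathcal M_\varepsilon^{(n)}y$ itself, and convergence of $G_n\mathcal M_\varepsilon^{(n)}y$. The second part is handled by commuting $G_n$ past the Bochner integral using the covariance relation \eqref{eq:Sigma-y-domain}.

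\textbf{Step 1: $\mathcal M_\varepsilon^{(n)}y\to y$.} Since $\Sigma_s^{(n)}$ is a strongly continuous group of isometries (Lemma~\ref{lem:modular-Ln-group}), the integrand $s\mapsto\gamma_\varepsilon(s)\Sigma^{(n)}_s y$ is continuous and dominated by $\gamma_\varepsilon(s)\|y\|_n$, so the Bochner integral exists. Because $\int\gamma_\varepsilon=1$,
\[
\|\mathcal M_\varepsilon^{(n)}y-y\|_n\le\int_{\mathbb R}\gamma_\varepsilon(s)\,\|\Sigma_s^{(n)}y-y\|_n\,ds .
\]
The integrand is bounded by $2\|y\|_n$ and tends to $0$ as $s\to0$. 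Split the integral at $|s|<\rho$ and $|s|\ge\rho$. The inner piece is at most $\sup_{|s|<\rho}\|\Sigma_s^{(n)}y-y\|_n$, and the outer piece is at most $2\|y\|_n\int_{|s|\ge\rho}\gamma_\varepsilon\to0$. Letting first $\varepsilon\to0$ and then $\rho\to0$ gives the claim. This is the standard approximate-identity argument.

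\textbf{Step 2: $\mathcal M_\varepsilon^{(n)}y\in\Dom(G_n)$.} By \eqref{eq:Sigma-y-domain}, each $\Sigma_s^{(n)}y$ lies in $\Dom(G_n)$ with $G_n\Sigma_s^{(n)}y=e^{2\pi s}\Sigma_s^{(n)}G_ny$. This function of $s$ is continuous in $L^n(M)$, with norm $e^{2\pi s}\|G_ny\|_n$. The Gaussian weight makes $\gamma_\varepsilon(s)e^{2\pi s}$ integrable, so $s\mapsto \gamma_\varepsilon(s)G_n\Sigma_s^{(n)}y$ is Bochner integrable. Since $G_n$ is closed, Hille's theorem (the closed-operator/Bochner interchange in Appendix~\ref{appsec:technical}) gives $\mathcal M_\varepsilon^{(n)}y\in\Dom(G_n)$ and
\[
G_n\mathcal M_\varepsilon^{(n)}y=\int_{\mathbb R}\gamma_\varepsilon(s)e^{2\pi s}\Sigma_s^{(n)}G_ny\,ds .
\]
If one prefers to avoid Hille's theorem, one can compute $(V_r^{(n)}-I)/r$ applied to $\mathcal M^{(n)}_\varepsilon y$ directly. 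Using \eqref{eq:semigroup-covariance}, this equals $\int\gamma_\varepsilon(s)\Sigma_s^{(n)}(V^{(n)}_{re^{2\pi s}}y-y)/r\,ds$. One then applies dominated convergence with the bound $\|V_t^{(n)}y-y\|_n\le t\|G_ny\|_n$, valid for $y\in\Dom(G_n)$ and contractive semigroups.

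\textbf{Step 3: convergence of $G_n\mathcal M_\varepsilon^{(n)}y$.} Write $z=G_ny$. Then
\[
G_n\mathcal M_\varepsilon^{(n)}y-z=\int\gamma_\varepsilon(s)\bigl(e^{2\pi s}\Sigma_s^{(n)}z-z\bigr)\,ds+\Bigl(\int\gamma_\varepsilon(s)\,ds-1\Bigr)z ,
\]
and the last term vanishes. For the remaining integral, bound the norm of the integrand by $|e^{2\pi s}-1|\,\|z\|_n+\|\Sigma_s^{(n)}z-z\|_n$. The first contribution is $\|z\|_n\int\gamma_\varepsilon|e^{2\pi s}-1|\,ds$, which tends to $0$; one can compute $\int\gamma_\varepsilon e^{2\pi s}=e^{2\pi^2\varepsilon^2}\to1$ and handle the absolute value by splitting at $|s|=\rho$. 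The second contribution tends to $0$ by the Step 1 argument applied to $z$.

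The only genuine obstacle is Step 2: justifying that $G_n$ passes through the Bochner integral despite the exponentially growing factor $e^{2\pi s}$. The Gaussian tail is exactly what controls this growth, which is why a Gaussian kernel is chosen here rather than an arbitrary $L^1$ mollifier.
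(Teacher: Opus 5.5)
Your proposal is correct and follows essentially the same route as the paper, which also uses closedness of $G_n$ (Bochner-integrating in the graph space $\mathcal G(G_n)$, i.e.\ Hille's theorem) to obtain $G_n\mathcal M_\varepsilon^{(n)}y=\int\gamma_\varepsilon(s)e^{2\pi s}\Sigma_s^{(n)}G_ny\,ds$, and an approximate-identity argument for $\mathcal M_\varepsilon^{(n)}y\to y$. The only cosmetic difference is in Step 3: the paper completes the square to show that $\gamma_\varepsilon(s)e^{2\pi s}\,ds$ is itself an approximate identity, whereas you split off the factor $e^{2\pi s}-1$, and the two arguments are equivalent.
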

\begin{proof}
We first show that \(\mathcal M_\varepsilon^{(n)}y\in\Dom (G_n)\) and compute its
\(G_n\)-image.  Since \(s\mapsto \Sigma_s^{(n)}y\) is norm-continuous and
\(\Sigma_s^{(n)}\) is an isometry on $L^n(M)$,
\[
        \int_{\mathbb R}\gamma_\varepsilon(s)
        \|\Sigma_s^{(n)}y\|_n\,ds
        =
        \|y\|_n
        <\infty .
\]
Thus $\cM_\ve^{(n)}y$ is a Bochner integral in $L^n(M)$.

Moreover, by \eqref{eq:Sigma-y-domain},
\[
        \left\|
        \gamma_\varepsilon(s)G_n\Sigma_s^{(n)}y
        \right\|_n
        =
        \gamma_\varepsilon(s)e^{2\pi s}\|G_ny\|_n .
\]
which is integrable because
\[
        \int_{\mathbb R}\gamma_\varepsilon(s)e^{2\pi s}\,ds
        =
        \exp(2\pi^2\varepsilon^2)
        <\infty .
\]
Let \(\mathcal G(G_n)\subset L^n(M)\oplus L^n(M)\) denote the graph of \(G_n\). 
Since \(G_n\) is closed (Theorem \ref{thm:generator-is-closed}), \(\mathcal G(G_n)\) is a closed subspace of
\(L^n(M)\oplus L^n(M)\), hence a Banach space.  The map
\[
        s\longmapsto
        \left(
            \gamma_\varepsilon(s)\Sigma_s^{(n)}y,\,
            \gamma_\varepsilon(s)e^{2\pi s}\Sigma_s^{(n)}G_ny
        \right)
\]
is Bochner integrable as a \(\mathcal G(G_n)\)-valued function. Therefore its
Bochner integral again lies in \(\mathcal G(G_n)\).  Equivalently,
\begin{equation}\label{eq:M-eps-domain}
    \mathcal M_\varepsilon^{(n)}y\in\Dom (G_n), 
    \qquad 
    G_n\mathcal M_\varepsilon^{(n)}y
    =
    \int_{\mathbb R}
    \gamma_\varepsilon(s)e^{2\pi s}
    \Sigma_s^{(n)}G_ny\,ds .
\end{equation}

It remains to prove the two convergence statements.  
The first convergence follows from Lemma~\ref{lem:approx-id} applied to $F(s)=\Sigma_s^{(n)}y$ and the approximate identity $\gamma_\varepsilon(s)\,ds$ (Definition \ref{def:approx-id}). 
For the generator part, define the finite measure
\[
        d\mu_\varepsilon(s)
        \defeq 
        \gamma_\varepsilon(s)e^{2\pi s}\,ds .
\]
Completing the square gives
\[
        \gamma_\varepsilon(s)e^{2\pi s}
        =
        e^{2\pi^2\varepsilon^2}
        \frac{1}{\sqrt{2\pi}\varepsilon}
        \exp\!\left(
            -\frac{(s-2\pi\varepsilon^2)^2}{2\varepsilon^2}
        \right).
\]
Hence
\(
        \mu_\varepsilon(\mathbb R)
        =
        e^{2\pi^2\varepsilon^2}
        \to 1 ,
\) 
and $\mu_\ve(\{|s|\geq\delta\})\to0$ for every $\d>0$. Thus $\mu_\ve$ is an approximate identity.  Applying Lemma~\ref{lem:approx-id} to $F(s)= \Sigma_s^{(n)} G_n y$ gives
\[
        G_n\mathcal M_\varepsilon^{(n)}y
        =
        \int_{\mathbb R}
        \gamma_\varepsilon(s)e^{2\pi s}
        \Sigma_s^{(n)}G_ny\,ds
        \longrightarrow
        G_ny
        \qquad\text{in }L^n(M).
\]
\end{proof}

The modular-Gaussian smoothing is compatible with the dense embedding $j_n$ as follows.
\begin{lemma}\label{lem:modular-gaussian-dense-core}
Let $\mathcal M_\varepsilon^{(n)}$ be the modular-Gaussian smoothing defined in \eqref{eq:modular-gaussian-smoothing}. 
Then
\begin{equation}\label{eq:modular-gaussian-dense-core-act}
     \mathcal M_\varepsilon^{(n)}j_n(a)=j_n\!\left(\int_{\mathbb R}\gamma_\varepsilon(s)\,\sigma_s(a)\,ds\right),\qquad a\in M,
\end{equation}
as an ultraweak integral in $M$. 
\end{lemma}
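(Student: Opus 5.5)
The plan is to follow the same route as the proof of Lemma~\ref{lem:Nf-action}: show that both sides of \eqref{eq:modular-gaussian-dense-core-act} are well defined, then identify them through the compatibility of $j_n$ with weak integrals (Lemma~\ref{lem:jn-compatible-weak}), applied this time with the parameter space $\mathcal S=\mathbb R$, the measure $d\nu(s)=\gamma_\varepsilon(s)\,ds$ and the family $a_s=\sigma_s(a)$.

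\textbf{Step 1: the right-hand side makes sense in $M$.} The map $s\mapsto\sigma_s(a)$ is uniformly bounded, since $\|\sigma_s(a)\|=\|a\|$, and it is $\sigma$-weakly continuous, because $\sigma_s=\Ad(\Delta_\omega^{is})$ and $\Delta_\omega^{is}$ is strongly continuous. The measure $\gamma_\varepsilon(s)\,ds$ is a finite (probability) measure. Together these show that the ultraweak integral $a_\varepsilon:=\int_{\mathbb R}\gamma_\varepsilon(s)\sigma_s(a)\,ds$ exists in $M$ in the sense of Definition~\ref{def:ultraweak-int}, and that $\|a_\varepsilon\|\le\|a\|$.

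\textbf{Step 2: the left-hand side makes sense in $L^n(M)$.} By the definition of $\Sigma_s^{(n)}$ we have $\Sigma_s^{(n)}j_n(a)=j_n(\sigma_s(a))$. By Lemma~\ref{lem:modular-Ln-group}, $s\mapsto\Sigma_s^{(n)}j_n(a)$ is norm continuous and has constant norm $\|j_n(a)\|_n$. Hence the integrand $\gamma_\varepsilon(s)\,j_n(\sigma_s(a))$ is strongly measurable, and its norm is dominated by the integrable function $\gamma_\varepsilon(s)\|j_n(a)\|_n$. So the Bochner integral $\mathcal M_\varepsilon^{(n)}j_n(a)$ exists.

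\textbf{Step 3: identify the two integrals.} I would invoke Lemma~\ref{lem:jn-compatible-weak} directly. If I instead had to argue it by hand, I would test both sides against the dense set $j_{q_n}(b)$, $b\in M$, in $L^{q_n}(M)=L^n(M)^*$.

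For the Bochner side, the bounded functional $\tr(\,\cdot\,j_{q_n}(b))$ can be moved inside the integral. Using the core pairing \eqref{eq:core-pairing}, this gives
\[
\int\gamma_\varepsilon(s)\,\langle \sigma_s(a)^*\Omega,J_\omega b^*\Omega\rangle\,ds .
\]

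For the other side, the pairing \eqref{eq:core-pairing} gives $\tr(j_n(a_\varepsilon)j_{q_n}(b))=\langle a_\varepsilon^*\Omega,J_\omega b^*\Omega\rangle$. Since $a_\varepsilon^*=\int\gamma_\varepsilon\,\sigma_s(a^*)\,ds$ ultraweakly, this is the same integral, because $x\mapsto\langle x\Omega,\xi\rangle$ is a normal functional. Density of $j_{q_n}(M)$ then yields equality in $L^n(M)$.

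\textbf{Main obstacle.} The delicate point is this exchange of the ultraweak integral with the vector functional in the complex-conjugate-linear slot. It is resolved by applying the adjoint first, since the ultraweak integral commutes with $*$ because $\gamma_\varepsilon$ is real. The rest of the argument is routine.
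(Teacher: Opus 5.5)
Your proposal is correct and matches the paper, which proves the lemma in one line by applying Lemma~\ref{lem:jn-compatible-weak} with $\mathcal S=\mathbb R$, $d\nu(s)=\gamma_\varepsilon(s)\,ds$ and $a_s=\sigma_s(a)$; your Steps 1--2 make explicit the boundedness, ultraweak measurability and Bochner integrability hypotheses that the paper leaves unchecked, and your by-hand Step 3 reproduces the proof of that lemma. The conjugation issue you flag as the main obstacle is not really one: $\langle c^*\Omega,\xi\rangle=\langle\Omega,c\,\xi\rangle$ is already a normal linear functional of $c$, which is how the paper's proof of Lemma~\ref{lem:jn-compatible-weak} handles it, so no adjoint step is needed.
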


\begin{proof}
Follows from Lemma~\ref{lem:jn-compatible-weak}, with $\mathcal S=\mathbb R$, $d\nu(s)= \gamma_{\varepsilon}(s)\, ds$, and $a_s = \sigma_s(a)$.
\end{proof}

\subsection{A Ward graph core of the semigroup generator}
\label{subsec:ward-graph-core}

In this section we combine the semigroup and modular regularizations to define a graph core of the semigroup generator which is modular entire. 

For $a\in M$, $f\in C_{c,0}^\infty((0,\infty))$, and $\varepsilon>0$, define the ultraweak integral
\begin{equation}\label{eq:defn-double-smearing}
  a_{f,\varepsilon}\defeq \int_{\mathbb R}\gamma_\varepsilon(s)\sigma_s(a_f)\,ds\in M\,.
\end{equation} 
The element $a_{f,\varepsilon}$ is entire analytic for the modular group (Lemma~\ref{lem:gaussian-modular-entire}).  More explicitly, for $z\in\mathbb C$,
\begin{equation}
  \sigma_z(a_{f,\varepsilon})
  =\int_{\mathbb R}\gamma_\varepsilon(s-z)\sigma_s(a_f)\,ds\,,
\end{equation}
where the integral is understood in the usual weakly entire sense. The smoothened operators $a_{f,\varepsilon}$ allow us to define null derivatives in an appropriate sense on a smooth core.
\begin{definition}[Ward core]\label{def:smooth-core-null}
    Let $M^W\subset M$ %(W for Ward) 
    be the unital $*$-subalgebra generated by all elements of the form $a_{f,\varepsilon}$ with $a\in M$, $f\in C_{c,0}^\infty((0,\infty))$, and $\varepsilon>0$. Define
    \[
        \cD_n^{W}\defeq j_n(M^W)\subset L^n(M).
    \]
\end{definition}
\begin{proposition}[Smooth core for the null derivation]
\label{lem:smooth-core-delta}
Let $M^W\subset M$ and  $\cD_n^{W}$ be as in Definition~\ref{def:smooth-core-null}.  Then:
\begin{enumerate}
  \item[(i)] $M^W\subset\Dom(\delta)$, and $\delta$ acts as a derivation on $M^W$:
  \begin{equation}
  \label{eq:delta-derivation}
    \delta(ab)=\delta(a)b+a\,\delta(b),\qquad a,b\in M^W\,.
  \end{equation}
  \item[(ii)] Every $a\in M^W$ is modular-entire, and so is $\delta (a)$. Moreover, $\sigma_z(a)\in \Dom(\delta)$ for all $a\in M^W$, $z\in \mathbb C$, and Borchers covariance relation holds in complex form:
  \begin{equation}
  \label{eq:complex-borchers}
    \delta(\sigma_z(a))=e^{2\pi z}\sigma_z(\delta a),\qquad z\in\mathbb C\,.
  \end{equation}
  \item[(iii)] $\cD_n^{W}\subset\Dom(G_n)$, and
  \begin{equation}
  \label{eq:Gn-on-core}
    G_nj_n(a)=j_n(\delta a),\qquad a\in M^W\,.
  \end{equation}
\end{enumerate}
\end{proposition}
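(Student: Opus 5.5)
The plan is to verify everything first on the generators $a_{f,\varepsilon}$, and then pass to the unital $*$-subalgebra they generate using the derivation property (Lemma~\ref{lem:derivation-on-dom}).

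\emph{Step 1: generators lie in $\Dom(\delta)$, with an explicit formula.} Fix $a\in M$, $f\in C_{c,0}^\infty((0,\infty))$ and $\varepsilon>0$. By Lemma~\ref{lem:null-smoothing-fixed}, $a_f\in\Dom(\delta)$. To handle $a_{f,\varepsilon}$, apply $T_r$ inside the ultraweak integral \eqref{eq:defn-double-smearing}; this is legitimate because $T_r$ is normal. The real Borchers relation $T_r\sigma_s=\sigma_s T_{re^{2\pi s}}$ then gives
\[
\frac{T_r(a_{f,\varepsilon})-a_{f,\varepsilon}}{r}=\int_{\mathbb R}\gamma_\varepsilon(s)\,e^{2\pi s}\,\sigma_s\!\left(\frac{T_{re^{2\pi s}}(a_f)-a_f}{re^{2\pi s}}\right)ds .
\]
For each fixed $s$, the inner difference quotient converges in norm to $\delta a_f$. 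It is also uniformly bounded: by the proof of Lemma~\ref{lem:null-smoothing-fixed} its norm is at most $\|a\|\,\|f'\|_{L^1}$ for every step size. Since $\int\gamma_\varepsilon e^{2\pi s}\,ds<\infty$, dominated convergence for norm-valued integrands yields
\[
\delta a_{f,\varepsilon}=\int\gamma_\varepsilon(s)e^{2\pi s}\sigma_s(\delta a_f)\,ds .
\]
The integral is a Gaussian smoothing of the element $\delta a_f=-a_{f'}$, and completing the square shows it equals $e^{2\pi^2\varepsilon^2}\sigma_{2\pi\varepsilon^2\cdot(-i)\cdot i}$, i.e. a shifted Gaussian smoothing. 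In particular $\delta a_{f,\varepsilon}$ is modular-entire by Lemma~\ref{lem:gaussian-modular-entire}.

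\emph{Step 2: part (i).} Since $\Dom(\delta)$ is a unital $*$-algebra on which $\delta$ acts as a derivation, and $\delta$ commutes with $*$, Step 1 gives $M^W\subset\Dom(\delta)$ together with \eqref{eq:delta-derivation}.

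\emph{Step 3: part (ii).} Modular-entire elements form a $*$-algebra, so every $a\in M^W$ is entire. By Leibniz, $\delta(a)$ is a sum of products of entire elements, so it is entire as well.

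For the complex Borchers relation, start with a generator $b=a_{f,\varepsilon}$. Its continuation is $\sigma_z(b)=\int\gamma_\varepsilon(s-z)\sigma_s(a_f)\,ds$, and repeating the Step~1 computation with kernel $\gamma_\varepsilon(s-z)$ gives
\[
\delta\sigma_z(b)=\int\gamma_\varepsilon(s-z)e^{2\pi s}\sigma_s(\delta a_f)\,ds .
\]
Domination still holds here, because $|\gamma_\varepsilon(s-z)|e^{2\pi s}$ is integrable. On the other side, $e^{2\pi z}\sigma_z(\delta b)$ equals the same integral: for real $z$ this is the change of variables already used in Step 1, and both sides are entire in $z$, so they agree everywhere by the identity theorem. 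Concretely, I would compare matrix elements $\langle\xi,\cdot\,\eta\rangle$ as entire functions.

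For products, $\sigma_z$ is multiplicative on entire elements, and $\delta$ is a derivation on $\Dom(\delta)$. Hence if $\sigma_z(b_1)$ and $\sigma_z(b_2)$ lie in $\Dom(\delta)$ and satisfy the relation, then
\[
\delta(\sigma_z(b_1)\sigma_z(b_2))=e^{2\pi z}\sigma_z(\delta b_1\,b_2+b_1\,\delta b_2),
\]
which is the relation for $b_1b_2$. Adjoints are handled by $\sigma_z(b)^*=\sigma_{\bar z}(b^*)$. This extends the relation to all of $M^W$.

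\emph{Step 4: part (iii).} For $a\in M^W$, the norm convergence $(T_r(a)-a)/r\to\delta a$ and the sandwich-continuity Lemma~\ref{lem:sandwich-continuity} give
\[
\frac{V_r^{(n)}j_n(a)-j_n(a)}{r}\to j_n(\delta a)
\]
in $L^n(M)$. This is exactly the argument of Lemma~\ref{lem:null-smoothing-fixed}, and it yields $\cD_n^W\subset\Dom(G_n)$ together with \eqref{eq:Gn-on-core}.

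\emph{Main obstacle.} The delicate step is the complex-$z$ Borchers relation. One must justify exchanging $\delta$, which is a norm limit, with the weakly entire integral representation, and check that the relation propagates through products. The approach I would try first is to reduce to generators via the explicit kernel $\gamma_\varepsilon(s-z)$, use dominated convergence with the uniform bound $\|a\|\,\|f'\|_{L^1}$ on the difference quotients, and then conclude by analytic continuation.
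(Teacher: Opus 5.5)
Your proposal is correct and follows essentially the same route as the paper: the generator-level difference quotient via the real Borchers relation and dominated convergence, the shifted-Gaussian observation for modular entirety of $\delta a_{f,\varepsilon}$, the same computation with kernel $\gamma_\varepsilon(s-z)$ for the complex Borchers relation, and propagation to all of $M^W$ through the derivation rule and multiplicativity of $\sigma_z$.

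The differences are only cosmetic. You dominate the difference quotients by $\|a\|\,\|f'\|_{L^1}$ instead of invoking Lemma~\ref{lem:lipschitz-af}, and you identify $e^{2\pi z}\sigma_z(\delta a_{f,\varepsilon})$ via the identity theorem rather than the explicit continuation formula. Separately, the garbled expression at the end of your Step 1 should read $\delta a_{f,\varepsilon}=e^{2\pi^2\varepsilon^2}\sigma_{2\pi\varepsilon^2}\bigl((\delta a_f)_\varepsilon\bigr)$, with $(\cdot)_\varepsilon$ the Gaussian smoothing of \eqref{eq:b-eps-def}.
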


\begin{proof}
\emph{of (i).} 

For each generator $a_{f,\varepsilon}$, consider the difference quotient
\begin{equation}\label{eq:difference-quot-Tr}
  \frac{T_r(a_{f,\varepsilon})-a_{f,\varepsilon}}{r}
  =\int_{\mathbb R}\gamma_\varepsilon(s)\, e^{2\pi s}\sigma_s\!\left(\frac{T_{re^{2\pi s}}(a_f)-a_f}{r e^{2\pi s}}\right)ds\,.
\end{equation}
Since \eqref{eq:difference-quot-Tr} is an ultraweak integral, using
\begin{equation*}
    \left\|\int F_r(s)ds\right\| = \sup_{\|\vf\|_{M_*}\leq 1}\lv \int\vf\lb F_r(s)\rb ds\rv \leq \int \|F_r(s)\|ds,
\end{equation*}
we get
\begin{equation*}
    \Bigg\| \frac{T_r(a_{f,\varepsilon})-a_{f,\varepsilon}}{r}-\int_\mathbb{R}\gamma_\varepsilon(s) e^{2\pi s} \sigma_s(\delta a_f)\Bigg\| \leq \int_\mathbb{R}\gamma_\varepsilon(s)e^{2\pi s} \Bigg\|\frac{T_{re^{2\pi s}}(a_f)-a_f}{r e^{2\pi s}}-\delta a_f\Bigg\|\, ds \,.
\end{equation*}
The integrand in the upper bound above tends point-wise to zero as $r\downarrow0$ since $a_f \in \Dom(\delta)$.
Specializing the Lipschitz bound from Lemma~\ref{lem:lipschitz-af} to $u=re^{2\pi s}>0$ and using the operator-norm isometry of $\sigma_s$ implies that the integrand in the difference quotient for $a_{f,\varepsilon}$ \eqref{eq:difference-quot-Tr} is dominated by 
$2\gamma_\varepsilon(s)\,e^{2\pi s}\,\|\delta a_f\|$ uniformly in $r>0$. The dominating bound is integrable since
\begin{equation*}
  \int_{\mathbb R}\gamma_\varepsilon(s)\,e^{2\pi s}\,ds=e^{2\pi^2\varepsilon^2}<\I.
\end{equation*}
By dominated convergence we obtain $a_{f,\varepsilon}\in\Dom(\delta)$ with
\begin{equation}
\label{eq:delta-on-generators}
  \delta(a_{f,\varepsilon})=\int_{\mathbb R}\gamma_\varepsilon(s)\,e^{2\pi s}\sigma_s(\delta a_f)\,ds\,.
\end{equation}
Membership in $\Dom(\delta)$ is preserved under the $*$-operation: since $T_t$ is a $*$-endomorphism, $a_{f,\varepsilon}^*=(a^*)_{\bar f,\varepsilon}\in\Dom(\delta)$. 
Thus, $M^W$, the unital $*$-subalgebra generated by these elements  is contained in $\Dom(\delta)$, and the derivation property \eqref{eq:delta-derivation} follows from Lemma~\ref{lem:derivation-on-dom}.

\emph{Proof of (ii)}. 

Each generator $a_{f,\varepsilon}$ is modular-entire by construction (see Lemma \ref{lem:gaussian-modular-entire}), with explicit complex extension
\begin{equation*}
  \sigma_z(a_{f,\varepsilon})
  =\int_{\mathbb R}\gamma_\varepsilon(s-z)\sigma_s(a_f)\,ds,\qquad z\in\mathbb C\,.
\end{equation*}
Modular-entirety is preserved under sums, products, and adjoints, so all of $M^W$ is modular-entire.  
Since $\gamma_\varepsilon(s)e^{2\pi s}$ is a shifted Gaussian, the derivative in  \eqref{eq:delta-on-generators} is also a modular analytic element with complex extension
\begin{equation}\label{eq:sigmaz-afe}
    \sigma_z(\delta(a_{f,\varepsilon}))=\int_{\mathbb R}\gamma_\varepsilon(s-z)\,e^{2\pi( s-z)}\sigma_s(\delta a_f)\,ds\,.
\end{equation}
This extends to the full algebra $M^W$, i.e., $\delta(M^W)$ is modular entire,  by the derivation rule from (i).

By the real Borchers
covariance relation, $T_t\sigma_s=\sigma_sT_{te^{2\pi s}}$,
we have 
\begin{align}
 \frac{T_r(\sigma_z(a_{f,\varepsilon}))
             -\sigma_z(a_{f,\varepsilon})}{r}
 &=
 \int_{\mathbb R}
   \gamma_\varepsilon(s-z)
   \frac{T_r(\sigma_s(a_f))-\sigma_s(a_f)}{r}\,ds
 \notag\\
 &=
 \int_{\mathbb R}
   \gamma_\varepsilon(s-z)e^{2\pi s}
   \sigma_s\left(
      \frac{T_{re^{2\pi s}}(a_f)-a_f}
           {re^{2\pi s}}
   \right)\,ds.
 \label{eq:complex-modular-difference-quotient}
\end{align}

For each fixed \(s\in\mathbb R\),
\[
 \frac{T_{re^{2\pi s}}(a_f)-a_f}{re^{2\pi s}}
 \longrightarrow \delta a_f
 \qquad\text{in operator norm as }r\downarrow0.
\]
Moreover, the Lipschitz estimate of
Lemma~\ref{lem:lipschitz-af}, and the operator norm isometry of $\sigma_s$ implies that the integrand in the difference quotient \eqref{eq:complex-modular-difference-quotient} is dominated by $2|\gamma_{\varepsilon}(s-z)| e^{2\pi s}\|\delta a_f\|$.
For fixed \(z\in\mathbb C\),
\[
 |\gamma_\varepsilon(s-z)|
 =
 e^{(\Im z)^2/(2\varepsilon^2)}
 \gamma_\varepsilon(s-\Re z),
\]
and therefore
\(
 s\longmapsto
 |\gamma_\varepsilon(s-z)|e^{2\pi s}
\)
is integrable.

Dominated convergence in operator norm now yields
\begin{equation}\label{eq:delta-complex-modular-generator}
 \delta\bigl(\sigma_z(a_{f,\varepsilon})\bigr)
   =
   \int_{\mathbb R}
      \gamma_\varepsilon(s-z)e^{2\pi s}
      \sigma_s(\delta a_f)\,ds.
\end{equation}
In particular,
\(\sigma_z(a_{f,\varepsilon})\in\Dom(\delta)\).
Combining \eqref{eq:sigmaz-afe} and \eqref{eq:delta-complex-modular-generator}, we get that \eqref{eq:complex-borchers} holds for every generator
\(a_{f,\varepsilon}\). Since \(\sigma_z\) is multiplicative on
modular-entire elements, \(\Dom(\delta)\) is a unital
\(*\)-algebra, and \(\delta\) is a derivation on its domain, the
identity is preserved under finite sums and products. Moreover,
\((a_{f,\varepsilon})^*=(a^*)_{\bar f,\varepsilon}\), so it is also
preserved under adjoints. Hence
\[
 \delta(\sigma_z(a))=e^{2\pi z}\sigma_z(\delta a),
 \qquad a\in M^W.
\]

\emph{Proof of (iii).}

Let $a\in M^W\subset\Dom(\delta)$.  Since $a\in \Dom(\delta)$, \eqref{eq:Vc-defn} gives as $r\downarrow0$
\begin{equation*}
  \frac{V_r^{(n)}j_n(a)-j_n(a)}{r}
  =j_n\!\left(\frac{T_r(a)-a}{r}\right)
  \longrightarrow j_n(\delta a) \qquad\text{in }L^n(M),
\end{equation*}
where the convergence uses operator-norm convergence of the difference quotient on $a$ together with sandwich continuity (Lemma~\ref{lem:sandwich-continuity}).  Hence $j_n(a)\in\Dom(G_n)$ with $G_nj_n(a)=j_n(\delta a)$, and $\cD_n^{W}\subset\Dom(G_n)$.
\end{proof}

Before proving that $\mathcal D_n^W$ is a graph core for $G_n$, it is useful to recall the definition of the resolvent of a semigroup generator. See Definition~\ref{def:resolvent-appendix} and Appendix \ref{appsubsec:semigrp} for more details.
\begin{remark}[Resolvent of $G_n$]\label{rem:resolvent}
Denote by $\rho(G_n)$ the resolvent set of $G_n$, defined as:
\[
    \rho(G_n)\defeq \{\lambda\in \mathbb C: \lambda-G_n \text { has a bounded inverse}\}\,.
\]
For $\l\in\r(G_n)$, denote the resolvent by
\[
    R(\lambda,{G_n}) \defeq (\lambda-G_n)^{-1}\,.
\]
By the Hille–Yosida theorem (\cite{engel.nagel:00:one-parameter} and Theorem \ref{thm:hille-yosida}), $G_n$ is a closed densely defined operator with $(0,\I)\subset \rho(G_n)$, and $\|R(\lambda,{G_n})\|\leq \frac1\lambda$. Moreover, the resolvent admits the Laplace representation:
\[
    R(\lambda,{G_n}):=\int_0^\I e^{-\lambda t} V_t^{(n)} \,dt\,:=\lim_{s\to\I}\int_0^s e^{-\lambda t} V_t^{(n)}\, dt\,.
\]
\end{remark}

We now combine the resolvent of $G_n$ with the null and modular smoothings to prove that $\cD_n^{W}$ is dense in $\Dom(G_n)$ in the graph norm.
\begin{proposition}[Ward graph core]
\label{prop:ward-graph-core-fixed}
The subspace $\cD_n^{W}$ is a graph core for $G_n$.  Equivalently, for every $y\in\Dom(G_n)$, there exists a sequence $y_m\in\cD_n^{W}$ such that
\[
  y_m\longrightarrow y,
  \qquad
  G_ny_m\longrightarrow G_ny
  \qquad 
  \text{in }L^n(M).
\]
\end{proposition}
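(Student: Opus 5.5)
We approximate an arbitrary $y\in\Dom(G_n)$ in three successive stages: a resolvent stage, a null-smoothing stage and a modular-smoothing stage. At each stage we check graph-norm convergence, and at the end we use a diagonal argument. The key structural fact is that all three regularizing operators commute with (or are covariant under) $G_n$ in a controlled way, so graph-norm convergence reduces to ordinary $L^n$-convergence applied to both $y$ and $G_ny$.

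\emph{Step 1 (null smoothing of a general element).} Fix $y\in\Dom(G_n)$. For $f\in C_{c,0}^\infty((0,\infty))$, define on all of $L^n(M)$ the Bochner integral $N_f y\defeq\int_0^\infty f(t)V^{(n)}_t y\,dt$. By Lemma~\ref{lem:Nf-action} this agrees with the earlier $N_f$ on $j_n(M)$, and $\|N_f\|\le\|f\|_{L^1}$. The standard semigroup computation, as in Lemma~\ref{lem:null-smoothing-fixed}, gives $N_fy\in\Dom(G_n)$ with $G_nN_fy=-\int f'(t)V_t^{(n)}y\,dt=N_fG_ny$. Now take an approximate identity $f_k\ge0$ with $\int f_k=1$ and $\operatorname{supp}f_k\subset(1/(2k),1/k)$. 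Strong continuity then gives $N_{f_k}y\to y$ and $N_{f_k}G_ny\to G_ny$, so $N_{f_k}y\to y$ in graph norm.

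\emph{Step 2 (density of $j_n(M)$ inside the smoothing).} Fix $k$. Choose $a_m\in M$ with $j_n(a_m)\to y$ in $L^n(M)$, which is possible by Definition~\ref{def:symmetricembedding}. Then $N_{f_k}j_n(a_m)=j_n((a_m)_{f_k})\to N_{f_k}y$, and
\[
G_nN_{f_k}j_n(a_m)=-\int f_k'(t)V_t^{(n)}j_n(a_m)\,dt\to -\int f_k'(t)V_t^{(n)}y\,dt=G_nN_{f_k}y,
\]
since $\|{-\int} f_k'V_t^{(n)}(\cdot)\,dt\|\le\|f_k'\|_{L^1}$. Hence elements $j_n(a_f)$ approximate $N_{f_k}y$ in graph norm. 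This step needs no resolvent at all; alternatively one could write $y=R(\lambda,G_n)x$ and approximate $x$, but the boundedness of $y\mapsto G_nN_fy$ makes that unnecessary.

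\emph{Step 3 (modular smoothing).} For $z=j_n(a_f)\in\Dom(G_n)$, Lemma~\ref{lem:modular-smoothing-fixed} gives $\mathcal M_\varepsilon^{(n)}z\to z$ in graph norm as $\varepsilon\to0$. Lemma~\ref{lem:modular-gaussian-dense-core} identifies $\mathcal M_\varepsilon^{(n)}j_n(a_f)=j_n(a_{f,\varepsilon})\in\cD_n^W$, and Proposition~\ref{lem:smooth-core-delta}(iii) confirms that this element lies in $\Dom(G_n)$ with the expected image. Combining Steps 1–3 by a diagonal choice of $k$, then $m$, then $\varepsilon$ yields $y_m\in\cD_n^W$ with $y_m\to y$ and $G_ny_m\to G_ny$.

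\emph{Main obstacle.} I expect the main obstacle to be the rigorous justification in Step 1 that the Bochner-integral $N_f$ on all of $L^n(M)$ maps into $\Dom(G_n)$ with the stated formula. This is a routine shift-of-variables computation,
\[
\tfrac{1}{r}\bigl(V_r^{(n)}N_fy-N_fy\bigr)=\int\tfrac{f(t-r)-f(t)}{r}\,V_t^{(n)}y\,dt,
\]
combined with $L^1$-convergence of the difference quotients of $f$ and contractivity of $V_t^{(n)}$. The same argument gives the bound $\|G_nN_f\|\le\|f'\|_{L^1}$ that drives Step 2. A secondary point is making sure the identification in Step 3 uses the same ultraweak-integral conventions, which Lemma~\ref{lem:jn-compatible-weak} provides.
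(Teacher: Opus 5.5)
Your proof is correct. It differs from the paper's in the first reduction, while the modular-smoothing step and the diagonal argument are the same in both.

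\textbf{The paper's route.} The paper never smooths an arbitrary $y\in\Dom(G_n)$. It observes that $R(\lambda,G_n)j_n(a)=j_n(a_\lambda)$. By the resolvent core criterion (Lemma~\ref{lem:resolvent-core}), the set $R(\lambda,G_n)j_n(M)\subset j_n(M)\cap\Dom(G_n)$ is therefore already a graph core. It then approximates each such vector by $N_{f_m}j_n(a)$, using cut-off kernels $f_m(t)=g(mt)k(t/m)e^{-\lambda t}\to e^{-\lambda t}$. Since $f_m'$ does not converge in $L^1$ (the piece $mg'(mt)$ concentrates to a point mass at $t=0$), graph convergence has to be checked by hand. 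This is done by splitting $G_nN_{f_m}j_n(a)$ into $A_m,B_m,C_m$. The boundary piece $A_m$ supplies the term $-j_n(a)$ needed to recover $G_nR(\lambda,G_n)j_n(a)=\lambda R(\lambda,G_n)j_n(a)-j_n(a)$.

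\textbf{Your route.} You use the standard mollifier argument for cores. You extend $N_f$ to all of $L^n(M)$ and run the shift-of-variables computation of Lemma~\ref{lem:null-smoothing-fixed} directly in $L^n(M)$. This shows $N_f$ maps all of $L^n(M)$ into $\Dom(G_n)$ with $\|G_nN_f\|\le\|f'\|_{L^1}$. So, for fixed $f$, $N_f$ is bounded from $(L^n(M),\|\cdot\|_n)$ into $\Dom(G_n)$ with its graph norm. Hence it carries the norm-dense set $j_n(M)$ onto a graph-dense subset of $N_fL^n(M)$. Combined with $G_nN_fy=N_fG_ny$ on $\Dom(G_n)$ and an approximate identity $f_k$, this gives graph density directly.

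\textbf{Comparison.} Your argument is shorter. It avoids both the resolvent core criterion and the delicate $A_m$ limit. The paper's argument only ever applies $N_f$ and $G_n$ to elements of $j_n(M)$, where Lemmas~\ref{lem:Nf-action} and~\ref{lem:null-smoothing-fixed} are already in hand. It also exhibits the resolvent as preserving $j_n(M)$.

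One cosmetic point: your opening sentence announces a ``resolvent stage'' that your Step~2 then explicitly dispenses with, so that phrase should be dropped.
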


\begin{proof}
The proof has two stages.  First, we use the resolvent of $G_n$ to show that $j_n(M)\cap\Dom(G_n)$ is a graph core.  Second, we approximate each resolvent-core vector by an element of $\cD_n^{W}$ using null and modular-Gaussian smoothing.

The core $j_n(M)$ is norm dense in $L^n(M)$.  
For $a\in M$ define the ultraweak integral
\begin{equation*}
  a_\lambda\defeq \int_0^\infty e^{-\lambda t}T_t(a)\,dt\in M\,.
\end{equation*}
For every normal functional $\varphi \in M_*$, $t\mapsto e^{-\lambda t}\varphi(T_t(a))$ is integrable. Hence the ultraweak integral exists by norm-dual boundedness.

With $R(\l,G_n)$ as defined in Remark \ref{rem:resolvent}, we claim that
\begin{equation}\label{eq:rlambda-inclusion}
  R(\l,G_n)j_n(a)=j_n(a_\lambda)\,,
\end{equation}
so that $R(\l,G_n)$ maps $j_n(M)$ into itself. This follows from Lemma~\ref{lem:jn-compatible-weak}, with $\mathcal S= (0,\I)$, $d\nu(t) = e^{-\lambda t} \,dt$, and $a_t= T_t(a)$. 
Thus $R(\l,G_n)j_n(M)\subset j_n(M)\cap\Dom(G_n)$ for $\l\in(0,\I)\subset\r(G_n)$, and $j_n(M)\cap\Dom(G_n)$ is a graph core by the resolvent-core criterion Lemma~\ref{lem:resolvent-core}. This follows straightforwardly from the fact that $(\l-G_n)R(\l,G_n)j_n(M)=j_n(M)$ is dense in $L^n(M)$.

It remains to approximate the resolvent-core vectors $R(\l,G_n)j_n(a)$ by vectors in $\cD_n^{W}$.  
Consider
\begin{equation*}
    f_m(t) = g(mt)k(t/m)e^{-\l t}\,,
\end{equation*}
with $g,k\in C^\I([0,\I))$ such that $g$ vanishes in a neighbourhood of $0$, $ g(u)=1$ for $u\geq1$, and $k(u)=1$ for $u\in[0,1]$, $k(u)=0$ for $u\geq2$. Then, $f_m\in C_{c,0}^\infty((0,\infty))$ and $f_m\to e^{-\lambda t}$ in $L^1([0,\infty))$. 
Therefore,
\begin{equation*}
  N_{f_m}j_n(a) = \int_0^\I f_m(t)V_t^{(n)}j_n(a)dt \longrightarrow R(\l,G_n)j_n(a)\,,
\end{equation*}
since
\begin{equation*}
  \left\|\int_0^\infty \bigl(f_m(t)-e^{-\lambda t}\bigr)V_t^{(n)}j_n(a)\,dt\right\|_n
  \leq \|f_m-e^{-\lambda\, \cdot\,}\|_{L^1}\|j_n(a)\|_n \to 0\,.
\end{equation*}
The generator part is
\begin{equation*}
    G_nN_{f_m}j_n(a) = - \int_0^\I f_m'(t)V_t^{(n)}j_n(a)dt = A_m+B_m+C_m\,,
\end{equation*}
where
\begin{align*}
    A_m =&\ -\int_0^\I mg'(mt)k(t/m)e^{-\l t}V_t^{(n)}j_n(a)dt\longrightarrow -j_n(a), \\
    B_m =&\ -\frac1m\int_0^\I g(mt)k'(t/m)e^{-\l t}V_t^{(n)}j_n(a)dt \longrightarrow0, \\
    C_m =&\ \l\int_0^\I g(mt)k(t/m)e^{-\l t}V_t^{(n)}j_n(a)dt \longrightarrow \l R(\l,G_n)j_n(a)\,.
\end{align*}
The limits are computed as follows. For \(A_m\), substitute \(u=mt\) and use the strong continuity of
\(V_t^{(n)}\) at \(t=0\), together with
\(\int_0^\infty g'(u)\,du=1\). The limit for \(B_m\) follows
from the prefactor \(m^{-1}\), contractivity of \(V_t^{(n)}\), and
compact support of \(k'\). The convergence of \(C_m\) follows from
\(f_m\to e^{-\lambda\cdot}\) in \(L^1(0,\infty)\).

Combining,
\begin{equation*}
    G_nN_{f_m}j_n(a) \longrightarrow -j_n(a) + \l R(\l,G_n)j_n(a) = G_n R(\l,G_n)j_n(a)\,.
\end{equation*}
So, $N_{f_m}j_n(a)\to R(\l,G_n)j_n(a)$ in graph norm.

For each fixed $m$, Lemma~\ref{lem:modular-smoothing-fixed} gives
\begin{equation*}
  \mathcal M_\varepsilon^{(n)}N_{f_m}j_n(a)\longrightarrow N_{f_m}j_n(a)
\end{equation*}
 in the graph norm as $\varepsilon\downarrow 0$.  
 Moreover, by Lemma~\ref{lem:modular-gaussian-dense-core}
\begin{equation*}
  \mathcal M_\varepsilon^{(n)}N_{f_m}j_n(a)
  =j_n(a_{f_m,\varepsilon})\in\cD_n^{W}\,.
\end{equation*}
Choosing a diagonal sequence in $m$ and $\varepsilon$ gives a graph-norm approximation of each resolvent-core vector by elements of $\cD_n^{W}$.  Combining this with the resolvent-core approximation proves that $\cD_n^{W}$ is a graph core for $G_n$.
\end{proof}

\subsection{The cyclic \texorpdfstring{$n$}{n}-point Ward identity}
\label{subsec:cyclic-ward-identity-fixed}

With $\cD_n^W$ established as the graph core for $G_n$, we now state and prove the cyclic Ward identity.

For $y_1,\ldots,y_n\in L^n(M)$ define the cyclic $n$-linear form
\begin{equation}
  \Lambda_n(y_1,\ldots,y_n)\defeq \tr(y_1y_2\cdots y_n)\,.
\end{equation}
By H\"older's inequality,
\begin{equation}
  |\Lambda_n(y_1,\ldots,y_n)|\leq \prod_{k=1}^n\|y_k\|_n\,.
\end{equation}

\begin{proposition}[Core cyclic Ward identity]
\label{prop:core-cyclic-ward-fixed}
For $a_1,\ldots,a_n\in M^W \subset \Dom(\delta)$,
\begin{equation}\label{eq:ward-on-core-An}
    \sum_{k=0}^{n-1}\zeta_n^k
  \Lambda_n\bigl(
    j_n(a_1),\ldots,j_n(a_k),j_n(\delta a_{k+1}),j_n(a_{k+2}),\ldots,j_n(a_n)
  \bigr)=0\,,
\end{equation}
where $\zeta_n =e^{-2\pi i/n}$.
Equivalently, for $y_1,\ldots,y_n\in\cD_n^{W}$,
\begin{equation}\label{eq:ward-on-core-Dn}
    \sum_{k=0}^{n-1}\zeta_n^k
  \Lambda_n\bigl(
    y_1,\ldots,y_k,G_ny_{k+1},y_{k+2},\ldots,y_n
  \bigr)=0\,.
\end{equation}
\end{proposition}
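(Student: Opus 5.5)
The plan is to reduce the identity to invariance of $\omega$ under the null translations, using the KMS placement formula (Lemma~\ref{lem:KMS-placement}) together with the complex Borchers relation of Proposition~\ref{lem:smooth-core-delta}(ii). The equivalent form \eqref{eq:ward-on-core-Dn} then follows from \eqref{eq:ward-on-core-An}. Indeed, by Proposition~\ref{lem:smooth-core-delta}(iii) every $y\in\cD_n^W$ has the form $y=j_n(a)$ with $a\in M^W$ and $G_ny=j_n(\delta a)$. So it suffices to prove \eqref{eq:ward-on-core-An}.

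\emph{Step 1 (invariance gives a vanishing derivative).} For $X\in\Dom(\delta)$, the relation $\omega\circ T_r=\omega$ gives $\omega\bigl((T_r(X)-X)/r\bigr)=0$ for all $r>0$. The difference quotient converges in operator norm and $\omega$ is norm continuous, so $\omega(\delta X)=0$.

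\emph{Step 2 (KMS representation).} Fix $a_1,\ldots,a_n\in M^W$. Set $X:=a_n\,\sigma_{-i/n}(a_1)\,\sigma_{-2i/n}(a_2)\cdots\sigma_{-(n-1)i/n}(a_{n-1})$. By Proposition~\ref{lem:smooth-core-delta}(ii), each $\sigma_{-ki/n}(a_k)$ lies in $\Dom(\delta)$. Since $\Dom(\delta)$ is an algebra on which $\delta$ is a derivation (Lemma~\ref{lem:derivation-on-dom}), we get $X\in\Dom(\delta)$ and can apply the Leibniz rule. The complex Borchers relation \eqref{eq:complex-borchers} gives
\[
\delta\bigl(\sigma_{-ki/n}(a_k)\bigr)=e^{-2\pi i k/n}\sigma_{-ki/n}(\delta a_k)=\zeta_n^k\,\sigma_{-ki/n}(\delta a_k).
\]
Hence $\delta X$ is a sum of $n$ terms. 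In the term with $\delta$ acting on $a_k$, the entry $a_k$ is replaced by $\delta a_k$ and the term carries the weight $\zeta_n^k$, where $k=n$ (the unrotated factor $a_n$) carries the weight $1=\zeta_n^n$.

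\emph{Step 3 (back to $\Lambda_n$).} Each $\delta a_k$ is again modular-entire by Proposition~\ref{lem:smooth-core-delta}(ii). So Lemma~\ref{lem:KMS-placement} applies to the tuple with $a_k$ replaced by $\delta a_k$, and it identifies $\omega$ of the $k$-th term with $\Lambda_n(j_n(a_1),\ldots,j_n(\delta a_k),\ldots,j_n(a_n))$. Step 1 then gives
\[
\sum_{k=1}^{n}\zeta_n^k\,\Lambda_n\bigl(j_n(a_1),\ldots,j_n(\delta a_k),\ldots,j_n(a_n)\bigr)=0.
\]
Multiplying by $\zeta_n^{-1}$ and relabelling $k\to k+1$ yields \eqref{eq:ward-on-core-An}.

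\emph{Main obstacle.} The only delicate point is justifying the derivation rule on the rotated elements $\sigma_{-ki/n}(a_k)$ at complex arguments, together with the complex covariance factor $e^{2\pi z}$. Both are supplied by Proposition~\ref{lem:smooth-core-delta}(ii). Everything else is bookkeeping of the phases $\zeta_n^k$ and of which slot is unrotated in the KMS formula.
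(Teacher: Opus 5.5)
Your proposal is correct and follows essentially the same route as the paper: the KMS placement formula, $\omega(\delta B)=0$ from $\omega\circ T_t=\omega$, the Leibniz rule on $\Dom(\delta)$, and the complex Borchers relation $\delta(\sigma_{-ki/n}(a))=\zeta_n^k\sigma_{-ki/n}(\delta a)$, followed by translating each term back through the KMS formula. The only difference is bookkeeping: you apply Lemma~\ref{lem:KMS-placement} literally, with $a_n$ unrotated, and then fix the phases by multiplying by $\zeta_n^{-1}$ and relabelling, whereas the paper uses the cyclically shifted form with $a_1$ unrotated, so the weights $\zeta_n^k$ appear directly.
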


\begin{proof}
For entire analytic $a_1,\ldots,a_n$, the KMS placement (Lemma~\ref{lem:KMS-placement}) of the symmetric $L^n$ insertions gives
\begin{equation*}
  \Lambda_n(j_n(a_1),\ldots,j_n(a_n))
  =\omega\bigl(a_1\sigma_{-i/n}(a_2)\sigma_{-2i/n}(a_3)\cdots\sigma_{-(n-1)i/n}(a_n)\bigr)\,.
\end{equation*}
Since $M^W$ consists of modular-entire elements, this applies to all $a_i\in M^W$.

Let
\begin{equation*}
  B\defeq a_1\sigma_{-i/n}(a_2)\sigma_{-2i/n}(a_3)\cdots\sigma_{-(n-1)i/n}(a_n)\,.
\end{equation*}
By Proposition \ref{lem:smooth-core-delta}, the algebras $M^W$ and $\sigma_z(M^W)$ are contained in $\Dom(\delta)$, and by Lemma~\ref{lem:derivation-on-dom} $\delta$ is a derivation on $\Dom(\delta)$.  Hence $B\in\Dom(\delta)$.  Since $\omega\circ T_t=\omega$, the scalar
\begin{equation*}
  \omega(T_t(B))
\end{equation*}
 is independent of $t\geq 0$.  Taking a right derivative $t=0$ in operator norm (in the sense of \eqref{eq:delta-dom-def-main}), we get
\begin{equation*}
  0=\omega(\delta B).
\end{equation*}
Using the derivation rule,
\begin{align*}
  0
  ={}&\sum_{k=0}^{n-1}
  \omega\bigl(
    a_1\sigma_{-i/n}(a_2)\cdots
    \delta(\sigma_{-ki/n}(a_{k+1}))
    \cdots\sigma_{-(n-1)i/n}(a_n)
  \bigr)\,.
\end{align*}
For $k=0$, the differentiated factor is $\delta a_1$.  For $k\geq 1$, Borchers covariance from Proposition~\ref{lem:smooth-core-delta} gives
\begin{equation*}
  \delta(\sigma_{-ki/n}(a))
  =e^{-2\pi ik/n}\sigma_{-ki/n}(\delta a)
  =\zeta_n^k\sigma_{-ki/n}(\delta a)\,.
\end{equation*}
Substituting this into the previous expression and translating back through the KMS placement formula gives the first displayed Ward identity.  The equivalent form follows from
\begin{equation*}
  G_nj_n(a)=j_n(\delta a),\qquad a\in M^W\,.
\end{equation*}
\end{proof}

\begin{theorem}[Ward identity on the full generator domain]
\label{thm:full-domain-ward-fixed}
For all $y_1,\ldots,y_n\in\Dom(G_n)$,
\begin{equation}\label{eq:full-domain-ward-fixed}
    \sum_{k=0}^{n-1}\zeta_n^k
  \Lambda_n\bigl(
    y_1,\ldots,y_k,G_ny_{k+1},y_{k+2},\ldots,y_n
  \bigr)=0\,.
\end{equation}
\end{theorem}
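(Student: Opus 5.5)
The plan is to extend the core identity of Proposition~\ref{prop:core-cyclic-ward-fixed} to the full generator domain by graph-norm density, using Proposition~\ref{prop:ward-graph-core-fixed} together with the H\"older continuity of the $n$-linear form $\Lambda_n$. Define, for $y_1,\ldots,y_n\in\Dom(G_n)$,
\[
  W_n(y_1,\ldots,y_n)\defeq\sum_{k=0}^{n-1}\zeta_n^k\,\Lambda_n\bigl(y_1,\ldots,y_k,G_ny_{k+1},y_{k+2},\ldots,y_n\bigr).
\]
This form is $n$-linear on $\Dom(G_n)$ and vanishes on $(\cD_n^W)^n$ by \eqref{eq:ward-on-core-Dn}. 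The goal is to show that it is jointly continuous with respect to the graph norm $\|y\|_{G_n}\defeq\|y\|_n+\|G_ny\|_n$.

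For continuity, H\"older's inequality $|\Lambda_n(x_1,\ldots,x_n)|\leq\prod_k\|x_k\|_n$ bounds each term of $W_n$ by a product of $L^n$-norms, with exactly one factor being $\|G_ny_{k+1}\|_n$. Hence
\[
  |W_n(y_1,\ldots,y_n)|\leq n\prod_{k=1}^n\|y_k\|_{G_n}.
\]
A bounded multilinear form is jointly continuous. Concretely, I would telescope:
\[
  W_n(y)-W_n(y^{(m)})=\sum_{j=1}^n W_n\bigl(y_1^{(m)},\ldots,y_{j-1}^{(m)},\,y_j-y_j^{(m)},\,y_{j+1},\ldots,y_n\bigr).
\]
Each summand is then bounded by $n\,\|y_j-y_j^{(m)}\|_{G_n}$ times graph norms that stay uniformly bounded in $m$.

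Given arbitrary $y_1,\ldots,y_n\in\Dom(G_n)$, Proposition~\ref{prop:ward-graph-core-fixed} supplies sequences $y_j^{(m)}\in\cD_n^W$ with $y_j^{(m)}\to y_j$ and $G_ny_j^{(m)}\to G_ny_j$ in $L^n(M)$. The graph norms of these sequences are then uniformly bounded. Since $W_n(y^{(m)})=0$ for every $m$, the telescoping estimate gives $W_n(y)=0$, which is \eqref{eq:full-domain-ward-fixed}.

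The main obstacle is not in this extension step, which is routine. The real content sits in the ingredients already established: the graph-core property of $\cD_n^W$, and the fact that the core identity holds for arbitrary (not only equal) arguments in $\cD_n^W$. One point needs checking: that $\Lambda_n$ is well defined and H\"older-bounded on $L^n(M)^{\times n}$ through Lemma~\ref{lem:general-holder} iterated, with the product landing in $L^1(M)$ and $|\tr(x)|\leq\|x\|_1$.
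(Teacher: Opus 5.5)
Your proposal is correct and is essentially the paper's proof. Hölder's inequality makes the Ward form a bounded $n$-linear form for the graph norm, and the identity then passes from the graph core $\cD_n^W$ of Proposition~\ref{prop:ward-graph-core-fixed}, where it holds for arbitrary tuples by Proposition~\ref{prop:core-cyclic-ward-fixed}, to all of $\Dom(G_n)^n$. Your telescoping estimate just spells out the joint continuity that the paper asserts directly.
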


\begin{proof}
Equip $\Dom(G_n)$ with the graph norm
\begin{equation*}
  \|y\|_{G_n}\defeq \|y\|_n+\|G_ny\|_n\,.
\end{equation*}
For each $k$, H\"older's inequality gives
\begin{equation*}
  \bigl|
    \Lambda_n(y_1,\ldots,y_k,G_ny_{k+1},y_{k+2},\ldots,y_n)
  \bigr|
  \leq \|G_ny_{k+1}\|_n\prod_{j\neq k+1}\|y_j\|_n\,.
\end{equation*}
Thus the left-hand side of the claimed identity is continuous on $\Dom(G_n)^n$ with respect to the product graph norm.  Since $\cD_n^{W}$ is a graph core by Proposition~\ref{prop:ward-graph-core-fixed}, the identity extends from Proposition~\ref{prop:core-cyclic-ward-fixed} to all of $\Dom(G_n)^n$.
\end{proof}

\begin{remark}[Replica geometry and modular monodromy]
\label{rem:replica-monodromy}
The cyclic Ward identity in Theorem~\ref{thm:full-domain-ward-fixed} admits a transparent replica-geometric interpretation that makes contact with the replica-trick computation of R\'enyi entropies in QFT~\cite{Calabrese:2004eu}.  The KMS placement formula expresses the integer cyclic trace $\Lambda_n(j_n(a_1),\ldots,j_n(a_n))$ as a single vacuum expectation value with the $n$ insertions placed at imaginary modular times $z_k=-ik/n$ for $k=0,\ldots,n-1$.  These insertion points are evenly spaced around the modular thermal circle.  In this picture, $Q_n=\tr(x^n)$ is an algebraic analogue of the replica partition function on the $n$-fold cover. The phases $\zeta_n^k=e^{-2\pi ik/n}$ in the Ward identity then admit an  interpretation as \emph{modular monodromies}. This parallels the monodromy structure of twist operators in two-dimensional CFT, where the replica geometry produces analogous rotation phases~\cite{Calabrese:2004eu}.
\end{remark}

The only output of this section used below in the log-convexity proof is
Theorem~\ref{thm:full-domain-ward-fixed}, or equivalently its
specialization \eqref{eq:cyclic-identity-logconvexity-form}.  The smoothing algebra
\(M^W\) and the graph core \(D_n^W\) are used only to establish this
full-domain identity.
\begin{corollary}
\label{cor:cyclic-identity-logconvexity-form}
Let \(x\in\Dom(G_n^2)\).  Then
\begin{equation}
\label{eq:cyclic-identity-logconvexity-form}
\operatorname{tr}\bigl((G_n^2x)x^{n-1}\bigr)
+
\sum_{j=0}^{n-2}
\zeta_n^{j+1}
\operatorname{tr}\bigl(
(G_nx)x^j(G_nx)x^{n-2-j}
\bigr)
=0.
\end{equation}
\end{corollary}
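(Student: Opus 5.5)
We obtain the corollary by specializing Theorem~\ref{thm:full-domain-ward-fixed} to a particular choice of arguments and then rearranging with the cyclicity of $\tr$. Take $x\in\Dom(G_n^2)$. Then both $x$ and $G_nx$ lie in $\Dom(G_n)$. In \eqref{eq:full-domain-ward-fixed} we set
\[
y_1\defeq G_nx,\qquad y_2=\cdots=y_n\defeq x .
\]
All $y_j$ are in $\Dom(G_n)$, so the theorem gives
\[
\sum_{k=0}^{n-1}\zeta_n^k\,\Lambda_n(y_1,\ldots,G_ny_{k+1},\ldots,y_n)=0 .
\]

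Next we evaluate each term of this sum separately.

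\emph{The $k=0$ term.} Here the generator acts on $y_1=G_nx$. The term is therefore $\tr\bigl((G_n^2x)\,x^{n-1}\bigr)$.

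\emph{The terms with $k\geq1$.} Here the generator acts on $y_{k+1}=x$. The term is
\[
\zeta_n^k\,\tr\bigl((G_nx)\,x^{k-1}\,(G_nx)\,x^{n-1-k}\bigr).
\]
Reindexing with $j=k-1\in\{0,\ldots,n-2\}$ gives $\zeta_n^{j+1}\tr\bigl((G_nx)x^j(G_nx)x^{n-2-j}\bigr)$. This is exactly the sum in \eqref{eq:cyclic-identity-logconvexity-form}.

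Adding the two pieces yields the claimed identity. The ordering already matches the statement, so no cyclic rotation of the trace is needed. If one preferred a different normal form, rotations would be justified by H\"older-compatible cyclicity, Lemma~\ref{lem:holder-products-trace-cyclicity}.

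The only point requiring care is that every product is well defined and lies in $L^1(M)$. Each factor, namely $x$, $G_nx$ and $G_n^2x$, belongs to $L^n(M)$, and each product has exactly $n$ factors. By the generalized H\"older inequality (Lemma~\ref{lem:general-holder}), every such product lies in $L^1(M)$ and $\tr$ is defined on it. This is the same continuity of the $n$-linear form $\Lambda_n$ already used in the proof of Theorem~\ref{thm:full-domain-ward-fixed}. No approximation step is needed here, because the full-domain identity has already been established.
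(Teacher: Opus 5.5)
Your proposal is correct and is exactly the paper's argument: specialize Theorem~\ref{thm:full-domain-ward-fixed} to $y_1=G_nx$, $y_2=\cdots=y_n=x$. The $k=0$ term gives $\tr\bigl((G_n^2x)x^{n-1}\bigr)$, and the $k\geq1$ terms, reindexed by $j=k-1$, give the $\zeta_n^{j+1}$ sum; your count of $x$ factors and your H\"older justification that each product lies in $L^1(M)$ are both right.
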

\begin{proof}
    Follows by applying Theorem~\ref{thm:full-domain-ward-fixed} with $y_1=G_n x$, and $y_2=y_3=\hdots=y_n=x$.
\end{proof}
%%%%%%%%%%%%%%%%%%%%%%%%%%%%%%%%%%%%%%%%%%%%%%%%%%%%%%%%%%%%%%%%%%%%%%%%%%%%%%%%%%%%%%%%%%%%%%%

\section{Log-convexity of \texorpdfstring{$L^n$}{Ln} norms under HSMI}\label{sec:integer-renyi-qnec}

In this section we use the cyclic Ward identity to prove the integer RQNEC. We first prove log-convexity of $Q_n(\psi_c\|\omega;M)$, defined in Lemma \ref{lem:density-propagation-fixed}, for positive initial data in $\Dom(G_n^2)$, then relax this regularity assumption using Yosida regularization, obtaining the integer RQNEC for general functionals with finite SRD.  

Fix an integer $n\geq2$. Assume a HSMI $M_B\subset M_A=:M$ with a common faithful normal state $\omega$, and the consequences from Theorem~\ref{thm:modulartranslations}.
The main inputs for this section are the definition of the SRD (Definition~\ref{def:SRD}, Lemma~\ref{lem:y-op-defn-SRD}), its fixed algebra formulation (Lemma~\ref{lem:fixedalgebraSRD}), the $L^n$ semigroup $(V_c^{(n)})_{c\geq0}$ (Proposition~\ref{prop:null-translation-Ln-semigroup}) and the dual semigroup $(V_c^{(q_n)})_{c\geq0}$ (Propositions~\ref{prop:null-translation-Lqn-semigroup},\ref{prop:equal-dual-semigroups}), and the specialized Ward identity from Corollary~\ref{cor:cyclic-identity-logconvexity-form}. We also use technical results from Appendix~\ref{Sec:app-haagerupLp}.

\subsection{Null-translated sandwiched R\'enyi density}
\label{subsec:integer-density-propagation}
\begin{definition}[Sandwiched R\'{e}nyi density]\label{defn:renyi-density}
Let $\psi\in M_*^+$ be non-zero and assume
\begin{equation}
  Q_n(\psi\|\omega;M)<\infty\,.
\end{equation}
By the Haagerup--Kosaki characterization of SRD (Definition~\ref{def:SRD} and Lemma~\ref{lem:y-op-defn-SRD}), there exists a unique element $x_0\in L^n(M)_+$, which we call the sandwiched R\'enyi density, 
such that
\begin{equation}
  h_\psi=h^{\eta_n}x_0h^{\eta_n}\,, \quad\text{ and }\quad Q_n(\psi\|\omega;M)=\tr(x_0^n)\,,
\end{equation}
with $\eta_n= \frac{n-1}{2n}=\frac1{2q_n}$. 
For $c\geq 0$, define
\begin{equation}
  x_c \defeq V_c^{(n)}x_0\,.
\end{equation}
Since $V_c^{(n)}$ is positive and contractive,
\begin{equation}
  x_c\in L^n(M)_+\,,
  \qquad
  \|x_c\|_n\leq \|x_0\|_n\,.
\end{equation}
\end{definition}
Below, we also show that every $x_0\in L^n(M)_+$ corresponds to a unique non-zero $\psi\in M_*^+$ with finite SRD with respect to $\omega$.
\begin{lemma}[Positive \(L^n\)-elements as sandwiched R\'enyi densities]
\label{lem:positive-Ln-sandwiched-density}
For every non-zero \(x_0\in L^n(M)_+\), there exists a unique
non-zero normal positive functional \(\psi\in M_*^+\) such that
\begin{equation}
    h_\psi=h^{\eta_n}x_0h^{\eta_n}.
    \label{eq:functional-from-positive-Ln}
\end{equation}
Moreover,
\[
    0<Q_n(\psi\|\omega;M)<\infty,
\]
and \(x_0\) is the unique sandwiched Rényi density of \(\psi\)
relative to \(\omega\). In particular,
\begin{equation}
    Q_n(\psi\|\omega;M)
      =\|x_0\|_n^n
      =\tr(x_0^n).
    \label{eq:Qn-positive-Ln}
\end{equation}
\end{lemma}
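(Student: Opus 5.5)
The plan is to build $\psi$ directly from the element $h^{\eta_n}x_0h^{\eta_n}$ and then invoke the equivalence in Lemma~\ref{lem:y-op-defn-SRD}.

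\textbf{Step 1: the sandwich is a positive $L^1$ element.} Since $h\in L^1(M)_+$, we have $h^{\eta_n}\in L^{1/\eta_n}(M)=L^{2q_n}(M)$. The exponents satisfy
\[
\eta_n+\tfrac1n+\eta_n=\tfrac{n-1}{n}+\tfrac1n=1 .
\]
The generalized H\"older inequality (Lemma~\ref{lem:general-holder}), applied twice, therefore gives $k:=h^{\eta_n}x_0h^{\eta_n}\in L^1(M)$. Writing $k=(x_0^{1/2}h^{\eta_n})^*(x_0^{1/2}h^{\eta_n})$ with $x_0^{1/2}\in L^{2n}(M)$ shows $k\geq0$ as a $\tau$-measurable operator, so $k\in L^1(M)_+$.

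\textbf{Step 2: define $\psi$ and check it is nonzero.} By Proposition~\ref{prop:standard-Haagerup-facts}(2), $k=h_\psi$ for a unique $\psi\in M_*$. Positivity of $\psi$ follows from the standard correspondence of positive cones, $M_*^+\leftrightarrow L^1(M)_+$. Concretely, for $a\in M_+$ we have $\psi(a)=\tr(h_\psi a)=\tr(a^{1/2}h_\psi a^{1/2})\geq0$, using cyclicity and Proposition~\ref{prop:standard-Haagerup-facts}(3).

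For $\psi\neq0$ it suffices to show $k\neq 0$. Suppose instead $k=0$. Then $x_0^{1/2}h^{\eta_n}=0$, so $h^{\eta_n}x_0^{1/2}=0$. Because $h$ has full support ($\omega$ is faithful), $h^{\eta_n}$ has trivial kernel and dense range as a $\tau$-measurable operator, which forces $x_0^{1/2}=0$. An equivalent route is through duality: for every $a\in M$,
\[
\tr\bigl(x_0\, j_{q_n}(a)\bigr)=\tr(h^{\eta_n}x_0h^{\eta_n}a)=0 .
\]
Since $j_{q_n}(M)$ is dense in $L^{q_n}(M)=L^n(M)^*$, this gives $x_0=0$, a contradiction. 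I would use this duality argument, because it avoids the support argument. Uniqueness of $\psi$ is immediate from the injectivity of $\psi\mapsto h_\psi$.

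\textbf{Step 3: identify $x_0$ and compute $Q_n$.} Condition (3) of Lemma~\ref{lem:y-op-defn-SRD} now holds with $\alpha=n$ and $x=x_0$. Hence $h_\psi\in L^n(M,\omega)$, the quantity $Q_n(\psi\|\omega;M)$ is finite, and $Q_n(\psi\|\omega;M)=\|x_0\|_n^n$. Since $x_0\geq0$, $\|x_0\|_n^n=\tr(x_0^n)$.

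The Kosaki isometry in Theorem~\ref{thm:haagerup-kosaki}, with $1/(2q)=\eta_n$, gives the same conclusion independently. The uniqueness clause of Lemma~\ref{lem:y-op-defn-SRD} (equivalently, injectivity of the Kosaki map) shows that $x_0$ is the sandwiched R\'enyi density of Definition~\ref{defn:renyi-density}. Strict positivity of $Q_n$ follows from $x_0\neq0$.

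\textbf{Main obstacle.} The delicate point is Step 2: the nondegeneracy showing $\psi\neq0$, together with the positivity bookkeeping for unbounded $\tau$-measurable products. The density/duality argument above is the tool I would rely on for both.
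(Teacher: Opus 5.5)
Your proposal is correct and has the same structure as the paper's proof: form $h^{\eta_n}x_0h^{\eta_n}\in L^1(M)_+$, obtain a unique $\psi$ from the bijection $M_*^+\leftrightarrow L^1(M)_+$, and identify $x_0$ and $Q_n$ via Lemma~\ref{lem:y-op-defn-SRD}. The only difference is that you check positivity and non-vanishing of the sandwich by hand (H\"older, the $T^*T$ factorization, and density of $j_{q_n}(M)$ in $L^{q_n}(M)=L^n(M)^*$), whereas the paper reads both from Theorem~\ref{thm:haagerup-kosaki} ($L^1$-membership plus the isometry $\|h^{\eta_n}x_0h^{\eta_n}\|_{n,\omega}=\|x_0\|_n>0$); your version is more explicit but not essentially different.
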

\begin{proof}
By Theorem~\ref{thm:haagerup-kosaki},
\[
 L^n(M,\omega)
   =h^{1/(2q_n)}L^n(M)h^{1/(2q_n)}
   =h^{\eta_n}L^n(M)h^{\eta_n}
   \subset L^1(M).
\]
Consequently,
\[
    y:=h^{\eta_n}x_0h^{\eta_n}
\]
is a well-defined element of \(L^1(M)_+\). Moreover, the same
identification is isometric:
\[
    \|y\|_{n,\omega}=\|x_0\|_n.
\]
Since \(x_0\neq0\), it follows that \(\|y\|_{n,\omega}>0\), and
hence \(y\neq0\).

By Proposition~\ref{prop:standard-Haagerup-facts}, the map
\[
    M_*\ni\varphi\longmapsto h_\varphi\in L^1(M)
\]
is an isometric linear bijection that maps \(M_*^+\) onto
\(L^1(M)_+\). Therefore, there exists a unique
\(\psi\in M_*^+\) such that
\[
    h_\psi=y=h^{\eta_n}x_0h^{\eta_n}.
\]
Since \(y\neq0\), this functional is non-zero.

Finally, \(h_\psi\in L^n(M,\omega)\). The characterization of the
sandwiched R\'enyi divergence in Lemma~\ref{lem:y-op-defn-SRD} therefore
implies that
\[
    0<Q_n(\psi\Vert\omega;M)<\infty
\]
and that there is a unique element \(x\in L^n(M)_+\) satisfying
\[
    h_\psi=h^{\eta_n}xh^{\eta_n}.
\]
Equation~\eqref{eq:functional-from-positive-Ln} shows that this
unique element is \(x=x_0\). Thus \(x_0\) is the sandwiched Rényi
density of \(\psi\), and
\[
    Q_n(\psi\Vert\omega;M)
      =\|x_0\|_n^n
      =\tr(x_0^n).
\]
\end{proof}

The null translations and the sandwiching operation are compatible.
\begin{lemma}[Null translation of sandwiched R\'enyi density]
\label{lem:density-propagation-fixed}
Consider a non-zero $\psi\in M_*^+$ such that $Q_n(\psi\|\w;M)<\I$. 
Let
\begin{equation}
    \psi_c \defeq \psi\circ T_c
\end{equation}
be the null-translated functional on the fixed algebra $M$.  Then
\begin{equation}
    h_{\psi_c}=h^{\eta_n}x_ch^{\eta_n},
  \qquad x_c=V_c^{(n)}x_0\,.
\end{equation}
With $Q_n(c)\defeq Q_n(\psi_c\|\omega;M)=\tr(x_c^n)$, we also have
\begin{equation}
    0< Q_n(\psi_c\|\omega;M)<\infty \qquad\A c\geq 0.
\end{equation}
\end{lemma}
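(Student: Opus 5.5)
The plan is to prove the sandwiched identity $h_{\psi_c}=h^{\eta_n}x_ch^{\eta_n}$ by duality, testing both sides against the dense core $M\subset L^\infty$. Finiteness and positivity of $Q_n(c)$ then follow from Lemma~\ref{lem:positive-Ln-sandwiched-density}.

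\textbf{Step 1: reduce to a pairing identity.} For $b\in M$ we have $\psi_c(b)=\psi(T_c(b))=\tr(h_\psi T_c(b))$. By definition of $x_0$,
\[
\tr\bigl(h_\psi T_c(b)\bigr)=\tr\bigl(h^{\eta_n}x_0h^{\eta_n}T_c(b)\bigr)=\tr\bigl(x_0\, j_{q_n}(T_c(b))\bigr),
\]
using H\"older-compatible cyclicity of $\tr$ (Proposition~\ref{prop:standard-Haagerup-facts}), since $x_0\in L^n(M)$ and $j_{q_n}(T_c(b))\in L^{q_n}(M)$.

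\textbf{Step 2: move the translation onto $x_0$.} Lemma~\ref{lem:dual-core-action-fixed} gives $j_{q_n}(T_c(b))=(V_c^{(n)})^*j_{q_n}(b)$. Hence
\[
\tr\bigl(x_0\,j_{q_n}(T_c(b))\bigr)=\langle (V_c^{(n)})^*j_{q_n}(b),x_0\rangle_{q_n,n}=\tr\bigl(j_{q_n}(b)\,x_c\bigr)=\tr\bigl(h^{\eta_n}x_ch^{\eta_n}b\bigr),
\]
again by cyclicity. Therefore $\tr(h_{\psi_c}b)=\tr(h^{\eta_n}x_ch^{\eta_n}b)$ for every $b\in M$.

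\textbf{Step 3: conclude equality of densities.} By Theorem~\ref{thm:haagerup-kosaki}, $h^{\eta_n}x_ch^{\eta_n}$ lies in $L^1(M)$. Since $x_c\geq0$, it is positive, so it equals $h_\varphi$ for a unique $\varphi\in M_*^+$ by the $L^1$--predual bijection. Step~2 shows $\varphi(b)=\psi_c(b)$ for all $b$, so $\varphi=\psi_c$ and $h_{\psi_c}=h^{\eta_n}x_ch^{\eta_n}$.

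\textbf{Step 4: finiteness and positivity of $Q_n(c)$.} Lemma~\ref{lem:y-op-defn-SRD} then gives $Q_n(\psi_c\|\omega;M)=\tr(x_c^n)=\|x_c\|_n^n\leq\|x_0\|_n^n<\infty$, using contractivity of $V_c^{(n)}$. For positivity, $\psi_c(1)=\psi(T_c(1))=\psi(1)>0$ because $T_c$ is unital. So $\psi_c\neq0$, hence $h_{\psi_c}\neq0$, hence $x_c\neq0$, and therefore $\tr(x_c^n)=\|x_c\|_n^n>0$.

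The main obstacle is the cyclicity step in Steps~1 and~2, namely justifying $\tr(h^{\eta_n}x h^{\eta_n}b)=\tr(x\,h^{\eta_n}bh^{\eta_n})$ for $x\in L^n(M)$ and $b\in M$. Here the factors lie in $L^{2q_n}$, $L^n$, $L^{2q_n}$ and $L^\infty$, with exponents summing to $1$. This is the generalized H\"older cyclicity (Lemma~\ref{lem:holder-products-trace-cyclicity}). I would invoke it directly; if needed, one can first prove the identity for $x\in j_n(M)$ and then extend by density using H\"older continuity.
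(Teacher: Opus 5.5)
Your proposal is correct and follows the paper's proof essentially step for step. Like the paper, you test against the core $j_{q_n}(M)$, move $T_c$ onto $x_0$ through the dual semigroup (your use of Lemma~\ref{lem:dual-core-action-fixed} is the same as the paper's $V_c^{(q_n)}=(V_c^{(n)})^*$), identify $h_{\psi_c}$ by injectivity of $\varphi\mapsto h_\varphi$, and get $Q_n(c)>0$ from $\psi_c(1)=\psi(1)>0$; the only difference is that you run the chain of equalities starting from $\psi_c(b)$ rather than from $\tr(x_c\,j_{q_n}(a))$.
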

\begin{proof}
For $a\in M$,
\begin{align*}
\begin{aligned}
  \tr\bigl(x_cj_{q_n}(a)\bigr)
  &=\tr\bigl(V_c^{(n)}x_0\,j_{q_n}(a)\bigr)=\tr\bigl(x_0V_c^{(q_n)}j_{q_n}(a)\bigr)=\tr\bigl(x_0j_{q_n}(T_c(a))\bigr) \\
  &=\tr\bigl(h_\psi T_c(a)\bigr)=\psi(T_c(a))=\psi_c(a)\,.
  \end{aligned}
\end{align*}
On the other hand, by H\"older multiplication and cyclicity from Lemma~\ref{lem:holder-products-trace-cyclicity},
\begin{equation*}
  \tr\bigl(x_cj_{q_n}(a)\bigr)
  =\tr\bigl(x_ch^{\eta_n}ah^{\eta_n}\bigr)
  =\tr\bigl(h^{\eta_n}x_ch^{\eta_n}a\bigr)\,.
\end{equation*}
Thus, $h^{\eta_n}x_ch^{\eta_n}$ represents the normal positive functional $\psi_c$.  Since the map $M_*\ni\varphi\mapsto h_\varphi\in L^1(M)$ is injective,
\begin{equation*}
  h_{\psi_c}=h^{\eta_n}x_ch^{\eta_n}\,.
\end{equation*}

The formula for $Q_n(c)$ follows from the uniqueness of the integer SRD density (Lemma~\ref{lem:y-op-defn-SRD}).  Finiteness follows from $x_c\in L^n(M)_+$. If \(Q_n(c)=0\), then
\(\|x_c\|_n=0\), hence \(x_c=0\). Therefore \(h_{\psi_c}=h^{\eta_n}x_ch^{\eta_n}=0\),
so \(\psi_c=0\) by the injectivity of \(M_*\ni\phi\mapsto h_\phi\in L^1(M)\).
But
\[
\psi_c(1)=\psi(T_c(1))=\psi(1)>0,
\]
a contradiction. Thus \(Q_n(c)>0\).
\end{proof}

\subsection{Log-convexity for smooth semigroup orbits}
\label{subsec:smooth-orbit-log-convexity-fixed}

Under a regularity assumption on $x_0$, Corollary~\ref{cor:cyclic-identity-logconvexity-form} and the Cauchy-Schwarz inequality in the $L^2(M)$ Hilbert space lead to a straightforward proof of log-convexity of $\tr((V_c^{(n)}x_0)^n)$. The regularity restriction is removed in Sec.~\ref{subsec:integer-renyi-qnec-final-fixed}.
\begin{proposition}[Smooth-orbit log-convexity]
\label{prop:smooth-orbit-log-convexity-fixed}
Let $x_0\in L^n(M)_+\cap\Dom(G_n^2)$, $x_0\neq0$. Set
\[
    x_c=V_c^{(n)}x_0, \quad Q_n(c)=\tr(x_c^n), \quad c\geq0.
\]
Then $Q_n(c)$ is log-convex in $c$ on $[0,\infty)$:
\begin{equation}
    Q_n((1-\theta)c_0+\theta c_1)
  \leq Q_n(c_0)^{1-\theta}Q_n(c_1)^\theta
\end{equation}
for all $c_0,c_1\geq 0$ and $0\leq \theta\leq 1$.
\end{proposition}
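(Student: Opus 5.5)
The plan is to show that $Q_n$ is a strictly positive $C^2$ function on $[0,\infty)$ with $Q_nQ_n''-(Q_n')^2\geq0$. Then $(\log Q_n)''\geq0$, so $\log Q_n$ is convex, which is exactly the multiplicative inequality in the statement.

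\emph{Regularity.} Positivity $Q_n(c)>0$ follows as in Lemma~\ref{lem:density-propagation-fixed}: $x_c\neq0$ because $x_0$ corresponds via Lemma~\ref{lem:positive-Ln-sandwiched-density} to a nonzero $\psi$ with $\psi_c(1)=\psi(1)>0$. Since $x_0\in\Dom(G_n^2)$ and $V_c^{(n)}$ commutes with $G_n$ on its domain, the orbit $c\mapsto x_c$ is $C^2$ in $L^n(M)$. Its derivatives are $x_c'=G_nx_c=V_c^{(n)}G_nx_0$ and $x_c''=G_n^2x_c$, both norm-continuous, with one-sided derivatives at $c=0$. The $n$-linear map $\Lambda_n$ is bounded by H\"older's inequality, so the product rule applies. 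Cyclicity of $\tr$ then gives the formulas \eqref{eq:derivatives} for $Q_n'$ and $Q_n''$, and both are continuous in $c$.

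\emph{Self-adjointness of $y$.} I will also need that $y:=G_nx_c$ is self-adjoint. Positive maps commute with the adjoint, so $V_r^{(n)}(z^*)=(V_r^{(n)}z)^*$. Hence the difference quotients of the self-adjoint element $x_c$ are self-adjoint, and so is their $L^n$-limit.

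\emph{Applying the Ward identity.} Fix $c$ and write $x=x_c$. Set
\[
T_j:=\tr\bigl(y\,x^j\,y\,x^{n-2-j}\bigr),\qquad 0\leq j\leq n-2 .
\]
Corollary~\ref{cor:cyclic-identity-logconvexity-form} replaces $\tr((G_n^2x)x^{n-1})$ by $-\sum_j\zeta_n^{j+1}T_j$, which gives
\[
Q_n''=n\sum_{j=0}^{n-2}\bigl(1-\zeta_n^{j+1}\bigr)T_j .
\]
Next I would show that each $T_j$ is real and nonnegative. The powers $x^{j/2}\in L^{2n/j}(M)$ and $x^{(n-2-j)/2}$ are H\"older-compatible with $y\in L^n$, so $z_j:=x^{j/2}\,y\,x^{(n-2-j)/2}\in L^2(M)$. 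Cyclicity (Lemma~\ref{lem:holder-products-trace-cyclicity}) then gives $T_j=\tr(z_j^*z_j)=\|z_j\|_2^2\geq0$. Cyclicity also gives $T_j=T_{n-2-j}$. Under $j\mapsto n-2-j$ the phases $\zeta_n^{j+1}$ and $\zeta_n^{n-1-j}$ are complex conjugates, so the imaginary parts cancel and
\[
Q_n''=n\sum_{j=0}^{n-2}w_jT_j,\qquad w_j=1-\cos\frac{2\pi(j+1)}{n}\geq0 .
\]

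\emph{Cauchy--Schwarz.} Moving powers of $x$ cyclically,
\[
\tr(y\,x^{n-1})=\tr\bigl(x^{n/2}z_j\bigr)=\langle x^{n/2},z_j\rangle_{L^2(M)}\quad\text{for every } j .
\]
Cauchy--Schwarz in $L^2(M)$ therefore gives $(Q_n'/n)^2\leq\|x^{n/2}\|_2^2\,T_j=Q_nT_j$. Multiplying by $w_j$ and summing,
\[
Q_nQ_n''\geq\frac{(Q_n')^2}{n}\sum_{j=0}^{n-2}w_j .
\]
Since $\sum_{k=1}^{n-1}\cos(2\pi k/n)=-1$, the weights satisfy $\sum_jw_j=(n-1)+1=n$, and so $Q_nQ_n''\geq(Q_n')^2$.

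\emph{Main obstacle.} I expect the hardest step to be the rigorous handling of the fractional powers in Haagerup $L^p$: that the products $z_j$ lie in $L^2(M)$, that the cyclic rearrangements defining $T_j=\|z_j\|_2^2$ and $\tr(yx^{n-1})=\langle x^{n/2},z_j\rangle$ are legitimate, and that the imaginary parts cancel. My first approach is to reduce everything to repeated applications of the generalized H\"older inequality (Lemma~\ref{lem:general-holder}) and H\"older-compatible cyclicity, using the fact that all exponents sum to $1$.
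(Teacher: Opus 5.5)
Your proposal is correct and follows the paper's proof essentially step for step. It uses the same Leibniz formulas for $Q_n'$ and $Q_n''$, the same use of Corollary~\ref{cor:cyclic-identity-logconvexity-form} to eliminate $\tr((G_n^2x)x^{n-1})$, the same $L^2$ elements $x^{j/2}yx^{(n-2-j)/2}$ giving $B_j\geq0$ and $B_j=B_{n-2-j}$, the weights $w_j=2\sin^2(\pi(j+1)/n)$ summing to $n$, and Cauchy--Schwarz against $x^{n/2}$; your justification of $Q_n(c)>0$ via $\psi_c(1)=\psi(1)$ is, if anything, slightly more explicit than the paper's direct citation of Lemma~\ref{lem:positive-Ln-sandwiched-density} applied to $x_c$.
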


\begin{proof}
By Lemma~\ref{lem:positive-Ln-sandwiched-density} applied to $x_c$, it follows that $0<Q_n(c)<\I, \, \forall c\geq0$.

Since $x_0\in\Dom(G_n^2)$, it follows from Definition~\ref{def:Gn-def} that
\begin{equation*}
     x_c\in\Dom(G_n^2),
  \qquad
  \frac{d}{dc}x_c=G_nx_c,
  \qquad
  \frac{d^2}{dc^2}x_c=G_n^2x_c\,, \quad \forall c\geq0
\end{equation*}
Fix $c> 0$ and write
\begin{equation}\label{eq:defn-xuv}
  x \defeq x_c,
  \qquad
  u \defeq G_nx_c,
  \qquad
  v \defeq G_n^2x_c\,.
\end{equation}
Because $x_c=x_c^*$ for all $c$ and the involution is an isometry on $L^n(M)$, differentiation in $L^n(M)$ gives $u=u^*,v=v^*$.

The product map
\begin{equation*}
  L^n(M)\times\cdots\times L^n(M)\longrightarrow L^1(M)
\end{equation*}
with $n$ factors is continuous by H\"older's inequality.  Since $c\mapsto x_c$ is twice continuously differentiable in $L^n(M)$ by assumption, the finite Leibniz rule gives
\begin{equation}\label{eq:Qp-C-defn}
    Q_n'(c)=nC, \quad  C \defeq \tr(ux^{n-1})\,.
\end{equation}
The second derivative is
\begin{equation}\label{eq:Qpp-regular}
  Q_n''(c)=n\left(A+\sum_{j=0}^{n-2}B_j\right)\,,
\end{equation}
where
\begin{equation}
  A \defeq \tr(vx^{n-1}),\quad  B_j \defeq \tr(ux^jux^{n-2-j}),
  \qquad j=0,\ldots,n-2\,.
\end{equation}

Each $B_j$ is non-negative.  To see this, define
\begin{equation}\label{eq:defn-Yj}
  Y_j \defeq x^{j/2}ux^{(n-2-j)/2}\in L^2(M)\,.
\end{equation}
Using Lemma~\ref{lem:holder-products-trace-cyclicity}, H\"older multiplication gives $Y_j\in L^2(M)$, and cyclicity gives
\begin{equation*}
  \|Y_j\|_2^2=\tr(Y_j^*Y_j)=\tr\bigl(x^{(n-2-j)/2}ux^jux^{(n-2-j)/2}\bigr)=B_j \geq0\,.
\end{equation*}
Moreover,
\begin{equation*}
  B_j=B_{n-2-j}\,.
\end{equation*}
This follows from cyclicity and self-adjointness:
\begin{align*}
\begin{aligned}
    B_j &=\tr(ux^jux^{n-2-j})=\tr\bigl((ux^jux^{n-2-j})^*\bigr)=\tr(x^{n-2-j}ux^ju)=\tr(ux^{n-2-j}ux^j)=B_{n-2-j}\,.
\end{aligned}
\end{align*}

Since $x\in\Dom(G_n^2)$ by assumption, and $u=G_nx\in\Dom(G_n)$, we apply Corollary~\ref{cor:cyclic-identity-logconvexity-form} with $x=x_c$ to \eqref{eq:Qpp-regular}, which gives
\begin{equation}\label{eq:Qypp}
  \frac{Q_n''(c)}{n}
  =\sum_{j=0}^{n-2}\bigl(1-\zeta_n^{j+1}\bigr)B_j\,.
\end{equation}
Using $B_j=B_{n-2-j}$, one can write this as 
\begin{equation}\label{eq:Qpp-positive-sum}
    \frac{Q_n''(c)}{n} = \sum_{j=0}^{n-2}w_jB_j, \qquad w_j=2\sin^2\bfrac{\pi(j+1)}{n},
\end{equation}
with
\begin{equation*}
    w_j\geq0,\qquad \sum_{j=0}^{n-2}w_j=n.
\end{equation*}

Let
\begin{equation*}
  Z \defeq x^{n/2}\in L^2(M)\,.
\end{equation*}
Then for $j=0,\ldots,n-2$\,, with $Y_j$ defined in \eqref{eq:defn-Yj}, we have
\begin{equation*}
  \|Y_j\|_2^2=B_j,
  \qquad
  \|Z\|_2^2=Q_n(c)\,.
\end{equation*}
Furthermore, by cyclicity,
\begin{equation*}
  \langle Y_j,Z\rangle_{L^2(M)}=\tr(Y_j^*Z) =\tr\bigl(x^{(n-2-j)/2}ux^{j/2}x^{n/2}\bigr)=C\,.
\end{equation*}
where $C$ was defined in \eqref{eq:Qp-C-defn}. 
Cauchy-Schwarz in $L^2(M)$ gives
\begin{equation}\label{eq:C-inequality}
  |C|^2\leq Q_n(c)B_j,
  \qquad j=0,\ldots,n-2\,.
\end{equation}
Also, $C$ is real: using $u=u^*$, $x=x^*$, and cyclicity,
\begin{equation*}
  \overline C
  =\tr\bigl((ux^{n-1})^*\bigr)
  =\tr(x^{n-1}u)
  =\tr(ux^{n-1})
  =C\,.
\end{equation*}
Thus $|C|^2=C^2$.

Then, using \eqref{eq:Qpp-positive-sum}, \eqref{eq:C-inequality}, and the fact that $C$ is real,
\begin{equation*}
  Q_n(c)Q_n''(c)=nQ_n(c)\sum_{j=0}^{n-2} w_j B_j \geq n\sum_{j=0}^{n-2} w_j C^2 =n^2C^2 = Q_n'(c)^2\,.
\end{equation*}
Since $Q_n(c)>0$,
\begin{equation}
  (\log Q_n)''(c)
  =\frac{Q_n(c)Q_n''(c)-Q_n'(c)^2}{Q_n(c)^2}
  \geq 0\,.
\end{equation}
Hence $\log Q_n$ is convex on $(0,\infty)$, which is equivalent to the claimed log-convexity inequality for $c\in(0,\infty)$.

Since $c\mapsto x_c$ is strongly continuous in $L^n(M)$, from the strong continuity of the semigroup $(V_c^{(n)})_{c\geq0}$ (Proposition~\ref{prop:null-translation-Ln-semigroup}), and since the product map $L^n(M)^n\to L^1(M)$ is continuous by H\"{o}lder, $Q_n(c)=\tr(x_c^n)$ is continuous on $[0,\I)$. Since $Q_n(c)>0$, $\log Q_n$ is continuous on $[0,\I)$. Convexity on $(0,\I)$ then extends to convexity on $[0,\I)$ by taking limits from the right.
\end{proof}

\subsection{Log convexity for general positive \texorpdfstring{$L^n$}{Ln} data}
\label{subsec:integer-renyi-qnec-final-fixed}

We now relax the assumption $x_0\in \Dom(G_n^2)$ in Proposition~\ref{prop:smooth-orbit-log-convexity-fixed} and prove log convexity for arbitrary $x_0\in L^n(M)_+$. This is done by approximating $x_0$ by a sequence in $L^n(M)_+\cap\Dom(G_n^2)$, so that Proposition~\ref{prop:smooth-orbit-log-convexity-fixed} applies as we take the limit.

We first collect the definition and some properties of the Yosida resolvent of the generator $G_n$ of the strongly continuous contractive semigroup $(V_c^{(n)})_{c\geq0}$ on the $L^n(M)$ Banach space. See Appendix \ref{appsubsec:semigrp} for more details.
\begin{definition}[Yosida resolvent]\label{def:yosidaresolvent-main}
    Recall the definition of the resolvent of $G_n$, $R(m,G_n)$, from Remark~\ref{rem:resolvent}. The Yosida resolvent is defined to be
\begin{equation}
    R_m=mR(m,G_n)=m(m-G_n)^{-1}, \quad m>0,
\end{equation}
which admits the Laplace-transform representation
\begin{equation}
    R_m= \lim_{t\to\I} m\int_0^t e^{-m s}V_s^{(n)}\,ds.
\end{equation}
\end{definition}

\begin{lemma}\label{lem:yosidaproperties}
    Let the $R_m, m>0$ be the Yosida resolvent of the generator $G_n$ of the strongly continuous contractive semigroup $(V_c^{(n)})_{c\geq0}$, as in Definition~\ref{def:yosidaresolvent-main}. Then the following properties hold:
    \begin{enumerate}
        \item $R_m:L^n(M) \longmapsto \Dom(G_n)$ is a positive operator.
        \item For $x_0\in L^n(M)_+$, $R_m^2 x_0\in L^n(M)_+\cap \Dom(G_n^2)$.
        \item $R_m V_c^{(n)}=V_c^{(n)} R_m$.
        \item For $x_0\in L^n(M)_+$, $R_m^2x_0 \to x_0 $ as $m\to \infty$.
    \end{enumerate}
\end{lemma}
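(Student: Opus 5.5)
The plan is to derive all four properties from the Laplace representation of $R_m$ (Definition~\ref{def:yosidaresolvent-main}), together with the standard Hille--Yosida facts recalled in Remark~\ref{rem:resolvent}: $(0,\infty)\subset\rho(G_n)$ and $\|R(m,G_n)\|\leq 1/m$, so that $\|R_m\|\leq1$.

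\emph{Property 1.} That $R_m$ maps $L^n(M)$ onto $\Dom(G_n)$ is immediate, since $R(m,G_n)=(m-G_n)^{-1}$ has range $\Dom(G_n)$. For positivity, take $x\in L^n(M)_+$. Each truncated integral $m\int_0^t e^{-ms}V_s^{(n)}x\,ds$ is a Bochner integral of the continuous $L^n(M)_+$-valued function $s\mapsto e^{-ms}V_s^{(n)}x$; positivity of each $V_s^{(n)}$ comes from Proposition~\ref{prop:null-translation-Ln-semigroup}. To see that the integral is positive, I will pair it against $j_{q_n}(a)$ for $a\in M_+$. By Proposition~\ref{prop:standard-Haagerup-facts}(3) and duality, an element $y\in L^n(M)$ is positive iff $\tr(y\,j_{q_n}(a))\geq0$ for all $a\in M_+$; equivalently, $L^n(M)_+$ is a norm-closed convex cone and Bochner integrals are limits of Riemann sums with positive weights. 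Either route shows the integral lies in the cone. Closedness of $L^n(M)_+$ then passes positivity to the limit $t\to\infty$.

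\emph{Properties 2 and 3.} Given Property~1, $y:=R_mx_0\in L^n(M)_+\cap\Dom(G_n)$, and
\[
R_m^2x_0=R_my\in\Dom(G_n).
\]
Moreover $G_nR_my=mR_my-my$, which lies in $\Dom(G_n)$ because both terms do. Hence $R_m^2x_0\in\Dom(G_n^2)$, and it is positive by applying Property~1 twice. For Property~3, commutation follows by passing $V_c^{(n)}$ through the truncated Bochner integrals: a bounded operator commutes with Bochner integration, and $V_c^{(n)}V_s^{(n)}=V_s^{(n)}V_c^{(n)}$ by the semigroup law \eqref{eq:V-semigroup}. The identity then survives the norm limit $t\to\infty$.

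\emph{Property 4 (the main step).} First I will show $R_mx\to x$ for every $x$, via $\|R_m\|\leq1$ and density. For $x\in\Dom(G_n)$,
\[
R_mx-x=R(m,G_n)G_nx,
\]
whose norm is at most $\|G_nx\|_n/m\to0$. Since $\Dom(G_n)$ is dense and the $R_m$ are uniformly bounded, an $\varepsilon/3$ argument extends the convergence to all $x\in L^n(M)$. An alternative is to argue directly from the Laplace form, since $me^{-ms}\,ds$ is an approximate identity at $0$ and $s\mapsto V_s^{(n)}x$ is continuous; Lemma~\ref{lem:approx-id} applies. Finally,
\[
\|R_m^2x_0-x_0\|_n\leq\|R_m(R_mx_0-x_0)\|_n+\|R_mx_0-x_0\|_n\leq2\|R_mx_0-x_0\|_n\to0.
\]
The only mildly delicate point is the positivity in Property~1, namely that Bochner integrals of positive-valued functions remain positive. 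This rests on closedness of the positive cone in $L^n(M)$, which I will justify via the duality characterization above.
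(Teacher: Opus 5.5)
Your proposal is correct and takes essentially the paper's route: positivity comes from the Laplace representation together with the closed convex cone $L^n(M)_+$, membership in $\Dom(G_n^2)$ comes from applying $R_m$ twice, and (4) follows from $\|R_m\|\leq1$, $R_mx_0\to x_0$ and the same triangle inequality. The differences are cosmetic: you use $G_nR_my=mR_my-my$ where the paper uses the commutation $G_nR_my=R_mG_ny$, and you prove (3) and $R_mx\to x$ directly from the Laplace integral and density instead of citing Lemma~\ref{lem:resolvent-convergence}.
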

\begin{proof}
We prove statements (1), (2) and (4). Statement (3) follows from Lemma~\ref{lem:resolvent-convergence}.

\emph{Proof of (1):}
Since \(m\in\rho(G_n)\), the operator
\[
m-G_n:\Dom(G_n)\longrightarrow L^n(M)
\]
is bijective. Consequently,
\[
R_m=m(m-G_n)^{-1}
\]
maps \(L^n(M)\) into \(\Dom(G_n)\).
It remains to prove positivity. For \(x\in L^n(M)_+\), the Laplace-transform
representation gives
\[
R_mx
   =\lim_{t\to\infty}
      m\int_0^t e^{-ms}V_s^{(n)}x\,ds,
\]
where the integrals are Bochner integrals in \(L^n(M)\). Since
\(V_s^{(n)}\) is positive,
\[
m e^{-ms}V_s^{(n)}x\in L^n(M)_+
\qquad (s\geq0).
\]
The positive cone \(L^n(M)_+\) is norm closed and convex. Hence every
finite Bochner integral
\[
m\int_0^t e^{-ms}V_s^{(n)}x\,ds
\]
belongs to \(L^n(M)_+\), and so does its \(L^n(M)\)-norm limit as
\(t\to\infty\). Thus
\[
x\geq0\quad\Longrightarrow\quad R_mx\geq0,
\]
and \(R_m\) is positive. This proves statement (1).

\emph{Proof of (2):}
Since \(R_m\) is positive,
\[
R_mx_0\in L^n(M)_+,
\qquad
R_m^2x_0=R_m(R_mx_0)\in L^n(M)_+.
\]
It remains to prove that \(R_m^2x_0\in\Dom(G_n^2)\).

It follows from statement (1) that
\[
R_m^2x_0= R_m(R_mx_0)\in\Dom(G_n).
\]
We  use the standard resolvent commutation identity (Lemma~\ref{lem:resolvent-convergence})
\[
G_nR_my=R_mG_ny,
\qquad y\in\Dom(G_n).
\]
Applying this identity to \(y=R_mx_0\in\Dom(G_n)\), we obtain
\[
G_nR_m^2x_0
   =R_mG_nR_mx_0 \in \Dom(G_n).
\]
Therefore,
\[
R_m^2x_0\in\Dom(G_n)
\quad\text{and}\quad
G_nR_m^2x_0\in\Dom(G_n),
\]
which, by definition, means that
\[
R_m^2x_0\in\Dom(G_n^2).
\]
Combining this with positivity proves
\[
R_m^2x_0\in L^n(M)_+\cap\Dom(G_n^2).
\]

\emph{Proof of (4).}
Since \(\|R_m\|\leq1\) by Theorem~\ref{thm:hille-yosida}, and \(R_m x_0\to x_0\) by Lemma~\ref{lem:resolvent-convergence} , we have
\[
 \|R_m^2x-x\|_n
 \leq
 \|R_m(R_mx-x)\|_n+\|R_mx-x\|_n
 \leq
 2\|R_mx-x\|_n
 \longrightarrow0.
\]
\end{proof}

\begin{theorem}[Log-convexity of $Q_n(c)$ for general functionals]
\label{thm:general-log-convexity}
Let $x_0\in L^n(M)_+$, $x_0\neq0$.
Set
\[
    x_c=V_c^{(n)}x_0, \quad Q_n(c)=\tr(x_c^n), \quad c\geq0.
\]
Then $Q_n$ is log-convex on $[0,\infty)$:
\begin{equation}
    Q_n((1-\theta)c_0+\theta c_1)
  \leq Q_n(c_0)^{1-\theta}Q_n(c_1)^\theta
\end{equation}
for all $c_0,c_1\geq 0$ and $0\leq \theta\leq 1$.
\end{theorem}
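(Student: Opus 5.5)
We approximate $x_0$ by generator-smooth positive data and pass to the limit in the multiplicative convexity inequality. For $m>0$ set $x_0^{(m)}\defeq R_m^2x_0$, where $R_m$ is the Yosida resolvent of Definition~\ref{def:yosidaresolvent-main}. By Lemma~\ref{lem:yosidaproperties}(2), $x_0^{(m)}\in L^n(M)_+\cap\Dom(G_n^2)$, and by Lemma~\ref{lem:yosidaproperties}(4), $x_0^{(m)}\to x_0$ in $L^n(M)$ as $m\to\infty$. In particular $x_0^{(m)}\neq0$ for all sufficiently large $m$, since $\|x_0^{(m)}\|_n\to\|x_0\|_n>0$. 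For such $m$, Proposition~\ref{prop:smooth-orbit-log-convexity-fixed} applies to $x_0^{(m)}$. Writing $x_c^{(m)}\defeq V_c^{(n)}x_0^{(m)}$ and $Q_n^{(m)}(c)\defeq\tr\bigl((x_c^{(m)})^n\bigr)$, it gives
\[
Q_n^{(m)}\bigl((1-\theta)c_0+\theta c_1\bigr)\leq Q_n^{(m)}(c_0)^{1-\theta}\,Q_n^{(m)}(c_1)^{\theta}
\]
for all $c_0,c_1\geq0$ and $\theta\in[0,1]$.

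Next we show $Q_n^{(m)}(c)\to Q_n(c)$ for each fixed $c\geq0$. Contractivity of $V_c^{(n)}$ gives
\[
\|x_c^{(m)}-x_c\|_n\leq\|x_0^{(m)}-x_0\|_n\to0 .
\]
The $n$-linear product $L^n(M)^n\to L^1(M)$ is bounded by H\"older's inequality (Lemma~\ref{lem:general-holder}), and $\tr$ is continuous on $L^1(M)$. Hence $y\mapsto\tr(y^n)$ is continuous on $L^n(M)$. Quantitatively, telescoping $y^n-z^n=\sum_k y^{k}(y-z)z^{n-1-k}$ yields
\[
|\tr(y^n)-\tr(z^n)|\leq n\max(\|y\|_n,\|z\|_n)^{n-1}\|y-z\|_n .
\]
Therefore $Q_n^{(m)}(c)\to Q_n(c)$ pointwise; the convergence is in fact uniform in $c$, since $\|x_c^{(m)}\|_n,\|x_c\|_n\leq\sup_m\|x_0^{(m)}\|_n$.

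Finally we pass to the limit. The map $(a,b)\mapsto a^{1-\theta}b^\theta$ is continuous on $[0,\infty)^2$, even when $\theta\in\{0,1\}$ or when an argument tends to $0$. Taking $m\to\infty$ on both sides of the inequality for $Q_n^{(m)}$ therefore yields the claimed inequality for $Q_n$. Strict positivity $Q_n(c)>0$ also holds, by Lemma~\ref{lem:positive-Ln-sandwiched-density} applied to $x_c$ together with Lemma~\ref{lem:density-propagation-fixed}. This is not needed for the inequality, but it allows it to be restated as convexity of $\log Q_n$.

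The only subtle point is applying the smooth result to $x_0^{(m)}$: Proposition~\ref{prop:smooth-orbit-log-convexity-fixed} requires positivity, nonvanishing, and $\Dom(G_n^2)$-membership. All three are supplied by Lemma~\ref{lem:yosidaproperties} together with norm convergence. Note that we never need $V_c^{(n)}$ to commute with $R_m$ for this argument, since we apply the proposition directly to the new initial datum $x_0^{(m)}$.
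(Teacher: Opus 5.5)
Your proof is correct and follows the paper's argument: you Yosida-regularize via $x_0^{(m)}=R_m^2x_0\in L^n(M)_+\cap\Dom(G_n^2)$, apply Proposition~\ref{prop:smooth-orbit-log-convexity-fixed}, and pass to the limit using the H\"older/telescoping continuity of $y\mapsto\tr(y^n)$. The differences are minor. You get $x_c^{(m)}\to x_c$ from contractivity of $V_c^{(n)}$, which also makes the convergence uniform in $c$, whereas the paper uses the commutation $R_m^2V_c^{(n)}=V_c^{(n)}R_m^2$. You get $x_0^{(m)}\neq0$ from norm convergence for large $m$, whereas the paper uses injectivity of the resolvent.
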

\begin{proof}
For $m>0$, define the Yosida resolvent $R_m$ of $G_n$, as in Definition~\ref{def:yosidaresolvent-main}, and the Yosida approximants 
\begin{equation*}
  x_0^{(m)} \defeq R_m^2x_0\,,
\end{equation*}
From Lemma~\ref{lem:yosidaproperties} it follows that 
\begin{equation*}
  x_0^{(m)}\in L^n(M)_+\cap\Dom(G_n^2),
\end{equation*}
and
\begin{equation}
 \lim_{m\to \I} x_0^{(m)}\to x_0
  \quad
  \text{in }L^n(M).
\end{equation}
Moreover, $x_0^{(m)}\neq 0$, since $x_0\neq0$ by assumption and resolvents are injective.

For $c\geq 0$, set
\begin{equation*}
  x_c^{(m)} \defeq V_c^{(n)}x_0^{(m)}\,.
\end{equation*}
Then by Lemma~\ref{lem:yosidaproperties}
\begin{equation*}
    x_c^{(m)}=V_c^{(n)}x_0^{(m)}=R_m^2 V_c^{(n)}x_0\in L^n(M)_+\cap\Dom(G_n^2).
\end{equation*}
Define
\begin{equation*}
  Q_{n,m}(c) \defeq \tr\bigl((x_c^{(m)})^n\bigr)\,.
\end{equation*}
By Lemmas~\ref{lem:positive-Ln-sandwiched-density} and \ref{lem:density-propagation-fixed} applied to $x_0^{(m)}$, it follows that $ 0<Q_{n,m}(c)<\I$. 
Thus, Proposition~\ref{prop:smooth-orbit-log-convexity-fixed} applies to each $Q_{n,m}(c)$, implying that each $Q_{n,m}$ is positive and log-convex:
\begin{equation*}
  Q_{n,m}((1-\theta)c_0+\theta c_1)
  \leq Q_{n,m}(c_0)^{1-\theta}Q_{n,m}(c_1)^\theta
\end{equation*}
for all $c_0,c_1\geq 0$ and $0\leq\theta\leq 1$.

For each fixed $c\geq 0$,
\begin{equation*}
  x_c^{(m)}=V_c^{(n)}R_m^2x_0 =R_m^2V_c^{(n)}x_0 \longrightarrow V_c^{(n)}x_0=x_c \qquad\text{in }L^n(M).
\end{equation*}
By Lemma \ref{lem:integer-power-continuity}, the convergence
\(x_c^{(m)}\to x_c\) in \(L^n(M)_+\) implies
\[
(x_c^{(m)})^n \to x_c^n
\]
in \(L^1(M)\).
Hence, for every $c\geq 0$,
\begin{equation*}
  Q_{n,m}(c)\longrightarrow Q_n(c) \defeq \tr(x_c^n).
\end{equation*}
The limit $Q_n(c)$ is positive by Lemmas~\ref{lem:positive-Ln-sandwiched-density} and \ref{lem:density-propagation-fixed} applied to $x_0$. 
Passing to the limit in the log-convexity inequality for $Q_{n,m}$ gives
\begin{equation*}
  Q_n((1-\theta)c_0+\theta c_1)
  \leq Q_n(c_0)^{1-\theta}Q_n(c_1)^\theta\,.
\end{equation*}
Thus $Q_n$ is log-convex on $[0,\infty)$.
\end{proof}

The integer R\'enyi QNEC follows as a corollary of the above theorem.
\begin{corollary}[Integer R\'enyi QNEC]
\label{cor:integer-renyi-qnec-fixed} 
Suppose that the HSMI \(M_B\subset M_A\) is realized by null-cut
algebras in a Poincar\'e-invariant quantum field theory, with common faithful normal vacuum state \(\omega\). 
Assume the HSMI structure from Theorem~\ref{thm:modulartranslations}, and the fixed algebra formulation of the SRD from Lemma~\ref{lem:fixedalgebraSRD}.  Let $\psi\in M_*^+$ be non-zero and assume
\[
  D_n(\psi\|\omega;M)<\infty\,.
\]
For $c\geq 0$, define
\[
  \psi_c \defeq \psi\circ T_c\,.
\]
Then
\[
  0< Q_n(\psi_c\|\omega;M)<\infty
\]
for all $c\geq 0$, and the function
\[
  c\longmapsto S_n(c)=D_n(\psi_c\|\omega;M)=D_n(\psi\|\omega;M_c)
\]
 is convex on $[0,\infty)$.  Equivalently,
 \[
    S_n((1-\theta)c_0+\theta c_1)
  \leq (1-\theta)S_n(c_0)
     +\theta S_n(c_1)
 \]
for all $c_0,c_1\geq 0$ and $0\leq\theta\leq 1$.  Consequently,
\[
  \partial_c^2S_n(c)\geq 0
\]
 in the sense of distributions on $(0,\infty)$, and pointwise wherever the ordinary second derivative exists.
\end{corollary}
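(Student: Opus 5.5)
The plan is to reduce the corollary directly to Theorem~\ref{thm:general-log-convexity} via the fixed-algebra reformulation and the Haagerup--Kosaki description of the SRD.

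\textbf{Step 1: the initial density.} Since $D_n(\psi\|\omega;M)<\infty$ and $\psi\neq0$, the definition of the SRD gives $Q_n(\psi\|\omega;M)<\infty$. Definition~\ref{defn:renyi-density} (via Lemma~\ref{lem:y-op-defn-SRD}) then produces the unique sandwiched R\'enyi density $x_0\in L^n(M)_+$ with $h_\psi=h^{\eta_n}x_0h^{\eta_n}$. We have $x_0\neq0$, since otherwise $h_\psi=0$ and hence $\psi=0$.

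\textbf{Step 2: propagation.} Lemma~\ref{lem:density-propagation-fixed} shows that $\psi_c=\psi\circ T_c$ has density $x_c=V_c^{(n)}x_0$. It also gives $0<Q_n(\psi_c\|\omega;M)=\tr(x_c^n)<\infty$ for every $c\geq0$, which is the first claim. In addition, $\psi_c(1)=\psi(T_c(1))=\psi(1)$ does not depend on $c$. Combining this with Lemma~\ref{lem:fixedalgebraSRD}, we obtain
\[
S_n(c)=D_n(\psi\|\omega;M_c)=D_n(\psi_c\|\omega;M)=\frac{1}{n-1}\Bigl(\log Q_n(c)-\log\psi(1)\Bigr),\qquad Q_n(c):=\tr\bigl((V_c^{(n)}x_0)^n\bigr).
\]

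\textbf{Step 3: convexity.} Theorem~\ref{thm:general-log-convexity} applies to $x_0$ and gives log-convexity of $Q_n$ on $[0,\infty)$. Taking logarithms of that inequality shows that $\log Q_n$ is convex. Since $1/(n-1)>0$ and the term $\log\psi(1)$ is constant, $S_n$ is a positive multiple of a convex function plus a constant, so it is convex. This is exactly the displayed Jensen-type inequality.

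\textbf{Step 4: second derivative.} Convexity of $S_n$ on $(0,\infty)$ is standard to translate into derivative statements.
\begin{itemize}
\item Because $S_n$ is convex, it is locally Lipschitz and its right derivative is nondecreasing. Hence the distributional second derivative is a positive measure, so $\partial_c^2S_n\geq0$ in the sense of distributions.
\item At any point where the ordinary second derivative exists, it is the limit of second difference quotients $[S_n(c+h)+S_n(c-h)-2S_n(c)]/h^2$. Each of these quotients is nonnegative by convexity, so the limit is nonnegative as well.
\end{itemize}

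\textbf{Main obstacle.} There is essentially no hard step here, since all the analytic content sits in Theorem~\ref{thm:general-log-convexity}. The one point that needs care is that the normalization $\psi_c(1)$ stays constant, which relies on $T_c$ being unital. Without this, the SRD would not reduce to $\log Q_n$ up to an additive constant, and convexity would not follow from log-convexity.
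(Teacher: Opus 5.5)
Your proposal is correct and follows the paper's proof essentially line by line. You obtain $x_0$ from the finite SRD, propagate it via Lemma~\ref{lem:density-propagation-fixed} to get $0<Q_n(c)<\infty$ and a $c$-independent $\psi_c(1)$, apply Theorem~\ref{thm:general-log-convexity}, and take logarithms. Your explicit check that $x_0\neq0$ (required by the theorem) and your justification of the distributional and pointwise second-derivative claims are small additions to details the paper leaves implicit.
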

\begin{proof}
By the finite SRD assumption, there exists a unique
\begin{equation*}
  x_0\in L^n(M)_+
\end{equation*}
with
\begin{equation*}
  h_\psi=h^{\eta_n}x_0h^{\eta_n},
  \qquad
  Q_n(\psi\|\omega;M)=\tr(x_0^n)\,.
\end{equation*}
For $c\geq 0$, Lemma~\ref{lem:density-propagation-fixed} gives
\begin{equation*}
  h_{\psi_c}=h^{\eta_n}x_ch^{\eta_n},
  \qquad
  x_c=V_c^{(n)}x_0\,.
\end{equation*}
and 
\begin{equation*}
  0< Q_n(\psi_c\|\omega;M)=\tr(x_c^n)<\infty\,.
\end{equation*}
By the Yosida regularization argument in Theorem \ref{thm:general-log-convexity}, the function
\begin{equation*}
  Q_n(c) \defeq Q_n(\psi_c\|\omega;M)=\tr(x_c^n)
\end{equation*}
 is log-convex on $[0,\infty)$:
\begin{equation*}
  Q_n((1-\theta)c_0+\theta c_1)
  \leq Q_n(c_0)^{1-\theta}Q_n(c_1)^\theta\,.
\end{equation*}
Taking logarithms gives convexity of $\log Q_n(c)$.  Since
\begin{equation*}
  D_n(\psi_c\|\omega;M)
  =\frac{1}{n-1}\log\left(\frac{Q_n(\psi_c\|\omega;M)}{\psi_c(1)}\right)
  =\frac{1}{n-1}\log\left(\frac{Q_n(c)}{\psi(1)}\right)\,,
\end{equation*}
where $\psi_c(1)=\psi(T_c(1))=\psi(1)$ is independent of $c$, convexity of $D_n$ follows.  Convexity implies non-negativity of the second derivative in the distributional sense, and the pointwise statement holds at every point where the ordinary second derivative exists.
\end{proof}

%%%%%%%%%%%%%%%%%%%%%%%%%%%%%%%%%%%%%%%%%%%%%%%%%%%%%%%%%%%%%%%%%%%%%%%%%%%%%%%%%%%%%%%%%%%%%%%%%%%%

\section{Discussion and future directions}\label{Sec:conclusion}

We have proven the integer RQNEC for arbitrary $\sigma$-finite von Neumann algebras carrying a half-sided modular inclusion structure, under the single additional hypothesis that the SRD with respect to the distinguished faithful state $\omega$ is finite on the larger algebra $M_A$.  
In particular, no form-domain assumption $\Psi\in \Dom(P^{1/2})$ is required, in contrast to the proofs of QNEC \cite{Ceyhan:2018zfg,Hollands:2025glm}.
Whether QNEC itself can be proven without such a form-domain assumption remains an interesting open problem.

An immediate target for future work is to prove RQNEC for general real  R\'enyi parameter $\a>1$. We expect that the proof technique used here can be adapted to prove this statement, and we hope to present the details in a future publication.

In the remainder of this section, we discuss several directions for future research.

\subsection*{RQNEC for semigroups}

The integer R\'enyi QNEC proved here is a log-convexity statement for the semigroup $V_c^{(n)}$ on $L^n(M)$: writing $Q_n(c)=\|x_c\|_n^n$ for the propagated density $x_c=V_c^{(n)}x_0$, our result is the log-convexity inequality
\[
  Q_n((1-\theta)c_0+\theta c_1)\leq Q_n(c_0)^{1-\theta}Q_n(c_1)^\theta\,.
\]
This is a noncommutative functional inequality for the semigroup $V_c^{(n)}$, in the spirit of the log-Sobolev and Poincar\'e inequalities developed in \cite{olkiewicz1999hypercontractivity, junge2015noncommutative, carlen2017gradient} and others.
Hollands and Longo~\cite{Hollands:2025glm} noted the connection between QNEC-type inequalities and noncommutative entropy production under semigroup actions, in the sense of~\cite{Spohn:1978pfz, Wirth:2025goz}. 
It would be interesting to investigate whether QNEC and RQNEC-type inequalities exist for semigroups arising from other geometric flows in quantum field theory. Similar proof techniques to those in the present work may apply.

\subsection*{Further generalizations of R\'enyi QNEC}
A family of divergences known as optimized quantum $f$-divergences has been studied for finite-dimensional systems \cite{Wilde:2017okz}. These divergences satisfy the data processing inequality, and the SRD is one example of this family. It would be interesting to study these divergences in the von Neumann algebra setting for general QFTs, and prove convexity statements similar to RQNEC. 
It is also natural to conjecture the {\az} R\'enyi QNEC for the {\az} R\'enyi divergence \cite{Audenaert:2015npv}, a two-parameter generalization of the relative entropy. It has been formulated in terms of Haagerup $L^p$ spaces \cite{Kato:2023aro,Kato:2023hlj,Hiai:2024qve}, and investigating its log-convexity under null translations will also be interesting.

\subsection*{R\'enyi quantum focusing}

Recently, \cite{Chandrasekaran:2026pnc} proved the quantum focusing conjecture of \cite{Bousso:2015mna} at leading non-trivial order in perturbative quantum gravity on Killing horizon backgrounds, building on the proof of QNEC \cite{Ceyhan:2018zfg}. For a one-parameter family of horizon cuts labelled by affine parameter $u$, the quantum expansion is defined as the derivative of the generalized entropy,
\begin{equation}
    \Theta(u) \defeq \del_u\barSgen(u;\hat\psi)\,,
\end{equation} 
evaluated on a dressed state $\ket{\hat\psi}=\int dx\,f(x)\ket{\Psi}\ket{x}$ that pairs an excited bulk state $\ket{\Psi}$ with eigenstates $\ket{x}$ of the area operator. The quantum focusing conjecture asserts that $\Theta$ is non-increasing in $u$, 
\begin{equation}        \label{eq:expansion-monotonicity}
    \del_u \Theta(u)\leq 0\,.
\end{equation}
Equivalently, it is the statement that the generalized entropy is concave.

Following \cite{Chandrasekaran:2022eqq}, the authors of \cite{Chandrasekaran:2026pnc} show that up to perturbative corrections in $G_N$,
\begin{equation}\label{eq:rel-ent-gravitational}
    \barSgen(u;\hat\psi) \approx -S_{\text{rel}}(u) + \text{const.}, \qquad S_{\text{rel}}(u) = -\langle\hat\psi,\log\D_{\hat\psi,\hat\W;\wh{\cM}_u}\hat\psi\rangle\,,
\end{equation}
where $\hat\W$ is the cyclic and separating Hartle--Hawking state and $\widehat{\cM}_u$ is the crossed product of the bulk QFT algebra by its vacuum modular automorphism group. The algebras $\wh{\cM}_u$ form a gravitational half-sided modular inclusion. The proof of \cite{Chandrasekaran:2026pnc} then follows \cite{Ceyhan:2018zfg} to derive a variational expression for $\Theta$ and deduce its monotonicity from convexity of relative entropy.

It would be interesting to investigate whether the SRD and our proof of RQNEC can be used to establish an analogous R\'enyi quantum focusing theorem.  We define the R\'enyi quantum expansion as
\begin{equation}
    \Theta_\a(u) = -\del_u D_\a(\hat\psi\|\hat\W;\wh{\cM}_u)\,,
\end{equation}
and conjecture that the R\'enyi quantum expansion is non-increasing in $u$
\begin{equation}
    \del_u\Theta_\a(u)\leq 0,\quad \forall \a>1\,.
\end{equation} 
Several alternative definitions are possible, reflecting the various R\'enyi generalizations of relative entropy, such as the Petz-R\'enyi divergence. Our choice of the SRD is well-motivated from the perspective that the R\'enyi quantum expansion (and R\'enyi focusing) should have a well-defined limit as $G_N\to0$.

%%%%%%%%%%%%%%%%%%%%%%%%%%%%%%%%%%%%%%%%%%%%%%%%%%%%%%%%%%%%%%%%%%%%%%%%%%%%%%%%%%%%%%%%%%%%%%%%%%%%

\begin{acknowledgments}
We thank Krishna Jalan for comments on the manuscript.
T.K. is supported by a Simons Foundation fellowship through the Targeted Grant to Instituto Balseiro. The work of P.R. has been supported by the Polish National Science Centre through Sonata grant (2022/47/D/ST2/02058). 
During the preparation of this manuscript, the authors used Anthropic Claude and OpenAI ChatGPT as editorial aids in drafting and revising portions of the text. Every mathematical statement, proof, and citation in the final manuscript was independently checked by the authors, who take full responsibility for its content.
\end{acknowledgments}

\begin{appendix}

\section{Analysis on Banach spaces}\label{appsec:technical}
This appendix records the analytic facts used in Sections~\ref{sec:cyclic-ward}
and~\ref{sec:integer-renyi-qnec}.  We avoid recalling standard Banach-space
definitions, except where a convention is needed for the proof. For background on one-parameter semigroups and vector-valued integration we refer to
\cite{engel.nagel:00:one-parameter,arendt.batty.ea:11:vector-valued}.

\subsection{Ultraweak and Bochner integrals}

Let $M$ be a $\sigma$-finite von Neumann algebra with a faithful normal state $\omega\in M_*^+$, and modular automorphisms $\sigma^{\omega}_t$.
\begin{definition}[Ultraweak integral]\label{def:ultraweak-int}
Let $F: \mathbb R \to M$ be an operator--valued function. The function $F$ is said to be ultraweakly integrable with respect to a measure $\mu(t)$ if the scalar function $t\mapsto \varphi(F(t))$ is Lebesgue integrable with respect to $\mu(t)$ for all $\varphi\in M_*$.
\end{definition}

The Bochner integral provides a theory of Lebesgue integration for vector-valued functions. Our overview closely follows that in \cite{engel.nagel:00:one-parameter}. Consider vector-valued functions $f:I\subset \R\to X$ for $X$ a Banach space.
\begin{definition}[Simple function]
The function $f$ is said to be simple if it can be written as a finite sum of the form
\[
    f(x) = \sum_{k=1}^n\chi_{I_k}x_k
\]
where $\chi_{I_k}$ are characteristic functions for measurable subsets $I_k\subset I.$ The integral of a simple function is defined to be
\[
    \int_I f(s)ds\defeq \sum_{k=1}^nx_k\mu(I_k),
\]
where $\mu$ denotes the Lebesgue measure on $\R$.
\end{definition}
\begin{definition}[Strong measurability]
We say that $f$ is strongly measurable if there exists a sequence $f_n$ of simple functions such that
\[
    \lim_{n\to\I}\|f(s)-f_n(s)\|=0 \qquad\text{a.e.}
\]
\end{definition}
\begin{definition}[Bochner integral]\label{def:Bochner-int}
We say that $f$ is Bochner integrable if $f$ is strongly measurable, and there exists a sequence $f_n$ of simple functions on $I$ such that
\[
    \lim_{n\to\I}\int_I\|f(s)-f_n(s)\|\,ds=0.
\]
For a Bochner integrable function, we define its Bochner integral by
\[
    \int_I f(s)\,ds\defeq \lim_{n\to\I}\int_I f_n(s)\,ds,
\]
which is independent of the choice of the approximating simple functions.
\end{definition}

\begin{definition}[Approximate identity]\label{def:approx-id}
An approximate identity \(\mu_\varepsilon\) is a family of finite positive Borel measures on \(\mathbb R\) such that as $\ve\to0$,
\[
        \mu_\varepsilon(\mathbb R)\to 1,
        \qquad
        \mu_\varepsilon(\{|s|\geq \delta\})\to 0
        \quad\text{for every }\delta>0 .
\]
\end{definition}
\begin{lemma}\label{lem:approx-id}
Let \(X\) be a Banach space and let $\mu_\ve$ be an approximate identity. Then, for every bounded function \(F:\mathbb R\to X\) that is norm-continuous
at \(0\),
\[
        \int_{\mathbb R}F(s)\,d\mu_\varepsilon(s)
        \longrightarrow F(0)
        \qquad\text{in }X .
\]
\end{lemma}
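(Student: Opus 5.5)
The plan is the standard $\varepsilon/3$ argument for approximate identities, run in the norm of $X$. Write $F(0)=F(0)\cdot 1$ and split
\[
\int_{\mathbb R}F(s)\,d\mu_\varepsilon(s)-F(0)
=\int_{\mathbb R}\bigl(F(s)-F(0)\bigr)\,d\mu_\varepsilon(s)
+\bigl(\mu_\varepsilon(\mathbb R)-1\bigr)F(0).
\]
The second term has norm $|\mu_\varepsilon(\mathbb R)-1|\,\|F(0)\|$, which tends to $0$ by the first defining property of an approximate identity (Definition~\ref{def:approx-id}).

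For the first term, fix $\eta>0$. By norm-continuity of $F$ at $0$, choose $\delta>0$ such that $\|F(s)-F(0)\|<\eta$ whenever $|s|<\delta$. Split the integral over $\{|s|<\delta\}$ and $\{|s|\geq\delta\}$, and use the norm estimate $\|\int G\,d\mu\|\leq\int\|G\|\,d\mu$.

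\begin{itemize}
\item The inner piece is bounded by $\eta\,\mu_\varepsilon(\mathbb R)$, which stays below $2\eta$ for small $\varepsilon$.
\item The outer piece is bounded by $2\sup_s\|F(s)\|\,\mu_\varepsilon(\{|s|\geq\delta\})$, which tends to $0$ by the second defining property and the boundedness of $F$.
\end{itemize}

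Hence the $\limsup$ of the norm is at most $2\eta$. Since $\eta>0$ was arbitrary, the claimed convergence follows.

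The only point needing care is that the integrals make sense, i.e.\ that $s\mapsto F(s)$ is Bochner integrable against the finite measure $\mu_\varepsilon$. I will treat this as part of the hypotheses, as it is implicit in the statement, and as it holds in every application in the paper. In those applications $F(s)=\Sigma_s^{(n)}y$ or $\Sigma_s^{(n)}G_ny$ is norm-continuous, hence strongly measurable, and bounded, hence integrable against a finite measure.

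Given integrability, the triangle-inequality bound on Bochner integrals is immediate from Definition~\ref{def:Bochner-int}, by passing to the limit over simple functions. No real obstacle is expected.
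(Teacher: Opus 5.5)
Your proof is correct and follows the paper's argument essentially verbatim. It uses the same splitting into $\int\|F(s)-F(0)\|\,d\mu_\varepsilon(s)$ plus $|\mu_\varepsilon(\mathbb R)-1|\,\|F(0)\|$, and bounds the first term by $\eta\,\mu_\varepsilon(\mathbb R)+2C\,\mu_\varepsilon(\{|s|\geq\delta\})$; your explicit remark that Bochner integrability of $F$ against $\mu_\varepsilon$ is an implicit hypothesis, satisfied in the paper's applications because $F$ is bounded and norm-continuous, is a point the paper leaves tacit.
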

\begin{proof}
We have
\[
\begin{aligned}
\left\|
        \int_{\mathbb R}F(s)\,d\mu_\varepsilon(s)-F(0)
\right\|
&\leq
        \int_{\mathbb R}\|F(s)-F(0)\|\,d\mu_\varepsilon(s)
        +
        |\mu_\varepsilon(\mathbb R)-1|\,\|F(0)\| .
\end{aligned}
\]
Given \(\eta>0\), choose \(\delta>0\) such that
\(\|F(s)-F(0)\|<\eta\) for \(|s|<\delta\).  If
\(\|F\|_\infty\leq C\), then
\[
        \int_{\mathbb R}\|F(s)-F(0)\|\,d\mu_\varepsilon(s)
        \leq
        \eta\,\mu_\varepsilon(\mathbb R)
        +
        2C\,\mu_\varepsilon(\{|s|\geq\delta\}),
\]
which tends to at most \(\eta\).  Since \(\eta\) is arbitrary, the claim
follows.
\end{proof}

\subsection{One-parameter semigroups}\label{appsubsec:semigrp}
We provide a concise summary of the relevant aspects of one-parameter semigroups on Banach spaces. We refer the reader to \cite{engel.nagel:00:one-parameter,arendt.batty.ea:11:vector-valued,hille1981functional} for more details. Our overview is based on \cite{engel.nagel:00:one-parameter}.

\begin{definition}[Strongly continuous semigroup]
A one-parameter family $(T(t))_{t\geq0}$ of bounded linear operators on a Banach space $X$ is called a strongly continuous semigroup if
\begin{equation}
    T(0) = I, \qquad \text{and}\qquad T(t+s)=T(t)T(s)\quad\A t,s\geq0,
\end{equation}
and the map $\R_+\ni t\mapsto T(t)\in B(X)$
is continuous w.r.t.~the strong operator topology on $X$. 
Here $B(X)$ is the Banach algebra of all bounded linear operators on $X$ with the operator norm $\|\cdot\|$. Define the orbit maps of the semigroup to be $\xi_x:t\mapsto T(t)x\in X$.
\end{definition}
\begin{proposition}\label{prop:semigroup}
For a semigroup $(T(t))_{t\geq0}$ on a Banach space $X$, the following are equivalent.
\begin{enumerate}
    \item $(T(t))_{t\geq0}$ is strongly continuous.
    \item $\lim_{t\downarrow0}T(t)x=x$ for all $x\in X$.
    \item There exist $\d>0,M\geq 1$, and a dense subset $D\subset X$ such that
    \begin{enumerate}
        \item $\|T(t)\|\leq M$ for all $t\in[0,\d]$,
        \item $\lim_{t\downarrow0}T(t)x=x$ for all $x\in D$.
    \end{enumerate}
\end{enumerate}
\end{proposition}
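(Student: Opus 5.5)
The statement to prove is Proposition~\ref{prop:semigroup}, the standard characterization of strong continuity. I will prove the cycle of implications $(1)\Rightarrow(2)\Rightarrow(3)\Rightarrow(1)$.

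\emph{First two implications.} The implication $(1)\Rightarrow(2)$ is immediate. Strong continuity of $t\mapsto T(t)x$ at $t=0$, together with $T(0)=I$, gives $T(t)x\to x$ as $t\downarrow0$. For $(2)\Rightarrow(3)$ I take $D=X$, so condition (b) is just (2). The only real content is the uniform bound (a). I will argue by contradiction. Suppose no $\delta>0$ and $M\geq 1$ satisfy (a). Then there are $t_k\downarrow0$ with $\|T(t_k)\|\to\infty$. On the other hand, (2) says that for every $x$ the sequence $T(t_k)x$ converges, so $\sup_k\|T(t_k)x\|<\infty$. The uniform boundedness principle then gives $\sup_k\|T(t_k)\|<\infty$, which is a contradiction.

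\emph{Step $(3)\Rightarrow(1)$.} This is the main step and has two parts.

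\begin{enumerate}
\item \emph{Uniform bound on compact intervals.} Write $t=k\delta+r$ with $0\leq r<\delta$ and $k\in\mathbb N$. The semigroup law gives $\|T(t)\|\leq M^{k+1}$. This is bounded on every interval $[0,t_1]$, and in fact $\|T(t)\|\leq Me^{\omega t}$ with $\omega=\delta^{-1}\log M$.
\item \emph{Right continuity at $0$ on all of $X$.} This is an $\varepsilon/3$ density argument. Given $x\in X$ and $\varepsilon>0$, pick $y\in D$ with $\|x-y\|<\varepsilon$. For $t\in[0,\delta]$,
\[
\|T(t)x-x\|\leq \|T(t)\|\,\|x-y\|+\|T(t)y-y\|+\|y-x\|\leq (M+1)\varepsilon+\|T(t)y-y\|,
\]
and the last term tends to $0$ by (b).
\end{enumerate}

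\emph{Continuity at an arbitrary $t_0>0$.} Continuity from the right follows from
\[
T(t_0+h)x-T(t_0)x=T(t_0)\bigl(T(h)x-x\bigr).
\]
For continuity from the left, with $0<h\leq t_0$ I write
\[
T(t_0-h)x-T(t_0)x=T(t_0-h)\bigl(x-T(h)x\bigr).
\]
Its norm is at most $\sup_{s\leq t_0}\|T(s)\|\,\|x-T(h)x\|\to0$, by the bound from part 1 and the right continuity from part 2.

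The main obstacle is the left continuity. It is exactly where the local bound (a) has to be propagated to the whole interval $[0,t_0]$ through the semigroup law, and where the uniform boundedness principle had to supply that bound in the step $(2)\Rightarrow(3)$. Everything else is routine.
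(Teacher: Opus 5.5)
Your proof is correct. The paper gives no proof of its own here and simply quotes the result from Engel--Nagel, whose argument is essentially yours: the uniform boundedness principle for the local bound, the bound $\|T(t)\|\leq Me^{\omega t}$ from the semigroup law, an $\varepsilon/3$ density argument at $t=0$, and the identity $T(t_0-h)x-T(t_0)x=T(t_0-h)\bigl(x-T(h)x\bigr)$ for left continuity.
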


Strongly continuous semigroups are uniformly bounded on each compact interval.
\begin{definition}[Strongly continuous contractive semigroup]
A strongly continuous semigroup $(T(t))_{t\geq0}$ is said to be bounded if there exists $M\geq 1$ such that $\|T(t)\|\leq M$ for all $t\geq 0$. A bounded semigroup is called contractive if $M=1$.
\end{definition}
\begin{definition}[Generator of semigroup]\label{def:generator}
The generator $A:\Dom(A)\subseteq X\to X$ of a strongly continuous semigroup $(T(t))_{t\geq0}$ on a Banach space $X$ is the operator
\begin{equation}\label{eq:defn-semigp-gen}
    Ax \defeq \lim_{h\downarrow0}\frac{T(h)x-x}{h},
\end{equation}
defined for every $x$ in its domain
\begin{equation}\label{eq:defn-semigp-gen-dom}
    \Dom(A)\defeq \lbr x\in X:\lim_{h\downarrow0}\frac{T(h)x-x}{h} \text{ exists} \rbr.
\end{equation}
\end{definition}
The domain $\Dom(A)$, a linear subspace of $X$, is an essential part of the definition of the generator $A$. All generators $A$ should be understood to implicitly refer to the pair $(A,\Dom(A))$.
\begin{theorem}\label{thm:generator-is-closed}
The generator of a strongly continuous semigroup is a closed densely defined linear operator that determines the semigroup uniquely.
\end{theorem}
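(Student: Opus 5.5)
We need to prove Theorem \ref{thm:generator-is-closed}: the generator $A$ of a strongly continuous semigroup $(T(t))_{t\geq0}$ on $X$ is closed, densely defined, and determines the semigroup uniquely. The whole argument rests on the averaged orbit $x_t\defeq \frac1t\int_0^t T(s)x\,ds$ and two identities for it.

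\textbf{Key identities.} First fix a bound $\|T(s)\|\leq M_0$ on $[0,1]$, which follows from the uniform boundedness principle applied to the orbits near $0$ (as in Proposition~\ref{prop:semigroup}). Orbits $s\mapsto T(s)x$ are then norm continuous, so the Bochner integral $\int_0^t T(s)x\,ds$ exists. A change of variables gives, for every $x\in X$,
\[
\frac{T(h)-I}{h}\int_0^t T(s)x\,ds=\frac1h\int_t^{t+h}T(s)x\,ds-\frac1h\int_0^hT(s)x\,ds \longrightarrow T(t)x-x \qquad (h\downarrow 0),
\]
so $\int_0^tT(s)x\,ds\in\Dom(A)$ with $A\int_0^tT(s)x\,ds=T(t)x-x$. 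Second, for $x\in\Dom(A)$, $T(t)$ commutes with the difference quotients. Hence $T(t)x\in\Dom(A)$, $AT(t)x=T(t)Ax$, and $t\mapsto T(t)x$ is right differentiable with derivative $T(t)Ax$. Left differentiability follows by writing $\frac{T(t)x-T(t-h)x}{h}=T(t-h)\frac{T(h)x-x}{h}$ and using local boundedness together with strong continuity. Consequently $T(t)x-x=\int_0^tT(s)Ax\,ds$ for $x\in\Dom(A)$.

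\textbf{Density.} The averages $x_t$ lie in $\Dom(A)$ and converge to $x$ as $t\downarrow0$ by strong continuity at $0$. So $\Dom(A)$ is dense.

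\textbf{Closedness.} Take $x_k\in\Dom(A)$ with $x_k\to x$ and $Ax_k\to y$. Pass to the limit in $T(h)x_k-x_k=\int_0^hT(s)Ax_k\,ds$; this is legitimate because the integrands converge uniformly on $[0,h]$ by the bound $M_0$. This gives $T(h)x-x=\int_0^hT(s)y\,ds$. Dividing by $h$ and letting $h\downarrow0$ yields $x\in\Dom(A)$ and $Ax=y$.

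\textbf{Uniqueness.} Suppose two semigroups $T$ and $S$ share the generator $A$. Fix $x\in\Dom(A)$, $t>0$, and consider $\phi(s)\defeq T(t-s)S(s)x$ on $[0,t]$. Its derivative is
\[
\phi'(s)=-T(t-s)AS(s)x+T(t-s)S(s)Ax=0,
\]
using $S(s)x\in\Dom(A)$, $AS(s)x=S(s)Ax$, and a product-rule argument. Hence $T(t)x=\phi(0)=\phi(t)=S(t)x$, and density of $\Dom(A)$ together with boundedness of both operators extends this to all of $X$.

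The main obstacle is justifying this product rule. The factor $T(t-s)$ moves in the strong topology while $S(s)x$ moves in norm. I would split the difference quotient of $\phi$ as
\[
T(t-s-h)\,\frac{S(s+h)x-S(s)x}{h}+\frac{T(t-s-h)-T(t-s)}{h}\,S(s)x .
\]
The second term is handled by the differentiability of the orbit established above, since $S(s)x\in\Dom(A)$. The first term is handled by uniform local boundedness of $T$ plus strong continuity: the norm-convergent difference quotient is fed into an equibounded, strongly continuous family of operators. Everything else is routine.
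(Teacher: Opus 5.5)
Your proof is correct and is the standard textbook argument: averaged orbits give density and the identity $A\int_0^t T(s)x\,ds=T(t)x-x$, passing to the limit in $T(h)x-x=\int_0^h T(s)Ax\,ds$ gives closedness, and constancy of $s\mapsto T(t-s)S(s)x$ gives uniqueness. The paper does not prove this statement itself but quotes it from Engel--Nagel, whose proof proceeds along exactly these lines, so there is nothing to add.
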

\begin{definition}[Operator graph and graph norm]\label{def:graph}
The graph of a linear operator $A$ with domain $\Dom(A)$ in a Banach space $X$, is defined to be the subset $\cG(A)\defeq\{(x,Ax):x\in \Dom(A)\}$. The graph norm associated with $A$ is defined to be
\begin{equation}\label{eq:graph-norm}
    \|x\|_{\cG(A)}\defeq \|x\|+\|Ax\|, \qquad x\in \Dom(A).
\end{equation}
\end{definition}\label{def:core}
\begin{definition}[Graph core of an operator]
A subspace $\cD$ of $\Dom(A)$ that is dense in $\Dom(A)$ for the graph norm is called a core of $A$.
\end{definition}

\begin{lemma}[Resolvent core criterion]\label{lem:resolvent-core}
    $\cD \subset\Dom(A)$ is a graph core of $A$ if and only if $(\l-A)\cD$ is dense in $X$ for some $\l\in\r(A)$.
\end{lemma}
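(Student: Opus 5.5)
We need to prove the resolvent core criterion: $\cD\subset\Dom(A)$ is a graph core of $A$ if and only if $(\lambda-A)\cD$ is dense in $X$ for some $\lambda\in\rho(A)$. The plan is to exploit the fact that, for $\lambda\in\rho(A)$, the map $\lambda-A$ is a topological isomorphism from $\Dom(A)$ with the graph norm onto $X$. Density is preserved under such isomorphisms in both directions, and the equivalence follows.

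\emph{Step 1: the isomorphism.} Fix $\lambda\in\rho(A)$. Since $A$ is closed (Theorem~\ref{thm:generator-is-closed}), $(\Dom(A),\|\cdot\|_{\cG(A)})$ is a Banach space. The operator $\lambda-A:\Dom(A)\to X$ is bijective by definition of $\rho(A)$, and it is bounded for the graph norm, since
\[
\|(\lambda-A)x\|\leq \max(|\lambda|,1)\,\|x\|_{\cG(A)}.
\]
Its inverse $R(\lambda,A)$ is bounded in the graph norm, because $AR(\lambda,A)y=\lambda R(\lambda,A)y-y$, which gives
\[
\|R(\lambda,A)y\|_{\cG(A)}\leq (1+|\lambda|)\|R(\lambda,A)\|\,\|y\|+\|y\|.
\]
Alternatively one can invoke the open mapping theorem. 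Hence $\lambda-A$ is a linear homeomorphism from $(\Dom(A),\|\cdot\|_{\cG(A)})$ onto $X$.

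\emph{Step 2: the two directions.} If $\cD$ is graph-dense in $\Dom(A)$, its image under the continuous surjection $\lambda-A$ is dense in $(\lambda-A)\Dom(A)=X$; this holds for every $\lambda\in\rho(A)$, in particular for some. Conversely, suppose $(\lambda-A)\cD$ is dense in $X$ for some $\lambda\in\rho(A)$. Applying the continuous map $R(\lambda,A):X\to(\Dom(A),\|\cdot\|_{\cG(A)})$ sends this dense set onto $\cD$, and sends $X$ onto $\Dom(A)$. Therefore $\cD$ is graph-dense, i.e.\ a core.

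The only point requiring care is the graph-norm boundedness of the resolvent, which is the identity displayed in Step 1. Nothing else is an obstacle: no semigroup structure is needed beyond the closedness of $A$. In the application in Proposition~\ref{prop:ward-graph-core-fixed}, the argument is used with $\cD=R(\lambda,G_n)j_n(M)$, for which $(\lambda-G_n)\cD=j_n(M)$ is dense in $L^n(M)$.
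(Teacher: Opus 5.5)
Your proof is correct. The paper gives no proof of its own: it records this lemma in Appendix~\ref{appsubsec:semigrp} as a standard fact from the semigroup literature (Engel--Nagel). Your argument, that $\lambda-A$ is a topological isomorphism from $(\Dom(A),\|\cdot\|_{\cG(A)})$ onto $X$ so that density transfers in both directions, is the standard textbook proof.

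The only implicit point is that the ``only if'' direction needs $\rho(A)\neq\emptyset$ to conclude ``for some $\lambda$''. This holds here because $A$ generates a contraction semigroup, so $(0,\infty)\subset\rho(A)$ by Theorem~\ref{thm:hille-yosida}.
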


\begin{definition}\label{def:resolvent-appendix}
For a (not necessarily densely defined) closed linear operator $A:\Dom(A)\subset X\to X$, the resolvent set of $A$ is
\begin{equation}
    \r(A)\defeq \{\l\in\C:\l-A \text{ is bijective as a map }\Dom(A)\to X \}.
\end{equation}
The complement of the resolvent set, $\s(A)\defeq\C\backslash\r(A)$ is called the spectrum of $A$. For $\l\in\r(A)$, the inverse
\begin{equation}
    R(\l,A)\defeq (\l-A)^{-1},
\end{equation}
is a bounded operator on $X$ called the resolvent of $A$ (at point $\l$).
\end{definition}
The resolvent of the generator of a strongly continuous contractive semigroup has the following integral representation,
\begin{equation}
    R(\l,A)x = \lim_{t\to\I}\int_0^t e^{-\l s}T(s)x\,ds, \qquad \A x\in X,
\end{equation}
which we write as
\begin{equation}
    R(\l,A) = \int_0^\I e^{-\l s}T(s)\,ds.
\end{equation}
This relation can be heuristically understood as the statement that the resolvent is the Laplace transform of $T(t)$. Schematically, $T(t)x=e^{tA}x$, and the resolvent definition $R(\l,A)x= (\l-A)^{-1}x$ can be thought of as the Laplace transform $(\l-\a)^{-1}= \int_0^\I e^{-\l s}e^{\a s}ds$. \cite{hille1981functional}
\begin{theorem}[Hille-Yosida Generation Theorem]\label{thm:hille-yosida}
For a linear operator $(A,\Dom(A))$ on a Banach space $X$, the following are equivalent.
\begin{enumerate}
    \item $(A,\Dom(A))$ generates a strongly continuous contraction semigroup.
    \item $(A,\Dom(A))$ is closed, densely defined, and for all $\l>0$, one has $\l\in\r(A)$ and $\|\l R(\l,A)\|\leq 1.$
    \item $(A,\Dom(A))$ is closed, densely defined, and for all $\l\in\C$ with $\mathrm{Re}\,\l>0$, one has $\l\in\r(A)$ and $\|R(\l,A)\|\leq (\mathrm{Re}\,\l)^{-1}$.
\end{enumerate}
\end{theorem}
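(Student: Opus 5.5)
This is the standard Hille--Yosida theorem for contraction semigroups on a Banach space \(X\), so I will follow the classical argument. In the standing notation, \(T(t)\) is the semigroup, \(A\) its generator, and \(R(\lambda,A)\) the resolvent.

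\emph{(1)\(\Rightarrow\)(3).} Suppose \(A\) generates a strongly continuous contraction semigroup \(T(t)\). Closedness and dense domain are given by Theorem~\ref{thm:generator-is-closed}. For \(\mathrm{Re}\,\lambda>0\), I define
\[
R_\lambda x:=\int_0^\infty e^{-\lambda s}T(s)x\,ds
\]
as an improper Bochner integral. It converges absolutely because \(\|e^{-\lambda s}T(s)x\|\leq e^{-\mathrm{Re}\,\lambda\, s}\|x\|\), and this bound also gives \(\|R_\lambda\|\leq(\mathrm{Re}\,\lambda)^{-1}\).

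Next I check that \(R_\lambda\) is a two-sided inverse of \(\lambda-A\).
\begin{itemize}
\item Compute the difference quotient \(h^{-1}(T(h)-I)R_\lambda x\). After a change of variables it equals \(h^{-1}(e^{\lambda h}-1)R_\lambda x-e^{\lambda h}h^{-1}\int_0^h e^{-\lambda s}T(s)x\,ds\), which tends to \(\lambda R_\lambda x-x\) by strong continuity. Hence \(R_\lambda x\in\Dom(A)\) and \((\lambda-A)R_\lambda x=x\).
\item For \(x\in\Dom(A)\), \(A\) commutes with \(T(s)\), and closedness of \(A\) lets it pass under the integral. This gives \(R_\lambda(\lambda-A)x=x\).
\end{itemize}
So \(\lambda\in\rho(A)\) and \(R(\lambda,A)=R_\lambda\).

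\emph{(3)\(\Rightarrow\)(2).} This is immediate: restrict (3) to real \(\lambda>0\), which gives \(\|\lambda R(\lambda,A)\|\leq1\).

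\emph{(2)\(\Rightarrow\)(1).} This is the substantial direction, and I would use Yosida approximants.
\begin{enumerate}
\item Set \(A_m:=mAR(m,A)=m^2R(m,A)-m\), which is a bounded operator.
\item Show \(mR(m,A)x\to x\) as \(m\to\infty\). For \(x\in\Dom(A)\) this follows from \(mR(m,A)x-x=R(m,A)Ax\), whose norm is at most \(\|Ax\|/m\). Density of \(\Dom(A)\) together with the uniform bound \(\|mR(m,A)\|\leq1\) extends it to all \(x\in X\).
\item Deduce \(A_mx=mR(m,A)Ax\to Ax\) for \(x\in\Dom(A)\).
\item Define \(T_m(t):=e^{tA_m}\). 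Since \(\|e^{tm^2R(m,A)}\|\leq e^{tm}\), we get \(\|T_m(t)\|\leq e^{-tm}e^{tm}=1\), so each \(T_m\) is a contraction semigroup.
\item The resolvents commute, so \(A_m\), \(A_k\), \(T_m\), \(T_k\) all commute. The integral identity
\[
T_m(t)x-T_k(t)x=\int_0^t T_m(t-s)T_k(s)(A_m-A_k)x\,ds
\]
then gives \(\|T_m(t)x-T_k(t)x\|\leq t\|A_mx-A_kx\|\) for \(x\in\Dom(A)\).
\item This estimate shows that \(T_m(t)x\) is Cauchy, uniformly for \(t\) in compact sets, first for \(x\in\Dom(A)\) and then for all \(x\) by density and uniform contractivity. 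Define \(T(t)\) as the strong limit.
\end{enumerate}
Contractivity, the semigroup law, and strong continuity (a locally uniform limit of continuous maps) pass to the limit.

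Finally I must show that the generator \(B\) of \(T\) equals \(A\). For \(x\in\Dom(A)\), pass to the limit in \(T_m(t)x-x=\int_0^tT_m(s)A_mx\,ds\) to get \(T(t)x-x=\int_0^tT(s)Ax\,ds\). Hence \(A\subset B\). Since \(\lambda-A\) is surjective and \(\lambda-B\) is injective for \(\lambda>0\) (by (1)\(\Rightarrow\)(3) applied to \(B\)), it follows that \(A=B\).

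The main obstacle is this last direction, specifically the uniform Cauchy estimate for \(T_m\) and the identification \(B=A\). I would handle these exactly via the commutation identity and the domain-maximality argument above.
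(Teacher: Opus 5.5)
Your argument is correct and is the classical proof: the Laplace-transform representation of the resolvent gives (1)$\Rightarrow$(3), and the Yosida approximants $A_m=m^2R(m,A)-m$, the commuting-exponentials estimate $\|T_m(t)x-T_k(t)x\|\leq t\|A_mx-A_kx\|$ and the domain-maximality argument give (2)$\Rightarrow$(1). The paper does not prove this theorem itself; it quotes it from Engel--Nagel, whose proof follows essentially this same route, so there is no discrepancy with the paper.
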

\begin{lemma}\label{lem:resolvent-convergence}
Let $(A,\Dom(A))$ be the generator of a strongly continuous contraction semigroup $(T(s))_{s\geq0}$ on a Banach space $X$. Then, the following statements hold.
\begin{enumerate}
    \item $\lim_{\lambda\to\I}\l R(\l,A)x\to x $ for all $x\in X$.
    \item $T(s)R(\lambda,A)=R(\lambda,A)T(s)$ for all $s\geq0$, and $\lambda\in \rho(A)$. In particular, by Theorem~\ref{thm:hille-yosida}(2), this holds for all $\lambda>0$
    \item $\l AR(\l,A)x=\l R(\l,A)Ax\to Ax $ for all $x\in \Dom(A)$, and $\lambda\in \rho(A)$.
\end{enumerate}
\end{lemma}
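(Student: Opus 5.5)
The statement to prove is Lemma~\ref{lem:resolvent-convergence}: the three standard facts about the resolvent of the generator $A$ of a strongly continuous contraction semigroup. I will argue directly from the Laplace representation
\[
R(\lambda,A)x=\int_0^\infty e^{-\lambda s}T(s)x\,ds
\]
and from the Hille--Yosida bound $\|\lambda R(\lambda,A)\|\le 1$ (Theorem~\ref{thm:hille-yosida}). The order will be (2), then (3) on its own domain, then (1).

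\textbf{Statement (2).} Fix $x\in X$ and $s\ge0$. Since $T(s)$ is bounded, it commutes with the Bochner integral, so
\[
T(s)R(\lambda,A)x=\int_0^\infty e^{-\lambda t}T(s)T(t)x\,dt=\int_0^\infty e^{-\lambda t}T(t)T(s)x\,dt=R(\lambda,A)T(s)x .
\]
This covers every real $\lambda>0$, which is all that is used later. For general $\lambda\in\rho(A)$ there may be no Laplace representation. Instead I will first show $T(s)\Dom(A)\subset\Dom(A)$ with $AT(s)y=T(s)Ay$; this follows from the difference quotient, because $T(s)$ commutes with $T(h)$. Then, applied to $y=R(\lambda,A)x$,
\[
(\lambda-A)T(s)R(\lambda,A)x=T(s)x .
\]
Inverting $\lambda-A$ gives the claim.

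\textbf{Statement (3).} For $x\in\Dom(A)$, set $y=R(\lambda,A)x\in\Dom(A)$. Then
\[
(\lambda-A)y=x,\qquad \lambda y-x=Ay .
\]
Applying $R(\lambda,A)$ to $(\lambda-A)x$ gives $x=\lambda R(\lambda,A)x-R(\lambda,A)Ax$. Comparing the two expressions yields $AR(\lambda,A)x=R(\lambda,A)Ax$. The convergence $\lambda R(\lambda,A)Ax\to Ax$ is then statement (1) applied to the vector $Ax$.

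\textbf{Statement (1).} This is the main step. I will first treat $x\in\Dom(A)$ and then pass to all of $X$ by density and the uniform bound.

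For $x\in\Dom(A)$, the identity from (3) gives
\[
\lambda R(\lambda,A)x-x=R(\lambda,A)Ax,
\]
and therefore
\[
\|\lambda R(\lambda,A)x-x\|\le\lambda^{-1}\|Ax\|\to0 .
\]
For general $x\in X$ and $\epsilon>0$, density of $\Dom(A)$ (Theorem~\ref{thm:generator-is-closed}) lets us choose $y\in\Dom(A)$ with $\|x-y\|<\epsilon$. The bound $\|\lambda R(\lambda,A)\|\le1$ then gives
\[
\|\lambda R(\lambda,A)x-x\|\le 2\epsilon+\|\lambda R(\lambda,A)y-y\|,
\]
and the last term tends to zero by the domain case. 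Since $\epsilon$ is arbitrary, statement (1) follows.

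There is an alternative route to (1) that avoids the domain case: substitute $u=\lambda s$ and use strong continuity of $T$ at $0$ with dominated convergence. I do not expect a genuine obstacle. The only care needed is the order of the argument: the domain case of (1) must come from the algebraic identity in (3) rather than from (3)'s convergence claim, to avoid circularity.
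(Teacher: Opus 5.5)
Your proof is correct. The paper states this lemma without proof and defers to Engel--Nagel; your argument is exactly the standard textbook proof it relies on: the domain case of (1) from the identity $\lambda R(\lambda,A)x-x=R(\lambda,A)Ax$ with $\|R(\lambda,A)\|\le\lambda^{-1}$, the extension to all of $X$ by density and $\|\lambda R(\lambda,A)\|\le 1$, and (2) for every $\lambda\in\rho(A)$ via $T(s)\Dom(A)\subset\Dom(A)$ with $AT(s)y=T(s)Ay$.
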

\begin{definition}[Yosida resolvents and approximants]
Let $(A,\Dom(A))$ be the generator of a strongly continuous contraction semigroup on a Banach space $X$. We define the Yosida resolvents of $A$ to be
\begin{equation}
    R_m \defeq m(m-A)^{-1} = mR(m,A).
\end{equation}
The Yosida approximants of the generator $A$ are defined to be 
\begin{equation}
    A_m = R_mA.
\end{equation}
For general $x\in X$, we define Yosida approximants $x_m$ to be
\begin{equation}
    x_m \defeq R_mx.
\end{equation}
Note that by Lemma \ref{lem:resolvent-convergence}, we have
\begin{equation}
    A_mx\to Ax,\qquad\text{for }x\in \Dom(A).
\end{equation}
\end{definition}

\section{Analytical results for Haagerup \texorpdfstring{$L^p$}{Lp} spaces}\label{Sec:app-haagerupLp}

Throughout this subsection we assume the same notations and definitions as in Sec.~\ref{Sec:prelim}. \(L^p(M)\) denotes Haagerup's noncommutative
\(L^p\)-space and \(\operatorname{tr}:L^1(M)\to\mathbb C\) denotes the canonical functional on Haagerup's $L^1$ space.  Products are products of affiliated measurable operators.
We use the convention \(1/\infty=0\).

\begin{lemma}[H\"older products and cyclicity]\label{lem:holder-products-trace-cyclicity}
Let \(1\le p_1,\ldots,p_k\le\infty\) and suppose
\[
        \frac1r=\frac1{p_1}+\cdots+\frac1{p_k}\le1 .
\]
If \(a_j\in L^{p_j}(M)\), then
\[
        a_1a_2\cdots a_k\in L^r(M),
        \qquad
        \|a_1\cdots a_k\|_r
        \le
        \prod_{j=1}^k\|a_j\|_{p_j}.
\]
In particular, if \(\sum_j p_j^{-1}=1\), then \(a_1\cdots a_k\in L^1(M)\).
In this case cyclicity holds:
\[
        \operatorname{tr}(a_1a_2\cdots a_k)
        =
        \operatorname{tr}(a_2\cdots a_ka_1).
\]
\end{lemma}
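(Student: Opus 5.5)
The statement to be proved is Lemma~\ref{lem:holder-products-trace-cyclicity}: H\"older products land in $L^r(M)$ with the product norm bound, and $\tr$ is cyclic when the exponents sum to one. I would prove it by induction on $k$, using the two-factor statements already available: the generalized H\"older inequality of Lemma~\ref{lem:general-holder} and the two-factor cyclicity $\tr(ab)=\tr(ba)$ of Proposition~\ref{prop:standard-Haagerup-facts}(1).

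\emph{The membership and norm bound.} For $k=1$ there is nothing to prove. For $k\geq2$, set $1/s=\sum_{j=2}^k 1/p_j$. Then $1/s\leq 1/r\leq1$, so $s\in[1,\infty]$ (with $s=\infty$ if all $p_j=\infty$ for $j\ge2$, in which case the product is a bounded operator in $M=L^\infty(M)$). The induction hypothesis gives $a_2\cdots a_k\in L^s(M)$ with $\|a_2\cdots a_k\|_s\leq\prod_{j\geq2}\|a_j\|_{p_j}$. Since $1/r=1/p_1+1/s$, Lemma~\ref{lem:general-holder} applied to $a_1\in L^{p_1}(M)$ and $a_2\cdots a_k\in L^s(M)$ gives $a_1(a_2\cdots a_k)\in L^r(M)$ together with the product norm bound.

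The one subtle point is that products of $\tau$-measurable operators must be read as strong products (closures) in $L^0(\mathcal R,\tau)$. In that algebra the product is associative, so the bracketing in the induction is immaterial. I would state this explicitly, citing the $*$-algebra structure of $L^0(\mathcal R,\tau)$.

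\emph{Cyclicity.} Assume $\sum_j 1/p_j=1$. Put $a=a_1\in L^{p_1}(M)$ and $b=a_2\cdots a_k$. By the first part, $b\in L^{q}(M)$ with $1/q=1-1/p_1$. Proposition~\ref{prop:standard-Haagerup-facts}(1) then gives $\tr(ab)=\tr(ba)$, which is exactly $\tr(a_1a_2\cdots a_k)=\tr(a_2\cdots a_ka_1)$, again using associativity in $L^0$ to rebracket. That proposition is stated for $p\geq1$, $q\leq\infty$, which covers the edge cases $p_1=1$ (so $b\in M$) and $p_1=\infty$ (so $b\in L^1(M)$).

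The main obstacle is purely foundational: making sure the cited two-factor results apply with $\infty$ exponents and that rebracketing is legitimate. Both reduce to associativity of the measurable-operator algebra and to $L^\infty(M)=M$. No analytic difficulty is expected.
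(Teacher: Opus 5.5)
Your proposal is correct and follows the same route as the paper. The paper gets the norm bound from the generalized H\"older inequality; your induction on $k$ just spells out that iteration. For cyclicity the paper does exactly what you do: it groups $b=a_2\cdots a_k\in L^q(M)$ with $1/p_1+1/q=1$ and applies the two-factor identity $\tr(a_1b)=\tr(ba_1)$. Your remarks on associativity in $L^0(\mathcal R,\tau)$ and on the endpoint exponents are careful details that the paper leaves implicit.
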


\begin{proof}
The norm estimate is the generalized H\"older inequality for Haagerup
\(L^p\)-spaces.  For cyclicity, put \(b=a_2\cdots a_k\).  By H\"older,
\(b\in L^q(M)\), where \(1/p_1+1/q=1\).  The two-factor trace pairing gives
\(\operatorname{tr}(a_1b)=\operatorname{tr}(ba_1)\).  This is the desired
cyclic permutation.
\end{proof}

\begin{lemma}[Integer power continuity]
\label{lem:integer-power-continuity}
Let \(N\geq 1\) be an integer. If \(x_m,x\in L^N(M)_+\) and
\begin{equation*}
    \|x_m-x\|_N\longrightarrow 0,
\end{equation*}
then
\begin{equation*}
    \|x_m^N-x^N\|_1\longrightarrow 0.
\end{equation*}
\end{lemma}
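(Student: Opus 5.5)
The plan is to telescope the difference $x_m^N-x^N$ and bound each piece with the generalized H\"older inequality (Lemma~\ref{lem:holder-products-trace-cyclicity}). Positivity of $x_m$ and $x$ is never used, so the argument works for arbitrary elements of $L^N(M)$. The telescoping identity is
\[
x_m^N-x^N=\sum_{k=0}^{N-1}x_m^{k}\,(x_m-x)\,x^{N-1-k}.
\]
It is purely algebraic and holds for the products of $\tau$-measurable operators in $L^0(\mathcal R,\tau)$. Those products are taken in the strong sense, i.e.\ as closures, and $L^0(\mathcal R,\tau)$ is a $*$-algebra under these operations. So the identity holds as an equality in $L^0(\mathcal R,\tau)$, hence in $L^1(M)$. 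For $N=1$ the statement is trivial, so we may take $N\geq2$.

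Each term is a product of $N$ factors, each lying in $L^N(M)$. Since $\sum_{j=1}^N 1/N=1$, Lemma~\ref{lem:holder-products-trace-cyclicity} with $p_1=\cdots=p_N=N$ and $r=1$ gives
\[
\bigl\|x_m^{k}(x_m-x)x^{N-1-k}\bigr\|_1\leq \|x_m\|_N^{k}\,\|x_m-x\|_N\,\|x\|_N^{N-1-k}.
\]
By the triangle inequality, $\|x_m\|_N\leq \|x\|_N+\|x_m-x\|_N$, so $\sup_m\|x_m\|_N<\infty$ once $m$ is large. Summing over $k$ then yields
\[
\|x_m^N-x^N\|_1\leq N\Bigl(\sup_m\|x_m\|_N+\|x\|_N\Bigr)^{N-1}\|x_m-x\|_N\longrightarrow0.
\]

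The only point needing care is the identification of the operator products. The powers $x_m^k$ appearing in the Leibniz expansion are the strong products in $L^0(\mathcal R,\tau)$. For positive $x$, these agree with the functional-calculus powers used elsewhere in the paper, for instance in $\tr(x^n)$. The H\"older lemma is stated for exactly these products, so no further obstacle arises.
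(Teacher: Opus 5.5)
Your proposal is correct and is essentially the paper's proof. It uses the same telescoping identity in $L^0(\mathcal R,\tau)$; the paper writes the summands as $x_m^{N-1-k}(x_m-x)x^k$, which is only a reindexing. It then applies the same $N$-fold H\"older estimate from Lemma~\ref{lem:holder-products-trace-cyclicity} with all exponents equal to $N$, together with a uniform bound on $\|x_m\|_N$, to get $\|x_m^N-x^N\|_1\le N C^{N-1}\|x_m-x\|_N$.
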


\begin{proof}
Since \(x_m\to x\) in \(L^N(M)\), the sequence
\(\bigl(\|x_m\|_N\bigr)_m\) is bounded. Set
\begin{equation*}
    C
    :=
    \max\left\{
        1,\|x\|_N,\sup_m\|x_m\|_N
    \right\}
    <\infty.
\end{equation*}

All products below are taken in the algebra
\(L^0(\cR,\tau)\) of measurable operators affiliated with the crossed
product. The algebraic telescoping identity follows by direct
expansion:
\begin{align*}
    \sum_{k=0}^{N-1}
        x_m^{N-1-k}(x_m-x)x^k
    &=
    \sum_{k=0}^{N-1}
        \left(
            x_m^{N-k}x^k
            -
            x_m^{N-1-k}x^{k+1}
        \right)
    \\
    &=
    x_m^N
    +
    \sum_{k=1}^{N-1}x_m^{N-k}x^k
    -
    \sum_{k=1}^{N-1}x_m^{N-k}x^k
    -
    x^N
    \\
    &=x_m^N-x^N.
\end{align*}

For each \(k=0,\ldots,N-1\), the summand
\begin{equation*}
    x_m^{N-1-k}(x_m-x)x^k
\end{equation*}
is a product of exactly \(N\) elements of \(L^N(M)\): \(N-1-k\)
copies of \(x_m\), followed by \(x_m-x\), followed by \(k\) copies of
\(x\). 
Applying Lemma~\ref{lem:holder-products-trace-cyclicity} with
\begin{equation*}
    p_1=\cdots=p_N=N,
    \qquad
    \frac{1}{N}+\cdots+\frac{1}{N}=1,
\end{equation*}
shows that this product belongs to \(L^1(M)\) and gives
\begin{align*}
    \left\|
        x_m^{N-1-k}(x_m-x)x^k
    \right\|_1
    &\leq
    \|x_m\|_N^{N-1-k}
    \|x_m-x\|_N
    \|x\|_N^k
    \\
    &\leq
    C^{N-1}\|x_m-x\|_N.
\end{align*}
Consequently,
\begin{align*}
    \|x_m^N-x^N\|_1
    &\leq
    \sum_{k=0}^{N-1}
    \left\|
        x_m^{N-1-k}(x_m-x)x^k
    \right\|_1
    \\
    &\leq
    N C^{N-1}\|x_m-x\|_N
    \longrightarrow 0.
\end{align*}
\end{proof}

\begin{lemma}[Sandwich continuity]\label{lem:sandwich-continuity}
Let \(h\in L^1(M)_+\), let \(1\le p<\infty\), and let
\((a_i)_i\subset M\) be uniformly bounded with \(a_i\to a\) strongly.  Then
\[
        h^{1/(2p)}a_i h^{1/(2p)}
        \longrightarrow
        h^{1/(2p)}a h^{1/(2p)}
        \qquad\text{in }L^p(M).
\]
\end{lemma}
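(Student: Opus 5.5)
The natural route is to reduce everything to the two-sided Hölder estimate of Lemma~\ref{lem:holder-products-trace-cyclicity} together with a density argument. Write $k\defeq h^{1/(2p)}\in L^{2p}(M)_+$. The difference is
\[
 k a_i k - k a k = k(a_i-a)k ,
\]
and we want $\|k(a_i-a)k\|_p\to0$. Setting $b_i\defeq a_i-a$, the family $(b_i)$ is uniformly bounded, say $\|b_i\|\le C$, and $b_i\to0$ strongly. Hölder alone gives only the bound $\|k b_i k\|_p\le C\|k\|_{2p}^2$, which does not decay, so we need to use the strong convergence.

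The first step is a one-sided estimate. Use $\|kb_ik\|_p\le \|kb_i\|_{2p}\|k\|_{2p}$, which reduces the claim to showing that $\|b_i^*k\|_{2p}=\|kb_i\|_{2p}\to0$. I would therefore aim at the following one-sided statement: if $b_i\to0$ strongly and boundedly, then $\|b_i k\|_{2p}\to0$. By symmetry I can instead use $\|kb_ik\|_p\le\|k\|_{2p}\|b_ik\|_{2p}$, which needs only strong convergence of $b_i$ and no adjoint.

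The second step proves the one-sided statement by density. For $2p\ge2$ the function $x\mapsto\|b_ix\|_{2p}$ satisfies $\|b_i x\|_{2p}\le C\|x\|_{2p}$. So it suffices to prove $\|b_iy\|_{2p}\to0$ for $y$ in a dense subset of $L^{2p}(M)$, and then use an $\varepsilon/3$ argument with the uniform bound $C$. Take $y=j_{2p}(c)=h^{1/(4p)}ch^{1/(4p)}$, and more conveniently $y=h^{1/(2p)}c$ with $c$ ranging over a dense set. For $p=1$, i.e.\ $L^2$, the space $L^2(M)$ is identified with the standard Hilbert space via $h^{1/2}c\leftrightarrow c\Omega$ (or $x h^{1/2}\leftrightarrow$ the GNS vector), and left multiplication by $b_i$ is the standard representation. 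Then $\|b_i y\|_2=\|b_i\xi\|_{\mathcal H}\to0$ directly from strong convergence. For general $p$, interpolate: $\|b_i y\|_{2p}\le \|b_iy\|_2^{\theta}\,\|b_iy\|_\infty^{1-\theta}$. This works in Haagerup spaces only for suitable elements, so instead I would write $b_i y$ with $y=h^{1/(2p)}$ times a factor and use the noncommutative Hölder/interpolation inequality
\[
\|b_i h^{1/(2p)}\|_{2p}^{2p}=\tr\bigl(h^{1/(2p)}b_i^*b_i h^{1/(2p)}\bigr)^{p}\le \bigl(\tr(h^{1/(2p)}|b_i|^2h^{1/(2p)})^{\ldots}\bigr),
\]
with the precise tool being the Araki--Lieb--Thirring/Kosaki inequality $\|h^{1/(2p)}Ah^{1/(2p)}\|_p^p\le\tr(h^{1/2}A^{p}h^{1/2})$-type bounds for $0\le A\le C^2$, applied to $A=b_i^*b_i$. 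Because $A^{p}\le C^{2p-2}A$, the right side is at most $C^{2p-2}\omega(b_i^*b_i)=C^{2p-2}\|b_i\Omega\|^2\to0$.

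Step three assembles the pieces: $\|k b_i k\|_p\le \|k\|_{2p}\,\|b_i k\|_{2p}$ and $\|b_ik\|_{2p}^{2p}=\|k b_i^*b_i k\|_p^p\le C^{2p-2}\omega(b_i^*b_i)\to0$. Here $\omega=\psi_h$ is the functional of $h$, so $\omega(b_i^*b_i)\to0$ by strong convergence in any representation in which $h$ gives a normal state; normality turns strong convergence into convergence of $\omega(b_i^*b_i)$ on bounded nets. The main obstacle is justifying the Lieb--Thirring-type inequality $\tr\bigl((kAk)^p\bigr)\le \tr(k^{p}A^pk^{p})$ in Haagerup $L^p$ with $k^{2p}=h$. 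I would cite Kosaki's or Hiai's version of Araki--Lieb--Thirring for Haagerup spaces. Failing that, I would reduce to the case $p$ an integer, which is all the paper needs, and bound $\tr((kAk)^p)\le\|A\|^{p-1}\tr(kAk\,k^{2p-2})=\|A\|^{p-1}\tr(hA)$ by repeatedly using $kAk\le\|A\|k^2$ together with operator monotonicity inside the trace pairing.
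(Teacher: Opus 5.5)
Your proof is correct, and its skeleton matches the paper's. Both use H\"older with the bounded factor to the left of $h^{1/(2p)}$,
\[
\bigl\|h^{1/(2p)}(a_i-a)h^{1/(2p)}\bigr\|_p\le\|h^{1/(2p)}\|_{2p}\,\bigl\|(a_i-a)h^{1/(2p)}\bigr\|_{2p},
\]
so only strong (not strong-$*$) convergence is needed.

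The difference is in how you get the one-sided convergence $(a_i-a)h^{1/(2p)}\to0$ in $L^{2p}(M)$. The paper simply cites Junge's Lemma 2.3. You instead derive it from three ingredients: the identity $\|bk\|_{2p}^{2p}=\tr\bigl((kb^*bk)^p\bigr)$, the Araki--Lieb--Thirring inequality in Haagerup's $L^p$ (Kosaki's von Neumann algebra version), and $A^p\le\|A\|^{p-1}A$. This gives the explicit estimate
\[
\bigl\|(a_i-a)h^{1/(2p)}\bigr\|_{2p}\le C^{1-1/p}\,\tr\bigl(h\,(a_i-a)^*(a_i-a)\bigr)^{1/(2p)},
\]
so the lemma reduces to $\psi_h(b_i^*b_i)\to0$ for a bounded strongly null net and the normal positive functional $\psi_h$. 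The paper's proof is shorter, but it outsources exactly this analytic content to the citation; yours makes it transparent, modulo ALT. The unfinished density/interpolation display in your second step is not used in step three and should be deleted.

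One caveat concerns your integer-$p$ fallback: it does not follow from ``repeated monotonicity'' alone once $p\ge3$. Set $X=kAk\le\|A\|k^2$. The first step, $\tr(X^p)\le\|A\|\,\tr(kX^{p-1}k)$, is fine. The second step, however, yields $\|A\|^2\tr\bigl((kX^{(p-2)/2}k)^2\bigr)$ rather than $\|A\|^2\tr(k^2X^{p-2}k^2)$. Passing from one to the other needs an extra Cauchy--Schwarz step in $L^2(M)$, which is itself an instance of ALT, and later steps produce longer alternating words. So the fallback is really ALT in disguise, and you should rely on the ALT citation directly.
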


\begin{proof}
By \cite[Lemma 2.3]{Junge:2002osf},
\[
        (a_i-a)h^{1/(2p)}
        \longrightarrow0
        \qquad\text{in }L^{2p}(M).
\]
Hence H\"older's inequality gives
\[
\begin{aligned}
\left\|
        h^{1/(2p)}(a_i-a)h^{1/(2p)}
\right\|_p
&\le
        \|h^{1/(2p)}\|_{2p}
        \,
        \|(a_i-a)h^{1/(2p)}\|_{2p}
        \longrightarrow0 .
\end{aligned}
\]
\end{proof}

\section{Analytical results used for semigroup and modular smoothing}\label{Sec:misc-results-Sec45}

In this section, we collect some technical results used in Sec.~\ref{sec:cyclic-ward} to construct the semigroup and null smoothing. We adopt the notations and definitions from Sec.~\ref{Sec:prelim} and the HSMI structure from Sec.~\ref{Sec:HSMI}. Throughout this section $M_B\subset M_A=:M$ is a HSMI with common faithful normal state $\omega$, and the null translation endomorphism \[T_c: M_A\longmapsto M_c\subset M_A, \quad T_c(a)=U_{-c}aU_c, \quad a\in M_A.\] We denote the modular flow with respect to $\omega$ by $\sigma^\omega_t$. 

Let $L^n(M)$ denote the Haagerup $L^n$ space. We also recall the dense embedding of $M$ into $L^n(M)$ denoted by $j_n(M)$ in Definition~\ref{def:symmetricembedding}.

\subsection{Generator of the null translation endomorphism}
 
Recall the definition of the generator, $\delta$, of the null translation endomorphism $T_r$ from \eqref{eq:delta-dom-def-main}:
\begin{equation}\label{eq:delta-dom-def}
    \d a\defeq \lim_{r\downarrow0}\frac{T_r(a)-a}{r}, \quad \Dom(\d)\defeq\lbr a\in M:\lim_{r\downarrow0}\frac{T_r(a)-a}{r} \ \text{exists in operator norm} \rbr.
\end{equation}

The generator $\delta$ acts as a derivation on its domain.
\begin{lemma}[Derivation on the domain]\label{lem:derivation-on-dom}
$\Dom(\delta)$ is a unital $*$-subalgebra of $M$, and
    \begin{equation}\label{eq:delta-derivation-domain}
        \delta(ab)= \delta(a)b+a\delta(b),\quad a,b\in \Dom(\delta).
    \end{equation}
\end{lemma}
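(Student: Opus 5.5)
The plan is a direct check from the definition of $\Dom(\delta)$ in \eqref{eq:delta-dom-def}: show that $\Dom(\delta)$ is closed under the unital $*$-algebra operations, and obtain the Leibniz rule from the multiplicativity of $T_r$. Throughout I will use only that each $T_r$ is a unital $*$-endomorphism, $T_r(a)=U_{-r}aU_r$, with $\|T_r(a)\|=\|a\|$.

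\textbf{Unit, linearity, adjoints.} Since $T_r(1)=1$, the difference quotient of $1$ vanishes, so $1\in\Dom(\delta)$ and $\delta(1)=0$. Linearity of $T_r$ makes $\Dom(\delta)$ a subspace on which $\delta$ is linear. For adjoints, $T_r(a^*)=T_r(a)^*$ gives
\[
\frac{T_r(a^*)-a^*}{r}=\Bigl(\frac{T_r(a)-a}{r}\Bigr)^*,
\]
and the involution is isometric in operator norm. Hence $a^*\in\Dom(\delta)$ with $\delta(a^*)=\delta(a)^*$.

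\textbf{Products.} For $a,b\in\Dom(\delta)$, multiplicativity gives the exact identity
\[
\frac{T_r(ab)-ab}{r}=\frac{T_r(a)-a}{r}\,T_r(b)+a\,\frac{T_r(b)-b}{r}.
\]
The second term tends to $a\,\delta(b)$ in norm. For the first term I will write
\[
\frac{T_r(a)-a}{r}\,T_r(b)-\delta(a)\,b=\Bigl(\frac{T_r(a)-a}{r}-\delta(a)\Bigr)T_r(b)+\delta(a)\bigl(T_r(b)-b\bigr).
\]
The first piece is bounded by $\|b\|$ times a quantity tending to zero, using $\|T_r(b)\|=\|b\|$.

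The only point needing care is that $T_r(b)\to b$ in operator norm, not merely strongly. This holds because $b\in\Dom(\delta)$: $\|T_r(b)-b\|=r\,\bigl\|\tfrac{T_r(b)-b}{r}\bigr\|$, and the quotient converges in norm and so is bounded for small $r$. This norm continuity is the key step, because $T_r$ is not norm-continuous on all of $M$.

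Combining the two terms, the difference quotient of $ab$ converges in norm to $\delta(a)b+a\delta(b)$. This proves $ab\in\Dom(\delta)$ and \eqref{eq:delta-derivation-domain}.
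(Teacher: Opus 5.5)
Your proposal is correct and follows the paper's proof essentially step for step: the same product identity for the difference quotient, the same appeal to $T_r(a^*)=T_r(a)^*$ for adjoints, and unitality of $T_r$ for the unit. Your one addition is to justify explicitly that $T_r(b)\to b$ in operator norm because $b\in\Dom(\delta)$, which the paper uses without comment.
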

\begin{proof}
    To establish the derivation property, let $a,b\in M$ both lie in $\Dom(\delta)$ with operator-norm-convergent difference quotients.  Multiplicativity of $T_r$ gives the algebraic identity
\begin{equation*}
  \frac{T_r(ab)-ab}{r}
  =\frac{T_r(a)-a}{r}\,T_r(b)+a\,\frac{T_r(b)-b}{r}\,.
\end{equation*}
The factor $T_r(b)$ converges to $b$ in operator norm and the difference quotient on $a$ converges to $\delta a$ in operator norm, so the first summand converges to $\delta(a)b$.  The second summand converges to $a\,\delta(b)$ by left multiplication by the fixed element $a\in M$.  Hence $ab\in\Dom(\delta)$ with $\delta(ab)=\delta(a)b+a\,\delta(b)$.

Using the fact that $T_r(a)^*=T_r(a^*)$ and the definition of the domain, it is straightforward to see that
\begin{equation*}
    a\in \Dom(\delta) \Rightarrow a^*\in \Dom(\delta), \quad \delta(a^*)=(\delta a)^*\,.
\end{equation*}
Finally, $\Dom(\delta)$ is unital since $T_r$ preserves the identity operator. All of the above combined with the linearity of $T_r$ establishes the claim that $\Dom(\delta)$ is a unital $*$-subalgebra of $M$.
\end{proof}

The following Lipschitz estimate is useful for dominated convergence arguments involving $T_r(a_f)$ and $\delta$.
\begin{lemma}[Lipschitz bound for null-smoothed elements]
\label{lem:lipschitz-af}
Let $a\in M$ and $f\in C_{c,0}^\infty((0,\I))$, and let
\[
  a_f=\int_0^\I f(t)\,T_t(a)\,dt,\qquad
  \delta a_f=-\int_0^\I f'(t)\,T_t(a)\,dt
\]
be the ultraweak integrals defined in \eqref{eq:def-null-smearing} and \eqref{eq:delta-af-defn}.  Then the
integrated identity
\begin{equation}
\label{eq:integrated-difference-af}
  T_u(a_f)-a_f=\int_0^u T_r(\delta a_f)\,dr\,,\qquad u\ge 0,
\end{equation}
holds in $M$, where the right-hand side is an ultraweak integral.  In particular, $a_f$
satisfies the uniform Lipschitz estimate
\begin{equation}
\label{eq:lipschitz-af-quotient}
  \left\|\frac{T_u(a_f)-a_f}{u}\right\|\le\|\delta a_f\|,\qquad u>0\,.
\end{equation}
\end{lemma}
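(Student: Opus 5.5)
We prove the integrated identity \eqref{eq:integrated-difference-af} first and then read off \eqref{eq:lipschitz-af-quotient} from it. Both sides of \eqref{eq:integrated-difference-af} are elements of $M$, and $M=(M_*)^*$. It is therefore enough to test the identity against an arbitrary $\varphi\in M_*$ and show
\[
\varphi\bigl(T_u(a_f)\bigr)-\varphi(a_f)=\int_0^u \varphi\bigl(T_r(\delta a_f)\bigr)\,dr .
\]

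We begin by evaluating the right-hand side. Each $T_r$ is normal, so $\varphi\circ T_r\in M_*$. It can therefore be passed inside the ultraweak integral defining $\delta a_f$, which gives
\[
\varphi\bigl(T_r(\delta a_f)\bigr)=-\int_0^\I f'(t)\,\varphi\bigl(T_{r+t}(a)\bigr)\,dt .
\]
Here we used $T_rT_t=T_{r+t}$. Extending $f$ by zero to $\mathbb R$ and substituting $s=r+t$ turns this into
\[
-\int_0^\I f'(s-r)\,\varphi\bigl(T_s(a)\bigr)\,ds .
\]
The function $s\mapsto\varphi(T_s(a))$ is bounded by $\|\varphi\|\|a\|$, and it is measurable because it is continuous: $T_s(a)=U_{-s}aU_s$ with $U$ strongly continuous. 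The map $(r,s)\mapsto f'(s-r)\varphi(T_s(a))$ is bounded with bounded support on $[0,u]\times\R$, so Fubini applies. Using $-\int_0^u f'(s-r)\,dr = f(s-u)-f(s)$, we obtain
\[
\int_0^u\varphi\bigl(T_r(\delta a_f)\bigr)\,dr=\int_\R\bigl(f(s-u)-f(s)\bigr)\varphi\bigl(T_s(a)\bigr)\,ds .
\]

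Next we evaluate the left-hand side with the same manipulation. Normality of $T_u$ gives
\[
\varphi\bigl(T_u(a_f)\bigr)=\int f(t)\,\varphi\bigl(T_{u+t}(a)\bigr)\,dt=\int f(s-u)\,\varphi\bigl(T_s(a)\bigr)\,ds ,
\]
and $\varphi(a_f)=\int f(s)\varphi(T_s(a))\,ds$. Subtracting matches the right-hand side. Since $\varphi$ was arbitrary, \eqref{eq:integrated-difference-af} holds, and it also shows that the right-hand side is a well-defined ultraweak integral.

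The Lipschitz bound \eqref{eq:lipschitz-af-quotient} now follows from the norm estimate for ultraweak integrals already used in Proposition~\ref{lem:smooth-core-delta}:
\[
\left\|\int_0^u T_r(\delta a_f)\,dr\right\|\le \int_0^u\|T_r(\delta a_f)\|\,dr = u\|\delta a_f\| .
\]
The equality uses that $T_r$ is implemented by unitary conjugation and is therefore isometric. Dividing by $u$ gives the claim.

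The only delicate point is justifying that $\varphi\circ T_r$ can be moved through the ultraweak integral. This rests solely on normality of $T_r$: $\varphi\circ T_r$ is again a normal functional, and by definition the ultraweak integral satisfies $\psi\bigl(\int F\bigr)=\int\psi(F)$ for every $\psi\in M_*$. The measurability needed for Fubini is covered by the weak continuity of $s\mapsto T_s(a)$ noted above.
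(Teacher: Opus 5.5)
Your proof is correct and follows the same route as the paper: pass $T_u$ and $T_r$ through the ultraweak integrals using normality, extend $f$ by zero, substitute, and exchange the order of integration by Fubini, then bound the ultraweak integral by $u\|\delta a_f\|$. The only difference is that you pair with an arbitrary $\varphi\in M_*$ before manipulating, which makes the Fubini and measurability steps fully explicit, whereas the paper performs the same manipulations directly on the $M$-valued integrals.
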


\begin{proof}
Extend $f$ to all of $\mathbb R$ by zero; the extension is smooth because $f$ vanishes in a neighbourhood of zero.  
The $*$-endomorphism $T_u$ is normal on $M$, so it commutes with the ultraweak integral defining $a_f$:
\begin{equation*}
  T_u(a_f)=\int_0^\I f(t)\,T_{t+u}(a)\,dt=\int_u^\I f(v-u)\,T_{v}(a)\,dv
  =\int_0^\I f(v-u)\,T_{v}(a)\,dv.
\end{equation*}
This gives us
\begin{equation*}
  T_u(a_f)-a_f
  =\int_0^\I (f(v-u)-f(v))\,T_{v}(a)\,dv
  =-\int_0^\I\!\! \lb\int_0^u f'(v-r)\,dr \rb\,T_{v}(a)\,dv\,.
\end{equation*}
The integrand on the right is jointly measurable and majorized by
$|f'(v-r)|\,\|a\|$, with
\[
    \int_0^u \int_0^\I|f'(v-r)|\,dv\,dr=u\,\|f'\|_{L^1}<\I.
\]
So, by Fubini's theorem, the order of integration may be exchanged.  Substituting $w=v-r$ in the inner integral, and using $f'(w)=0$ for $w<0$ from the smooth zero-extension of $f$,
\begin{equation*}
  \int_0^\I f'(v-r)\,T_{v}(a)\,dv
  =\int_0^\I f'(w)\,T_{w+r}(a)\,dw
  =T_r\!\biggl(\int_0^\I f'(w)\,T_w(a)\,dw\biggr)
  =-T_r(\delta a_f),
\end{equation*}
where the second equality again uses normality of $T_r$ to commute it with the ultraweak
integral.  This proves~\eqref{eq:integrated-difference-af}.

Now~\eqref{eq:lipschitz-af-quotient} follows from
contractivity of $T_r$: $\|T_r(\delta a_f)\|\le\|\delta a_f\|$
for all $r\ge 0$, and the triangle inequality applied to the ultraweak integral
in~\eqref{eq:integrated-difference-af} proves the claim:
\begin{equation*}
  \|T_u(a_f)-a_f\|\le\int_0^u\|T_r(\delta a_f)\|\,dr\le u\,\|\delta a_f\|\,.
\end{equation*}
\end{proof}

\subsection{Compatibility of ultraweak and Bochner integrals with the dense embedding}

The following result shows that ultraweak and Bochner integrals are compatible with the embedding.
\begin{lemma}[Compatibility of $j_n$ with ultraweak and Bochner integrals]\label{lem:jn-compatible-weak}
Let $(\mathcal S,\nu)$ be a finite complex measure space and let $s\mapsto a_s \in M$ be bounded and ultraweakly measurable. Assume that $j_n(a_s)$ is Bochner integrable in $L^n(M)$, and define
\begin{equation}
    a \defeq \int_{\mathcal S} a_s \,d\nu(s)
\end{equation}
as an ultraweak integral. Then
\begin{equation} \label{eq:jn-compatible-int}
    j_n(a) = \int_{\mathcal S} j_n(a_s) \, d\nu(s) \qquad\text{in }L^n(M).
\end{equation}
\end{lemma}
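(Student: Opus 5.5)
The plan is to test both sides of \eqref{eq:jn-compatible-int} against the dense dual core $j_{q_n}(M)\subset L^{q_n}(M)=L^n(M)^*$. Write $I\defeq\int_{\mathcal S} j_n(a_s)\,d\nu(s)$ for the Bochner integral. Bounded linear functionals commute with Bochner integrals, and $y\mapsto \tr(y\,j_{q_n}(b))$ is such a functional on $L^n(M)$. Hence, for every $b\in M$,
\[
\tr\bigl(I\,j_{q_n}(b)\bigr)=\int_{\mathcal S}\tr\bigl(j_n(a_s)j_{q_n}(b)\bigr)\,d\nu(s).
\]

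Next I rewrite the integrand as a normal functional evaluated at $a_s$. By H\"older cyclicity (Lemma~\ref{lem:holder-products-trace-cyclicity}),
\[
\tr\bigl(j_n(a_s)j_{q_n}(b)\bigr)=\tr\bigl(h^{1/(2n)}a_sh^{1/(2n)}h^{\eta_n}bh^{\eta_n}\bigr)=\tr\bigl(a_s\,h^{1/2}bh^{1/2}\bigr).
\]
This uses $1/(2n)+\eta_n=1/2$. Here $k_b\defeq h^{1/2}bh^{1/2}\in L^1(M)$, so $\varphi_b(x)\defeq\tr(xk_b)$ is a normal functional on $M$ (Proposition~\ref{prop:standard-Haagerup-facts}(2)). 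By the definition of the ultraweak integral,
\[
\int_{\mathcal S}\varphi_b(a_s)\,d\nu(s)=\varphi_b(a)=\tr\bigl(j_n(a)j_{q_n}(b)\bigr),
\]
where the last equality reverses the cyclicity step. Scalar integrability is automatic: $|\varphi_b(a_s)|\le\|k_b\|_1\sup_s\|a_s\|$ and $\nu$ is finite.

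It follows that $I-j_n(a)\in L^n(M)$ annihilates $j_{q_n}(M)$. This subspace is norm dense in $L^{q_n}(M)$ by Definition~\ref{def:symmetricembedding}, so it separates points of $L^n(M)$ under the duality of Theorem~2.5(3). Therefore $I=j_n(a)$.

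The only delicate point is the exponent bookkeeping in the cyclic rearrangement. Each grouped factor must lie in the correct $L^p$ space: $h^{1/(2n)}\in L^{2n}$, $h^{\eta_n}\in L^{2q_n}$, and $b,a_s\in L^\infty$. With these memberships, Lemma~\ref{lem:holder-products-trace-cyclicity} applies to a product whose exponents sum to $1$. Positive powers of $h$ also combine additively, $h^{\alpha}h^{\beta}=h^{\alpha+\beta}$, since they are functional calculus of the same positive operator.
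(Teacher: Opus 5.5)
Your proof is correct and follows the paper's argument: test both sides against the dense dual core $j_{q_n}(M)$, commute the trace pairing with the Bochner integral and the normal functional $\varphi_b$ with the ultraweak integral, and conclude by separation of points. The only difference is how you show $\varphi_b$ is normal. You do it intrinsically, via H\"older cyclicity $\tr\bigl(j_n(c)j_{q_n}(b)\bigr)=\tr\bigl(c\,h^{1/2}bh^{1/2}\bigr)$ and $L^1(M)\cong M_*$. The paper instead identifies $\varphi_b$ as the vector functional $\langle\Omega,\,\cdot\,J_\omega b^*\Omega\rangle$ through the pairing formula \eqref{eq:core-pairing}. Your route is slightly more self-contained, since it avoids the analytic-element density extension the paper needs to establish that formula for general $b\in M$.
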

\begin{proof}
For $b\in M$, set 
\begin{equation}
\label{eq:pairing-functional}
    \varphi_b:M\to\mathbb C,\qquad
    \varphi_b(c)\defeq \tr\bigl(j_n(c)\,j_{q_n}(b)\bigr).
\end{equation}
By the pairing formula \eqref{eq:core-pairing},
\begin{equation*}
    \varphi_b(c)=\langle c^*\Omega,J_\omega b^*\Omega\rangle=\langle\Omega,c\,J_\omega b^*\Omega\rangle\,,
\end{equation*}
which is the vector functional $\omega_{\Omega,J_\omega b^*\Omega}\in M_*$, hence normal.

Pairing the left hand side of \eqref{eq:jn-compatible-int} with $j_{q_n}(b)$, gives by normality of $\varphi_b$
\begin{equation*}
    \varphi_b \lb \int_{\mathcal S} a_s\, d\nu(s)\rb=\int_{\mathcal S} \varphi_b(a_s)\, d\nu(s). 
\end{equation*}
Pairing the right hand side gives the same result, since the trace pairing is a continuous linear functional on $L^n(M)$ and thus commutes with the Bochner integral \cite[Proposition C.4]{engel.nagel:00:one-parameter}.
Since $j_{q_n}(M)$ is dense in $L^{q_n}(M)$, and the pairing separates points of the dual $L^n(M)$, the two $L^n(M)$ elements agree,  establishing \eqref{eq:jn-compatible-int}. 
\end{proof}

\subsection{Modular smoothing}

For $\varepsilon>0$, define
\begin{equation}\label{eq:gamma-appendix}
  \gamma_\varepsilon(s)\defeq \frac{1}{\sqrt{2\pi}\,\varepsilon}
  \exp\left(-\frac{s^2}{2\varepsilon^2}\right),
  \qquad s\in\mathbb R\,.
\end{equation}
Thus $\int_{\mathbb R}\gamma_\varepsilon(s)\,ds=1$, and the Gaussian measures $\g_\ve(s)\,ds$ form an approximate identity (see Definition~\ref{def:approx-id}). We use this approximate identity together with the ultraweak integral to construct modular analytic approximations for any operator in the algebra as follows.

\begin{lemma}[Modular-Gaussian smearing produces entire elements]
\label{lem:gaussian-modular-entire}
Let $M$ be a von Neumann algebra with faithful normal state $\w$ and modular automorphism group $\s^\w$.  For $\ve>0$, let $\g_\ve$ be the Gaussian kernel in \eqref{eq:gamma-appendix}.
For $b\in M$, define the ultraweak integral (Definition~\ref{def:ultraweak-int})
\begin{equation}
\label{eq:b-eps-def}
  b_\ve\defeq \int_{\mathbb R}\g_\ve(s)\,\s^\w_s(b)\,ds\in M\,.
\end{equation}
Then $b_\ve$ is entire analytic for $\s^\w$, and its complex modular extension is given by the ultraweak integral
\begin{equation}
\label{eq:sigma-z-b-eps}
  \s^\w_z(b_\ve)=\int_{\mathbb R}\g_\ve(s-z)\,\s^\w_s(b)\,ds,\qquad z\in\mathbb C\,.
\end{equation}
Further, as $\ve\to0$, $b_\ve\to b$ in the strong-$*$ operator topology.
\end{lemma}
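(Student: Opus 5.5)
We treat the three claims of Lemma~\ref{lem:gaussian-modular-entire} in order: existence of $b_\ve$, entire analyticity with the formula \eqref{eq:sigma-z-b-eps}, and strong-$*$ convergence.

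\emph{Existence.} For every $\varphi\in M_*$ the map $s\mapsto\varphi(\s^\w_s(b))$ is continuous, because $\s^\w$ is $\sigma$-weakly continuous. It is also bounded by $\|\varphi\|\|b\|$. Since $\g_\ve\in L^1$, the scalar integral $\int\g_\ve(s)\varphi(\s^\w_s(b))\,ds$ is a bounded linear functional of $\varphi$, and $M=(M_*)^*$ gives an element $b_\ve\in M$ with $\|b_\ve\|\le\|b\|$. The same argument applies for every $z\in\C$, because
\[
|\g_\ve(s-z)|=e^{(\Im z)^2/(2\ve^2)}\g_\ve(s-\Re z)
\]
is integrable. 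Call the resulting element $F(z)$, which is the right-hand side of \eqref{eq:sigma-z-b-eps}.

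\emph{Analyticity.} For each fixed $\varphi$, the function $z\mapsto\varphi(F(z))=\int\g_\ve(s-z)\varphi(\s^\w_s(b))\,ds$ is entire. The integrand is entire in $z$, and on compact sets of $z$ it is dominated by an integrable Gaussian envelope $C_K e^{-s^2/(4\ve^2)}$, so Morera's theorem together with dominated convergence applies. Hence $F$ is weakly entire, and therefore norm entire, since weak analyticity implies strong analyticity for Banach-space-valued maps. For real $t$, normality of $\s^\w_t$ lets it pass inside the ultraweak integral, and the substitution $s\mapsto s-t$ gives
\[
\s^\w_t(b_\ve)=\int\g_\ve(s)\,\s^\w_{s+t}(b)\,ds=\int\g_\ve(s-t)\,\s^\w_s(b)\,ds=F(t).
\]
Thus $F$ is an entire extension of $t\mapsto\s^\w_t(b_\ve)$, which proves that $b_\ve$ is entire analytic and that \eqref{eq:sigma-z-b-eps} holds. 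The step that needs care is the uniform domination of the complex Gaussian $\g_\ve(s-z)$ for $z$ in compact sets; it follows from the identity above, because the factor $e^{(\Im z)^2/(2\ve^2)}$ is bounded and $\g_\ve(s-\Re z)\le C_K e^{-s^2/(4\ve^2)}$ for $|\Re z|$ bounded.

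\emph{Convergence.} In the standard representation on $\cH$, take $\xi\in\cH$. Then $b_\ve\xi=\int\g_\ve(s)\,\s^\w_s(b)\xi\,ds$ as a Bochner integral in $\cH$: the map $s\mapsto\s^\w_s(b)\xi=\D^{is}b\D^{-is}\xi$ is norm continuous and bounded, and pairing with vectors identifies this Bochner integral with the ultraweak one. Lemma~\ref{lem:approx-id}, applied to $F(s)=\s^\w_s(b)\xi$ and the approximate identity $\g_\ve\,ds$, gives $b_\ve\xi\to b\xi$. Since $(b_\ve)^*=(b^*)_\ve$, because $\g_\ve$ is real and $\s^\w_s$ is a $*$-map, the same argument gives $b_\ve^*\xi\to b^*\xi$. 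This is the strong-$*$ convergence.
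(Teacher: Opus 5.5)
Your proof is correct and follows essentially the same route as the paper. The identity $|\g_\ve(s-z)|=e^{(\Im z)^2/(2\ve^2)}\g_\ve(s-\Re z)$ gives locally uniform domination and hence $\sigma$-weak entireness of the ultraweak integral. Normality of $\s^\w_t$ plus a shift identifies that integral with $\s^\w_t(b_\ve)$ for real $t$. The approximate-identity lemma then gives the strong-$*$ convergence, which you spell out in more detail than the paper does.

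One minor precision concerns the upgrade from $\sigma$-weak to norm analyticity. This is not literally Dunford's theorem, because you only test against $M_*$ rather than all of $M^*$. It still holds, since $M_*$ is norming for $M$ and $\|F(z)\|\le e^{(\Im z)^2/(2\ve^2)}\|b\|$ is locally bounded. The paper only needs the $\sigma$-weak statement in any case.
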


\begin{proof}
For each $z\in\mathbb C$, the kernel bound
\begin{equation}
\label{eq:kernel-bound}
  |\g_\ve(s-z)|\le e^{(\Im z)^2/(2\ve^2)}\,\g_\ve(s-\Re z)
\end{equation}
shows that $s\mapsto\g_\ve(s-z)$ is integrable, with the bound locally uniform in $z$.  Hence the right-hand side of \eqref{eq:sigma-z-b-eps} is a well-defined ultraweak integral; denote it by $\widetilde b(z)\in M$.

For each $\varphi\in M_*$, the scalar function
\begin{equation}
  z\mapsto\varphi(\widetilde b(z))=\int_{\mathbb R}\g_\ve(s-z)\,\varphi(\s^\w_s(b))\,ds
\end{equation}
is entire: $\g_\ve(s-z)$ is entire in $z$ for each $s$, 
and \eqref{eq:kernel-bound} together with $\|\varphi\circ\s^\w_s\|\le\|\varphi\|$ provides locally uniform domination.  Thus $\widetilde b:\mathbb C\to M$ is weakly entire.

For real $z=t\in\mathbb R$, normality of $\s^\w_t$ gives $\varphi\circ\s^\w_t\in M_*$ for every $\varphi\in M_*$, so applying $\varphi\circ\s^\w_t$ to \eqref{eq:b-eps-def} and substituting $s\mapsto s-t$ in the kernel yields
\begin{equation}
  \varphi(\s^\w_t(b_\ve))=\int_{\mathbb R}\g_\ve(s)\,\varphi(\s^\w_{s+t}(b))\,ds=\int_{\mathbb R}\g_\ve(s-t)\,\varphi(\s^\w_s(b))\,ds=\varphi(\widetilde b(t))\,.
\end{equation}
Since this holds for all $\varphi\in M_*$, $\s^\w_t(b_\ve)=\widetilde b(t)$ for every $t\in\mathbb R$.

Hence $\widetilde b$ is a weakly entire $M$-valued extension of the modular orbit $t\mapsto\s^\w_t(b_\ve)$.  By definition, $b_\ve$ is entire analytic for $\s^\w$ with $\s^\w_z(b_\ve)=\widetilde b(z)$ for all $z\in\mathbb C$, proving \eqref{eq:sigma-z-b-eps}.

The convergence is easily seen using the fact that $\g_\ve(s)$ is an approximate identity.
\end{proof}

\begin{lemma}[Strongly continuous modular isometry group on \(L^n(M)\)]
\label{lem:modular-Ln-group}
Define $\sigma_s:= \sigma_s^\omega$.
For every \(s\in\mathbb R\), the map
\[
        j_n(a)\longmapsto j_n(\sigma_s(a)),
        \qquad a\in M,
\]
extends uniquely to a positive surjective isometry
\[
        \Sigma_s^{(n)}:L^n(M)\longrightarrow L^n(M).
\]
Moreover,
\begin{equation}
\label{eq:Sigma-group-property}
        \Sigma_{s+t}^{(n)}
        =
        \Sigma_s^{(n)}\Sigma_t^{(n)},
        \qquad
        \Sigma_0^{(n)}=I,
        \qquad s,t\in\mathbb R,
\end{equation}
and the map
\[
        \mathbb R\ni s\longmapsto \Sigma_s^{(n)}x\in L^n(M)
\]
is norm-continuous for every \(x\in L^n(M)\).  Thus
\((\Sigma_s^{(n)})_{s\in\mathbb R}\) is a strongly continuous group of
positive isometries on \(L^n(M)\).
\end{lemma}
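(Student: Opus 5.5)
The plan mirrors the construction of the null-translation semigroup in Proposition~\ref{prop:null-translation-Ln-semigroup}. The only genuinely new input is that \(\sigma_s\) is an \(\omega\)-preserving \(*\)-automorphism rather than merely an endomorphism, and this is what upgrades contractivity to isometry.

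\emph{Extension and isometry.} Fix \(s\in\mathbb R\). Since \(\sigma_s\) is normal, unital, positive and satisfies \(\omega\circ\sigma_s=\omega\), Theorem~\ref{thm:extension-Lp} extends \(j_n(a)\mapsto j_n(\sigma_s(a))\) uniquely to a positive contraction \(\Sigma_s^{(n)}\) on \(L^n(M)\). To get the isometry, I would use an independent direct argument rather than appeal to the extension theorem alone. By Lemma~\ref{lem:modularflow-haagerup}, \(\sigma_s(a)=h^{is}ah^{-is}\), and \(h^{is}\) commutes with every power \(h^{1/(2n)}\). Hence
\[
j_n(\sigma_s(a))=h^{is}\,j_n(a)\,h^{-is}.
\]
Conjugation by the unitary \(h^{is}\) of the crossed product preserves the trace \(\tau\). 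It also commutes with the dual action on \(L^n(M)\), because \(\hat\sigma_t(h^{is})=e^{-ist}h^{is}\) and the phases cancel. So this conjugation maps \(L^n(M)\) to itself and preserves \(|x|^n\) up to unitary conjugation. I expect the cleanest route is simply that \(|uxu^*|^n=u|x|^nu^*\) and that \(\tr(uyu^*)=\tr(y)\) for \(y\in L^1(M)\); the latter holds because \(\tr(h_\varphi)=\varphi(1)\) and \(uh_\varphi u^*=h_{\varphi\circ\sigma_{-s}}\). This gives \(\|\Sigma_s^{(n)}x\|_n=\|x\|_n\) on the dense core, and hence everywhere. In fact it identifies \(\Sigma_s^{(n)}x=h^{is}xh^{-is}\) on all of \(L^n(M)\). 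Surjectivity then follows since \(\Sigma_{-s}^{(n)}\) is an inverse.

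\emph{Group law.} On \(j_n(M)\), \(\Sigma_s^{(n)}\Sigma_t^{(n)}j_n(a)=j_n(\sigma_s\sigma_t(a))=j_n(\sigma_{s+t}(a))\), and \(\sigma_0=\mathrm{id}\). Density of the core plus boundedness of the operators extends both identities to all of \(L^n(M)\), exactly as in Proposition~\ref{prop:null-translation-Ln-semigroup}.

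\emph{Strong continuity.} For \(a\in M\), the family \(\sigma_s(a)\) is uniformly bounded by \(\|a\|\) and converges strongly to \(a\) as \(s\to0\), since the modular group is \(\sigma\)-weakly continuous and implemented by the strongly continuous unitaries \(\Delta^{is}\). Sandwich continuity (Lemma~\ref{lem:sandwich-continuity}) then gives \(\Sigma_s^{(n)}j_n(a)\to j_n(a)\) in \(L^n(M)\). Uniform boundedness by \(1\) and Proposition~\ref{prop:semigroup} give continuity at \(0\) on all of \(L^n(M)\); more precisely, one applies that proposition to both half-line semigroups \(s\geq0\) and \(s\leq0\). The group law together with the isometry property then transports continuity at \(0\) to every \(s\).

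The main obstacle is the isometry step. This requires checking carefully that unitary conjugation by \(h^{is}\) preserves \(L^n(M)\) and \(\tr\), rather than relying on contractivity alone. A fallback would be to apply contractivity to both \(\Sigma_s^{(n)}\) and \(\Sigma_{-s}^{(n)}\): since they are mutually inverse contractions, each must be an isometry. This fallback avoids the crossed-product bookkeeping entirely, and I would actually lead with it.
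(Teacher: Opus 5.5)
Your proposal is correct and, if you lead with the fallback as you say, it is essentially the paper's proof: extension via Theorem~\ref{thm:extension-Lp}, the group law on the dense core $j_n(M)$, and isometry because $\Sigma_s^{(n)}$ and $\Sigma_{-s}^{(n)}$ are mutually inverse contractions. (Prove the group law before the isometry step, since the inverse argument depends on it.) Strong continuity then follows from sandwich continuity on $j_n(M)$, transported to every $s$ by the group law and isometry; the only cosmetic difference is that you invoke Proposition~\ref{prop:semigroup} on the two half-lines, whereas the paper writes out the equivalent $\varepsilon/3$ density argument directly.
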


\begin{proof}
For each \(s\in\mathbb R\), the modular automorphism
\(\sigma_s:M\to M\) is normal, unital, positive, and
\(\omega\)-preserving.  Hence Theorem~\ref{thm:extension-Lp} shows that the
map
\[
        j_n(a)\longmapsto j_n(\sigma_s(a))
\]
extends uniquely to a positive contraction
\[
        \Sigma_s^{(n)}:L^n(M)\longrightarrow L^n(M).
\]

We first verify the group property.  For \(a\in M\) and \(s,t\in\mathbb R\),
\[
\begin{aligned}
        \Sigma_s^{(n)}\Sigma_t^{(n)}j_n(a)=
        j_n(\sigma_s(\sigma_t(a)))  =
        j_n(\sigma_{s+t}(a)) =
        \Sigma_{s+t}^{(n)}j_n(a).
\end{aligned}
\]
Since \(j_n(M)\) is norm-dense in \(L^n(M)\) and both sides are bounded,
the equality extends to all of \(L^n(M)\).  Similarly,
\(\Sigma_0^{(n)}=I\).  In particular,
\[
        \Sigma_{-s}^{(n)}\Sigma_s^{(n)}
        =
        \Sigma_s^{(n)}\Sigma_{-s}^{(n)}
        =
        I,
\]
so \(\Sigma_s^{(n)}\) is bijective with inverse
\(\Sigma_{-s}^{(n)}\).

Since both \(\Sigma_s^{(n)}\) and \(\Sigma_{-s}^{(n)}\) are contractions,
for every \(x\in L^n(M)\),
\[
        \|x\|_n
        =
        \|\Sigma_{-s}^{(n)}\Sigma_s^{(n)}x\|_n
        \leq
        \|\Sigma_s^{(n)}x\|_n
        \leq
        \|x\|_n.
\]
Consequently,
\[
        \|\Sigma_s^{(n)}x\|_n=\|x\|_n,
\]
and \(\Sigma_s^{(n)}\) is an isometry.

It remains to prove strong continuity.  We first prove continuity at
\(s=0\) on the dense subspace \(j_n(M)\).  Let \(a\in M\).  The modular
automorphism group is point strong-\(*\) continuous, and
\[
        \|\sigma_s(a)\|=\|a\|,
        \qquad s\in\mathbb R.
\]
Therefore,
\[
        \sigma_s(a)\longrightarrow a
        \qquad\text{strongly as }s\to0,
\]
with the family \((\sigma_s(a))_{s\in\mathbb R}\) uniformly bounded.
Lemma~\ref{lem:sandwich-continuity} then gives
\begin{equation}
\label{eq:Sigma-cont-core}
\begin{aligned}
        \|\Sigma_s^{(n)}j_n(a)-j_n(a)\|_n
        &=
        \|j_n(\sigma_s(a))-j_n(a)\|_n
        \longrightarrow0
        \qquad (s\to0).
\end{aligned}
\end{equation}

Now let \(x\in L^n(M)\) and let \(\varepsilon>0\).  Choose
\(y\in j_n(M)\) such that
\[
        \|x-y\|_n<\varepsilon.
\]
Since \(\Sigma_s^{(n)}\) is an isometry,
\[
\begin{aligned}
        \|\Sigma_s^{(n)}x-x\|_n
        &\leq
        \|\Sigma_s^{(n)}(x-y)\|_n
        +
        \|\Sigma_s^{(n)}y-y\|_n
        +
        \|y-x\|_n                                    \\
        &\leq
        2\varepsilon
        +
        \|\Sigma_s^{(n)}y-y\|_n.
\end{aligned}
\]
By \eqref{eq:Sigma-cont-core}, the last term tends to zero as \(s\to0\).
Since \(\varepsilon>0\) is arbitrary,
\[
        \Sigma_s^{(n)}x\longrightarrow x
        \qquad\text{in }L^n(M)
\]
as \(s\to0\).

Finally, for any \(s_0\in\mathbb R\), the group property and isometry give
\[
\begin{aligned}
        \|\Sigma_s^{(n)}x-\Sigma_{s_0}^{(n)}x\|_n
        &=
        \left\|
        \Sigma_{s_0}^{(n)}
        \bigl(\Sigma_{s-s_0}^{(n)}x-x\bigr)
        \right\|_n                                   \\
        &=
        \|\Sigma_{s-s_0}^{(n)}x-x\|_n
        \longrightarrow0
\end{aligned}
\]
as \(s\to s_0\).  This proves strong continuity on \(\mathbb R\).
\end{proof}

\section{R\'enyi QNEC for \texorpdfstring{$n=2$}{n=2}}
\label{subsec:alpha-two-rigorous}

We have proven RQNEC for all integers $n\geq2$ in Corollary~\ref{cor:integer-renyi-qnec-fixed}. In this section we present a slightly different proof of RQNEC for $n=2$ that is easier to interpret directly on $\mathcal H$.
The R\'enyi QNEC for $n=2$ is special since $L^2(M)$ is a Hilbert space. 

The main inputs for this section are the HSMI $M_B\subset M_A=: M$, with common faithful $\omega\in (M_A)_*^+$ and associated properties from Theorem~\ref{thm:modulartranslations}, the null translation semigroup from Proposition~\ref{prop:null-translation-Ln-semigroup}, and the sandwiched densities from Sec~\ref{subsec:integer-density-propagation}.

Using the isometric equivalence of standard forms, we can identify the action of the null translation semigroup $V_c^{(2)}$ on $L^2(M)$ with left multiplication by $e^{-cP}$ on the vector representative $\xi_0 \in \mathcal H$ of the $L^2(M)$ sandwiched density $x_0$. This lends itself to a more transparent physical interpretation of the SRD and RQNEC in terms of the averaged null energy operator $P$.

\subsection{Preliminaries on standard forms and modular operators}

We first recall some useful preliminaries. We refer to \cite{Hiai:2021qfv} for this section.
In order to understand how modular data, such as the modular operators, relative modular operators, etc., are represented on the Haagerup $L^p$ space it is useful to define, for each $x\in M$, the left and right actions $\mathfrak L(x), \mathfrak R(x)$ on the $L^2(M)$ Hilbert space.
\begin{definition}[Left and right action]
    \[
        \mathfrak L(x) a \defeq x a, \quad \mathfrak R(x) a \defeq ax, \, a\in L^2(M)\,.
    \]
    The involution $J$ is represented on $L^2(M)$ as: 
    \[
        Ja=a^*\,.
    \]
\end{definition}

\begin{theorem}\label{thm:std-form-Haagerup}
    For each $x\in M$ define the left, right actions $\mathfrak L(x),\mathfrak R(x)$ and the involution $J$. Then:
    \begin{enumerate}
        \item $\mathfrak L(M)$($\mathfrak R(M)$) is a normal faithful (anti-)representation of $M$ on $L^2(M)$.
        \item The von Neumann algebras $\mathfrak L(M), \mathfrak R(M)$ are commutants of each other with $\mathfrak R(M) = J \mathfrak L(M)J$.
        \item $(\mathfrak L(M),L^2(M),J,L^2(M)_+)$ is a standard form of $M$ (see \cite{Hiai:2021cks,Hiai:2021qfv}).
    \end{enumerate}
\end{theorem}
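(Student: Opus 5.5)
The plan is to verify the three claims in order, relying on the H\"older and trace-cyclicity facts (Lemma~\ref{lem:holder-products-trace-cyclicity}, Proposition~\ref{prop:standard-Haagerup-facts}). I will reduce the commutation theorem in (2) to Tomita--Takesaki theory, using the vector $\xi_\omega:=h_\omega^{1/2}\in L^2(M)_+$.

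\textbf{Step 1: the actions and $J$.} By H\"older with $p=\infty,q=2$, we have $\|xa\|_2\le\|x\|\,\|a\|_2$ and $\|ax\|_2\le \|x\|\,\|a\|_2$. Hence $\mathfrak L(x)$ and $\mathfrak R(x)$ are bounded. Associativity of products in $L^0(\mathcal R,\tau)$ makes $\mathfrak L$ multiplicative and $\mathfrak R$ anti-multiplicative. Cyclicity gives $\langle b,xa\rangle=\tr(b^*xa)=\langle x^*b,a\rangle$, so both maps are $*$-preserving.

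Normality follows because the vector functionals are normal: $x\mapsto\langle a,xa\rangle=\tr(xaa^*)$, and $aa^*\in L^1(M)_+$ is the density of a normal positive functional. Faithfulness: if $xa=0$ for all $a\in L^2(M)$, then in particular $xh_\omega^{1/2}=0$. Then $\omega(x^*x)=\tr(h_\omega^{1/2}x^*xh_\omega^{1/2})=\|xh_\omega^{1/2}\|_2^2=0$, so $x=0$ by faithfulness of $\omega$.

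For $J$, the identity $\tr(a^*a)=\tr(aa^*)$ shows that $J$ is a conjugate-linear isometric involution. Moreover
\[
J\mathfrak L(x)Ja=(xa^*)^*=ax^*=\mathfrak R(x^*)a,
\]
so $J\mathfrak L(M)J=\mathfrak R(M)$. Associativity also gives $\mathfrak R(M)\subset\mathfrak L(M)'$.

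\textbf{Step 2: the commutant (main obstacle).} The reverse inclusion $\mathfrak L(M)'\subset\mathfrak R(M)$ is the hard part. I would first show that $\xi_\omega$ is cyclic and separating for $\mathfrak L(M)$.

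Cyclicity holds because $Mh_\omega^{1/2}$ is dense in $L^2(M)$, by the same density mechanism as for $j_2(M)$ in Definition~\ref{def:symmetricembedding}. The separating property follows from Step 1 applied to both sides: $h_\omega^{1/2}M$ is also dense.

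Next I would compute the Tomita operator $S:xh_\omega^{1/2}\mapsto x^*h_\omega^{1/2}$. For modular-entire $x$, Lemma~\ref{lem:modularflow-haagerup} and \eqref{eq:move-h-past-a} give
\[
J\,\Delta^{1/2}(xh_\omega^{1/2})=J(h_\omega^{1/2}x)=x^*h_\omega^{1/2}
\]
with the candidate $\Delta^{it}a:=h_\omega^{it}ah_\omega^{-it}$. This candidate is a strongly continuous unitary group on $L^2(M)$. Its analytic generator is positive on the core $M^{\rm an}h_\omega^{1/2}$, because
\[
\langle xh_\omega^{1/2},h_\omega^{1/2}x\rangle=\|h_\omega^{1/4}xh_\omega^{1/4}\|_2^2\ge0 .
\]
The delicate point is showing that $M^{\rm an}h_\omega^{1/2}$ is a core for both $S$ and the candidate $\Delta^{1/2}$. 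I would use the Gaussian smearing of Lemma~\ref{lem:gaussian-modular-entire}, which commutes with the candidate group. Uniqueness of the polar decomposition of the closed operator $\bar S$ then identifies the modular conjugation with $J$. Tomita's theorem then yields $\mathfrak L(M)'=J\mathfrak L(M)J=\mathfrak R(M)$.

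\textbf{Step 3: standard form axioms.} The remaining conditions are checked directly.
\begin{itemize}
\item $Ja=a$ for $a\in L^2(M)_+$.
\item For central $z$, $J\mathfrak L(z)J=\mathfrak R(z^*)=\mathfrak L(z^*)$, since $z$ commutes with $L^2(M)$ elements; this is checked on the dense set $Mh_\omega^{1/2}$ via $\sigma_t(z)=z$.
\item $\mathfrak L(x)J\mathfrak L(x)J\,a=xax^*\ge0$ for $a\ge0$.
\item Self-duality of the cone. The inclusion $L^2(M)_+\subset(L^2(M)_+)^\circ$ is Proposition~\ref{prop:standard-Haagerup-facts}(3). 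For the converse, suppose $\tr(ab)\ge0$ for all $b\ge0$. Then $a$ defines a real form on self-adjoint elements, which forces $a=a^*$. Testing with $b=a_-$ gives $-\|a_-\|_2^2\ge0$, so $a=a_+\ge0$.
\end{itemize}
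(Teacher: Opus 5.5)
The paper gives no proof of this theorem. It quotes the result from Hiai, going back to Haagerup and Terp, where $L^2(M)$ is identified with the GNS space of $\omega$ through the unitary $xh^{1/2}\mapsto x\Omega$ (the same map $W$ used in Lemma~\ref{prop:standard-L2-implementation}) and the modular data are transported.

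Your proposal is correct and follows the same strategy in substance. Steps~1 and~3 are the routine verifications. Step~2 correctly reduces the commutation theorem to two facts: the modular conjugation of $(\mathfrak L(M),h^{1/2})$ is $a\mapsto a^*$, and Tomita's theorem combined with $J\mathfrak L(x)J=\mathfrak R(x^*)$.

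You differ from the standard route in how the modular operator is identified. You build a candidate group, check positivity, show that $M^{\mathrm{an}}h^{1/2}$ is a core for both $\bar S$ and $\Delta_c^{1/2}$, and invoke uniqueness of the polar decomposition. That works, but it can be bypassed:
\begin{enumerate}
\item Since $\langle xh^{1/2},yh^{1/2}\rangle=\omega(x^*y)$ and $h^{1/2}$ is cyclic, $(\mathfrak L,L^2(M),h^{1/2})$ is already a GNS triple for $\omega$.
\item Its modular group is therefore forced to be $\Delta^{it}(xh^{1/2})=\sigma_t(x)h^{1/2}=h^{it}(xh^{1/2})h^{-it}$, by Lemma~\ref{lem:modularflow-haagerup}.
\item Standard Tomita--Takesaki theory gives $\Delta^{1/2}(xh^{1/2})=\sigma_{-i/2}(x)h^{1/2}=h^{1/2}x$ for $x\in M^{\mathrm{an}}$, using \eqref{eq:move-h-past-a}.
\item Hence $J_{\mathrm{mod}}(h^{1/2}x)=x^*h^{1/2}=(h^{1/2}x)^*$ on the dense set $h^{1/2}M^{\mathrm{an}}$, so $J_{\mathrm{mod}}=J$ by continuity.
\end{enumerate}
No core arguments are needed. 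Positivity of your candidate generator is in any case automatic from Stone's theorem.

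Your intrinsic route has the merit of never leaving $L^2(M)$. The transport route is shorter because the modular group is intrinsic to $(M,\omega)$.

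Two minor points:
\begin{itemize}
\item Density of $Mh^{1/2}$ is easiest by duality. If $\tr(b^*xh^{1/2})=0$ for all $x\in M$, then $h^{1/2}b^*=0$, hence $b=0$ since $h$ has full support.
\item The other half of (2), $\mathfrak R(M)'=\mathfrak L(M)$, follows from $\mathfrak L(M)'=\mathfrak R(M)$ by the bicommutant theorem.
\end{itemize}
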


\begin{remark}[Modular operator on $L^2(M)$]\label{rem:Modular-op-L2}
    The previous lemma shows that if $\vf$ is faithful, then
    \begin{equation}
        \Delta^{it}_\varphi \xi = h_\varphi^{it} \xi h_\varphi^{-it}, \quad \forall \xi\in L^2(M),\,t\in \mathbb R\,.
    \end{equation}
    Furthermore, since the analytic continuation is unique, we have:
    \begin{equation}\label{eq:delta-Lp-action}
        \Delta^{\frac p2}_\varphi (x h_\varphi^{1/2}) =  h_\varphi^{\frac p2} x  h_\varphi^{\frac{1-p}{2}}, \quad x \in M, \, 0\leq p\leq 1\,.
    \end{equation}
    For non-faithul $\vf$, the formulas are to be understood with usual support projections on $s(\vf)Ms(\vf)$.
\end{remark}
Similarly, we have the representation of the relative modular operator on $L^2(M)$.
\begin{remark}[Relative modular operator on $L^2(M)$]\label{rem:Relative-Modular-op-L2}
    For any $\psi,\varphi \in M_*^+$, we have the relative modular operator \cite{Hiai:2021qfv}
    \[
        \Delta_{\psi,\varphi}^{it}\xi = h_\psi^{it}\xi h_\varphi^{-it}, \quad \forall \xi\in s(\psi)L^2(M)s(\vf)\,\,, t\in \mathbb R\,,
    \]
    \[
        \Delta_{\psi,\varphi}^{\frac p2}(x h_\varphi^{1/2}) = h_\psi^{\frac{p}{2}}x h_\varphi^{\frac{1-p}{2}}, \quad x \in M, \, 0\leq p\leq 1\,,
    \]
    with the convention
    \[
        h_\psi^0=s(\psi), \quad h_\varphi^0=s(\varphi),\quad \Delta_{\psi,\varphi}^0=s(\psi)Js(\varphi)J.
    \]
\end{remark}

\subsection{Proof of RQNEC for \texorpdfstring{$n=2$}{n=2}}
In this subsection we work with Haagerup's $L^2(M)$ space and the pairing $\braket{\cdot,\cdot}_{L^2(M)}$ refers to the inner product on $L^2(M)$ defined as
\[
    \braket{a,b}\defeq \tr(a^*b), \quad a,b\in L^2(M). 
\]

Recall the $L^2(M)$ semigroup $V_c^{(2)}$ defined in Sec.~\ref{subsec:test-and-dual-semigroups}. The following proposition provides an explicit characterization of $V_c^{(2)}$ in terms of the null translation generator $P$ associated with HSMI. Let $\omega\in M_*^+$ be faithful and denote its vector representative by $\Omega\in \mathcal H$.

\begin{lemma}[\(L^2\)-implementation of the null flow]
\label{prop:standard-L2-implementation}
Define \(W : L^2(M) \to \mathcal H\) on the dense subspace \(M h^{1/2} \subset L^2(M)\) by
\[
W(a h^{1/2})  \defeq  a \Omega, \qquad a \in M \,.
\]
Then \(W\) extends to a unitary. Let
\[
\widehat{P}  \defeq  W^{-1} P W,
\qquad
\widehat{V}_c^{(2)}  \defeq  e^{-c \widehat{P}}, \qquad c \ge 0 \,.
\]
For every \(a \in M^{\mathrm{an}}\),
\[
\widehat{V}_c^{(2)}\big(h^{1/4} a h^{1/4}\big)
=
h^{1/4} T_c(a) h^{1/4}\,,
\]
which allows the identification
\[
    V_c^{(2)} = \widehat{V}_c^{(2)}.
\]
In particular, \((V_c^{(2)})_{c \ge 0}\) is a strongly continuous semigroup of self-adjoint contractions on
\(L^2(M)\).
\end{lemma}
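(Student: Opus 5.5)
Three steps: show $W$ is unitary, compute $\widehat V_c^{(2)}$ on $j_2(a)$ for modular-entire $a$, then identify the two semigroups by density and boundedness.

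\emph{Step 1: $W$ is unitary.} For $a,b\in M$ we have
\[
\langle ah^{1/2},bh^{1/2}\rangle_{L^2(M)}=\tr(h^{1/2}a^*bh^{1/2})=\omega(a^*b)=\langle a\Omega,b\Omega\rangle_{\mathcal H},
\]
by cyclicity (Lemma~\ref{lem:holder-products-trace-cyclicity}). So $W$ is isometric on $Mh^{1/2}$. This subspace is dense in $L^2(M)$, since $\Omega$ cyclic corresponds to $h^{1/2}$ cyclic and separating in the standard form of Theorem~\ref{thm:std-form-Haagerup}. Its image $M\Omega$ is dense in $\mathcal H$. Hence $W$ extends to a unitary. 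Moreover $W$ intertwines $\mathfrak L(x)$ with $x$, the modular operator of Remark~\ref{rem:Modular-op-L2} with $\Delta_{\omega;A}$, and $J$ with $J_\omega$. In particular $\widehat P=W^{-1}PW\ge0$ is self-adjoint, so $\widehat V_c^{(2)}=e^{-c\widehat P}$ is a strongly continuous semigroup of self-adjoint contractions by spectral calculus.

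\emph{Step 2: action on $j_2(a)$ for $a\in M^{\rm an}$.} Write $h^{1/4}ah^{1/4}=\sigma_{-i/4}(a)h^{1/2}$ via \eqref{eq:move-h-past-a}. Then
\[
W\bigl(h^{1/4}ah^{1/4}\bigr)=\sigma_{-i/4}(a)\Omega=\Delta^{1/4}a\Omega .
\]
So it suffices to prove on $\mathcal H$ that
\[
e^{-cP}\Delta^{1/4}a\Omega=\Delta^{1/4}T_c(a)\Omega=\Delta^{1/4}U_{-c}a\Omega,
\]
where the last equality uses $U_c\Omega=\Omega$. This is the step I expect to be the main obstacle. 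I would obtain it from Borchers covariance, Theorem~\ref{thm:modulartranslations}(4): $\Delta^{is}U_b\Delta^{-is}=U_{e^{-2\pi s}b}$. For $\psi$ in a suitable analytic domain this continues to $s=i/4$, giving $\Delta^{-1/4}U_b\Delta^{1/4}=U_{e^{-i\pi/2}b}=U_{-ib}$. With $b=-c$ and $U_{ic}=e^{cP}$ this reads $\Delta^{-1/4}U_{-c}\Delta^{1/4}=e^{-cP}$, which is the desired identity.

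To make the continuation rigorous, I would fix vectors $\phi$ and $\psi=a\Omega$ with $a\in M^{\rm an}$, so that $\Delta^{z}\psi$ is entire. Consider
\[
F(s)=\bigl\langle \phi,\Delta^{is}U_{-c}\Delta^{-is}\Delta^{1/4}\psi\bigr\rangle=\bigl\langle\phi,U_{-ce^{-2\pi s}}\Delta^{1/4}\psi\bigr\rangle .
\]
As $s$ moves into the strip $0\le\Im s\le 1/4$, the argument $-ce^{-2\pi s}$ moves into $\Im b\le0$, where $U_b$ is bounded and analytic by Theorem~\ref{thm:modulartranslations}(5). So the right-hand expression continues analytically and boundedly to the strip. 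On the left, one continues in the $\Delta^{-is}$ factors, using analyticity of $\Delta^{z}\psi$. Equating the two continuations at $s=i/4$ and invoking the uniqueness of analytic continuation (Schwarz reflection or a three-lines argument) gives the identity.

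A cleaner alternative, if available, is Borchers' theorem in the form "$\Delta^{1/4}$-type" relations, $J U_c J=U_{-c}$ combined with $\Delta^{1/2}$. The relevant statement is that $\Delta^{1/4}U_{-c}\Delta^{-1/4}$ is bounded with closure $e^{-cP}$, a consequence of positivity of $P$ together with covariance. I would cite it directly from \cite{Borchers:1991xk,Araki:2005we} rather than reprove it.

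\emph{Step 3: identification.} Both $V_c^{(2)}$ and $\widehat V_c^{(2)}$ are bounded and agree on $j_2(M^{\rm an})$. This set is dense in $L^2(M)$: $M^{\rm an}$ is strong-$*$ dense with bounded approximants $b_\varepsilon$ (Lemma~\ref{lem:gaussian-modular-entire}), so sandwich continuity (Lemma~\ref{lem:sandwich-continuity}) gives density of $j_2(M^{\rm an})$ in the dense set $j_2(M)$. Hence $V_c^{(2)}=\widehat V_c^{(2)}$, and the self-adjointness and contractivity transfer from Step 1.
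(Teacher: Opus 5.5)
Your overall architecture matches the paper's: isometry plus density for $W$, then $W(h^{1/4}bh^{1/4})=\Delta^{1/4}b\Omega$, then $T_c(a)\Omega=U_{-c}a\Omega$, an imaginary-time Borchers relation, and finally boundedness plus density. Computing $W(h^{1/4}ah^{1/4})$ via $h^{1/4}ah^{1/4}=\sigma_{-i/4}(a)h^{1/2}$ is a fine substitute for the paper's appeal to uniqueness of the standard form.

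However, the key analytic step, as you wrote it, runs in the wrong direction and fails. For $s=r+i\theta$ with $0<\theta\le\tfrac14$,
\[
\Im\bigl(-ce^{-2\pi s}\bigr)=ce^{-2\pi r}\sin(2\pi\theta)>0 .
\]
So $U_{-ce^{-2\pi s}}$ lies in the half-plane where $U_b=e^{-ibP}$ is \emph{unbounded}. At $s=i/4$ you get $U_{ic}=e^{cP}$, as you yourself note, not $e^{-cP}$. The relation $\Delta^{-1/4}U_{-c}\Delta^{1/4}=e^{-cP}$ is therefore false; your strip is the right one for $U_{+c}$, not $U_{-c}$. It also conjugates in the wrong order: what you need is $\Delta^{1/4}U_{-c}\Delta^{-1/4}\subset e^{-cP}$, i.e.\ the continuation to $s=-i/4$. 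Even formally, evaluating your $F$ at $s=i/4$ produces $\Delta^{-1/4}$ and $\Delta^{1/2}$ factors, not the quantity $\Delta^{1/4}U_{-c}a\Omega$.

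The fix is to continue the vector identity
\[
\Delta^{is}T_c(a)\Omega=\Delta^{is}U_{-c}\Delta^{-is}\,\Delta^{is}a\Omega=U_{-ce^{-2\pi s}}\Delta^{is}a\Omega,\qquad s\in\mathbb R,
\]
into the strip $-\tfrac12\le\Im s\le 0$.
\begin{itemize}
\item There $\Im(-ce^{-2\pi s})\le 0$, so $U_{-ce^{-2\pi s}}$ is a contraction depending analytically and strongly continuously on $s$ (Theorem~\ref{thm:modulartranslations}(5)).
\item Both $a\Omega$ and $T_c(a)\Omega$ lie in $M\Omega\subset\Dom(\Delta^{1/2})$. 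Hence both sides are bounded, continuous on the closed strip, and analytic inside.
\item They agree on $\mathbb R$, hence on the whole strip.
\end{itemize}
Evaluating at $s=-i/4$ gives $\Delta^{1/4}T_c(a)\Omega=e^{-cP}\Delta^{1/4}a\Omega$ for every $a\in M$, with no modular analyticity needed. Your ``cleaner alternative'' $\Delta^{1/4}U_{-c}\Delta^{-1/4}\subset e^{-cP}$ is the correct statement, and it is exactly what the paper invokes; note that it contradicts your explicit computation.

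A smaller point: your derivation of $W(h^{1/4}bh^{1/4})=\Delta^{1/4}b\Omega$ goes through $\sigma_{-i/4}(b)$, so it only covers modular-entire $b$. You then apply it to $b=T_c(a)$, which need not be entire. Two ways to close this:
\begin{itemize}
\item Use the intertwining $W\Delta_{L^2(M)}^{1/4}=\Delta^{1/4}W$ together with Remark~\ref{rem:Modular-op-L2}, which holds for all $b\in M$; this is what the paper does.
\item Approximate $b$ by $b_\varepsilon$ from Lemma~\ref{lem:gaussian-modular-entire}, using Lemma~\ref{lem:sandwich-continuity} and the closedness of $\Delta^{1/4}$.
\end{itemize}
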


\begin{proof}
For \(a \in M\),
\[
\|a h^{1/2}\|_2^2
=
\tr(h^{1/2} a^* a h^{1/2})
=
\omega(a^* a)
=
\|a \Omega\|^2\,.
\]
Since \(M h^{1/2}\) is dense in \(L^2(M)\) and \(M \Omega\) is dense in \(\mathcal H\), \(W\)
extends to a unitary. Moreover, \(W\) intertwines the left action of \(M\) on \(L^2(M)\) with
the given representation of \(M\) on \(\mathcal H\). By uniqueness of the standard form \cite{Haagerup1975},
\[
W \Delta_{L^2(M)}^{it} = \Delta_{\Omega;M}^{it} W, \qquad t \in \mathbb{R}\,,
\]
where \(\Delta_{L^2(M)}\) denotes the modular operator of the Haagerup standard form defined in Theorem~\ref{thm:std-form-Haagerup}. Hence, for
\(a \in M^{\mathrm{an}}\),
\[
W\big(h^{1/4} a h^{1/4}\big)
=
W\big(\Delta_{L^2(M)}^{1/4}(a h^{1/2})\big)
=
\Delta_{\Omega;M}^{1/4} a \Omega\,.
\]

Now let \(a \in M^{\mathrm{an}}\). Using \(T_c(a)\Omega = U_{-c} a \Omega\), we get
\[
W\big(h^{1/4} T_c(a) h^{1/4}\big)
=
\Delta_{\Omega;M}^{1/4} U_{-c} a \Omega\,.
\]
By the imaginary-time continuation of the Borchers covariance relation (Theorem~\ref{thm:modulartranslations}(4)),
\[
\Delta_{\Omega;M}^{1/4} U_{-c} a \Omega
=
U_{-ic} \Delta_{\Omega;M}^{1/4} a \Omega
=
e^{-cP} \Delta_{\Omega;M}^{1/4} a \Omega\,.
\]
Therefore
\[
W\big(h^{1/4} T_c(a) h^{1/4}\big)
=
e^{-cP} W\big(h^{1/4} a h^{1/4}\big)
=
W \widehat{V}_c^{(2)}\big(h^{1/4} a h^{1/4}\big)\,.
\]
Since \(W\) is unitary, this proves the claimed action formula on the dense core $h^{1/4}M^{\text{an}}h^{1/4}$ of $L^2(M)$. The equality extends by boundedness to all of $L^2(M)$. The remaining statements follow
from the spectral calculus for the positive self-adjoint operator \(P\).
\end{proof}

\begin{proposition}[\(n=2\) SRD as expectation value]
\label{prop:xc-semigroup-orbit}
Let $\psi,x_0,x_c, Q_2(c)$ be as in Lemma \ref{lem:density-propagation-fixed} with $n=2$, so that $0<Q_2(c)<\I$ for $c\geq0$. 
Then
\[
x_c = V_c^{(2)} x_0 = W^{-1} e^{-cP} W x_0\,.
\]
If \(\xi_0  \defeq  W x_0 \in \mathcal H\), then
\[
Q_2(c)  \defeq  Q_2(\psi_c\|\omega;M)
=
\|x_c\|_2^2
=
\|e^{-cP}\xi_0\|^2
=
\langle \xi_0, e^{-2cP}\xi_0\rangle_{\mathcal H}\,.
\]
\end{proposition}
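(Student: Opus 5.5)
The plan is to combine Lemma~\ref{lem:density-propagation-fixed} with the $L^2$-implementation of Lemma~\ref{prop:standard-L2-implementation}, and then compute the norm using the spectral calculus of $P$.

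First, Lemma~\ref{lem:density-propagation-fixed} with $n=2$ gives $x_c=V_c^{(2)}x_0$ together with
\[
Q_2(c)=\tr(x_c^2)=\|x_c\|_2^2 .
\]
The second equality uses $x_c\in L^2(M)_+$, so that $x_c=x_c^*$ and $\tr(x_c^2)=\tr(x_c^*x_c)=\langle x_c,x_c\rangle_{L^2(M)}$.

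Next, Lemma~\ref{prop:standard-L2-implementation} identifies $V_c^{(2)}$ with $\widehat V_c^{(2)}=e^{-c\widehat P}$, where $\widehat P=W^{-1}PW$. The functional calculus is covariant under unitary conjugation, so
\[
e^{-c\widehat P}=W^{-1}e^{-cP}W ,
\]
and hence $x_c=W^{-1}e^{-cP}Wx_0$.

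Finally, set $\xi_0=Wx_0$. Since $W$ is unitary,
\[
\|x_c\|_2=\|e^{-cP}\xi_0\|_{\mathcal H}.
\]
Because $P$ is positive and self-adjoint, $e^{-cP}$ is a bounded self-adjoint contraction, and the spectral calculus gives $(e^{-cP})^*e^{-cP}=e^{-2cP}$. This yields
\[
\|e^{-cP}\xi_0\|^2=\langle \xi_0,e^{-2cP}\xi_0\rangle_{\mathcal H}.
\]

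There is no real obstacle here. The only point to check is the identification $V_c^{(2)}=\widehat V_c^{(2)}$ on all of $L^2(M)$, which the preceding lemma already provides by density and boundedness. The finiteness and positivity $0<Q_2(c)<\infty$ is inherited directly from Lemma~\ref{lem:density-propagation-fixed}.
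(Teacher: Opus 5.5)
Your proposal is correct and follows the paper's proof. Like the paper, you invoke the $L^2$-implementation lemma to write $V_c^{(2)}=W^{-1}e^{-cP}W$, and then use unitarity of $W$ and the spectral calculus of $P$ to obtain $\|x_c\|_2^2=\|e^{-cP}\xi_0\|^2=\langle\xi_0,e^{-2cP}\xi_0\rangle$. Your extra remarks only make explicit steps the paper leaves implicit: self-adjointness of $x_c$ giving $\tr(x_c^2)=\|x_c\|_2^2$, and unitary covariance of the functional calculus giving $e^{-c\widehat P}=W^{-1}e^{-cP}W$.
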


\begin{proof}
By Lemma \ref{prop:standard-L2-implementation},
\[
    V_c^{(2)}=W^{-1}e^{-cP}W.
\]
Therefore, with \(\xi_0=Wx_0\),
\[
    x_c=W^{-1}e^{-cP}\xi_0,
\]
and
\[
    Q_2(c)=\|x_c\|_2^2 = \|e^{-cP}\xi_0\|^2 = \langle \xi_0,e^{-2cP}\xi_0\rangle.
\]
The formula for \(Q_2(c)\) is immediate.
\end{proof}

Thus, we have translated the null-flowed SRD to the logarithm of the expectation value of $e^{-2cP}$ in the $\mathcal H$ representative $\xi_0$ of the $L^2(M)$ sandwiched density $x_0$, where $P$ is the averaged null energy operator for the null cut algebras. 
RQNEC for $n=2$ now follows simply from spectral calculus.
\begin{theorem}[R\'enyi QNEC at \(n=2\)]
\label{thm:RQNEC-alpha-two}
Assume \(Q_2(\psi\|\omega;M) < \infty\). Then the function
\[
S_2(c)  \defeq  D_2(\psi_c\|\omega;M)
\]
is \(C^\infty\) and convex on \((0,\infty)\). Equivalently,
\[
S_2''(c) \ge 0, \qquad c>0\,.
\]
\end{theorem}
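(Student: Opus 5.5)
The plan is to reduce everything to the spectral measure of the positive self-adjoint operator \(P\) in the vector \(\xi_0=Wx_0\in\mathcal H\), using Proposition~\ref{prop:xc-semigroup-orbit}. Let \(\mu\) be the spectral measure of \(P\) associated with \(\xi_0\). This is a finite positive Borel measure on \([0,\infty)\) with \(\mu([0,\infty))=\|\xi_0\|^2=Q_2(0)\). This mass is finite and strictly positive, since \(0<Q_2(0)<\infty\) by Lemma~\ref{lem:density-propagation-fixed}. Spectral calculus then gives
\[
Q_2(c)=\langle\xi_0,e^{-2cP}\xi_0\rangle=\int_{[0,\infty)}e^{-2c\lambda}\,d\mu(\lambda),
\]
so \(Q_2\) is the Laplace transform of \(\mu\) evaluated at \(2c\).

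\emph{Smoothness.} For \(c>0\) and each integer \(k\geq0\), the function \(\lambda\mapsto\lambda^k e^{-2c\lambda}\) is bounded on \([0,\infty)\), and the bound is uniform for \(c\geq c_0>0\). Dominated convergence therefore justifies differentiating under the integral to every order, giving
\[
Q_2^{(k)}(c)=(-2)^k\int\lambda^k e^{-2c\lambda}\,d\mu(\lambda).
\]
So \(Q_2\in C^\infty((0,\infty))\). Since \(Q_2(c)>0\), the function \(S_2(c)=\log\bigl(Q_2(c)/\psi(1)\bigr)\) is also \(C^\infty\) on \((0,\infty)\). Here \(\psi_c(1)=\psi(1)\) and the prefactor \(1/(n-1)=1\).

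\emph{Convexity.} Introduce the probability measure
\[
d\nu_c(\lambda)=\frac{e^{-2c\lambda}\,d\mu(\lambda)}{Q_2(c)}.
\]
Then
\[
S_2''(c)=\frac{Q_2''Q_2-(Q_2')^2}{Q_2^2}=4\Bigl(\int\lambda^2\,d\nu_c-\Bigl(\int\lambda\,d\nu_c\Bigr)^2\Bigr)\geq0 .
\]
The right-hand side is \(4\) times the variance of \(\lambda\) under \(\nu_c\), which is non-negative by Cauchy--Schwarz. Both moments are finite for \(c>0\) by the boundedness noted above.

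All the analytic content sits in Lemma~\ref{prop:standard-L2-implementation} and is already available, so the main obstacle is only bookkeeping at \(c=0\). There \(\mu\) may lack finite moments, which is why the statement restricts to the open interval \((0,\infty)\). Continuity of \(Q_2\) at \(0\), by monotone convergence, then extends convexity to \([0,\infty)\) if desired, consistent with Corollary~\ref{cor:integer-renyi-qnec-fixed}.
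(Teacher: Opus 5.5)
Your proposal is correct and follows the paper's proof essentially verbatim: you write $Q_2(c)=\int e^{-2c\lambda}\,d\mu(\lambda)$ for the spectral measure of $P$ in $\xi_0=Wx_0$, differentiate under the integral, and apply Cauchy--Schwarz with respect to $e^{-2c\lambda}\,d\mu$; your variance form is the same inequality normalized by $Q_2(c)$. Your uniform domination for $c\geq c_0>0$ also supplies the justification for the $C^\infty$ claim, which the paper leaves implicit.
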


\begin{proof}
By Proposition~\ref{prop:xc-semigroup-orbit},
\[
Q_2(c) = \langle \xi_0, e^{-2cP}\xi_0\rangle_{\mathcal H},
\qquad
\xi_0  \defeq  W x_0 \in \mathcal H\,. 
\]
Let \(E_P\) be the spectral measure of \(P\), and define a finite positive measure on
\([0,\infty)\) by
\[
\mu(B)  \defeq  \|E_P(B)\xi_0\|^2\,.
\]
Then
\[
Q_2(c)=\int_{[0,\infty)} e^{-2c\lambda}\,d\mu(\lambda)\,.
\]
For \(c>0\), differentiation under the integral gives
\[
Q_2'(c) = -2\int \lambda e^{-2c\lambda}\,d\mu(\lambda),
\qquad
Q_2''(c) = 4\int \lambda^2 e^{-2c\lambda}\,d\mu(\lambda)\,.
\]
Set \(d\nu_c(\lambda)  \defeq  e^{-2c\lambda}\,d\mu(\lambda)\). By Cauchy-Schwarz in
\(L^2([0,\infty),d\nu_c)\),
\[
\big(Q_2'(c)\big)^2
=
4\Big(\int \lambda\,d\nu_c\Big)^2
\le
4\Big(\int 1\,d\nu_c\Big)\Big(\int \lambda^2\,d\nu_c\Big)
=
Q_2(c)Q_2''(c)\,.
\]
Since \(Q_2(c)>0\), it follows that
\[
\frac{d^2}{dc^2}\log Q_2(c)
=
\frac{Q_2(c)Q_2''(c)-Q_2'(c)^2}{Q_2(c)^2}
\ge 0\,.
\]
This proves the convexity of \(S_2(c)\).

For normalized states, \(S_2(c) = \log Q_2(c)\); in general one has
\(S_2(c) = \log Q_2(c) - \log \psi(1)\), so the additive constant is irrelevant for convexity.
\end{proof}

\end{appendix}

\providecommand{\href}[2]{#2}\begingroup\raggedright\endgroup

\end{document}